\numberwithin{equation}{section}
\newtheorem{theorem}{Theorem}[section]
\newtheorem{corollary}[theorem]{Corollary}
\newtheorem{proposition}[theorem]{Proposition}
\newtheorem{lemma}[theorem]{Lemma}
\theoremstyle{definition}
\newtheorem{definition}[theorem]{Definition}
\theoremstyle{remark}
\newtheorem{remark}[theorem]{Remark}
\newcommand{\R}{\mathbb{R}}
\newcommand{\C}{\mathbb{C}}
\newcommand{\Z}{\mathbb{Z}}
\newcommand{\B}{\mathcal{B}}
\newcommand{\brill}{{\mathcal{B}}}
\newcommand{\be}{{\bf e}}
\newcommand{\bk}{{\bf k}}
\newcommand{\bfm}{{\bf m}}
\newcommand{\bn}{{\bf n}}
\newcommand{\tK}{{\tilde{\bf K}}}
\newcommand{\bK}{{\bf K}}
\newcommand{\bKp}{{\bf K'}}
\newcommand{\bv}{{\bf v}}
\newcommand{\bw}{{\bf w}}
\newcommand{\bx}{{\bf x}}
\newcommand{\bz}{{\bf z}}
\newcommand{\by}{{\bf y}}
\newcommand{\bA}{{\bf A}}
\newcommand{\bkappa}{{\bf \kappa}}
\newcommand{\Honeycomb}{{\bf H}}
\newcommand{\vtilde}{{\bm{\mathfrak{v}}}}
\newcommand{\ktilde}{{\bm{\mathfrak{K}}}}
\newcommand{\kpar}{{k_{\parallel}}}
\newcommand{\inner}[1]{\left\langle#1\right\rangle}
\newcommand{\D}{\partial}
\newcommand{\eps}{\varepsilon}
\newcommand{\nit}{\noindent}
\newcommand{\nn}{\nonumber}
\newcommand{\lamsharp}{{v_{_F}}}
\newcommand{\lamsharplam}{{v^\lambda_{_F}}}
\newcommand{\supp}{\text{supp}}
\begin{document}

\title{Honeycomb Schr\"odinger operators\\ 
 in the strong binding regime}

\author{C. L. Fefferman}
\address{Department of Mathematics, Princeton University, Princeton, NJ, USA}
\email{cf@math.princeton.edu}

\author{J. P. Lee-Thorp}
\address{Courant Institute of Mathematical Sciences, New York University, New York, NY, USA}
\email{leethorp@cims.nyu.edu}

\author{M. I. Weinstein}
\address{Department of Applied Physics and Applied Mathematics and Department of Mathematics, Columbia University, New York, NY, USA}
\email{miw2103@columbia.edu}

\date{\today}

\keywords{Schr\"odinger equation,  Dirac point, Floquet-Bloch spectrum, Topological insulator,  Spectral gap, Edge state, Honeycomb lattice, strong binding regime, tight binding limit}

\begin{abstract} In this article, we study the Schr\"odinger operator for a large class of periodic potentials with the symmetry of a hexagonal tiling of the plane. The potentials we consider are  superpositions of localized potential wells, centered on the vertices of a regular  honeycomb structure corresponding to the single electron model of graphene and its artificial analogues.
We consider this Schr\"odinger operator in the regime of strong binding, where the depth of the potential wells is large. Our main result is that for sufficiently deep potentials, the lowest two Floquet-Bloch dispersion surfaces, when appropriately rescaled, converge uniformly to those of the two-band tight-binding model (Wallace, 1947 \cites{Wallace:47}). 
Furthermore, we establish as corollaries,  in the regime of strong binding, results on (a) the existence of spectral gaps for honeycomb potentials that break $\mathcal{P}\mathcal{T}$ symmetry and (b) the existence of topologically protected edge states -- states which propagate parallel to and are localized transverse to a line-defect or ``edge'' - for a large class of rational edges, and which are robust to a class of  large transverse-localized perturbations of the edge.  
We believe that the ideas of this article may be applicable in other settings for which a tight-binding model emerges in an extreme parameter limit.
\end{abstract}

\maketitle   

\pagestyle{myheadings}
\thispagestyle{plain}
\markboth{Honeycomb Schr\"odinger operators in the strong binding regime }{C.L. Fefferman, J.P. Lee-Thorp, M.I. Weinstein}

\section{Introduction}

 In this article, we study the Schr\"odinger operator, $-\Delta + V$, for a   large class of periodic potentials with the symmetry of a hexagonal tiling of the plane. The potentials we consider are  superpositions of atomic localized potential wells, $V_0$, supported in discs centered 
on the vertices of a regular honeycomb structure corresponding to the single electron model of graphene
 and to its artificial analogues.

We consider this Schr\"odinger operator in the regime of strong binding, where the depth of the potential is large. Our main result is that for sufficiently deep potentials,  the lowest two Floquet-Bloch dispersion surfaces, when appropriately rescaled, converge uniformly to those of the two-band tight-binding model, introduced by P. R. Wallace in 1947 in his pioneering study of graphite \cites{Wallace:47}. 
Furthermore, our main results, together with previous results in \cites{FW:12} and \cites{FLW-2d_edge:16},  yield:\\
 (a) results on the existence of spectral gaps for Schr\"odinger operators with honeycomb potentials, perturbed in such a way as to   break $\mathcal{P}\mathcal{T}$ symmetry (the composition of parity-inversion and time-reversal symmetries), and\\
  (b) results on the existence of {\it topologically protected edge states} for Schr\"odinger operators with honeycomb potentials perturbed by a class of line-defects or {\it edges}, assumed to be parallel to vectors in the underlying period lattice.
  
 Spectral gaps play a central role in energy transport properties of crystalline media. 
 Edge states are time-harmonic solutions which are plane-wave-like (propagating) parallel to the edge and localized transverse to the edge.
Topologically protected edge states, due to their immunity against strong perturbations, have potential as a highly robust  means of energy transport. 

We comment briefly on terminology. An edge is frequently understood to mean an abrupt termination of bulk structure.
The terms ``edge'' for a line-defect across which there is a change in a key characteristic of the structure, and ``edge state'' are also used in the physics literature; see, for example, \cites{HR:07, RH:08,Shvets-PTI:13}. 
The edge states we discuss are of the latter type.
In particular, our edge states in structures with a domain wall defect are localized transverse to a line in the direction of a period lattice vector, a rational ``edge''.  
In this paper, {\it topological protection}  refers to the stability of bifurcations of edge states from Dirac points (a bifurcation from the intersection of continuous spectral bands) against a class of transverse-localized (even large) perturbations of the Hamiltonian. Although there is evidence from tight-binding models and numerical simulations of continuum PDE models of 
stability against fully localized perturbations, a precise mathematical theory is an open problem. 

Finally, we believe that the ideas of this article may be applicable in
other settings for which a tight-binding model emerges in an extreme parameter limit.

\subsection{Graphene and its artificial analogues - physical motivation}

Graphene is a two-dimensional material consisting of a single atomic layer of carbon atoms arranged in a regular honeycomb structure. 
It has been a  subject of intense interest  and exploration by  the fundamental and applied scientific, and engineering communities 
since its experimental fabrication and study in the middle of  the last decade \cites{geim2007rise,Kim-etal:05}. Many of graphene{'}s  novel electronic properties are related to conical intersections of its dispersion surfaces (Dirac points) and the corresponding effective Dirac (massless Fermionic) dynamics of wave-packets. These properties  can be understood by considering 
the band structure near the Fermi level for a Hamiltonian which only incorporates the $\pi-$ electrons \cites{geim2007rise,RMP-Graphene:09,FW:14}.
In this approximate model, the band structure is that of the two-dimensional Schr\"odinger operator with a  honeycomb lattice potential. 

Since many of  graphene{'}s properties are related to quantum mechanical problems governed by a class of energy-conserving wave equations in a medium with special symmetries, wave systems of this general type, in other physical settings, {\it e.g.} electronic, optical, acoustic, have received a great deal of recent attention by theorists and experimentalists. These have been dubbed {\it artificial-graphene} and have been explored, for example, in electronic physics \cites{artificial-graphene:11},  photonics \cites{Rechtsman-etal:13,plotnik2013observation,BKMM_prl:13} and acoustics \cites{khanikaev-acoustic-graphene:15}.

One such property, observed in electronic and photonic systems with honeycomb symmetry, is the existence of 
topologically protected {\it edge states}.   Edge states are modes which are
(i) pseudo-periodic (plane-wave-like  or propagating) parallel to a line-defect, and (ii)  localized transverse to the line-defect; see the schematic in Figure \ref{fig:mode_schematic}.
{\it Topological protection}, refers to the persistence of these modes and their properties, 
even when the line-defect is subjected to strong local perturbations. 
 In applications, edge states are of great interest due to their potential as robust vehicles for channeling energy. 

The extensive physics literature on topologically robust edge states goes back to investigations of the quantum Hall effect; see, for example, \cites{H:82, TKNN:82, Hatsugai:93, wen1995topological} and the rigorous mathematical articles \cites{Macris-Martin-Pule:99,EG:02, EGS:05,Taarabt:14}.  
In \cites{HR:07, RH:08} a proposal for realizing {\it photonic edge states}  in periodic electromagnetic structures which exhibit the magneto-optic effect was made. In this case, the edge is realized via a domain wall across which 
the Faraday axis is reversed. 
Since the magneto-optic effect breaks time-reversal symmetry, as does the magnetic field in the Hall effect, the resulting edge states are unidirectional.

Other realizations of edges in photonic and electromagnetic systems, {\it e.g.} between periodic dielectric and conducting structures, between periodic structures and free-space, have been explored through experiment and numerical simulation; see, for example \cites{Soljacic-etal:08,Fan-etal:08,Rechtsman-etal:13a,Shvets-PTI:13,Shvets:14}.

 The prevalent approaches to the theoretical study of these systems are: the tight-binding (discrete) approximation
  (see, for example, \cites{RMP-Graphene:09,MPFG:09}), the nearly free-electron (or free-photon) approximation (see, for example, \cites{HR:07,RH:08}) or direct numerical simulation (see, for example, \cites{bahat2008symmetry}). 
   In the tight-binding approximation, wave functions (Floquet-Bloch modes) are approximated by superpositions of local ground states of deep (high-contrast) potential wells, each of whose amplitudes interacts weakly with its nearest neighbors. In the nearly free-electron approximation, the potential in the Schr\"odinger operator  is treated as a small (low-contrast) perturbation of the Laplacian; see, for example,  \cites{Ashcroft-Mermin:76}. 
 
 Analytical results on the behavior of dispersion surfaces of Schr\"odinger operators with {\it generic} honeycomb  lattice potentials, which are not limited to these approximations in that there are no assumptions on the size of the potential, were obtained in \cites{FW:12,FLW-MAMS:17} using bifurcation theory  from the nearly free-electron limit, combined with methods of complex analysis to extend the analysis globally in the contrast (coupling) parameter. The results of the present article concern Schr\"odinger equations in the {\it strong binding regime} (deep or high-contrast potentials / strong coupling) and its relation to the {\it tight-binding (discrete) limit}. Before outlining our results we discuss the celebrated  two-band  tight-binding model of  Wallace (1947) \cites{Wallace:47}. 
  
\subsection{Wallace{'}s two-band tight-binding model of graphite}\label{wallace-47}
 The (regular) honeycomb structure, $\Honeycomb$, is the union of two interpenetrating equilateral triangular lattices, $\Lambda_A$ and $\Lambda_B$, where $\Lambda_A=\bv_A+\Lambda_h$, $\Lambda_B=\bv_B+\Lambda_h$ and 
 $\Lambda_h=\Z\bv_1\oplus\Z\bv_2$; see Figure \ref{lattice-and-itsdual}.
 \begin{figure}
\centering 
\includegraphics[width=0.75\textwidth]{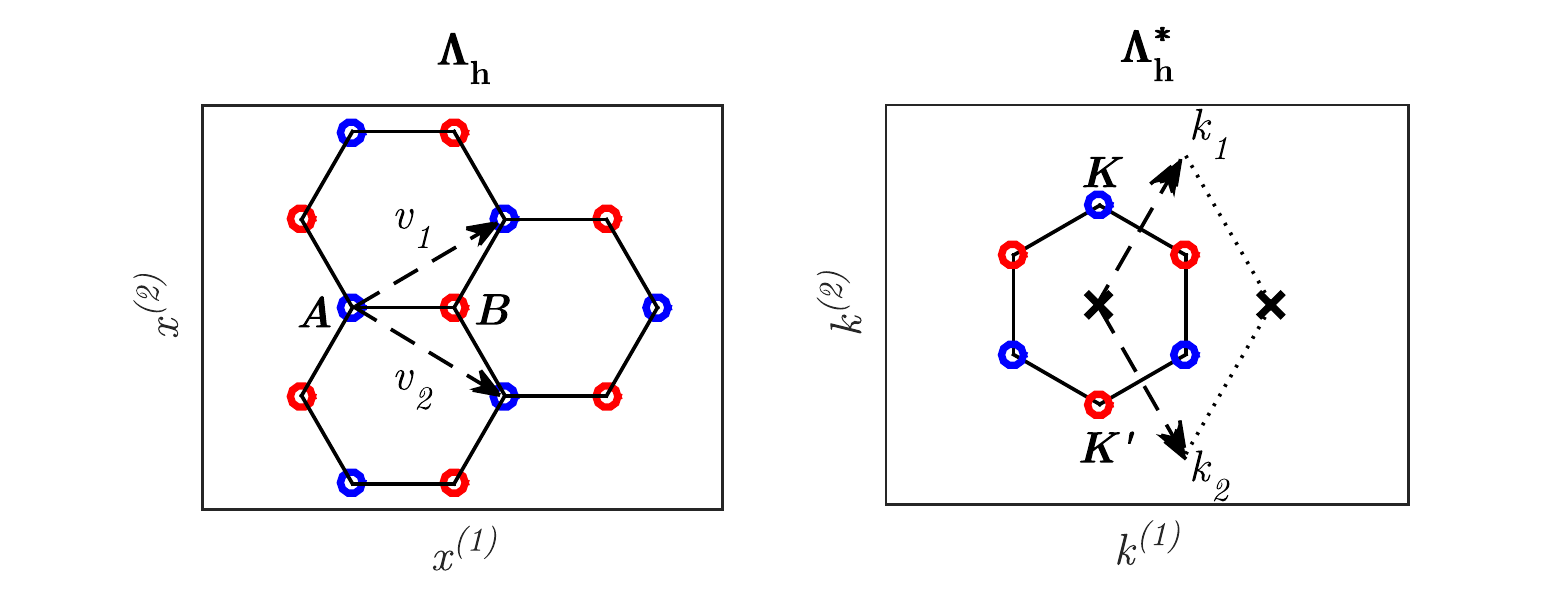}
\caption{\small 
{\bf Left panel}: ${\bf A}=(0,0)$, ${\bf B}=(\frac{1}{\sqrt3},0)$. 
Honeycomb structure, ${\bf H}$,  is the union of two sub-lattices  $\Lambda_{\bf A}={\bf A}+\Lambda_h$ (blue) 
and $\Lambda_{\bf B}={\bf B}+\Lambda_h$ (red); several hexagons shown. The lattice vectors  $\{\bv_1,\bv_2\}$ generate $\Lambda_h$.
{\bf Right panel}:
Brillouin zone, $\brill_h$, and dual basis $\{\bk_1,\bk_2\}$. $\bK$ and $\bK'$ are labeled.}
\label{lattice-and-itsdual}
\end{figure}
 \begin{figure}
\centering 
\includegraphics[width=0.75\textwidth]{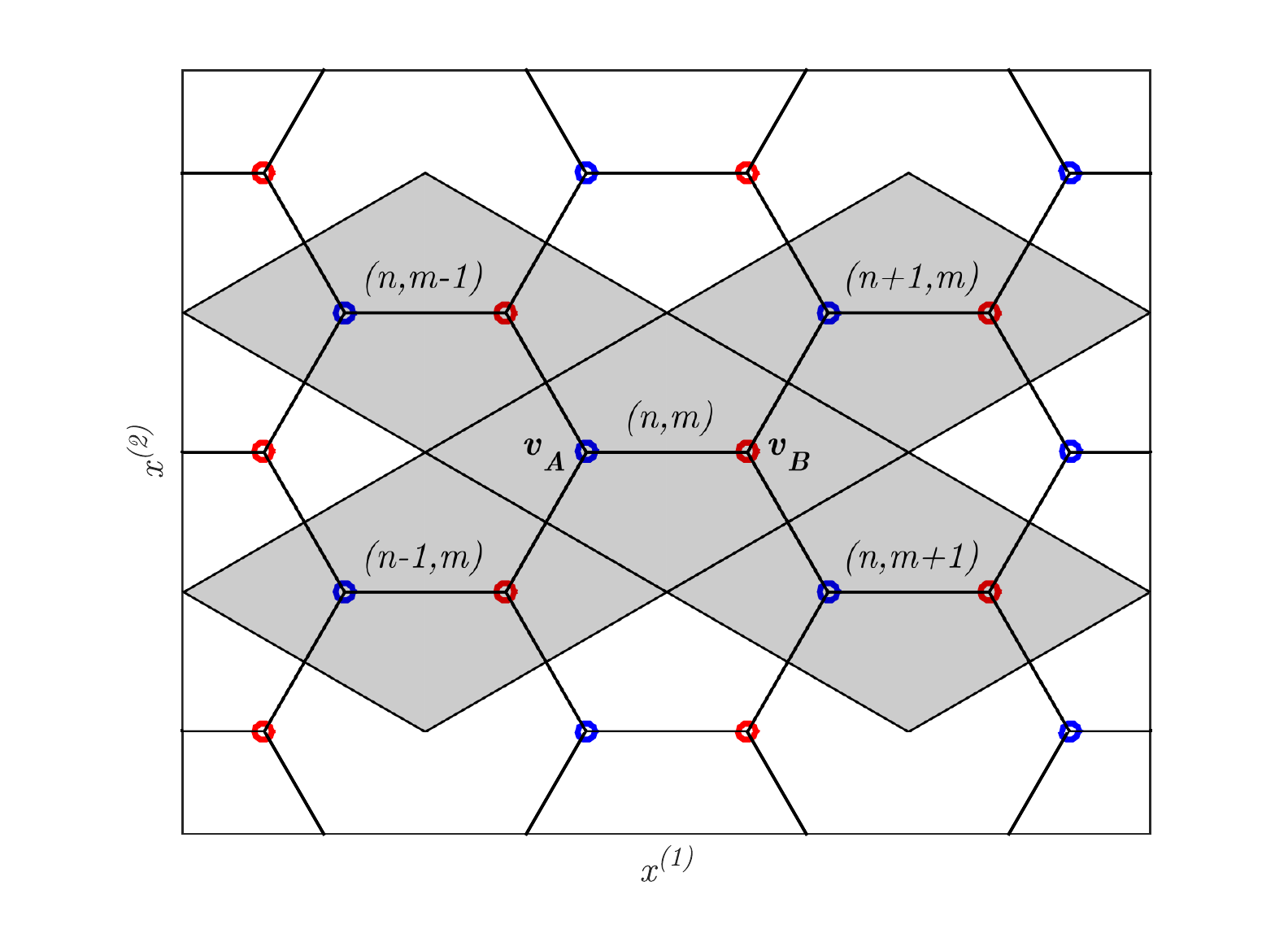}
\caption{\small Four (shaded) tiles of a tiling of $\R^2$. These contain the lattice points, whose amplitudes couple 
to  $\psi_A^{(n,m)}$ and $\psi_B^{(n,m)}$ according to the {\it tight-binding model} \eqref{tb-eqn}.}
\label{fig:diamond_tiling}
\end{figure}
Figure \ref{fig:diamond_tiling} displays the honeycomb structure and several shaded cells of a tiling of $\R^2$ by diamond-shaped fundamental period cells. In each fundamental period cell, there are two atomic sites ($A-$ type and $B-$ type). Let $\left(\vec\psi_A(T),\vec\psi_B(T)\right)^{\rm t}=\left(\psi_A^{(n,m)}(T),\psi_B^{(n,m)}(T)\right)^{\rm t}_{(n,m)\in\Z^2}\in l^2(\Z^2)\times l^2(\Z^2)$  denote the time-dependent amplitudes of the ground states centered at the $A-$ and $B-$ sites of the cell with label $(n,m)$, {\it i.e}
  the period cell containing $\bv_A+ n\bv_1+m\bv_2$ and $\bv_B+ n\bv_1+m\bv_2$. 
Recall that modes of the full honeycomb structure are assumed to be superpositions of interacting lattice-translates of ground states, concentrated on the support of deep potential wells. Each $A$-site amplitude interacts with its three nearest neighbor $B$-site amplitudes, and analogously for each $B$-site amplitude. The discrete equations are: 

\begin{align}
i\partial_T\ 
\begin{bmatrix}
\psi_A^{n,m}\\ \psi_B^{n,m}\
\end{bmatrix} &=\ t\ 
 \begin{bmatrix}  \psi_B^{n,m}\ +\ \psi_B^{n,m-1}\ +\  \psi_B^{n-1,m}\\
 \psi_A^{n,m}\ +\ \psi_A^{n+1,m}\ +\ \psi_A^{n,m+1}
 \end{bmatrix}\nn\\
 & =\ \frac{1}{|\brill_h|}\ \int_{\brill_h}\ e^{i\bk\cdot(n\bv_1+m\bv_2)}\ t\ H_{_{\rm TB}}(\bk)
 \begin{bmatrix}\widehat{\psi}_A(\bk) \\ \widehat{\psi}_B(\bk)\end{bmatrix}\ d\bk
\label{tb-eqn} \end{align}
 where $t$ denotes a non-zero coupling (``hopping'') coefficient,
 \begin{align}
 H_{_{\rm TB}}(\bk)\ &\equiv\ \begin{pmatrix} 0 & -\overline{\gamma(\bk)}
  \\ -\gamma(\bk)& 0\end{pmatrix},
  \label{HTBa}\\
  \gamma(\bk)\ &\equiv\ \sum_{\nu=1,2,3}e^{ i\bk\cdot \be_{B,\nu}}
 \ =\ e^{ i\bk\cdot \be_{B,1}}
\ \left(\ 1\ +\ e^{i\bk\cdot\bv_1}\ +\ e^{i\bk\cdot\bv_2}\ \right)\quad \textrm{(see \eqref{gamma-def0})}\ ,
\label{HTBb} \end{align}
 and  $\left(\widehat\psi_A(\bk),\widehat\psi_B(\bk)\right)^{\rm t}=\left(e^{-i\bk\cdot(\be_{B,1}/2)}\widetilde\psi_A(\bk),
  -e^{i\bk\cdot(\be_{B,1}/2)}\widetilde\psi_B(\bk)\right)^{\rm t}$, 
  where 
$\left(\widetilde\psi_A(\bk),\widetilde\psi_B(\bk)\right)^{\rm t}$ denotes the discrete Fourier transform of 
  $\left(\vec\psi_A(T),\vec\psi_B(T)\right)^{\rm t}$. The vectors $\be_{B,\nu}, \nu=1,2,3$ are the three vectors directed
  from any point in $\Lambda_B$ to its three nearest neighbors in $\Lambda_A$, and analogously for 
  $\be_{A,\nu}, \nu=1,2,3$; see Figure \ref{fig:fundamental-cell} below.
  
  Large but finite-time validity of such discrete approximations to time-dependent continuum Schr\"odinger equations, 
for certain initial data data, was studied in  \cites{ACZ:12,PSM:08}.

 The system \eqref{tb-eqn} has two dispersion surfaces.  To derive these explicitly, let
 $(\psi_A^{n,m}, \psi_B^{n,m})^{\rm t} = (  \alpha_A,\alpha_B)^{\rm t}  e^{-i\mathcal{E} T}\ e^{i(n\bv_1+m\bv_2)\cdot\bk}$, 
  where $\alpha_A$ and $\alpha_B$ are constants and   $\bk$ varies over the Brillouin zone, $\brill_h$; 
   see  Figure \ref{lattice-and-itsdual} and Section \ref{preliminaries}. 
 Substitution into \eqref{tb-eqn} yields the dispersion relation for the two spectral bands of the tight-binding model:
 \begin{align}
 \mathcal{E}_\pm(\bk)\ &=\  \pm\ |t|\ \mathscr{W}_{_{TB}}(\bk)\ ,\ \ \bk\in\brill_h\ ,\ \ {\rm where}\nn\\
 \mathscr{W}_{_{TB}}(\bk) &\ \equiv\ |\gamma(\bk)|\ =\  \left|\ 1+e^{i\bk\cdot\bv_1}+e^{i\bk\cdot\bv_2}\ \right| ; \label{tb-dispersion}\end{align}
$\mathcal{E}_\pm(\bk)$ are the two eigenvalue branches of $H_{_{\rm TB}}(\bk)$.
A plot of the two dispersion surfaces $\bk\in\R^2\mapsto\
\mathcal{E}_\pm(\bk),\ \ \bk\in\brill_h$ is shown in Figure \ref{fig:tight-binding-dsurfaces}.
\begin{figure}
\centering 
\includegraphics[width=0.65\textwidth]{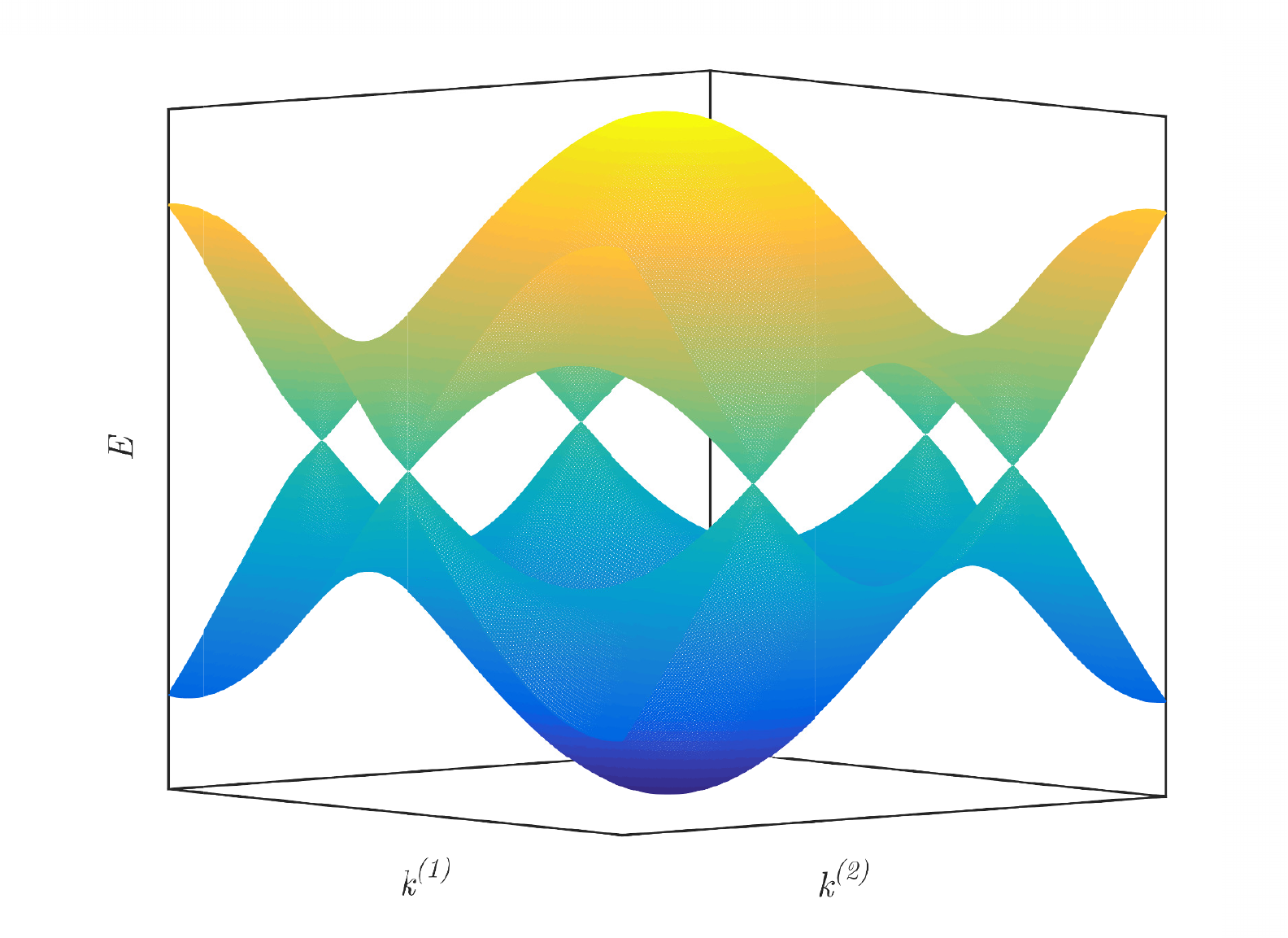}
\caption{\small Dispersion surfaces of Wallace{'}s  2-band tight-binding model. Dispersion relation displayed in  \eqref{tb-dispersion}.}
\label{fig:tight-binding-dsurfaces}
\end{figure}
 We note that the two dispersion surfaces are $\Lambda_h^*-$ periodic with respect to $\bk$, and touch conically 
  ($\mathscr{W}_{_{TB}}(\bk)=0$) at the six vertices of $\brill_h$,
  and their translates by the dual lattice, $\Lambda_h^*$; see Lemma \ref{gamma0}. The energy / quasi-momentum pairs
   at these conical intersection points are so-called Dirac points; see Section \ref{dirac-points}.

\subsection{Summary of results}\label{summaryofresults}
%

We study the  continuous Schr\"odinger operator, $-\Delta+\lambda^2 V(\bx)$, with honeycomb lattice potential, $V(\bx)$,  defined on $\R^2$ and $\lambda>\lambda_\star$ sufficiently large. Our particular model is one where $V$ is a superposition of ``atomic'' potential wells, $V_0(\bx)$, supported within the union of discs, centered on points of the honeycomb structure, $\Honeycomb$; see Figure \ref{fig:discs}. The detailed assumptions on $V_0(\bx)$ (Section \ref{atomic-well}) ensure that $V(\bx)$ is a honeycomb lattice potential in the sense of \cites{FW:12}, {\it i.e.} real-valued, periodic with respect equilateral triangular lattice, inversion symmetric and rotationally invariant by $120^\circ$.

 \begin{figure}
\centering 
\includegraphics[width=.75\textwidth]{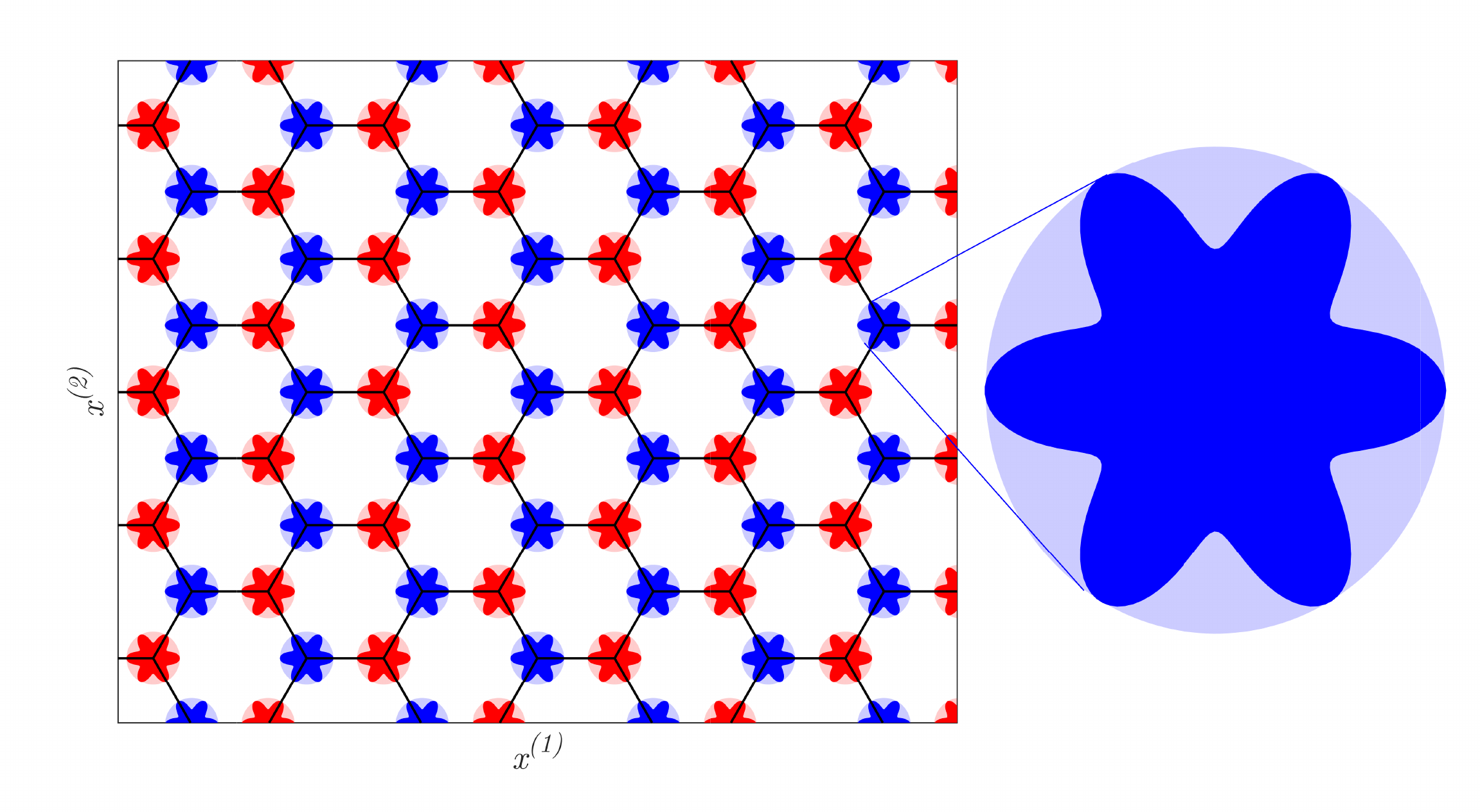}
\caption{\small Lightly shaded discs of radius $r_0<\frac{1}{2}|\be_{A,1}|$, centered at each $\bv\in\Honeycomb=\Lambda_A\cup\Lambda_B$. A copy of the atomic potential $V_0$, satisfying hypotheses $(PW_1)-(PW_4)$, {\bf (GS)} and {\bf (EG)}  in Section  \ref{atomic-well}, is supported within each disc. The $60^\circ$ degree rotationally invariant support of $V_0$ is darkly shaded. }
\label{fig:discs}
\end{figure}
 
For $\bk$ varying over the Brillouin zone, $\brill_h$, let 
$E_1^\lambda(\bk)\le E^\lambda_2(\bk)\le\dots\le E^\lambda_b(\bk)\le\dots$ (listed with multiplicity) denote the Floquet-Bloch  spectrum of $ -(\nabla+i\bk)^2 +  \lambda^2 V(\bx)$, considered with $\Lambda_h-$ periodic boundary conditions.  The graphs of the mappings: $\bk\mapsto E_b^\lambda(\bk)$ are the dispersion surfaces. We study the following
 
\noindent {\bf Problem:} Precisely describe the behavior of
the dispersion surfaces of $-\Delta +  \lambda^2 V(\bx)$, obtained from the low-lying (two lowest) eigenvalues of $ -(\nabla+i\bk)^2 + \lambda^2 V(\bx)$:
 \begin{align}
&  \bk\mapsto E_1^\lambda(\bk)\ =\ E_-^\lambda(\bk)\qquad {\rm and}\qquad  
\bk\mapsto E_2^\lambda(\bk)\ =\ E_+^\lambda(\bk) ,\nn\end{align}
for all  $\lambda>\lambda_\star$ sufficiently large.
 This is called the regime of {\it strong binding}.
We refer to the rescaled, $\lambda\to\infty$, limiting behavior as the {\it tight-binding limit}.
%
%
%

\nit {\bf The Main Theorem, Theorem \ref{main-theorem}}: For $\lambda>\lambda_\star$ sufficiently large,  the rescaled low-lying dispersion  surfaces, $\bk\mapsto E^\lambda_\pm(\bk)$, converge uniformly to  Wallace{'}s (1947) two-band  tight-binding model defined on a honeycomb structure. Specifically, for a suitable energy, $E^\lambda_D$, and $\rho_\lambda>0$, we have
\begin{align}
&\left(\ E^\lambda_-(\bk)-E^\lambda_D\ \right) / \rho_\lambda \to -\mathscr{W}_{_{TB}}(\bk)\ \ {\rm and}\ \ 
\left(\ E^\lambda_+(\bk)-E^\lambda_D\ \right) / \rho_\lambda \to +\mathscr{W}_{_{TB}}(\bk)\ ,
\label{conv_to_wallace-47}\\
&\textrm{as $\lambda\to\infty$, uniformly in $\bk\in\brill_h$, the Brillouin zone.}\nn
\end{align}
We also prove estimates and convergence of the derivatives of $E^\lambda_\pm(\bk)$ on appropriate domains. 
Furthermore, in {\bf Theorem 6.2} we establish the scaled norm-convergence of the resolvent of $-\Delta+\lambda^2V$ to that of the tight-binding Hamiltonian, $H_{_{\rm TB}}$. 
 
The first and second dispersion surfaces intersect precisely at the quasi-momenta located at the six vertices of $\B_h$
 at the energy-level, $E_D^\lambda$.
 The parameter $\rho_\lambda$, displayed in \eqref{rho-lam-def},
 is given
by an exponentially small overlap integral involving the atomic potential $V_0$, the ground state of $-\Delta+\lambda^2 V_0$
and the ground state translated to a nearest neighbor site of $\Honeycomb$; see Proposition \ref{prop:rho-lam-bounds}. 

 Theorem \ref{main-theorem} implies that, in the strong binding regime (all $\lambda$ sufficiently large),
  the only intersections of the lowest two dispersion surfaces occur at Dirac points, situated at the vertices, $\bK_\star$, of $\brill_h$. 
  Moreover, \eqref{conv_to_wallace-47} and the Taylor expansion of $\mathscr{W}_{_{TB}}(\bk)$ (Lemma \ref{gamma0}) near vertices gives:
  \begin{equation}
   E_\pm^\lambda(\bk)\ =\ E_D^\lambda\ \pm\ |\lamsharplam|\ |\bk-\bK_\star|
   + \mathcal{O}(\rho_\lambda|\bk-\bK_\star|^2) , 
   \label{Epm-expand}
   \end{equation}
   where 
   \begin{equation}
   |\lamsharplam|\ =\ \left[\ \frac{\sqrt3}{2}+\mathcal{O}(e^{-c\lambda})\ \right]\rho_\lambda
   \label{vF-expand}
   \end{equation}
   is the Fermi velocity (see Definition \ref{dirac-pt-defn} and \cites{FW:12}), the velocity of  ``quasi-particles'' (wave-packets)
   which are spectrally concentrated near Dirac points; see, for example, \cites{RMP-Graphene:09,FW:14}.
 \medskip
 
 \begin{remark}[Exchange of Dirac Points]\label{exchange}
Consider the Schroedinger operator $-\Delta + \lambda^2 V$. As in Theorem \ref{main-theorem},
 we take $V(\bx)$ to be a superposition 
of atomic wells $V_0(\bx)$ centered at honeycomb lattice sites. As shown in Appendix A of 
 \cites{FLW-2d_edge:16} it is possible to choose $V_0(\bx)$ so that $V_{_{1,1}}<0$, where  $V_{_{1,1}}$ denotes the $(1,1)-$ Fourier coefficient of $V(\bx)$. It follows from \cites{FW:12} that for $\lambda$ sufficiently small and positive $-\Delta + \lambda^2 V$ has Dirac points situated at the intersection of the $2^{nd}$ and $3^{rd}$ spectral bands.
Furthermore, by Theorem \ref{main-theorem} for $\lambda > \lambda_\star$ sufficiently large 
$-\Delta + \lambda^2 V$ has Dirac points situated
at the intersection of the $1^{st}$ and $2^{nd}$ spectral bands.
It follows that  for such honeycomb Schroedinger operators there is an {\it exchange of Dirac points} from the $2^{nd}$ and $3^{rd}$ bands
to the $1^{st}$ and $2^{nd}$ bands as $\lambda$ is increased across a finite value of $\lambda=\lambda_{cr}$, where $0< \lambda_{cr}<\lambda_\star$. A further result in \cites{FW:12} (see also Appendix D of \cites{FLW-MAMS:17}) is that $-\Delta + \lambda^2 V$ has Dirac points for all $\lambda>0$ outside a discrete set $\widetilde{C}$. Such examples prove
that the exceptional set $\widetilde{C}$ can be non-empty.
 \end{remark}

Theorem \ref{main-theorem}, together with results in \cites{FW:12} and \cites{FLW-2d_edge:16},
imply the following corollaries:
\begin{enumerate}
\item[(A)] {\it Corollary \ref{spectral-gaps}: Spectral gaps when breaking $\mathcal{P}\mathcal{T}$ symmetry}. Honeycomb lattice potentials, $V$, have the property that the associated Schr\"odinger Hamiltonian, $-\Delta+\lambda^2 V$,  commutes with the composition of inversion with respect to an appropriate center, and complex conjugation, $\mathcal{C}\circ\mathcal{I}$. This is also known as {\it $\mathcal{P}\mathcal{T}$ symmetry}.  For $\lambda$ large, a  consequence of Theorem \ref{main-theorem} and the results in \cites{FW:12} is the existence of spectral gaps about Dirac points (see Section \ref{dirac-points}) of $-\Delta+\lambda^2 V$ when the Hamiltonian is perturbed in such a way as to break $\mathcal{P}\mathcal{T}$ symmetry.
%

A review of other mechanisms for construction spectral gaps, also in the high-contrast regime, appears in 
\cites{Hempel-Post:03}; see also \cites{Figotin-Kuchment:96a,Figotin-Kuchment:96b}.

\item[(B)]  {\it Corollary \ref{rational-edge-states}: Protected edge states in honeycomb structures with line-defects}. Edge states are time-harmonic solutions of the  Schr\"odinger equation, which are propagating parallel to a
 line-defect (edge) and are  localized transverse to it; see Figure \ref{fig:mode_schematic}.
  In \cites{FLW-2d_edge:16} (see also \cites{FLW-2d_materials:15}), we develop a theory of protected edge states for  honeycomb structures, perturbed by a class of line-defects (domain walls) in the direction of an element of $\Lambda_h$  ({\it rational edges}). The key hypothesis is a {\it spectral no-fold condition}.  Our main result, Theorem \ref{main-theorem},  implies the validity of the spectral no-fold condition, and hence the existence of edge states for a large class of rational edges, in the strong binding regime. 
  
  Previous analytical work on topologically protected edge states in periodic structures with line-defects has focused on approximate tight-binding models; see, for example,  
   \cites{RMP-Graphene:09,delplace2011zak,Graf-Porta:13}. 
   
  Finally, we remark that the effect of interacting electrons in graphene, in the tight-binding limit,
   have been studied in \cites{Giuliani-Mastropietro:10,Giuliani-Mastropietro-Porta:12}.
  \end{enumerate}

\subsection{Outline}\label{outline}

 In Section \ref{fb_basics} we review basic Floquet-Bloch theory of Schr\"odinger operators with periodic potentials.
 Section \ref{HS} introduces the honeycomb structure, $\Honeycomb$, which is the union of the two sublattices $\Lambda_A$ and $\Lambda_B$.
 Section \ref{atomic-well} discusses hypotheses on the atomic potential well, $V_0$, and Section \ref{periodization} treats its  $\Lambda_h-$ periodization, $V$, obtained via summation over translates by vectors $\bv\in\Honeycomb$.  The atomic potential, $V_0$, is assumed to have compact support, $|\bx|<r_0$, where  $r_0<r_{critical}$ and $r_{critical}$ (which is less than than half the distance nearest neighbor lattice points in ${\bf H}$) is determined  by a Geometric Lemma presented in Section \ref{ME-bounds}. 
 The assertions of this lemma are easily seen to hold for $r_0$ positive and sufficiently small. A non-trivial lower bound for $r_{critical}$ is of interest in applications and for this we require the Geometric Lemma. 
 
 In Section \ref{main-results} we state our main result, Theorem \ref{main-theorem}, on the low-lying dispersion surfaces of 
  $-\Delta+\lambda^2V$, for $\lambda>\lambda_\star$ sufficiently large. We also state and prove consequences for Schr\"odinger operators on $\R^2$ with perturbed honeycomb structures in the regime of strong binding:  Corollary \ref{spectral-gaps} on spectral gaps and Corollary \ref{rational-edge-states} on protected edge states. 
  
Section \ref{dirac-points} reviews the notion of Dirac points and results  on the existence of Dirac points
  for generic honeycomb lattice potentials \cites{FW:12,FLW-MAMS:17}; see also 
  \cites{Colin-de-Verdiere:91,Grushin:09,berkolaiko-comech:15,Lee:16}. Dirac points are energy / quasi-momentum pairs, 
  which occur at quasi-momenta located at the vertices of the Brillouin zone, $\brill_h$, and at which neighboring dispersion surfaces touch conically.
  For an extensive discussion of Dirac points and edge states for nanotube structures in the context of  quantum graphs, see \cites{Kuchment-Post:07} and \cites{Do-Kuchment:13}.

  The proof of the main theorem, Theorem \ref{main-theorem}, on the large $\lambda$ behavior of low-lying dispersion surfaces is carried out in Sections  \ref{low-lying-fb} through \ref{ME-bounds}.
   In Section \ref{low-lying-fb} we construct approximate Floquet-Bloch modes, $p^\lambda_{\bk,A}(\bx)$ and $p^\lambda_{\bk,B}(\bx)$ (associated with the sublattices $\Lambda_A$ and $\Lambda_B$) for the two lowest spectral bands of the
  Floquet-Bloch Hamiltonian $H^\lambda(\bk)=-(\nabla_\bx+i\bk)^2+\lambda^2 V(\bx)-E_0^\lambda$\ ( $\bk\in\R^2$),  in terms of the ground state eigenpair, $\left(E_0^\lambda,p_0^\lambda(\bx)\right)$, of the atomic Hamiltonian, $-\Delta+\lambda^2 V_0$. 
  
  In Section \ref{en-estimates} we first derive energy estimates for the family of Floquet-Bloch Hamiltonians $H^\lambda(\bk)=-(\nabla_\bx+i\bk)^2+\lambda^2 V(\bx)-E_0^\lambda$, $\bk\in\R^2$, restricted to the $L^2(\R^2/\Lambda_h)-$ orthogonal complement of these approximate  Floquet-Bloch modes.  We then use these estimates to prove resolvent bounds on this subspace.
  
  In Sections \ref{dirac-points-sb} through \ref{ME-bounds} we apply resolvent bounds on $H^\lambda(\bk)$,  in a Lyapunov-Schmidt reduction scheme, for $\lambda$ large, to the 2D subspace ${\rm span}\{p^\lambda_{\bk,A}(\bx), p^\lambda_{\bk,B}(\bx)\}$. The main steps are  (i) a proof of key properties of Dirac points at the vertices of the Brillouin zone, $\brill_h$ (Theorem \ref{solve-L2tau-evp}) and (ii) a study of uniform convergence of the (rescaled) low-lying dispersion maps $\bk\mapsto E^\lambda_\pm(\bk)$ to the dispersion surfaces of Wallace{'}s tight-binding model (Propositions \ref{Proposition-RDSADP} and \ref{Proposition-nearDPs}).

  In Section \ref{LS-reduction} we characterize  the low-lying dispersion surfaces as the locus of points, $(\Omega,\bk)\in\R\times\R^2$ satisfying $\det\mathcal{M}^{\lambda}(\Omega,\bk)=0$. For each $\lambda$ sufficiently large $(\Omega,\bk)\mapsto \mathcal{M}^{\lambda}(\Omega,\bk) $ is an analytic map from a subset $U\subset\C\times\C^2$ into the space of $2\times2$ matrices, where $\mathcal{M}^{\lambda}(\Omega,\bk)$ is Hermitian for real $\Omega$ and $\bk$.  
   In Section \ref{sec:M-expanded} we expand $\mathcal{M}^{\lambda}(\Omega,\bk)$ for large $\lambda$ and in Sections \ref{dispersion-surfaces} and \ref{mu_pm-rescaled}  we introduce and analyze a rescaling of $\det\mathcal{M}^{\lambda}(\Omega,\bk)$ to complete the proof of our main result, Theorem \ref{main-theorem}.  
  
  Section \ref{ME-bounds} contains estimates that facilitate our control of the large $\lambda-$ perturbation theory in terms of an intrinsic (exponentially small) parameter, $\rho_\lambda$. This parameter has the form of an integral of the product of: the atomic potential well $V_0(\bx)$, the atomic ground state, $p_0^\lambda(\bx)$, and the translate of $p_0^\lambda(\bx)$ to a nearest neighbor lattice site in $\Honeycomb$. An important tool is a lemma in Euclidean geometry, used to bound the ground state $-\Delta+\lambda^2 V_0(\bx)$ and to quantify the maximum allowable size of the support of the atomic potential well in our proofs.
  
In the remainder of this section we discuss some definitions, notation and  conventions used
 throughout the paper.

\subsection{Notations and conventions}\label{preliminaries}
  
We denote by 
$ \Lambda_h=\Z{\bf v}_1 \oplus \Z{\bf v}_2,$
 the equilateral triangular lattice generated by the basis vectors:
\begin{align}
 \bv_1 &=\ \left( \begin{array}{c} \frac{\sqrt{3}}{2} \\ {}\\  \frac{1}{2}\end{array} \right),\ \ 
\bv_2 =\ \left(\begin{array}{c} \frac{\sqrt{3}}{2} \\ {}\\ -\frac{1}{2} \end{array}\right)\ .\label{v12-def}
\end{align}
The dual lattice $\Lambda_h^* =\ \Z {\bf k}_1\oplus \Z{\bf k}_2$ is spanned by the dual basis vectors:
\begin{align}
&  \bk_1=\ 2\pi\left(\begin{array}{c} \frac{ \sqrt{3} }{3}\\ {}\\ 1\end{array}\right),\ \ \
 \bk_2 = 2\pi\left(\begin{array}{c} \frac{ \sqrt{3} }{3}\\ {}\\ -1\end{array}\right).\ 
 \label{k1k2-def}
\end{align}
Note that $\bk_{\ell}\cdot \bv_{{\ell'}}=2\pi\delta_{\ell{\ell'}}$.
The Brillouin zone, $\brill_h$, is the hexagon in $\R^2_\bk$ consisting of all points which are closer to the origin than to any other point in $\Lambda_h^*$; see Figure \ref{lattice-and-itsdual}.

 Denote by $\bK$ and $\bKp$ the vertices of  $\brill_h$ given by:
\begin{equation}
\bK\equiv\frac{1}{3}\left(\bk_1-\bk_2\right)= \begin{pmatrix}0\\ \frac{4\pi}{3}\end{pmatrix},\ \ \bKp\equiv-\bK\ .
\label{KKprime}
\end{equation}
All six  vertices of $\brill_h$ can be generated from $\bK$ and $\bK^\prime$ by application of the rotation matrix, $R$,
 which rotates a vector in $\mathbb{R}^2$ clockwise by $2\pi/3$ about the origin. The matrix $R$ is given by
\begin{equation}
R\ =\ \left(\begin{array}{cc} -\frac{1}{2} & \frac{\sqrt{3}}{2}\\
                                                 {} & {}\\
                                           -\frac{\sqrt{3}}{2} & -\frac{1}{2}
                                           \end{array}\right)
  \label{Rdef}\end{equation}
Note the relations:
\begin{equation*}
R^*\bv_1\ =\ -\bv_2,\qquad R^*\bv_2\ =\ \bv_1-\bv_2
\end{equation*}
%

In Section \ref{HS} we make the choice of diamond-shaped fundamental cell for the honeycomb structure, $D$,
 shown in Figure \ref{fig:fundamental-cell}. 
$D$ contains two base-points ,
 \begin{equation}
\bv_A\ =\ \begin{pmatrix} 0\\ 0\end{pmatrix},\qquad \bv_B\ =\ \begin{pmatrix} \frac{1}{\sqrt3}\\ 0\end{pmatrix} ,
\label{vAB-def}\end{equation}
of the sublattices, $\Lambda_A$ and $\Lambda_B$, which comprise the honeycomb structure. The location 
\begin{equation}
\bx_c\ =\ \frac12\ \begin{pmatrix} \frac{1}{\sqrt3} \\ -1\end{pmatrix}\ .
\label{x_c-def}
\end{equation}
marks the center of a hexagon and is a vertex of $D$.
%

Additional frequently used notations and conventions are:
\begin{enumerate}[(1)]
 \item For $\bfm=(m_1,m_2)\in\Z^2$, $\bfm\vec\bk=m_1\bk_1+m_2\bk_2$ and $\bfm\vec\bv=m_1\bv_1+m_2\bv_2$.
 \item  $\bk=(k^{(1)},k^{(2)})$.
 \item $\tK$ will be used to denote a generic quasi-momentum. 
\item $\bK_\star$ will be used to denote for a generic element of  $\Lambda_\bK^*\cup \Lambda_{\bK'}^*=\left(\bK + \Lambda_h^*\right)\cup\left(\bK' + \Lambda_h^*\right)$. These are the vertices of the Brillouin zone, $\mathcal{B}_h$, and their translates by the dual lattice.
 \item $E_D^\lambda$ or $E_D$ denotes the energy of a Dirac point. 
  \item $x\lesssim y$ if and only if there exists $C>0$ such that $x \leq Cy$.
  And  $x \approx y$ if and only if $x \lesssim y$ and $y \lesssim x$. We shall discuss below 
    the dependencies of constants $C$.
  \item $\left\langle f,g\right\rangle$ is an inner product, which is linear in $g$ for fixed $f$, and conjugate linear in $f$, for fixed $g$.
  \item Let $\Lambda$ denote an arbitrary lattice in the plane, $\R^2$. For $s\in\R$, the space $H^s(\R^2/\Lambda) $ consists of complex-valued and $\Lambda-$ periodic functions $f$ on $\R^2$, whose Fourier coefficients, $\{\widehat{f}(\bfm)\}_{_{\bfm\in\Z^2}}$, satisfy
 \[ \|f\|_{_{H^s(\R^2/\Lambda)}}^2\ \equiv\ \sum_{\bfm\in\Z^2} (1+|\bfm|^2)^s\ |\widehat{f}(\bfm)|^2\ <\ \infty\ .\]
 \item For ${\bf F}=(F_1,F_2,\dots,F_m)$, with each $F_j\in \mathcal{Y}$, a normed linear space, we write
  $\|{\bf F}\|_{_{\mathcal{Y}}}=\sum_{j=1}^m\|F_j\|_{_{\mathcal{Y}}}$. 
  \end{enumerate}

  We study  $-\Delta+\lambda^2 V_0(\bx)$ and its $\Lambda_h-$ periodic variants. Here, $\lambda>0$ is a coupling constant, assumed to satisfy $\lambda>\lambda_\star$ for a large enough $\lambda_\star$; and  $V_0(\bx)$ is a given potential defined on $\R^2$.
  We write $c, C, C'$ {\it etc.} to denote constants which depend on $V_0$. A discussion of the  precise dependencies of constants is given in Section \ref{sec:constants}.
    These symbols may denote different constants in different occurrences. As a result of the above conventions, it is correct to assert, for example, $\lambda^{10}e^{-c\lambda}\le e^{-c\lambda}$.

Finally, for relations involving norms and inner products in which we do not explicitly indicate the relevant function space, it is to be understood that these are taken in $L^2(\R^2/\Lambda_h)$.

\subsection{Acknowledgements} The authors thank G. Berkolaiko for his interest and feedback concerning this work. We thank the referee for a careful reading and incisive comments which, in particular, inspired Theorem \ref{res-conv} and Section \ref{resolvent}. This research was supported in part by National Science  Foundation grants DMS-1265524 (CLF) and  DMS-1412560 (MIW \& JPL-T) and Simons Foundation Math + X Investigator Award \#376319 (MIW).

\section{Floquet-Bloch Theory and Honeycomb Lattice Potentials}\label{fb_basics} 

We begin with a review of Floquet-Bloch theory. For the theory, see \cites{Eastham:74,Kuchment:12,Kuchment:16,RS4} and \cites{avron-simon:78,Skriganov:85,Gerard:90,Karpeshina:97,Gerard-Nier:98}.

\subsection{Fourier analysis on $L^2(\R/\Lambda)$}\label{fourier-analysis}

Let $\{\vtilde_1,\vtilde_2\}$ be a linearly independent set in $\R^2$, and introduce the
\begin{align}
&\text{\bf Lattice: } \Lambda = \Z\vtilde_1\oplus\Z\vtilde_2 = \{m_1\vtilde_1 + m_2\vtilde_2 \ : \ m_1,m_2 \in \Z \} ;   \nn \\
&\text{\bf Dual lattice: }
\Lambda^\ast = \Z\ktilde_1\oplus\Z\ktilde_2 = \{ \bfm\vec\ktilde=m_1\ktilde_1 + m_2\ktilde_2  : m_1,m_2 \in \Z \} , \nn\\ 
&\qquad\qquad\qquad\qquad\qquad \ktilde_i\cdot\vtilde_j = 2\pi\delta_{ij},\ 1\leq i,j \leq 2;   \nn\\
& \text{\bf Fundamental period cell, }\ \Omega\subset\R^2; \nn\\
&\text{\bf Brillouin zone: }  \B,\ \textrm{a choice of fundamental dual cell}. \nn 
 \end{align}

 \begin{definition}[The spaces $L^2(\R^2/\Lambda)$ and $L^2_\bk$] \ \ 
 \label{L2-spaces}
 \begin{enumerate}
  \item[(a)] $L^2(\R^2/\Lambda)$ denotes the space of $L^2_{loc}$ functions which are  $\Lambda-$ periodic: 
  $ f\in L^2(\R^2/\Lambda)$ if and only if $f(\bx+\vtilde)=f(\bx)$ for almost all $ \bx\in\R^2,\ \vtilde\in\Lambda$; and 
$f\in L^2(\Omega)$.
  \item[(b)] $L^2_\bk$ denotes the space of $L^2_{loc}$ functions which satisfy a pseudo-periodic boundary condition: $ f(\bx+\vtilde)=e^{i\bk\cdot\vtilde}f(\bx)$ for all $\vtilde\in\Lambda$ and almost all $\bx\in\R^2$; {\it i.e.} $e^{-i\bk\cdot\bx}f(\bx)\in L^2(\R^2/\Lambda)$.
  
\nit  For $f$ and $g$ in $L^2_\bk$, $\overline{f}g$ is in $L^2(\R^2/\Lambda)$ and we define the inner product by
  \begin{equation*}
   \inner{f,g}_{L^2_\bk} = \int_\Omega \overline{f(\bx)} g(\bx) d\bx.
  \end{equation*}
 \end{enumerate}
 \end{definition}

\subsection{Floquet-Bloch theory}\label{flo-bl-theory}

Let $Q(\bx)$ denote a real-valued potential which is periodic with respect to $\Lambda$. We shall assume throughout this paper that $Q\in C^\infty(\R^2/\Lambda)$, although we expect that this condition can be relaxed significantly without much extra work; see Remark \ref{energy-gap}.
Introduce the  Schr\"odinger Hamiltonian 
$H \equiv -\Delta + Q(\bx)$. 
For each $\bk\in\R^2$, we study the {\it Floquet-Bloch eigenvalue problem} on $L^2_\bk$:
\begin{align}
 \label{fl-bl-evp}
&H \Phi(\bx;\bk) = E(\bk) \Phi(\bx;\bk), \ \ \bx\in\R^2, \\
&\Phi(\bx+\vtilde;\bk)=e^{i\bk\cdot\vtilde}\Phi(\bx;\bk), \ \ \vtilde\in\Lambda \nn .
\end{align}
An $L^2_\bk-$ solution of \eqref{fl-bl-evp} is called a {\it Floquet-Bloch} state.

Since the $\bk-$ pseudo-periodic boundary condition in \eqref{fl-bl-evp} is invariant under translations in the dual period lattice, $\Lambda^\ast$, it suffices to restrict our attention to $\bk\in\B$, where  $\B$, the {\it Brillouin Zone},  is a fundamental cell in $\bk-$ space.

An equivalent formulation to \eqref{fl-bl-evp} is obtained by setting $\Phi(\bx;\bk)=e^{i\bk\cdot\bx}p(\bx;\bk)$. Then, 
\begin{equation}
 \label{fl-bl-evp-per}
 H(\bk) p(\bx;\bk) = E(\bk) p(\bx;\bk), \ \bx\in\R^2, \quad p(\bx+\vtilde)=p(\bx;\bk),\ \  \vtilde\in\Lambda,
\end{equation}
where
$ H(\bk) \equiv -(\nabla+i\bk)^2 + Q(\bx)$ 
is a self-adjoint operator on  $L^2(\R^2/\Lambda)$.
 The eigenvalue problem  \eqref{fl-bl-evp-per}, has a discrete set of eigenvalues
$E_1(\bk)\leq E_2(\bk)\leq \cdots \leq E_b(\bk)\leq \cdots$,
with $L^2(\R^2/\Lambda)-$ eigenfunctions $p_b(\bx;\bk),\  b=1,2,3,\ldots$. 
 The maps $\bk\in\B\mapsto E_j(\bk)$ are, in general, Lipschitz continuous functions; for example,
  see \cites{avron-simon:78,Kuchment:12,Kuchment:16} and Appendix A of \cites{FW:14}. For each $\bk\in\B$, the set $\{p_j(\bx;\bk)\}_{j\geq1}$ can be taken to be a  complete orthonormal basis for $L^2(\R^2/\Lambda)$.
%
%
%
%

As $\bk$ varies over $\B$, $E_b(\bk)$ sweeps out a closed interval in $\R$. The union over $b\ge1$ of these closed intervals is exactly the $L^2(\R^2)-$ spectrum of $-\Delta+Q(\bx)$: 
$ \text{spec} \left(H\right) = \bigcup_{\bk\in\B} \text{spec} \left(H(\bk)\right).$
Furthermore, the set $\{\Phi_b(\bx;\bk)\}_{b\geq1,\bk\in\B}$ is complete in $L^2(\R^2)$. For a suitable normalization of
 $\Phi_b(\bx;\bk)$, we have 
\begin{equation*}
 f(\bx) = \sum_{b\geq1} \int_\B \inner{\Phi_b(\cdot;\bk),f(\cdot)}_{L^2(\R^2)} \Phi_b(\bx;\bk) d\bk
 \equiv \sum_{b\geq1} \int_\B \widetilde{f}_b(\bk) \Phi_b(\bx;\bk) d\bk ,
\end{equation*}
where the sum converges in the $L^2$ norm.

\section{Honeycomb Structure}\label{HS}

Denote by $\Lambda_h\subset \R^2$, the equilateral triangular lattice specified in Section \ref{preliminaries}. Recall the base points $\bv_A$ and $\bv_B$, defined in \eqref{vAB-def}.


\noindent{\bf Sublattices: $\Lambda_A$ and $\Lambda_B$ and the honeycomb structure $\Honeycomb$}: 
Generate the $A-$ sublattice, $\Lambda_A$, 
and the $B-$ sublattice, $\Lambda_B$:
$\Lambda_I = \bv_I+\Lambda_h\subset\R^2,\ \ I=A, B$ .
%
The honeycomb structure is defined to be:
\begin{equation*}
\Honeycomb = \Lambda_A\ \cup\ \Lambda_B\subset\R^2.
\end{equation*}


\noindent{\bf $D$, fundamental domain for $\R^2/\Lambda_h$:}\ Let $D\subset \R^2$ denote the diamond-shaped fundamental domain for the torus, $\R^2/\Lambda_h$,
shown in Figure \ref{fig:fundamental-cell}. Choose $D$ so that  $\bv_A, \bv_B\in D$. $\bx_c$ is the center of a hexagon (not a point in $\Honeycomb$) and a vertex of the parallelogram $D$.
For any $F\in L^1(\R^2)$ we have
\begin{equation*}
\int_{\bx\in\R^2}\ F(\bx)\ d\bx\ =\ \int_{\bx\in D}\ \sum_{\bv\in\Lambda_h} F(\bx-\bv)\ d\bx .
\end{equation*}

 \begin{figure}
\centering 
\includegraphics[width=0.75\textwidth]{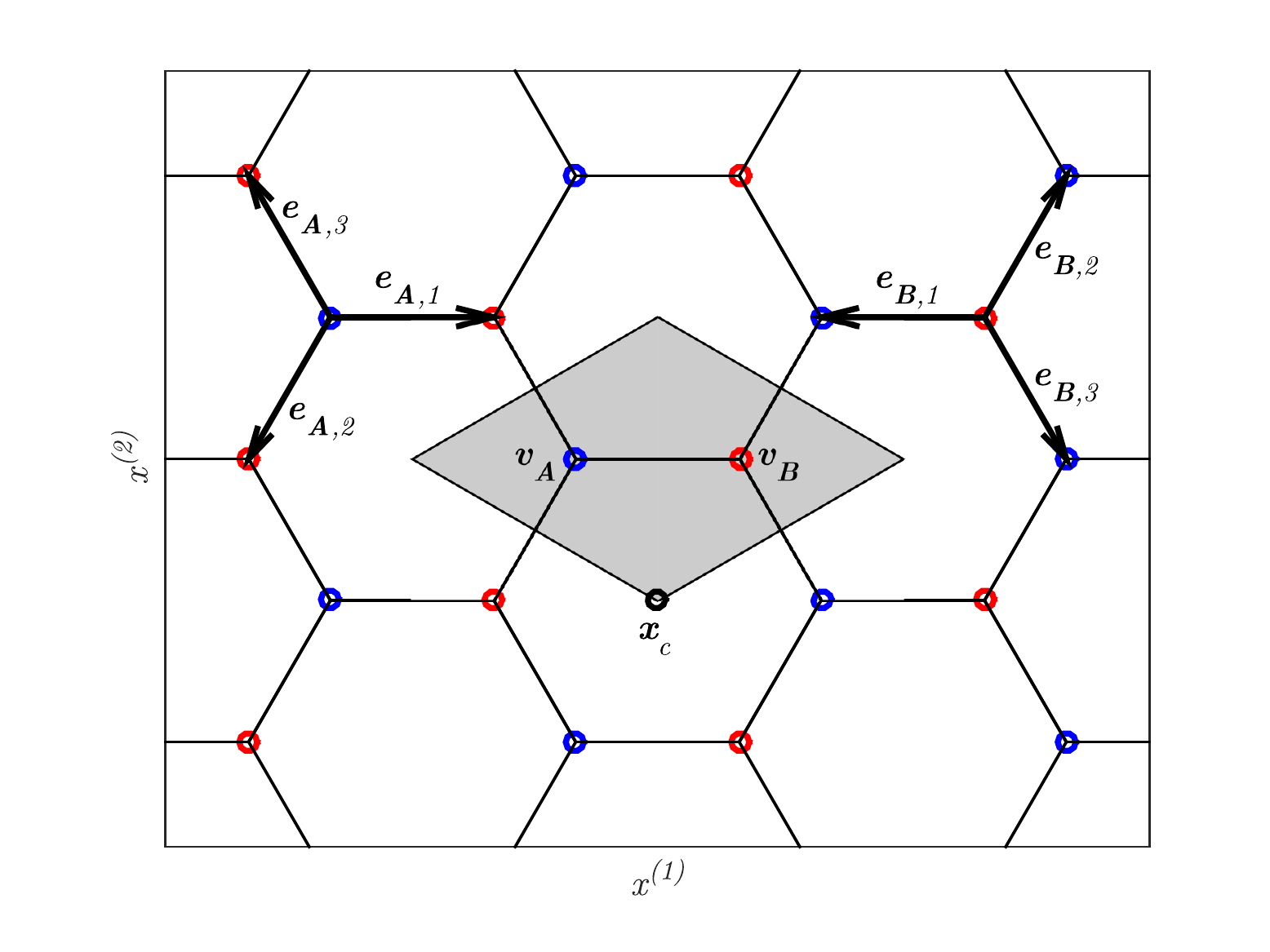}
\caption{\small Diamond-shaped (shaded) fundamental domain, $D$, containing two base points of the honeycomb, $\Honeycomb$: $\bv_A=(0,0)$ and $\bv_B=(1/\sqrt3,0)$. Hexagon center, $\bx_c$,  is a point relative to which the honeycomb potential $V$ is $120^\circ$ rotationally invariant and inversion symmetric. Indicated are: vectors $\be_{A,\nu},\ \nu=1,2,3$ from a typical site in $\Lambda_A$ pointing to its three nearest neighbors in $\Lambda_B$, and $\be_{B,\nu}=-\be_{A,\nu},\ \nu=1,2,3$ from a site typical $\Lambda_B$ pointing to its three nearest neighbors in $\Lambda_A$. }
\label{fig:fundamental-cell}
\end{figure}


\noindent{\bf Nearest neighbors in $\Honeycomb$:} For any fixed $\bv\in\Lambda_A$, the points in $\Honeycomb$ which are nearest to $\bv$ are the three points in the lattice $\Lambda_B$ given by:
\begin{equation}
\bv + \be_{A,1},\ \bv + \be_{A,2},\ {\rm and}\ \bv + \be_{A,3} .
\label{vA-nneighbors}
\end{equation}
Here $\be_{A,\nu},\ \nu=1,2,3$ are shown in Figure \ref{fig:fundamental-cell}. 
Thus, 
\begin{equation}
 \be_{A,\nu}=R^{\nu-1}\be_{A,1}=R^{\nu-1}\begin{pmatrix} \frac{1}{\sqrt3}\\ 0\end{pmatrix},\ \ \nu=1,2,3 ,
 \label{eAB-def}
 \end{equation}
where $R$ is the $120^\circ$ clockwise rotation matrix; see \eqref{Rdef}.

Similarly, for any $\bw\in\Lambda_B$, the points in $\Honeycomb$ which are nearest to $\bw$ are the three points in $\Lambda_A$:
\begin{equation}
\bw + \be_{B,1},\ \bw + \be_{B,2},\ {\rm and}\ \bw + \be_{B,3},
\label{vB-nneighbors}
\end{equation}
where $\be_{B,\nu},\ \nu=1,2,3$ are shown in Figure \ref{fig:fundamental-cell}. Note that $\be_{A,\nu}=-\be_{B,\nu}$.

\section{Atomic potential well, $V_0(\bx)$, and ground state: $(p_0^\lambda(\bx), E_0^\lambda)$}
\label{atomic-well}

Fix a smooth potential well $V_0(\bx)$ on $\R^2$ with the following properties.
\begin{itemize}
\item[($PW_1$)] $-1\le V_0(\bx)\le0,\ \bx\in\R^2$.
\item[($PW_2$)] support $V_0\ \subset\ \{\bx\in\R^2: |\bx|<r_0\}$, where $r_0<r_{critical}$. Here, 
\[ 0.33 |\be_{A,1}|\le r_{critical}< 0.5|\be_{A,1}|,\ \] as determined in Geometric Lemma \ref{euclid1}, and 
$|\be_{A,1}|= 1/\sqrt3$ is the distance between nearest neighbor vertices.
\item[($PW_3$)]
 $V_0(\bx)$ is invariant under a $2\pi/3$ ($120^\circ$) rotation about the origin, $\bx=0$.
\item[($PW_4$)] $V_0(\bx)$ is inversion-symmetric with respect to the origin; $V_0(-\bx)=V_0(\bx)$.
\end{itemize}

Consider the ``atomic'' Schr\"odinger operator $-\Delta + \lambda^2 V_0(\bx)$ in $L^2(\R^2)$. Let $p_0^\lambda(\bx), E_0^\lambda$, respectively, be the ground state eigenfunction and strictly negative ground state eigenvalue of $-\Delta + \lambda^2 V_0(\bx)$. This eigenpair is simple and, by the symmetries of $V_0$,  $p_0^\lambda(\bx)$ is invariant under a  $60^\circ$ rotation about the origin. 
 
 We normalize $p_0^\lambda(\bx)$ so that $p_0^\lambda(\bx)>0$ for all $\bx\in\R^2$, and 
\begin{equation*}
\int_{\R^2} |\ p_0^\lambda(\bx)\ |^2\ d\bx\ =\ 1.
\end{equation*}
Note that since $V_0\in L^\infty(\R^2)$, $p_0^\lambda\in H^2(\R^2)$. The ground state, $p_0^\lambda$, satisfies the
following pointwise bound 
 \begin{align}
 p_0^\lambda(\bx) &\le
  \begin{cases} 
C_1\ e^{-c_1\lambda|\bx|}, & |\bx|\ge r_0+c_0\\
C_2\  \lambda,& |\bx|< r_0+c_0 .
  \end{cases}
  \label{expo-decay1}
  \end{align}
  where $\supp(V_0)\subset B(0,r_0)$ and $c_0>0$, and $C_1, C_2$ and $c_1$ are constants that depend on 
   $V_0$, $r_0$ and $c_0$; see Corollary \ref{cor:expo-decay}.

In addition to hypotheses $(PW_1)-(PW_4)$ on $V_0(\bx)$, we assume the following two properties of the Hamiltonian
 $-\Delta + \lambda^2 V_0(\bx)$:

 
\nit{\bf (GS) Ground state energy upper bound:} For $\lambda$ large, $E_0^\lambda$, the ground state energy of  $-\Delta + \lambda^2 V_0(\bx)$, 
satisfies the upper bound 
\begin{equation} E_0^\lambda\  \le\ -C \lambda^2 .
\label{GS}\end{equation}
Here, $C$ is a strictly positive constant depending on $V_0$. A simple consequence of the variational characterization of $E_0^\lambda$ is the lower bound $E_0^\lambda \ge -\|V_0\|_{_{L^\infty}} \lambda^2=-\lambda^2$. However, the upper bound \eqref{GS} requires further restrictions on $V_0$.
%
%

\nit{\bf (EG) Energy gap property:} There exists $c_{gap}>0$, such that if $\psi\in H^2(\R^2)$ and $\left\langle p_0^\lambda,\psi\right\rangle_{_{L^2(\R^2)}} =\ 0$, then
\begin{equation}
\left\langle\ \left(-\Delta + \lambda^2 V_0\right)\psi,\psi\ \right\rangle_{_{L^2(\R^2)}}\ \ge\ (E_0^\lambda+c_{gap})\ \|\psi\|_{_{L^2(\R^2)}}^2.\label{EG}
\end{equation}
\subsection{Examples of the energy gap property, (EG)}\label{energy-gap}

\begin{enumerate}
\item Let $V_0(\bx)$ be a smooth potential well. For simplicity, assume that $V_0$ has a single non-degenerate minimum at $\bx={\bf 0}$: $\min_{_{\bx\in\R^2}} V_0(\bx)=V_0({\bf 0})=-1$,  $D^2V({\bf 0})=I$, $-1\le V_0(\bx)\le0$ and $V_0(\bx)\to0$ sufficiently  rapidly as $|\bx|\to\infty$.  Then, a simple argument based on the scaling: $\by=\lambda^{1\over2}\bx$,  indicates that for fixed $N\ge1$ and $\lambda>\lambda_N$ sufficiently large, the first $N-$ eigenvalues of $-\Delta_\bx+\lambda^2 V_0(\bx)$
satisfy, to leading order:
\begin{equation}
 E_j^\lambda\ =\ -\lambda^2  + \lambda\ h_j,\ \ 1\le j\le N,
 \end{equation}
 where $h_j$ is the $j^{th}$ eigenvalue of the harmonic oscillator  Hamiltonian $-\Delta_\by +\frac12 |\by|^2$.
Rigorous results are presented in \cites{Simon:83,CFKS:87}.
In this case we have that $c_{gap}$ is of order $\lambda$. 
 %
 %
 
\item Consider piecewise constant cylindrical well, defined by the potential 
\begin{equation}
V_0(\bx)= 
\begin{cases} 
-1 & \textrm{ for $|\bx|<R$}\\
0 &\textrm{ for $|\bx|\ge R$}
\end{cases}
\end{equation}
(Strictly speaking this choice of $V_0$ does not satisfy the above smoothness hypotheses, but it is not
difficult to extend our conclusions concerning $c_{gap}$, to the case where the discontinuity of $V_0$ is smoothed out.)
Solutions which are regular at $|\bx|=0$ and are decaying as $|\bx|\to\infty$ are of the form $e^{im\theta}u(r)$, where
\begin{equation*}
u(r) = \begin{cases} 
\alpha_0\ J_m(\sqrt{\lambda^2-|E|}\ |\bx| ) &\textrm{ for $|\bx|\le R$}\\
\alpha\  I_m(|E|^{\frac12}|\bx|) & \textrm{ for $|\bx|>R$} .
\end{cases}
\end{equation*}
Here, $J_m(r)$ and $I_m(r)$ are respectively the  Bessel and modified Bessel functions of order $m\ge0$;
 $J_m(r)$ is regular at $r=0$ and $I_m(r)$ decays exponentially as $r\to\infty$. 
Eigenvalues, $E$,  are given by solutions of
\begin{equation}
J_m(\sqrt{\lambda^2-|E|}\ R) =  \frac{\sqrt{\lambda^2-|E|}}{|E|^{1\over2}}\ \frac{I_m^\prime(|E|^{1\over2})}{I_m(|E|^{1\over2})}\ J_m^\prime(\sqrt{\lambda^2-|E|}\ R),
\end{equation}
Consider say the first two eigenvalues of $-\Delta + \lambda^2 V_0(\bx)$. For $\lambda$ sufficiently large, these
eigenvalues are given, to leading order by: $E= -\lambda^2+ \left(\rho/R\right)^2 $, where $\rho$ 
varies over the  roots of $J_m(\rho)=0$.  Therefore, in this case $c_{gap}$ is of order $1$.

\end{enumerate}

\subsection{Bounds on the derived (intrinsic) small parameter, $\rho_\lambda$}\label{rho-lambda}
 Let
\begin{equation}
\rho_\lambda \ \equiv\ \int \lambda^2 |V_0(\by)|\ p_0^\lambda(\by)\ p_0^\lambda(\by+\be_{A,1})\ d\by\ . 
\label{rho-lam-def}
\end{equation}

\begin{proposition}\label{prop:rho-lam-bounds}
There exist positive constants $\lambda_\star$, $c_1$ and $c_2$, all depending on $V_0$,  such that for all
 $\lambda>\lambda_\star$,
 \begin{equation}
e^{-c_1\lambda}\ \lesssim\ \rho_\lambda\ \lesssim\ \ e^{-c_2\lambda} 
\label{rho-lambda-bounds}
\end{equation}
\end{proposition}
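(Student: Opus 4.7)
The plan is to prove both bounds via pointwise control of the atomic ground state $p_0^\lambda$: the upper bound is a direct consequence of the pointwise estimate \eqref{expo-decay1}, while the lower bound requires a matching exponential pointwise lower bound on $p_0^\lambda$, obtained by a sub-solution/maximum-principle argument.

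For the upper bound, I would use the support condition $r_0 < \tfrac{1}{2}|\be_{A,1}|$, which forces $|\by+\be_{A,1}| \geq |\be_{A,1}|-r_0 > 0$ on the integrand; for $\lambda \geq \lambda_\star$ this distance exceeds $r_0 + c_0$, so the exponential branch of \eqref{expo-decay1} applies to the shifted factor. Combined with $|V_0|\leq 1$ and the interior bound $p_0^\lambda(\by)\leq C_2\lambda$ on $\supp V_0$,
\[
\rho_\lambda \;\leq\; \lambda^2 \cdot C_2\lambda \cdot C_1 e^{-c_1\lambda(|\be_{A,1}|-r_0)} \cdot \pi r_0^2 \;\leq\; e^{-c_2\lambda}
\]
for any $c_2 < c_1(|\be_{A,1}|-r_0)$ and $\lambda$ large enough to absorb the polynomial prefactor.

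For the lower bound, my plan has three steps. First, by the semiclassical concentration of the ground state of $-\Delta + \lambda^2 V_0$ near the minimum of $V_0$ on scale $\lambda^{-1/2}$, together with the $L^2$-normalization $\|p_0^\lambda\|_2 = 1$, I expect $p_0^\lambda(\by)\gtrsim\lambda^{1/2}$ on some ball $B(0,\alpha\lambda^{-1/2})$. Second, a Harnack-type propagation on overlapping balls of the natural semiclassical radius (or an Agmon-distance argument) yields the boundary-of-well lower bound $m_\lambda := \min_{|\by|=r_0} p_0^\lambda(\by)\;\gtrsim\;e^{-C\lambda}$. Third, exterior to $\supp V_0$, the ground state satisfies $(-\Delta + |E_0^\lambda|)\,p_0^\lambda = 0$ with $|E_0^\lambda|\geq C\lambda^2$ by (GS); setting $\kappa_\lambda := \sqrt{|E_0^\lambda|}\sim \lambda$ and using the radial subsolution $\phi(\by) = m_\lambda\, K_0(\kappa_\lambda|\by|)/K_0(\kappa_\lambda r_0)$ (with $K_0$ the modified Bessel function, which solves $-\Delta u + \kappa_\lambda^2 u = 0$ in two dimensions), the maximum principle applied in the exterior domain — both $p_0^\lambda$ and $\phi$ decay at infinity, and $p_0^\lambda\geq\phi$ on $|\by| = r_0$ by construction — gives $p_0^\lambda(\by)\geq\phi(\by)$ for $|\by|\geq r_0$. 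Evaluating at $\by=\be_{A,1}$ and using $K_0(z)\sim\sqrt{\pi/(2z)}\,e^{-z}$ yields $p_0^\lambda(\be_{A,1}) \gtrsim e^{-C'\lambda}$. Integrating $\lambda^2|V_0(\by)|\, p_0^\lambda(\by)\, p_0^\lambda(\by+\be_{A,1})$ over a ball of radius $\alpha\lambda^{-1/2}$ around the origin, where $|V_0|\geq c_0>0$, then produces $\rho_\lambda \gtrsim e^{-c_1\lambda}$.

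The main obstacle will be the sharp lower bound $m_\lambda\gtrsim e^{-C\lambda}$ in the second step. The upper bound \eqref{expo-decay1} gives the correct exponential order, but producing a matching lower bound must thread the needle of using (GS) to fix $\kappa_\lambda$ commensurate with $\lambda$, while propagating the interior peak of height $\lambda^{1/2}$ out to $|\by| = r_0$ without losing more than a polynomial factor. A minor subtlety is that $p_0^\lambda$ is only $60^\circ$-rotationally invariant rather than radial, so the radial subsolution $\phi$ must be calibrated to the angular minimum $m_\lambda$ rather than an average. Fortunately the $\lesssim$ convention permits any polynomial loss in $\lambda$, which makes the argument considerably more forgiving than obtaining a leading-order asymptotic.
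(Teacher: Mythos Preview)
Your upper bound is correct and matches the paper's argument in spirit (both use the exponential decay of $p_0^\lambda$ at the shifted point, though the paper phrases it via the inequality $p_0^\lambda(\by+\be_{A,1})\lesssim e^{-c\lambda}p_0^\lambda(\by)$ on $\supp V_0$ and then uses $\|p_0^\lambda\|_{L^2}=1$).

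For the lower bound, your route is genuinely different, and there is a gap. Your Step~1 invokes semiclassical concentration of $p_0^\lambda$ on scale $\lambda^{-1/2}$ with height $\lambda^{1/2}$, but the hypotheses $(PW_1)$--$(PW_4)$, {\bf (GS)}, {\bf (EG)} do \emph{not} assume $V_0$ has a non-degenerate minimum; the smoothed cylindrical well in Section~\ref{energy-gap} is an admissible $V_0$ for which no such concentration occurs. Even replacing Step~1 by the weaker (and true) statement that $\sup p_0^\lambda \gtrsim 1$ somewhere in $B(0,2r_0)$, your Step~2 then has to propagate positivity across the classically forbidden region, where the zeroth-order coefficient $\lambda^2 V_0 - E_0^\lambda$ is of size $\lambda^2$ and changes sign; a Harnack chain or Agmon \emph{lower} bound argument can be made here, but it is substantially more delicate than you indicate, and you correctly flag it as the main obstacle.

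The paper sidesteps all of this with an integral-equation trick: write $p_0^\lambda = \mathcal{K}_\lambda * (\lambda^2|V_0|p_0^\lambda)$, where $\mathcal{K}_\lambda$ is the Green's function for $-\Delta + |E_0^\lambda|$, and substitute into $\rho_\lambda$ to get a double integral over $\supp V_0 \times \supp V_0$ with kernel $\mathcal{K}_\lambda(\by+\be_{A,1}-\bz)$. Since the argument of $\mathcal{K}_\lambda$ is bounded, one has $\mathcal{K}_\lambda \gtrsim e^{-C\lambda}$ pointwise, and the remaining factor is $\big(\lambda\!\int |V_0|\,p_0^\lambda\big)^2$. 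A Young's inequality applied to the same convolution identity, together with $\|p_0^\lambda\|_{L^2}=1$ and $\|\mathcal{K}_\lambda\|_{L^2}\sim\lambda^{-1}$ (via {\bf (GS)}), gives $\lambda\!\int |V_0|\,p_0^\lambda \gtrsim 1$. This requires no pointwise lower bound on $p_0^\lambda$ whatsoever and uses only the stated hypotheses.
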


\nit The proof is given in Section \ref{ME-bounds}.

\begin{remark}\label{rho-indep}
By hypotheses $(PW_1)-(PW_4)$ on $V_0$, the expression $\int \lambda^2 |V_0(\by)|\ p_0^\lambda(\by)\ p_0^\lambda(\by+\be_{I,\nu})\ d\by$ 
is independent of  $I\in\{A,B\}$ and $\nu=1,2,3$, and is equal to $\rho_\lambda$.
\end{remark}

\section{ $\Honeycomb-$ periodization of  $V_0$ and the Bloch- Hamiltonian $H^\lambda(\bk)$}\label{periodization}

We construct a honeycomb potential by summing translates of $V_0(\bx)$ over the honeycomb structure:
\begin{equation}
V(\bx)\ =\ \sum_{\bv\in\Honeycomb}V_0(\bx-\bv),\ \ \Honeycomb=\Lambda_A\cup\Lambda_B.
\label{Vdef}
\end{equation}
Since  ${\rm supp}(V_0)\subset B({\bf 0},r_0)$ (hypothesis $(PW_2)$)  we have that
\begin{equation}
\textrm{if\quad dist$(\bz,\Lambda_A\cup\Lambda_B)> r_0$,\quad then\ V(\bz)=0\ .}
\label{Veq0}
\end{equation}

 We next remark on the symmetries of $V(\bx)$, defined by \eqref{Vdef}. Let $\bx_c$ denote the center  
  of the fundamental hexagon, depicted in Figure \ref{fig:fundamental-cell}. 
  
  \noindent{\it $120^{\circ}$ rotation with respect to $\bx_c$:}\ Given a point $\bx\in\R^2$, its $2\pi/3$ counterclockwise rotation about $\bx_c$, denoted $\widehat{\bx}_{\mathcal R}$, satisfies:
 \begin{equation*}
  \widehat{\bx}_{\mathcal R}-\bx_c \equiv R^*(\bx-\bx_c).
 \end{equation*}
  
   \noindent{\it Inversion with respect to $\bx_c$:}\ Given a point $\bx\in\R^2$, its inversion with respect to $\bx_c$, denoted $\widehat{\bx}_{\mathcal{I}}$, satisfies:
     \begin{equation*}
  \widehat{\bx}_{\mathcal{I}}-\bx_c \equiv -(\bx-\bx_c).
  \end{equation*}
The following proposition on the symmetries of $V(\bx)$, defined in \eqref{Vdef}, can be easily verified. 
\begin{proposition}\label{VisHLP}
$V(\bx)$, defined in \eqref{Vdef},  is a honeycomb lattice potential in the sense of \cites{FW:12}. That is, $V$ is real-valued, smooth, $\Lambda_h-$ periodic and, with respect to $\bx_c$,
 $V$ is rotationally invariant by $120^{\circ}$ and inversion symmetric. That is, for all $\bx\in\R^2$,
 \begin{align}
V(\widehat{\bx}_{\mathcal R}) &\equiv V(\bx_c+R^*(\bx-\bx_c)) = V(\bx), \nn \\ 
V(\widehat{\bx}_{\mathcal I}) &\equiv  V(2\bx_c-\bx) = V(\bx) . \label{V-inv}
 \end{align}
 \end{proposition}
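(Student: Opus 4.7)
The plan is to verify the four asserted properties of $V(\bx)=\sum_{\bv\in\Honeycomb}V_0(\bx-\bv)$ in turn, reducing each to an elementary statement about the atomic well $V_0$ combined with a set-theoretic invariance property of the honeycomb structure $\Honeycomb=\Lambda_A\cup\Lambda_B$.

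\textbf{Real-valuedness, smoothness, and local finiteness.} The realness of $V$ is immediate from hypothesis $(PW_1)$. By $(PW_2)$, $\supp V_0\subset B(0,r_0)$ with $r_0<\tfrac{1}{2}|\be_{A,1}|$, hence the translates $\{V_0(\cdot-\bv)\}_{\bv\in\Honeycomb}$ have pairwise disjoint supports. Consequently, at any $\bx\in\R^2$ at most one term of the sum is nonzero, and differentiability of $V$ reduces to differentiability of $V_0$; smoothness of $V$ follows.

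\textbf{$\Lambda_h$-periodicity.} For $\vtilde\in\Lambda_h$ one has $\bv_I+\Lambda_h+\vtilde=\Lambda_I$ for $I=A,B$, so the map $\bv\mapsto\bv-\vtilde$ permutes $\Honeycomb$. Re-indexing the locally finite sum yields $V(\bx+\vtilde)=V(\bx)$.

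\textbf{Rotation and inversion invariance about $\bx_c$.} For these I would first prove the \emph{geometric lemma}
\begin{equation*}
 \bx_c+R^*(\Honeycomb-\bx_c)=\Honeycomb,\qquad 2\bx_c-\Honeycomb=\Honeycomb,
\end{equation*}
by direct computation using the explicit coordinates $\bv_A=(0,0)$, $\bv_B=(1/\sqrt3,0)$, $\bx_c=\tfrac12(1/\sqrt3,-1)^{\rm t}$, the definitions of $\bv_1,\bv_2$ in \eqref{v12-def}, and the relations $R^*\bv_1=-\bv_2$, $R^*\bv_2=\bv_1-\bv_2$. Concretely, for $\bv=\bv_A+m_1\bv_1+m_2\bv_2\in\Lambda_A$ I will show that $\bx_c+R^*(\bv-\bx_c)=\bv_A+\bn\vec\bv$ for some $\bn\in\Z^2$ depending linearly on $(m_1,m_2)$, and similarly that the map sends $\Lambda_B$ into $\Lambda_B$. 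Since $R^*$ is invertible on $\R^2$, the map is automatically a bijection of $\Honeycomb$ onto itself. The analogous calculation for inversion through $\bx_c$ will show that $2\bx_c-\bv$ sends $\Lambda_A$ bijectively onto $\Lambda_B$ and vice versa (so inversion swaps the sublattices, unlike rotation, which preserves each).

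\textbf{Completing the identities.} With the geometric lemma in hand, the symmetry identities follow by the substitution $\bv'=\bx_c+R^*(\bv-\bx_c)$ (respectively $\bv'=2\bx_c-\bv$) in the defining sum:
\begin{align*}
V\bigl(\bx_c+R^*(\bx-\bx_c)\bigr)
 &=\sum_{\bv'\in\Honeycomb}V_0\bigl(R^*(\bx-\bv')\bigr)=\sum_{\bv'}V_0(\bx-\bv')=V(\bx),\\
V(2\bx_c-\bx)
 &=\sum_{\bv'\in\Honeycomb}V_0\bigl(-(\bx-\bv')\bigr)=\sum_{\bv'}V_0(\bx-\bv')=V(\bx),
\end{align*}
where in the first line I use that $V_0(R^*\by)=V_0\bigl(R^2\by\bigr)=V_0(\by)$, a consequence of $(PW_3)$ together with $R^3=I$, and in the second line I use the inversion symmetry $(PW_4)$.

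\textbf{Main obstacle.} The only nontrivial step is the geometric lemma about $R^*$-rotation and point-inversion about $\bx_c$ preserving $\Honeycomb$. It is a finite computation with integer coefficients once $\bx_c$, $\bv_A$, $\bv_B$, and the $R^*$-action on $\bv_1,\bv_2$ are inserted, but one has to be careful to check both sublattices (and note that inversion swaps them, while rotation preserves each). Everything else is bookkeeping under the substitutions above.
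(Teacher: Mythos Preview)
Your proposal is correct. The paper does not actually give a proof of this proposition; it simply remarks that it ``can be easily verified'' and moves on. Your verification supplies exactly the details one would expect: local finiteness from $(PW_2)$, lattice periodicity from $\Lambda_h+\vtilde=\Lambda_h$, and the two symmetry identities reduced to (i) a set-theoretic invariance of $\Honeycomb$ under rotation/inversion about $\bx_c$ and (ii) the pointwise symmetries $(PW_3)$, $(PW_4)$ of $V_0$. Your observation that the $120^\circ$ rotation about $\bx_c$ preserves each sublattice while inversion swaps them is also correct and worth recording. One small wording point: in the ``Completing the identities'' step your substitution is really $\bv=\bx_c+R^*(\bv'-\bx_c)$ (old variable in terms of new), so that the argument of $V_0$ becomes $R^*(\bx-\bv')$; what you wrote is the inverse relabeling, which of course gives the same conclusion since the map is a bijection of $\Honeycomb$.
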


Let 
\begin{equation}
V^\lambda(\bx)\ =\ \lambda^2 \sum_{\bv\in\Honeycomb}V_0(\bx-\bv)\ -\ E_0^\lambda,\ \ \textrm{for}\ \bx\in\R^2 .
\label{Vlam-def}
\end{equation}
 $V^\lambda(\bx)$ is a $\Lambda_h-$ periodic function on $\R^2$ and consequently it may be regarded as a function on $\R^2/\Lambda_h$. By Proposition \ref{VisHLP} (see also  \cites{FW:12}), $V^\lambda(\bx)$ is a honeycomb lattice potential. 
 
 We next study the family of Floquet-Bloch eigenvalue problems:
  \begin{align}
&  H^\lambda\psi\ =\ E\psi,\ \ \psi(\bx+\bv)=e^{i\bk\cdot\bv}\psi(\bx),\ \bv\in\Lambda_h\ \textrm{where}\ \label{evp-Hlam}\\
&\quad  H^\lambda=-\Delta+V^\lambda(\bx) =-\Delta + \lambda^2 V(\bx) - E_0^\lambda,\label{Hlam}
 \end{align}
where  $\bk$ varies over the Brillouin zone, $\brill_h$. Equivalently, we may study, for $\bk\in\brill_h$,
 \begin{align*}
&  H^\lambda(\bk)\ p \ = \ E\ p,\ \ p(\bx+\bv)\ =\ p(\bx),\ \ \bv\in\Lambda_h,\ \  \textrm{where}\\ 
&H^\lambda(\bk) \equiv -(\nabla_\bx+i\bk)^2+ \lambda^2 V(\bx) - E_0^\lambda\ . 
\end{align*}

 An important role in the spectral properties of $H^\lambda$, for large $\lambda$,  is played by the function $\gamma(\bk)$,   defined for $\bk\in\C^2$ by
 \begin{align}
\gamma(\bk) &\equiv \sum_{\nu=1,2,3}e^{ i\bk\cdot \be_{B,\nu}}
 \ =\ e^{ i\bk\cdot \be_{B,1}}
\ \left(\ 1\ +\ e^{i\bk\cdot\bv_1}\ +\ e^{i\bk\cdot\bv_2}\ \right)\ .
\label{gamma-def0}\end{align}
Here, $\be_{B,\nu},\ \nu=1,2,3$ are defined in Section \ref{HS} and indicated in Figure \ref{fig:fundamental-cell}. 

\begin{lemma}\label{gamma0}
\begin{enumerate}
\item[(i)] For $\bk\in\R^2$, $\gamma(\bk)=0$ if and only if $\bk\in \bK + \Lambda_h^*$ or $\bk\in -\bK + \Lambda_h^*$.
\item[(ii)] $120^{\circ}$ rotational invariance:\  $\gamma(R\bk)=\gamma(\bk)$, where $R$ is the clockwise $120^{\circ}$ rotation matrix about $\bk=0$; see \eqref{Rdef}.
\item[(iii)]  Recall that $\mathscr{W}_{_{\rm TB}}(\bk)=|\gamma(\bk)|=|1\ +\ e^{i\bk\cdot\bv_1}\ +\ e^{i\bk\cdot\bv_2}|$, 
 for $\bk\in\R^2$. \\ 
For $\bK_\star\in \left(\bK + \Lambda_h^*\right)\cup\left(-\bK + \Lambda_h^*\right)$, we have the expansion %
 \begin{equation*}
 |\mathscr{W}_{_{\rm TB}}(\bK_\star+\bkappa)|^2\  =\ \frac34\ |\bkappa|^2 + \sum_{|\bfm|=3}\bkappa^\bfm F_{0,\bfm}(\bkappa),\ 
 \end{equation*}
 for $\bkappa\in\R^2$ and $|\bkappa|<c$,  
 where $F_{0,\bfm}$ are smooth functions. 
\end{enumerate}
\end{lemma}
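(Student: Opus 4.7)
The plan is to prove parts (i)--(iii) by exploiting the factorization $\gamma(\bk)=e^{i\bk\cdot\be_{B,1}}\left(1+e^{i\bk\cdot\bv_1}+e^{i\bk\cdot\bv_2}\right)$ and the rotational structure of the vectors $\be_{B,\nu}$.

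\textbf{Part (i).} Since the prefactor $e^{i\bk\cdot\be_{B,1}}$ never vanishes, $\gamma(\bk)=0$ if and only if $1+z_1+z_2=0$, where $z_j=e^{i\bk\cdot\bv_j}$ lie on the unit circle. Three unit complex numbers with one of them equal to $1$ sum to zero if and only if $\{z_1,z_2\}=\{e^{2\pi i/3},e^{-2\pi i/3}\}$, which gives the two alternatives $(\bk\cdot\bv_1,\bk\cdot\bv_2)\equiv(\tfrac{2\pi}{3},-\tfrac{2\pi}{3})$ or $(-\tfrac{2\pi}{3},\tfrac{2\pi}{3})\pmod{2\pi}$. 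Using $\bk_\ell\cdot\bv_{\ell'}=2\pi\delta_{\ell\ell'}$ together with $\bK=\tfrac{1}{3}(\bk_1-\bk_2)$, I verify that these two systems are solved exactly by $\bk\in\bK+\Lambda_h^*$ and $\bk\in-\bK+\Lambda_h^*$ respectively.

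\textbf{Part (ii).} From $\be_{B,\nu}=-\be_{A,\nu}=-R^{\nu-1}\be_{A,1}=R^{\nu-1}\be_{B,1}$ (using linearity of $R$), and since $R$ is orthogonal with $R^3=I$, we have $R\bk\cdot\be_{B,\nu}=\bk\cdot R^{-1}\be_{B,\nu}=\bk\cdot R^{\nu-2}\be_{B,1}=\bk\cdot\be_{B,\nu-1}$, where indices are taken modulo $3$. The map $\nu\mapsto\nu-1\pmod 3$ is a permutation of $\{1,2,3\}$, so $\gamma(R\bk)=\sum_\nu e^{i\bk\cdot\be_{B,\nu-1}}=\gamma(\bk)$.

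\textbf{Part (iii).} I first handle the case $\bK_\star=\bK$. Since $\gamma$ is entire in $\bk$ and $\gamma(\bK)=0$, Taylor's theorem gives $\gamma(\bK+\bkappa)=A\cdot\bkappa+O(|\bkappa|^2)$ with $A=\nabla_{\bk}\gamma(\bK)\in\C^2$, hence
\[
|\gamma(\bK+\bkappa)|^2=|A\cdot\bkappa|^2+O(|\bkappa|^3).
\]
Using the factored form $\gamma(\bk)=e^{i\bk\cdot\be_{B,1}}\bigl(1+e^{i\bk\cdot\bv_1}+e^{i\bk\cdot\bv_2}\bigr)$ and that the second factor vanishes at $\bK$, only the derivative of the second factor contributes, giving $A=ie^{i\bK\cdot\be_{B,1}}\bigl(e^{2\pi i/3}\bv_1+e^{-2\pi i/3}\bv_2\bigr)$. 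The unit-modulus prefactor drops out of $|A\cdot\bkappa|^2$, and a direct computation with $\bv_1=(\tfrac{\sqrt3}{2},\tfrac12)$, $\bv_2=(\tfrac{\sqrt3}{2},-\tfrac12)$ gives $e^{2\pi i/3}\bv_1+e^{-2\pi i/3}\bv_2=(-\tfrac{\sqrt3}{2},\,i\tfrac{\sqrt3}{2})$, so that $|A\cdot\bkappa|^2=\tfrac34\kappa_1^2+\tfrac34\kappa_2^2=\tfrac34|\bkappa|^2$. Thus $|\gamma(\bK+\bkappa)|^2-\tfrac34|\bkappa|^2$ vanishes to order three at $\bkappa=0$, and smoothness of $|\gamma|^2$ together with Taylor's theorem with integral remainder lets me write this difference as $\sum_{|\bfm|=3}\bkappa^\bfm F_{0,\bfm}(\bkappa)$ for smooth $F_{0,\bfm}$ on a neighborhood of $0$.

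\textbf{Extension to general $\bK_\star$.} To reduce the general case to $\bK_\star=\bK$, I use two invariances of $|\gamma|^2$: (a) $\Lambda_h^*$-periodicity, which follows from the factored form since $\bk^*\cdot\bv_j\in 2\pi\Z$ makes the parenthetical $\Lambda_h^*$-periodic, while the prefactor $e^{i\bk\cdot\be_{B,1}}$ has modulus one; and (b) $R$-invariance, immediate from part (ii). Combined with $|\gamma(-\bk)|=|\gamma(\bk)|$ for real $\bk$, these suffice to map any vertex of $\brill_h$ plus $\Lambda_h^*$ translate to $\bK$, replacing $\bkappa$ by $\pm R^{-j}\bkappa$, which preserves both the leading term $\tfrac34|\bkappa|^2$ (since $R$ is orthogonal) and the cubic-remainder form. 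The only mildly delicate step is the explicit derivative computation in the $\bK_\star=\bK$ case, but it is a short direct calculation; I anticipate no substantive obstacles.
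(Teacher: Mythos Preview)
Your proof is correct and follows essentially the same approach as the paper: the same unit-circle argument for (i), the same permutation of the $\be_{B,\nu}$ under $R^*$ for (ii), and the same first-order Taylor computation yielding $-\tfrac{\sqrt3}{2}(\kappa_1-i\kappa_2)$ for (iii). Your explicit treatment of the extension to general $\bK_\star$ via $\Lambda_h^*$-periodicity and conjugation symmetry is a nice addition that the paper leaves implicit.
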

\begin{proof}[Proof of Lemma \ref{gamma0}] (i)  For $\bk\in\R^2$, the three points $1,  e^{i\bk\cdot\bv_1}$ and $ e^{i\bk\cdot\bv_2}$
 lie on the unit circle, and $1+ e^{i\bk\cdot\bv_1} + e^{i\bk\cdot\bv_2}=0$ if and only if their center of mass is zero. Hence, either
(a) $e^{i\bk\cdot\bv_1}=\tau$ and $e^{i\bk\cdot\bv_2}=\overline\tau$ or 
(b) $e^{i\bk\cdot\bv_1}=\overline\tau$ and $e^{i\bk\cdot\bv_2}=\tau$, where $\tau=e^{2\pi i/3}=-1/2+i\sqrt3/2, 
\overline\tau=e^{-2\pi i/3}$ are the non-trivial cube roots of unity. Consider case (a); case (b) is handled similarly.
 For case (a): $\bk=(k^{(1)}, k^{(2)})$ satisfies: $\bk\cdot\bv_1=2\pi/3\ (mod \ 2\pi)$ and 
 $\bk\cdot\bv_2=-2\pi/3\ (mod \ 2\pi)$. That is, 
 \begin{align*}
 k^{(1)}\frac{\sqrt{3}}2+k^{(2)}\frac12\ =\ \frac{2\pi}3+2m_1\pi\ ,\ \qquad  k^{(1)}\frac{\sqrt{3}}2-k^{(2)}\frac12\ =\ -\frac{2\pi}3+2m_2\pi\ ,
 \end{align*}
 where $m_1$ and $m_2$ are integers. Therefore, $k^{(1)}=2\pi (m_1+m_2)/\sqrt3$ and
  $k^{(2)}=4\pi/3+2\pi (m_1-m_2)$ or equivalently $\bk=\bK+m_1\bk_1+m_2\bk_2$,
   where  $\bk_1$, $\bk_2$ and $\bK$ are displayed in \eqref{k1k2-def} and \eqref{KKprime}.
   
   (ii): Let $R$ denote clockwise rotation by $120^\circ$. Then, since the action of $R^*$ on the $\be_{B,\nu},\ \nu=1,2,3$ merely permutes their ordering, we have $\gamma(R\bk)=\gamma(\bk)$.
   
   (iii): Taylor expanding  $e^{-i\bk\cdot\be_{B,1}}\ \gamma(\bk)=1+e^{i\bk\cdot\bv_1}+e^{i\bk\cdot\bv_2}$ 
    (see \eqref{gamma-def0}) about $\bK$, we find at leading order
   \[ e^{-i(\bK+\bkappa)\cdot\be_{B,1}} \gamma(\bK+\bkappa) = i[\tau(\bkappa\cdot\bv_1)+\overline\tau(\bkappa\cdot\bv_2)]+ \mathcal{O}(|\bkappa|^2)
   =-\frac{\sqrt{3}}{2} \left( \bkappa_1-i\bkappa_2\right)+ \mathcal{O}(|\bkappa|^2).\]
   Therefore,  for $\bkappa\in\R^2$ with $|\bkappa|$ small, $ |\mathscr{W}_{_{\rm TB}}(\bK+\bkappa)|^2\equiv|\gamma(\bK+\bkappa)|^2=\frac34|\bkappa|^2+\ 
   \mathcal{O}(|\bkappa|^3)$.
  This completes the proof of the Lemma \ref{gamma0}.
  \end{proof}
 
 \section{Main results}\label{main-results}
 
 In this section we state our main theorem on $-\Delta + \lambda^2 V(\bx)$ in the regime of strong binding $(\lambda\gg1)$. We also state two corollaries on spectral gaps and protected edge states for perturbed honeycomb structures.
  Throughout, we assume hypotheses $(PW_1)-(PW_4)$ on $V_0(\bx)$, and hypotheses {\bf (GS)} and {\bf (EG)} on the ground state energy and spectral gap for $-\Delta+\lambda^2 V_0(\bx)$. These were delineated in Section \ref{atomic-well}.
 \begin{theorem}[Low-lying dispersion surfaces in the strong binding regime]\label{main-theorem}
 Let $E_0^\lambda$ denote the ground state eigenvalue of the atomic Hamiltonian, $-\Delta +\lambda^2 V_0$ (Section \ref{atomic-well}).  Let $V(\bx)$ denote the $\Lambda_h-$ periodic potential obtained by summing $V_0(\bx)$ over the honeycomb structure, $\Honeycomb= (\bv_A+\Lambda_h)\cup (\bv_B+\Lambda_h)$; see  \eqref{Vdef}. We consider the family of Floquet-Bloch eigenvalue problems for the periodic Schr\"odinger operator $-\Delta +\lambda^2 V(\bx)$, depending on the quasi-momentum $\bk\in\R^2$;
  see  \eqref{fl-bl-evp} and \eqref{fl-bl-evp-per}. 
 
 Fix $\beta_{max}$, a non-negative integer.  There exist positive constants $\lambda_\star>0$ (sufficiently large),  $\widehat{C}>0$ and $c, c_{**} >0$,  depending only on $V_0(\bx)$ and  $\beta_{max}$,
  such that for all
  $\lambda>\lambda_\star$ the following hold:
  \begin{enumerate}
  \item For each $\bk\in\R^2$, there are precisely two eigenvalues, $E$, of the operator $-(\nabla+i\bk)^2+\lambda^2 V(\bx)$ with periodic boundary conditions, satisfying:
   \[ E_0^\lambda\ -\ \widehat{C}\ \rho_\lambda\ \le\ E\ \le\ 
 E_0^\lambda\ +\ \widehat{C}\ \rho_\lambda\ ,\]
 where $(E_0^\lambda , p_0^\lambda)$ is the ground state eigenpair of  $-\Delta+\lambda^2V_0$ and 
\begin{equation*}
 \rho_\lambda=  \lambda^2\int\ |V_0(\by)|\ p_0^\lambda(\by)\ p_0^\lambda(\by+\be_{A,1})\ d\by 
 \end{equation*}
 satisfies the bounds: $e^{-c_1\lambda}\lesssim \rho_\lambda\lesssim e^{-c_2\lambda}$,
  provided in Proposition \ref{prop:rho-lam-bounds}.

  \item For each $\bk\in\R^2$, we denote the two eigenvalues in part (1) by $E^\lambda_-(\bk)\le E^\lambda_+(\bk)$.
  These are equal to $E_1(\bk)$ and $E_2(\bk)$, the first two  band dispersion functions of $-\Delta+\lambda^2 V$.
\item For $\bk\in\B_h$, the graphs of $\bk\mapsto E^\lambda_\pm(\bk)$ intersect at the six vertices of the regular hexagon,  $\B_h$, at the shared energy-level, $E_D^\lambda$. The pairs $(\bK_\star,E_D^\lambda)$, where $\bK_\star$ varies over the vertices of  $\B_h$, are called Dirac points of $-\Delta+\lambda^2 V$.

In particular, for each vertex $\bK_\star$ of $\B_h$, the operator $-(\nabla+i\bK_\star)^2+\lambda^2V(\bx)$, with periodic boundary conditions, has a double eigenvalue:
   \begin{equation}
   E^\lambda_D=E_0^\lambda+\rho_\lambda h_0^\lambda,\qquad |h_0^\lambda|\lesssim e^{-c\lambda} .
   \nn\end{equation}

 \item Convergence of dispersion surfaces: Let  $\mathscr{W}_{_{\rm TB}}(\bk)\ \equiv\ |1+e^{i\bk\cdot\bv_1}+e^{i\bk\cdot\bv_2}|$; see \eqref{tb-dispersion} and Lemma \ref{gamma0}. 
 
\noindent (a) \underline{ Low-lying dispersion surfaces away from Dirac points:}\\  
For all $\bk\in\R^2$ such that  $ \mathscr{W}_{_{\rm TB}}(\bk)\ \ge\lambda^{-\frac14}$:

 \begin{equation}
  \left| \partial_\bk^\beta\left\{ \left( E^\lambda_\pm(\bk) - E_D^\lambda \right)/\rho_\lambda 
   - \left[  \pm \mathscr{W}_{_{\rm TB}}(\bk) \right] \right\} \right|
  \le  e^{-c\lambda} ,\ \ |\beta|\le\beta_{max} .
   \label{near-conv}
  \end{equation}
%
%

\noindent (b) \underline{Low-lying dispersion surfaces near Dirac points:} \\
    For any vertex, $\bK_\star$, of $\brill_h$ and all $\bk$ satisfying
   $0<|\bk-\bK_\star|<c_{\star\star}$:
\begin{align*}
 \left| \partial_\bk^\beta\left\{ \left( E^\lambda_\pm(\bk) - E_D^\lambda \right)/\rho_\lambda 
   - \left[  \pm \mathscr{W}_{_{\rm TB}}(\bk) \right] \right\} \right|
&\le e^{-c\lambda} |\bk-\bK_\star|^{1-|\beta|} ,\ \ |\beta|\le\beta_{max} .
\end{align*}
  \end{enumerate}
  \end{theorem}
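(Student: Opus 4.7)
My plan is a Lyapunov--Schmidt reduction of the Bloch eigenvalue problem for $H^\lambda(\bk)=-(\nabla+i\bk)^2+\lambda^2 V(\bx)-E_0^\lambda$ onto a $2$-dimensional subspace of $L^2(\R^2/\Lambda_h)$ built from translates of the atomic ground state $p_0^\lambda$, one spanning mode for each sublattice $\Lambda_A,\Lambda_B$. Because the atomic Hamiltonian has a gap of size $c_{gap}$ above $E_0^\lambda$ by hypothesis \textbf{(EG)}, while the inter-well coupling is of order $\rho_\lambda=o(1)$, the complementary subspace decouples and the reduced $2\times 2$ matrix turns out to be $\rho_\lambda\,H_{_{\rm TB}}(\bk)$ plus exponentially-small corrections, whose eigenvalues recover $E^\lambda_\pm(\bk)$ and (via symmetry) the Dirac points.

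\textbf{Quasi-modes and the gap on their complement.} For $I\in\{A,B\}$, define the $\Lambda_h$-periodic functions
\begin{equation*}
\tilde p^\lambda_{\bk,I}(\bx)\;=\;e^{-i\bk\cdot\bx}\sum_{\bv\in\Lambda_I}e^{i\bk\cdot\bv}\,p_0^\lambda(\bx-\bv).
\end{equation*}
Since $r_0<r_{critical}<\tfrac12|\be_{A,1}|$, the supports of distinct translates of $V_0$ are disjoint; combined with \eqref{expo-decay1}, the Gram matrix of $\{\tilde p^\lambda_{\bk,A},\tilde p^\lambda_{\bk,B}\}$ equals $I+\mathcal{O}(e^{-c\lambda})$, and Gram--Schmidt produces an orthonormal pair $p^\lambda_{\bk,A},p^\lambda_{\bk,B}$. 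Let $\mathcal{X}_\bk^\perp$ denote their orthogonal complement. The identity $(-\Delta+\lambda^2 V_0-E_0^\lambda)p_0^\lambda=0$ makes $H^\lambda(\bk)\tilde p^\lambda_{\bk,I}$ equal to $\lambda^2\bigl(V(\bx)-V_0(\bx-\bv)\bigr)p_0^\lambda(\bx-\bv)$ summed over $\bv\in\Lambda_I$, which is supported on neighboring wells and whose inner products against translates of $p_0^\lambda$ are of order $\rho_\lambda$. Combining this quasi-mode estimate with \textbf{(EG)} well-by-well via an Agmon / partition-of-unity argument (in the spirit of \cites{Simon:83,CFKS:87}) should yield the uniform resolvent bound $\|H^\lambda(\bk)^{-1}\|_{\mathcal{X}_\bk^\perp\to\mathcal{X}_\bk^\perp}\lesssim c_{gap}^{-1}$, valid for all $\bk\in\brill_h$ and $\lambda\ge\lambda_\star$. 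This bound confines the two lowest eigenvalues of $H^\lambda(\bk)$ to $\mathcal{X}_\bk$ within a window of width $\widehat C\rho_\lambda$ about $E_0^\lambda$, establishing parts~(1)--(2).

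\textbf{Reduction to a $2\times 2$ matrix.} Writing any candidate Bloch eigenfunction as $p=\alpha\,p^\lambda_{\bk,A}+\beta\,p^\lambda_{\bk,B}+\phi$ with $\phi\in\mathcal{X}_\bk^\perp$ and $E=E_0^\lambda+\Omega$, the resolvent bound lets me solve for $\phi$ as a Neumann series in $(\alpha,\beta,\Omega)$, collapsing the problem to $\mathcal{M}^\lambda(\Omega,\bk)(\alpha,\beta)^{\mathrm t}=0$ for a $2\times 2$ matrix $\mathcal{M}^\lambda$ that is Hermitian for real arguments and jointly analytic in a fixed complex neighborhood of $\R\times\brill_h$. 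Computing $\langle p^\lambda_{\bk,I},H^\lambda(\bk)p^\lambda_{\bk,J}\rangle$ using disjointness of supports, the definition \eqref{rho-lam-def} of $\rho_\lambda$, and Remark~\ref{rho-indep}, the off-diagonal entry evaluates to $-\rho_\lambda\,\overline{\gamma(\bk)}+\mathcal{O}(e^{-c\lambda}\rho_\lambda)$. The $60^\circ$-rotational invariance of $p_0^\lambda$ together with the $\mathcal{P}\mathcal{T}$-symmetry of $V$ (exactly the character-theoretic argument of \cites{FW:12}) forces the two diagonal entries to coincide at a common real value $\rho_\lambda h_0^\lambda$ with $|h_0^\lambda|\lesssim e^{-c\lambda}$, and forces $\mathcal{M}^\lambda(0,\bK_\star)$ to be a scalar multiple of $I$ at every vertex of $\brill_h$. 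Therefore
\begin{equation*}
 \mathcal{M}^\lambda(0,\bk)\;=\;\rho_\lambda\bigl[H_{_{\rm TB}}(\bk)+h_0^\lambda I\bigr]+\mathcal{O}(e^{-c\lambda}\rho_\lambda),\qquad \partial_\Omega\mathcal{M}^\lambda(0,\bk)=-I+\mathcal{O}(\rho_\lambda).
\end{equation*}

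\textbf{Extracting the bands and the main difficulty.} Setting $E_D^\lambda:=E_0^\lambda+\rho_\lambda h_0^\lambda$, part~(3) follows at once from $\mathcal{M}^\lambda(0,\bK_\star)=\rho_\lambda h_0^\lambda I$, giving a double eigenvalue $E=E_D^\lambda$. Away from vertices, where $\mathscr{W}_{_{\rm TB}}(\bk)\ge\lambda^{-1/4}$, the matrix $\rho_\lambda H_{_{\rm TB}}(\bk)$ has spectral gap $\rho_\lambda\mathscr{W}_{_{\rm TB}}(\bk)\gg e^{-c\lambda}\rho_\lambda$, and standard Hermitian analytic perturbation theory yields $E^\lambda_\pm(\bk)=E_D^\lambda\pm\rho_\lambda\mathscr{W}_{_{\rm TB}}(\bk)+\mathcal{O}(e^{-c\lambda}\rho_\lambda)$; the derivative estimates \eqref{near-conv} then come from Cauchy's formula applied to the analytic continuation of $\mathcal{M}^\lambda$ in $\bk$. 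The hard part is part~(4)(b) near a Dirac point $\bK_\star$, where $H_{_{\rm TB}}(\bk)$ itself degenerates and naive perturbation theory breaks down. I would handle this by rescaling $\bk=\bK_\star+\bkappa$ and $\Omega=\rho_\lambda\tilde\Omega$, dividing $\det\mathcal{M}^\lambda$ by $\rho_\lambda^2$, and invoking Lemma~\ref{gamma0}(iii) to show the rescaled determinant equals $\tilde\Omega^2-|\gamma(\bK_\star+\bkappa)|^2+\mathcal{O}(e^{-c\lambda})$ on $|\bkappa|\le c_{\star\star}$; the implicit function theorem then produces two real-analytic branches $\tilde\Omega_\pm(\bkappa)=\pm|\gamma(\bK_\star+\bkappa)|+\mathcal{O}(e^{-c\lambda}|\bkappa|)$, and the derivative estimates scaled by $|\bk-\bK_\star|^{1-|\beta|}$ follow by Cauchy applied in the rescaled variable.
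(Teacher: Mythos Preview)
Your overall architecture---quasi-modes, gap on the complement, Lyapunov--Schmidt reduction to a $2\times 2$ matrix close to $\rho_\lambda H_{_{\rm TB}}(\bk)$---is exactly the paper's approach, and your treatment of parts (1)--(3) and (4)(a) is essentially correct. Two technical points deserve mention: the domain of analyticity of $\mathcal{M}^\lambda$ in $\bk$ is \emph{not} fixed but a strip of width $\hat c\lambda^{-1}$ (Proposition~\ref{prop2-resolvent}), which is harmless only because the bound $e^{-c\lambda}$ swamps the $\lambda^{|\beta|}$ lost in Cauchy estimates; and showing the matrix-element errors are $\mathcal{O}(e^{-c\lambda}\rho_\lambda)$ rather than merely $\mathcal{O}(e^{-c\lambda})$ is the content of the Geometric Lemma~\ref{euclid1} and Section~\ref{ME-bounds}, not just ``disjointness of supports''.

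The genuine gap is in part (4)(b). The implicit function theorem does \emph{not} apply to $\tilde\Omega^2-|\gamma(\bK_\star+\bkappa)|^2+\mathcal{O}(e^{-c\lambda})=0$ at $\bkappa=0$: there $\partial_{\tilde\Omega}F$ vanishes (double root), and the branches $\tilde\Omega_\pm$ are only Lipschitz, not real-analytic, at the cone point---so your ``two real-analytic branches'' equal to $\pm|\gamma|+\mathcal{O}(e^{-c\lambda}|\bkappa|)$ cannot exist as stated, and ``Cauchy in the rescaled variable'' cannot produce the $|\bkappa|^{1-|\beta|}$ derivative bounds. The paper instead (Sections~\ref{dispersion-surfaces}--\ref{mu_pm-rescaled}) uses Rouch\'e and residue calculus to write $\mu_\pm(\bk)=h(\bk)\pm\sqrt{\gamma(\bk)\gamma(-\bk)+g(\bk)}$ with $h,g$ analytic and $|h|,|g|\lesssim e^{-c\lambda}$; then the Symmetry Lemma~\ref{symmetry-lemma} (rotational invariance about $\bK_\star$) forces $g(\bK_\star)=0$, $\nabla g(\bK_\star)=\nabla h(\bK_\star)=0$, and the Hessians to be scalar---this is what buys the extra factor of $|\bkappa|$ in the error. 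Finally the derivative bounds come from a direct comparison of $\sqrt{\gamma\gamma^{-}+g}$ with $\sqrt{\gamma\gamma^{-}}$ via the Bookkeeping Lemma~\ref{BL}, which handles the non-smoothness of the square root at the origin by hand. Your outline needs these three ingredients (quadratic-formula representation, symmetry-forced Taylor structure of $g$ and $h$, and the square-root bookkeeping) in place of the implicit function theorem.
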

  
\noindent Theorem \ref{main-theorem} is a consequence of Proposition \ref{Proposition-RDSADP} and Proposition \ref{Proposition-nearDPs}.

We furthermore establish convergence of the scaled resolvent of $-\Delta+\lambda^2 V(\bx)$ to that of the tight-binding
 Hamiltonian.
 \begin{theorem}[Scaled convergence of the resolvent]\label{res-conv}
 Let $H^\lambda=-\Delta+\lambda^2V(\bx)-E_D^\lambda$ and introduce the scaled operator
 $\widetilde{H}^\lambda=H^\lambda/\rho^\lambda$.
  Further, let  $\widetilde{H}_\bk^\lambda\ =\  \widetilde{H}^\lambda\Big|_{_{L^2_\bk}}:H^2_\bk(\R^2/\Lambda_h)\to L^2_\bk(\R^2/\Lambda_h)$.
 Define  the mapping $\mathscr{J}_\bk^\lambda:\C^2\to L^2_\bk$:
 \begin{equation}
 \begin{pmatrix}\alpha\\ \beta\end{pmatrix}\mapsto \mathscr{J}_\bk^\lambda\begin{bmatrix}\alpha_A\\ \alpha_B\end{bmatrix} =
 \alpha_A P_{A,\bk}^\lambda(\bx)+\alpha_B P_{B,\bk}^\lambda(\bx),
  \label{scrJ}
 \end{equation}
 where $P_{I,\bk}^\lambda$, $I=A,B$,  defined in \eqref{P-bk-I}, denote approximate Floquet-Bloch modes, defined by a weighted sum of translates in $\Lambda_A$ (respectively 
 $\Lambda_B$) of the ground state, $\varphi_0^\lambda$, of $V_0$.

  Then, for any fixed $z\in\C\setminus\R$,
 \begin{equation}
 \|\ \left(\ \widetilde{H}_\bk^\lambda-zI\ \right)^{-1}\
  -\ \mathscr{J}_\bk^\lambda\ \left(\ H_{_{\rm TB}}(\bk)-zI\ \right)^{-1}\ \left(\mathscr{J}_\bk^\lambda\right)^*\ \|_{_{L^2_\bk\to L^2_\bk}}
  \lesssim e^{-c\lambda}\ ,
  \label{res-diff}
  \end{equation}
 uniformly in $\bk\in\brill_h$, for $\lambda>\lambda_\star$ .
  \end{theorem}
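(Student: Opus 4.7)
The plan is to prove the resolvent estimate via a Feshbach/Schur-complement reduction onto the two-dimensional subspace $\mathcal{V}_\bk^\lambda := \operatorname{Ran}(\mathscr{J}_\bk^\lambda) = \operatorname{span}\{P_{A,\bk}^\lambda, P_{B,\bk}^\lambda\}$, together with the large-$\lambda$ information already extracted in Sections \ref{low-lying-fb}--\ref{LS-reduction}. Let $G_\bk^\lambda := (\mathscr{J}_\bk^\lambda)^\ast \mathscr{J}_\bk^\lambda$, a $2\times 2$ Gram matrix whose entries are $\langle P_{I,\bk}^\lambda, P_{J,\bk}^\lambda\rangle$. Let $\Pi_\bk^\lambda$ be the orthogonal projection onto $\mathcal{V}_\bk^\lambda$, so that $\Pi_\bk^\lambda = \mathscr{J}_\bk^\lambda (G_\bk^\lambda)^{-1}(\mathscr{J}_\bk^\lambda)^\ast$, and set $\Pi_\bk^{\lambda,\perp} := I - \Pi_\bk^\lambda$.

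The argument rests on three ingredients, each proved in the preceding sections up to $\mathcal{O}(e^{-c\lambda})$ corrections. First, the approximate modes $P_{I,\bk}^\lambda$ are built from translates of the exponentially localized atomic ground state $p_0^\lambda$, so the off-diagonal overlaps are controlled by $\rho_\lambda$-type integrals, giving $\|G_\bk^\lambda - I_2\| \lesssim e^{-c\lambda}$ uniformly in $\bk$; in particular $(G_\bk^\lambda)^{-1}$ is $\mathcal{O}(1)$. Second, the matrix elements $\rho_\lambda^{-1}\langle P_{I,\bk}^\lambda, H_\bk^\lambda P_{J,\bk}^\lambda\rangle$ coincide with the matrix $\mathcal{M}^\lambda(0,\bk)/\rho_\lambda$ analysed in Section \ref{sec:M-expanded}; its leading part is precisely $H_{_{\rm TB}}(\bk)$ (after accounting for the Dirac-point shift $h_0^\lambda$ controlled in Theorem \ref{main-theorem}(3)), with remainder $\mathcal{O}(e^{-c\lambda})$. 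Hence, after identification via $\mathscr{J}_\bk^\lambda$,
\[
(\mathscr{J}_\bk^\lambda)^\ast \widetilde H_\bk^\lambda\, \mathscr{J}_\bk^\lambda \;=\; H_{_{\rm TB}}(\bk) + E_\bk^\lambda,\qquad \|E_\bk^\lambda\| \lesssim e^{-c\lambda}.
\]
Third, and most importantly, on $\operatorname{Ran}(\Pi_\bk^{\lambda,\perp})$ the operator $\widetilde H_\bk^\lambda$ is exponentially large: Theorem \ref{main-theorem}(1) places the first two dispersion surfaces within $\widehat C \rho_\lambda$ of $E_0^\lambda$, while the energy-gap hypothesis (EG), periodized via the arguments of Section \ref{en-estimates}, forces the third band of $H^\lambda(\bk)$ to lie at least $c_{\rm gap}/2$ above $E_0^\lambda$ for all $\lambda > \lambda_\star$. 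After dividing by $\rho_\lambda$, this yields the key coercivity
\[
\widetilde H_\bk^\lambda \;\geq\; \tfrac{c_{\rm gap}}{2\rho_\lambda}\,\Pi_\bk^{\lambda,\perp} - \widehat C\,\Pi_\bk^\lambda \;\geq\; \tfrac{1}{2}e^{c\lambda}\, \Pi_\bk^{\lambda,\perp} - \widehat C
\]
on $H^2_\bk$, using the lower bound in Proposition \ref{prop:rho-lam-bounds}.

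With these in hand, I write $(\widetilde H_\bk^\lambda - z)^{-1}$ in $2\times 2$ block form relative to $\Pi_\bk^\lambda \oplus \Pi_\bk^{\lambda,\perp}$ and expand by the Schur/Feshbach formula. The $\perp\!\perp$ block is bounded by $C \rho_\lambda / c_{\rm gap} \lesssim e^{-c\lambda}$ for $z \in \mathbb{C}\setminus\mathbb{R}$ with $|z|$ fixed; the off-diagonal coupling $\Pi_\bk^\lambda \widetilde H_\bk^\lambda \Pi_\bk^{\lambda,\perp}$ is of order $e^{-c\lambda}$ (it measures the failure of $\mathcal{V}_\bk^\lambda$ to be exactly invariant, which is precisely the residual estimated in Section \ref{low-lying-fb}). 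The $\Pi\Pi$ block, by the second ingredient, is $(H_{_{\rm TB}}(\bk) - z)^{-1}$ plus $\mathcal{O}(e^{-c\lambda}/|\mathrm{Im}\,z|^2)$ (using that $\|(H_{_{\rm TB}}(\bk)-z)^{-1}\| \le |\mathrm{Im}\,z|^{-1}$ uniformly in $\bk$). A Neumann-series perturbation of the Schur complement, exploiting the exponential smallness of both the off-diagonal and $\perp\!\perp$ contributions, then converts this into the stated bound on the full difference in \eqref{res-diff} after identifying $\mathscr{J}_\bk^\lambda(\cdot)(\mathscr{J}_\bk^\lambda)^\ast$ with the $\Pi\Pi$ channel (the distinction between $\mathscr{J}_\bk^\lambda (\mathscr{J}_\bk^\lambda)^\ast$ and $\Pi_\bk^\lambda$ being itself $\mathcal{O}(e^{-c\lambda})$ by the first ingredient).

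The main obstacle I expect is the third (coercivity) ingredient: converting the atomic spectral-gap hypothesis (EG) for $-\Delta + \lambda^2 V_0$ on $\R^2$ into a uniform-in-$\bk$ quantitative gap of order $1$ (not $\rho_\lambda$) between the second and third Floquet--Bloch bands of $H^\lambda(\bk) = -(\nabla+i\bk)^2 + \lambda^2 V$ on $L^2(\R^2/\Lambda_h)$. This requires a multi-well/IMS-type localization argument: partition using a honeycomb partition of unity, transfer (EG) to each well, and control the cross terms by the geometric separation encoded in Lemma \ref{euclid1}. The other ingredients are routine once the approximate-mode bounds of Sections \ref{low-lying-fb}--\ref{ME-bounds} are granted.
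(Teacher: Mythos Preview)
Your Schur/Feshbach strategy is essentially what the paper carries out in Section~\ref{resolvent}: write $(\widetilde H_\bk^\lambda-z)\xi=f$ in block form relative to $\mathcal V_\bk^\lambda\oplus(\mathcal V_\bk^\lambda)^\perp$, invert on the complement using the energy estimate already available (Lemma~\ref{lem1-resolvent} gives $\|\mathscr P^\lambda_{AB,\bk}(H_\bk^\lambda-\rho_\lambda z)^{-1}\mathscr P^\lambda_{AB,\bk}\|\lesssim 1$), eliminate $\xi_\perp$, and reduce to a $2\times2$ system whose matrix is $H_{_{\rm TB}}(\bk)-z+\mathcal O(e^{-c\lambda})$ by Proposition~\ref{ME}. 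In particular, what you flag as the ``main obstacle''---the uniform-in-$\bk$ order-one gap on the complement---is exactly Proposition~\ref{global-en-est}, already proved in Section~\ref{en-estimates} by the IMS localization you sketch; nothing further is needed there.

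The genuine gap in your outline is the claim that the off-diagonal block $\Pi_\bk^\lambda\,\widetilde H_\bk^\lambda\,\Pi_\bk^{\lambda,\perp}$ is $\mathcal O(e^{-c\lambda})$. The residual bound~\eqref{Hlambda-pk} gives only $\|H_\bk^\lambda P_{\bk,I}^\lambda\|\lesssim e^{-c\lambda}$ for the \emph{unscaled} operator; after dividing by $\rho_\lambda$ this becomes $e^{-c\lambda}/\rho_\lambda$, and since $\rho_\lambda\lesssim e^{-c_2\lambda}$ with constants that are never compared, this ratio need not be small. Your Neumann argument, as written, relies on the scaled off-diagonal being small and therefore does not close. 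The paper repairs this by invoking the sharper residual estimate~\eqref{sqrt-bound}, $\|H_\bk^\lambda P_{\bk,I}^\lambda\|\lesssim e^{-c\lambda}\sqrt{\rho_\lambda}$, established in the proof of Proposition~\ref{ME-IJ-higher-order}. With that bound the scaled off-diagonal $B$ is only $\mathcal O(e^{-c\lambda}\rho_\lambda^{-1/2})$---still possibly large---but combined with your correct estimate $(D-z)^{-1}=\mathcal O(\rho_\lambda)$ one gets $B(D-z)^{-1}B^\ast=\mathcal O(e^{-2c\lambda})$ for the Schur correction and $B(D-z)^{-1}=\mathcal O(e^{-c\lambda}\sqrt{\rho_\lambda})$ for the mixed blocks, which is what is actually used.
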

 \nit Theorem \ref{res-conv} is proved in Section \ref{resolvent}.
  In the following two subsections we discuss consequences of Theorem \ref{main-theorem}.
 
 \subsection{Spectral gaps for  $\mathcal{P}\mathcal{T}-$ breaking perturbations}\label{Gaps}

 \begin{corollary}\label{spectral-gaps} 
Let $V(\bx)$ be in the class of honeycomb potentials studied in Theorem \ref{main-theorem}.
Consider the perturbed  honeycomb Schr\"odinger
\begin{equation} H^{\lambda,\eta}=
-\Delta + \lambda^2 V(\bx) + \eta W(\bx),\label{Hlameta-def}
\end{equation}
where $\lambda^2>0$ and  $\eta$ is a real parameter and $W$ is such that:
\begin{enumerate}
\item  $W(\bx)$ is real-valued and $\Lambda_h$ periodic.
\item  $W(\bx)$ breaks inversion symmetry. Specifically, we take $\bx_c=0$ and assume 
 \begin{equation}
 W(-\bx)=-W(\bx);\label{Wodd}
 \end{equation}
 compare with \eqref{V-inv}. 
 \item 
 \begin{equation}
  \vartheta^\lambda_\sharp\ \equiv\ \left\langle \Phi^\lambda_1,W\Phi^\lambda_1\right\rangle\ne0,
  \label{vartheta}
  \end{equation}
  where $\Phi^\lambda_1$ is defined in Definition \ref{dirac-pt-defn} in Section \ref{dirac-points}.
\end{enumerate}
 Then, there exists a large constant $\lambda_\star>0$, such that for all $\lambda>\lambda_\star$ the following holds:
there is a constant $\eta_\star>0$, where $\eta_\star$ is sufficiently small and depends on $\lambda, V$ and $W$  such that 
if  $0<|\eta|<\eta_\star$, then the spectrum of $H^{\lambda,\eta}$ has a gap
   with energy $E_D^\lambda$ in its interior.  
 \end{corollary}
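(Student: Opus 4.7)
\textbf{Proof plan for Corollary \ref{spectral-gaps}.}

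The plan is to combine the Dirac-point / dispersion-surface picture provided by Theorem \ref{main-theorem} with a degenerate Rayleigh--Schr\"odinger argument at each vertex of $\brill_h$, in the spirit of the gap-opening theorem in \cites{FW:12}. First I would fix $\lambda>\lambda_\star$ and record what Theorem \ref{main-theorem} gives us about the unperturbed operator $H^\lambda=-\Delta+\lambda^2 V$: its two lowest dispersion surfaces $E^\lambda_\pm(\bk)$ touch exactly at the six vertices $\bK_\star$ of $\brill_h$ at the common value $E^\lambda_D$, the touching is conical with Fermi velocity $|\lamsharplam|\sim \rho_\lambda$, and $E^\lambda_+(\bk), E^\lambda_-(\bk)$ are separated from $\operatorname{spec}(H^\lambda)\setminus[E^\lambda_-, E^\lambda_+]$ by a gap of order $c_{gap}$ (by hypothesis \textbf{(EG)} and the energy estimates of Section \ref{en-estimates}).

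Next I would localize near one Dirac point $\bK_\star$ and apply degenerate perturbation theory to the Floquet--Bloch operator $H^\lambda(\bk)+\eta W$ acting on $L^2(\R^2/\Lambda_h)$. The unperturbed operator $H^\lambda(\bK_\star)$ has a two-dimensional eigenspace spanned by $\Phi^\lambda_1,\Phi^\lambda_2$ as in Definition \ref{dirac-pt-defn}; by the symmetry properties of honeycomb eigenfunctions, $\Phi^\lambda_2$ is obtained from $\Phi^\lambda_1$ by inversion about $\bx_c$ (composed with complex conjugation). Using \eqref{Wodd}, i.e.\ $W(-\bx)=-W(\bx)$, a direct symmetry computation shows that in the basis $\{\Phi^\lambda_1,\Phi^\lambda_2\}$ the matrix of $W$ is diagonal, with entries $+\vartheta^\lambda_\sharp$ and $-\vartheta^\lambda_\sharp$; the off-diagonal entries vanish by the parity mismatch between $W$ and the product $\overline{\Phi^\lambda_1}\Phi^\lambda_2$. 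Standard second-order perturbation theory then yields, for $\bk$ in a small neighborhood of $\bK_\star$ and $|\eta|$ small,
\begin{equation*}
E^{\lambda,\eta}_\pm(\bk) \;=\; E^\lambda_D \;\pm\; \sqrt{\,|\eta\vartheta^\lambda_\sharp|^2 + |\lamsharplam|^2|\bk-\bK_\star|^2\,}\;+\;\mathcal{O}(|\eta|^2+|\bk-\bK_\star|^2),
\end{equation*}
so the local minimum gap between the perturbed upper and lower bands is $\ge |\eta\vartheta^\lambda_\sharp|$, centered on $E^\lambda_D$.

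Away from the vertices of $\brill_h$, I would use Theorem \ref{main-theorem}(4)(a): for $\bk$ with $\mathscr{W}_{_{\rm TB}}(\bk)\ge\lambda^{-1/4}$ one has $E^\lambda_+(\bk)-E^\lambda_-(\bk)\ge c\rho_\lambda\lambda^{-1/4}$, while near but not at Dirac points Theorem \ref{main-theorem}(4)(b) gives $E^\lambda_+(\bk)-E^\lambda_-(\bk)\gtrsim \rho_\lambda|\bk-\bK_\star|$. Combining these two estimates with the local gap of size $|\eta\vartheta^\lambda_\sharp|$ opened at $\bk=\bK_\star$, one sees that by choosing $\eta_\star$ small enough (depending on $\lambda$, $V$, $W$), the perturbed upper band stays strictly above $E^\lambda_D+\tfrac12|\eta\vartheta^\lambda_\sharp|$ and the perturbed lower band stays strictly below $E^\lambda_D-\tfrac12|\eta\vartheta^\lambda_\sharp|$, uniformly in $\bk\in\brill_h$. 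The separation from higher bands persists because $\eta W$ is a bounded perturbation and the gap $c_{gap}$ between the low-lying two bands and the rest of the spectrum is of size $\gg|\eta|$. Hence $\operatorname{spec}(H^{\lambda,\eta})$ has an open gap containing $E^\lambda_D$.

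\textbf{Main obstacle.} The delicate point is the symmetry-based computation that the $2\times 2$ matrix of $\eta W$ on $\operatorname{span}\{\Phi^\lambda_1,\Phi^\lambda_2\}$ is of the form $\eta\vartheta^\lambda_\sharp\operatorname{diag}(1,-1)$: this requires identifying the action of the inversion and $120^\circ$ rotation on the Dirac-point eigenfunctions constructed in Sections \ref{low-lying-fb}--\ref{ME-bounds}, and using \eqref{Wodd} to kill the off-diagonal matrix element. A secondary subtlety is keeping the smallness condition on $\eta$ uniform enough that the bounds from parts (4)(a) and (4)(b) of Theorem \ref{main-theorem} can be combined with the local Dirac expansion to yield a global gap in $\bk\in\brill_h$ rather than just near each $\bK_\star$.
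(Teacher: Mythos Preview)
Your plan is essentially the same as the paper's proof: split $\brill_h$ into a neighborhood of the vertices (where a two-dimensional Lyapunov--Schmidt / degenerate perturbation argument produces the local expansion with gap $\sim|\eta\vartheta^\lambda_\sharp|$) and its complement (where Theorem \ref{main-theorem} gives a quantitative separation of $E^\lambda_\pm(\bk)$ from $E^\lambda_D$, stable under the bounded perturbation $\eta W$), and then match the two regions.

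One technical point you should be explicit about, and which the paper highlights: the ``small neighborhood of $\bK_\star$'' in which the two-dimensional reduction is valid has $\lambda$-dependent radius, namely $|\bk-\bK_\star|<c_1\lambda^{-1}$. The reason is that solving for the correction $\widetilde\psi\perp\operatorname{span}\{\Phi^\lambda_1,\Phi^\lambda_2\}$ requires smallness of $|\bk-\bK_\star|\cdot\|\nabla_\bx(H^\lambda(\bK_\star)-E^\lambda_D)^{-1}P_\perp\|$, and by the energy estimates (specifically \eqref{deriv-bounds} of Proposition \ref{prop2-resolvent}) this operator norm is $\mathcal{O}(\lambda)$, not $\mathcal{O}(1)$. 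So your local formula is only justified on a shrinking ball. Correspondingly, the ``far'' region is the $\lambda$-dependent compact set $\{\bk\in\brill_h:|\bk-\bK_\star|\ge c_2\lambda^{-1}\}$, on which the paper simply invokes the uniform convergence in Theorem \ref{main-theorem} together with the fact that $\mathscr{W}_{_{\rm TB}}$ vanishes only at the vertices; your more explicit use of parts (4)(a) and (4)(b) accomplishes the same thing. This is exactly the ``secondary subtlety'' you flagged, and once it is handled the two arguments coincide.
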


 \begin{proof}[Proof of Corollary \ref{spectral-gaps}]
   As shown in Corollary \ref{tb-dirac-pts}, there exist  Dirac points, 
   $(\bK_\star,E^\lambda_D)$, at the vertices, $\bK_\star$,  of $\brill_h$ for all $\lambda$ sufficiently large. For $\eta$ small, let $\bk\mapsto E^{(\lambda,\eta)}_\pm(\bk)$, denote the dispersion maps which are small perturbations of the maps  $\bk\mapsto E^{(\lambda,0)}_\pm(\bk)\equiv E^{\lambda}_\pm(\bk)$.

 The  proof of Corollary \ref{spectral-gaps} is based on the following two steps.
   \begin{enumerate}
   \item {\bf Claim:} There exists a constant $\lambda_\star$, such that for all $\lambda>\lambda_\star$ the following holds:
    There exist small constants $c_1>0$ and  $\eta_1>0$ such that 
     such that for all $0<|\eta|<\eta_1$  and all $\bk\in\brill_h$ satisfying $|\bk-\bK_\star|<c_1\lambda^{-1}$, where $\bK_\star$ varies over the vertices of $\brill_h$,
     \begin{align}
     \label{near-dispersion}
     E_\pm^{(\lambda,\eta)}(\bk)\ &=\ E_D^\lambda\ \pm\
      \sqrt{|\lamsharplam|^2\ |\bk-\bK_\star|^2\ +\ (\vartheta^\lambda_\sharp)^2\ \eta^2\ +\ \mathcal{O}(|\eta|^3)} \\
     & \qquad\qquad\qquad \times\ \left(\ 1+\mathcal{O}(|\bk-\bK_\star|\ +\ |\eta|\ )\ \right) \nn
     \end{align}
     
     The proof of this claim is via a Lyapunov-Schmidt reduction strategy 
    very similar to that in Appendix F of \cites{FLW-MAMS:17}. The essential difference is the need to separately treat  quasi-momenta
    within and outside a small $\lambda-$ dependent neighborhood of vertices $\bK_\star$ of $\brill_h$.
Expand solutions of the Floquet-Bloch eigenvalue problem for $H^{\lambda,\eta}$ in the form $\psi=\alpha_1\phi_1+\alpha_2\phi_2+\widetilde\psi$, where $\phi_j(\bx)=e^{-i\bK\cdot\bx}\Phi^\lambda_j,\ j=1,2$ are $\Lambda_h-$ periodic and $\widetilde\psi\perp\rm{span}\{\phi_1,\phi_2\}$. A coupled system for $(\alpha_1,\alpha_2,\widetilde\psi)$ is obtained by projecting the eigenvalue problem onto $\rm{span}\{\phi_1,\phi_2\}$  and its orthogonal complement.
 The projection onto  $\rm{span}\{\phi_1,\phi_2\}$ gives a system of two equations for $\alpha_1$ and $\alpha_2$, which depends on $\widetilde{\psi}$. To construct the mapping $(\alpha_1,\alpha_2)\mapsto \widetilde{\psi}[\alpha_1,\alpha_2]$ requires smallness of:
  \[ |\bk-\bK_\star| \ \left\|\nabla_\bx\left(-(\nabla+i\bK_\star)^2+\lambda^2V-E^\lambda_D\right)^{-1}P_\perp\right\|_{_{L^2(\R^2/\Lambda_h)\to L^2(\R^2/\Lambda_h)}},  \]
  where $P_\perp$ is the orthogonal projection onto $\rm{span}\{\phi_1,\phi_2\}^\perp$. The energy estimates of Section \ref{en-estimates} and in particular \eqref{deriv-bounds} in Proposition \ref{prop2-resolvent} below imply that this quantity is bounded by $C|\bk-\bK_\star|\lambda$, for some $C>0$. It follows that there exist small positive constants, $c_1$ and $\eta_1$, and a large constant, $\lambda_\star$, such that for $\lambda>\lambda_\star$, $|\bk-\bK_\star|<c_1\lambda^{-1}$ and $|\eta|<\eta_1$ we obtain a reduction to a two-dimensional problem, which yields \eqref{near-dispersion}.

     By \eqref{near-dispersion}, we can choose $0<c_2<c_1$ and $0<\eta_2<\eta_1$ such that for all $\bK_\star$ varying over the set of vertices of $\brill_h$, and all $0<|\eta|<\eta_2$ and all 
     $\bk\in\brill_h$, such that $|\bk-\bK_\star|<c_2\lambda^{-1}$:  the energies indicated by the maps:
     $\bk\mapsto E_\pm^{(\lambda,\eta)}(\bk)$ lie outside the interval about $E_D^\lambda$:       $(E_D^\lambda-\frac12\vartheta^\lambda_\sharp\eta,E_D^\lambda+\frac12\vartheta^\lambda_\sharp\eta)$. 
 
 %
   \item  Consider now quasimomenta, $\bk$ in the compact set:
   \begin{equation*}
  \mathscr{C}(c_2,\lambda) \equiv \left\{ \bk\in\brill_h : |\bk-\bK_\star|\ge c_2\lambda^{-1},\ \textrm{where $\bK_\star$ varies over the vertices of $\brill_h$} \right\} .
   \end{equation*}
First let $\eta=0$.  Theorem \ref{main-theorem} implies that for such quasi-momenta, the rescaled dispersion maps: 
   $\bk\mapsto\mu^\lambda_\pm(\bk) \equiv \left(\ E^{(\lambda,0)}_\pm(\bk)\ -\ E_D^\lambda\ \right)/\rho_\lambda$
are uniformly close to the Wallace dispersion relation, $\pm\mathscr{W}_{_{\rm TB}}(\bk)$ for $\lambda>\lambda_\star$ sufficiently large; see \eqref{near-conv}.
In particular, for $\bk\in\mathscr{C}(c_2,\lambda)$
\[
\left|\ E_\pm^{(\lambda,0)}(\bk)\ -\ \left(\ E_D^\lambda\pm\rho_\lambda\mathscr{W}_{_{\rm TB}}(\bk)\ \right)\ \right|\ \le\ \rho_\lambda e^{-c\lambda}
\]
and therefore, for $\lambda>\lambda_\star$ (if necessary, take $\lambda_\star$ to be larger than our earlier choices), 
\begin{equation} \left|\ E^{(\lambda,0)}_\pm(\bk)\ -\ E_D^\lambda\ \right|\ \ge\ \frac12\ \rho_\lambda \min_{\bk\in\mathscr{C}(c_2)}\  \mathscr{W}_{_{\rm TB}}(\bk)\ge C\ \rho_\lambda>0,
\label{le-c2-est}
\end{equation}
since within $\brill_h$, the Wallace dispersion relation  yields  energy $0$ only at the vertices of $\brill_h$ (Lemma \ref{gamma0}).

 Finally, we turn  to the perturbed dispersion maps $\bk\mapsto E^{(\lambda,\eta)}_\pm(\bk)$ on the compact quasi-momentum set 
$ \mathscr{C}(c_2,\lambda)$. By perturbation theory, about the eigenvalues $E^{(\lambda,0)}_\pm(\bk)$ 
($\bk\in\mathscr{C}(c_2,\lambda)$), there is a small and positive constant, $g_0$, such that for $\eta_\star(\lambda)\equiv g_0\rho_\lambda>0$, with $\lambda>\lambda_\star$ large enough:
\begin{equation}
 \left|\ E^{(\lambda,\eta)}_\pm(\bk)\ -\ E_D^\lambda\ \right|\ \ge\ \frac12 C\ \rho_\lambda>0
\label{ge-c2-est}
\end{equation}
for all $0<|\eta|<\eta_\star(\lambda)$ and all $\bk\in\mathscr{C}(c_2,\lambda)$. 
\end{enumerate}

By \eqref{le-c2-est} and \eqref{ge-c2-est} we have for all  $\bk\in\brill_h$, all $\lambda>\lambda_\star$
and all $0<\eta<\eta_\star(\lambda)$:
\begin{equation}
 \left|\ E^{(\lambda,\eta)}_\pm(\bk)\ -\ E_D^\lambda\ \right|\ \ge\ 
 c_\star(\lambda,\eta)\ \equiv\ \min\left\{\ \frac12 C\ \rho_\lambda\ ,\ \frac12\vartheta_\sharp\eta\ \right\}>0,
\label{ge-c2-est2}
\end{equation}
Under the above conditions on $\lambda$ and $\eta$, the energies indicated by the maps:
     $\bkappa\mapsto E_\pm^{(\lambda,\eta)}(\bk)$, where $\bk$ varies over the full Brillouin zone, $\brill_h$, 
     lie outside the open interval about $E_D^\lambda$:       $(E_D^\lambda-c_\star,E_D^\lambda+c_\star)$.

The proof of Corollary \ref{spectral-gaps} is now complete. 
\end{proof}

 \subsection{Protected edge states rational edges}\label{Protected-States}

Edge states are time-harmonic solutions of the  Schr\"odinger equation, which are propagating parallel to a
 line-defect (edge) and are  localized transverse to it; see the schematic in Figure \ref{fig:mode_schematic}.
  In \cites{FLW-2d_edge:16} (see also \cites{FLW-2d_materials:15}), we develop a theory of protected edge states for  honeycomb structures, perturbed by a class of line-defects (domain walls) in the direction of an element of $\Lambda_h$. We first present a terse summary. 
  
 Recall the spanning vectors of the equilateral triangular lattice, $\bv_1$ and $\bv_2$; see \eqref{v12-def}.  Given  a pair of  non-negative integers $a_1, b_1$, which are relatively prime, let $\vtilde_1=a_1\bv_1+b_1\bv_2$. We call the line $\R\vtilde_1$ the $\vtilde_1-$ edge. Since $a_1, b_1$ are relatively prime,  there exists a second pair of  relatively prime integers: $a_2,b_2$ such that $a_1b_2-a_2b_1=1$.  Set $\vtilde_2 = a_2 \bv_1 + b_2 \bv_2$. 

It follows that $\Z\vtilde_1\oplus\Z\vtilde_2=\Z\bv_1\oplus\Z\bv_2=\Lambda_h$.
Since $a_1b_2-a_2b_1=1$, we have dual lattice vectors $\ktilde_1, \ktilde_2\in\Lambda_h^*$, given by
$ \ktilde_1=b_2\bk_1-a_2\bk_2$ and $ \ktilde_2=-b_1\bk_1+a_1\bk_2$, which satisfy
$\ktilde_\ell \cdot \vtilde_{\ell'} = 2\pi \delta_{\ell, \ell'}$, $1\leq \ell, \ell' \leq 2$.
Note that $\Z\ktilde_1\oplus\Z\ktilde_2=\Z\bk_1\oplus\Z\bk_2=\Lambda^*_h$.  
%
The choice $\vtilde_1=\bv_1$ (or equivalently $\bv_2$) generates a \emph{zigzag edge} and the choice $\vtilde_1=\bv_1+\bv_2$ generates the \emph{armchair edge};  see Figure \ref{fig:edges}.

\begin{figure}
\centering
\includegraphics[width=0.65\textwidth]{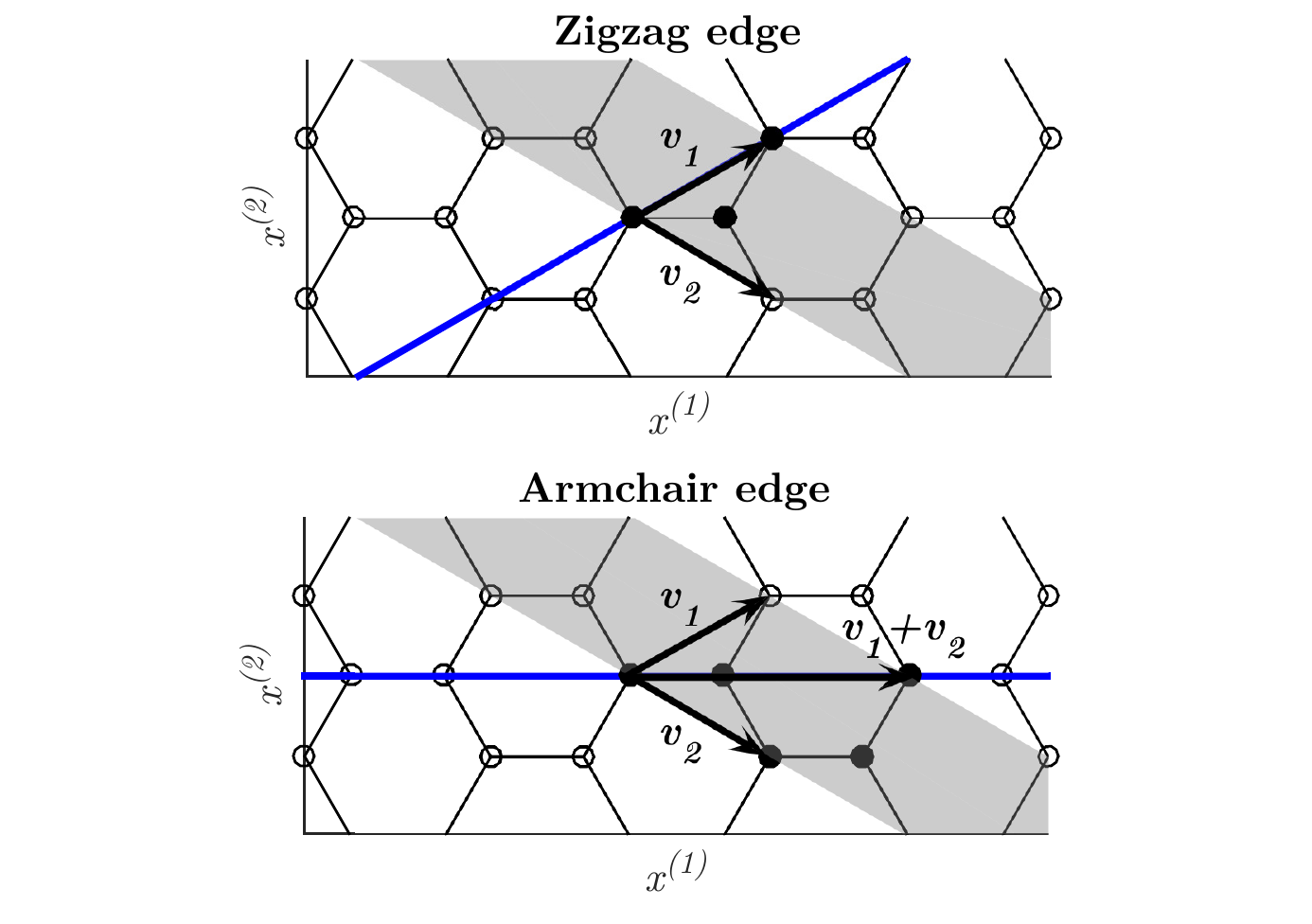}
\caption{\small
Bulk honeycomb structure, ${\bf H} =  ({\bf A } + \Lambda_h) \cup   ( {\bf B} + \Lambda_h)$.
{\bf Top panel}: Zigzag edge,  $\vtilde_1=\bv_1, \ktilde_2=\bk_2 $, $\R\bv_1 = \{\bx : \bk_2\cdot\bx=0\}$ (blue line). 
Shaded region is the fundamental domain of  cylinder, $\Sigma_{ZZ}$, corresponding to the zigzag edge.
{\bf Bottom panel}: Armchair edge, $\vtilde_1=\bv_1+\bv_2,  \ktilde_2=\bk_2-\bk_1$, $\R\left(\bv_1+\bv_2\right) = \{\bx : (\bk_1-\bk_2)\cdot\bx=0\}$ (blue line). 
Fundamental domain of cylinder, $\Sigma_{AC}$, corresponding to the armchair edge.
 }
\label{fig:edges}
\end{figure}

\begin{figure}
\centering
\includegraphics[width=0.6\textwidth]{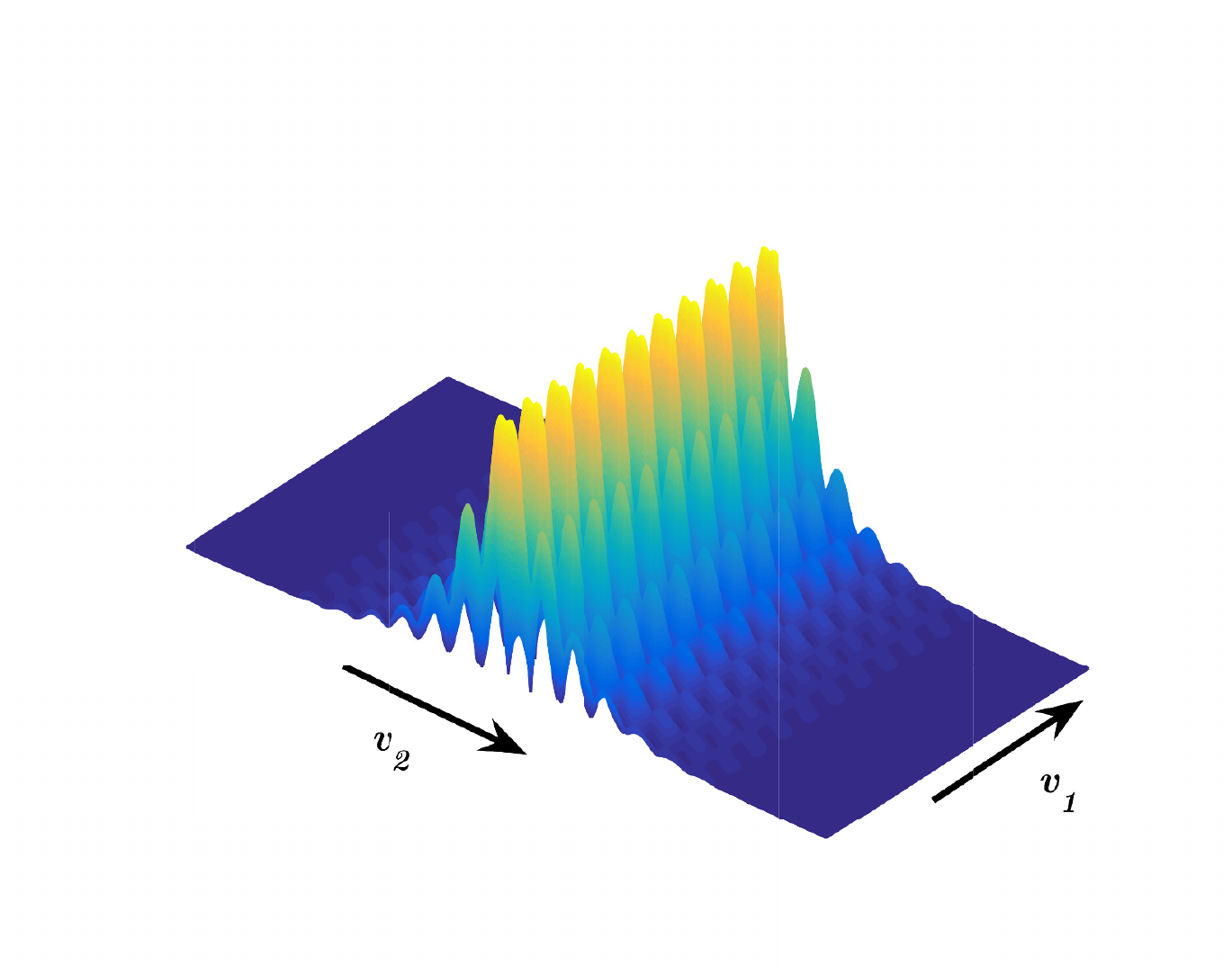}
\caption{\small
Edge state -- propagating (plane-wave like) parallel to a zigzag edge ($\R\bv_1$) and localized transverse to the edge}\label{fig:mode_schematic}
\end{figure}
%
%
 
Introduce the  family of Hamiltonians, depending on the real parameters $\lambda$ and $\delta$:
\begin{equation*}
H^{(\lambda,\delta)} \equiv -\Delta + \lambda^2 V(\bx) + \delta\kappa(\delta\ktilde_2\cdot \bx)W(\bx). 
\end{equation*}
$H^{(\lambda,0)}=-\Delta +\lambda^2 V(\bx)$ is the Hamiltonian for the unperturbed (bulk) honeycomb structure, studied in Theorem \ref{main-theorem}.
Here, $\delta$ will be taken to be sufficiently small, and $W(\bx)$ is $\Lambda_h-$ periodic
and odd with respect to the center, $\bx_c$, {\it i.e.} $W(2\bx_c-\bx)=-W(\bx)$. Thus, $W$ breaks inversion symmetry.  See Corollary \ref{spectral-gaps}.
The function $\kappa(\zeta)$ defines a \emph{domain wall}. We choose $\kappa$ to be sufficiently smooth and to satisfy $\kappa(0)=0$ and    $\kappa(\zeta)\to\pm\kappa_\infty\ne0$
as $\zeta\to\pm\infty$, \emph{e.g.} $\kappa(\zeta)=\tanh(\zeta)$. Without loss of generality, we assume $\kappa_\infty>0$. 

Note that $H^{(\lambda,\delta)}$ is invariant under translations  parallel to the  $\vtilde_1-$ edge, $\bx\mapsto\bx+\vtilde_1$, and hence there is a well-defined \emph{parallel quasi-momentum}, denoted $\kpar$.  Furthermore, $H^{(\lambda,\delta)}$ transitions adiabatically between the  asymptotic Hamiltonian $H_-^{(\lambda,\delta)}=H^{(\lambda,0)} - \delta\kappa_\infty W(\bx)$ as $\ktilde_2\cdot\bx\to-\infty$ to the asymptotic Hamiltonian $H_+^{(\lambda,\delta)}=H^{(\lambda,0)} + \delta\kappa_\infty W(\bx)$ as $\ktilde_2\cdot\bx\to\infty$. 
The  domain wall modulation of $W(\bx)$ realizes a phase-defect across the edge $\R\vtilde_1$.
 A variant of this construction was used in \cites{FLW-PNAS:14,FLW-MAMS:17} to interpolate between different asymptotic 1D dimer periodic potentials. 
 
We seek  \emph{$\vtilde_1-$ edge states} of $H^{(\lambda,\delta)}$, which are spectrally localized near the Dirac point, $(\bK_\star,E^\lambda_D)$, where $\bK_\star$ is a vertex of $\brill_h$.  
These are non-trivial solutions $\Psi$, with energies $E\approx E^\lambda_D$,  of the \emph{$\kpar-$ edge state eigenvalue problem (EVP):}
\begin{align}
H^{(\lambda,\delta)}\Psi &= E\Psi,\label{edge-evp}\\
\Psi(\bx+\vtilde_1) &= e^{i\kpar}\Psi(\bx) , \label{edge-bc1}\\
|\Psi(\bx)| \to \ 0\ &{\rm as} \ |\ktilde_2\cdot\bx|\to\infty , \label{edge-bc2}
\end{align}
for $\kpar\approx\bK_\star\cdot\vtilde_1$. The boundary conditions \eqref{edge-bc1} and \eqref{edge-bc2} imply, respectively, propagation parallel to, and localization transverse to, the edge $\R\vtilde_1$.

The edge state eigenvalue problem \eqref{edge-evp}-\eqref{edge-bc2} may be formulated in an appropriate Hilbert space. Introduce the cylinder $\Sigma\equiv \R^2/ \Z\vtilde_1$. 
Denote by  $H^s(\Sigma),\ s\ge0$, the  Sobolev spaces of functions defined on $\Sigma$. The pseudo-periodicity and decay conditions \eqref{edge-bc1}-\eqref{edge-bc2} are encoded by requiring $ \Psi \in H^s_\kpar(\Sigma)=H^s_\kpar$, for some $s\ge0$,  where
\[H^s_\kpar\equiv \ \left\{f : f(\bx)e^{-i\frac{\kpar}{2\pi}\ktilde_1\cdot\bx}\in H^s(\Sigma) \right\}.\]
We then formulate the EVP \eqref{edge-evp}-\eqref{edge-bc2} as:
\begin{equation}
H^{(\lambda,\delta)}\Psi = E\Psi, \quad \Psi\in H^2_{\kpar}(\Sigma).
\label{EVP2}
\end{equation}

Theorem 7.3 and Corollary 7.4 in \cites{FLW-2d_edge:16} (see also Theorem 4.1 of \cites{FLW-2d_materials:15}) formulate general hypotheses on the 
bulk honeycomb structure, $V(\bx)$, the domain-wall function, $\kappa(\zeta)$, and the asymptotic perturbation of the bulk structure, $W(\bx)$, which imply the existence of topologically protected $\vtilde_1-$ edge states, constructed as non-trivial eigenpairs $\delta\mapsto (\Psi^\delta,E^\delta)$ \eqref{EVP2} with $\kpar=\bK_\star\cdot\vtilde_1$, defined for all $|\delta|$ sufficiently small. This branch of solutions bifurcates, for $\delta\ne0$, from the intersection of spectral bands at $E_D$.
  
  The key hypothesis is a {\it spectral no-fold condition}, associated with $\vtilde_1-$ edge. In \cites{FLW-2d_edge:16}, this condition was verified for the zigzag edge, for 
 general honeycomb Schr\"odinger operators, $-\Delta +\eps V$, with low contrast; in particular, where
  \[\eps V_{1,1}\ \equiv\ \eps\ \int_{D}e^{-i(\bk_1+\bk_2)\cdot\bx} V(\bx)\ d\bx>0,\]
  with $|\eps|$ sufficiently small.

The main result of the present article, Theorem \ref{main-theorem}, immediately implies the validity of the spectral no-fold condition for all $\lambda>\lambda_\star$ sufficiently large, and hence 
  \begin{corollary}[Protected edge states for \underline{rational edges}]\label{rational-edge-states}
 In the strong binding regime, there exist  topologically protected edge states for the large class of  {\it rational edges}
  presented in Remark \ref{remarks_on_edgestates} below.  
   \end{corollary}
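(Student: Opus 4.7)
The plan is to invoke the general edge state construction of Theorem 7.3 and Corollary 7.4 of \cites{FLW-2d_edge:16}, whose hypotheses consist essentially of (i) the existence of a Dirac point for the bulk Hamiltonian $-\Delta+\lambda^2 V$ and (ii) the \emph{spectral no-fold condition} associated with the chosen rational edge direction $\vtilde_1$. Hypothesis (i) is delivered directly by Theorem \ref{main-theorem}(3): for every $\lambda>\lambda_\star$, the pairs $(\bK_\star,E_D^\lambda)$, with $\bK_\star$ ranging over the vertices of $\brill_h$, are Dirac points of $-\Delta+\lambda^2V$. What remains is to verify the spectral no-fold condition in the strong-binding regime. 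Once this is done, for each rational edge in the class identified in Remark \ref{remarks_on_edgestates}, the theory of \cites{FLW-2d_edge:16} produces a branch $\delta\mapsto(\Psi^\delta,E^\delta)$ of nontrivial solutions of the edge eigenvalue problem \eqref{EVP2} at $\kpar=\bK_\star\cdot\vtilde_1$, bifurcating from $(\bK_\star,E_D^\lambda)$.

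To verify the no-fold condition, I would split the bulk dispersion into the two lowest bands and the higher bands. For the first two bands, Theorem \ref{main-theorem}(4) gives
\[
 E_\pm^\lambda(\bk)\ =\ E_D^\lambda \pm \rho_\lambda\ \mathscr{W}_{_{\rm TB}}(\bk)\ \bigl(1+\mathcal{O}(e^{-c\lambda})\bigr)
\]
uniformly in $\bk\in\brill_h$, so by Lemma \ref{gamma0}(i) the low-lying dispersion surfaces meet the level $E_D^\lambda$ precisely at the six vertices of $\brill_h$, and hence, on each slice $\{\bk\in\brill_h:\bk\cdot\vtilde_1\equiv\bK_\star\cdot\vtilde_1\pmod{2\pi}\}$, only at the Dirac points that lie on that slice. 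For the higher bands ($b\ge 3$), Theorem \ref{main-theorem}(1) asserts that the window $[E_0^\lambda-\widehat{C}\rho_\lambda,E_0^\lambda+\widehat{C}\rho_\lambda]$ contains exactly two Floquet-Bloch eigenvalues for each $\bk$. Combined with the atomic energy gap hypothesis \textbf{(EG)} and the resolvent estimates underlying Section \ref{en-estimates} (which transfer the gap $c_{gap}$ of $-\Delta+\lambda^2V_0$ to a gap between the second and third bands of $-\Delta+\lambda^2V$), one obtains $E_b^\lambda(\bk)\ge E_D^\lambda+\tfrac{1}{2}c_{gap}$ for all $b\ge 3$, $\bk\in\brill_h$ and $\lambda>\lambda_\star$. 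Since $\rho_\lambda\to 0$ exponentially while $c_{gap}$ is bounded below uniformly in $\lambda$, no higher band can approach $E_D^\lambda$, and the no-fold condition holds on the chosen slice.

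With the no-fold condition in hand, the topologically protected edge states of Corollary \ref{rational-edge-states} follow immediately from \cites{FLW-2d_edge:16}: the Dirac point $(\bK_\star,E_D^\lambda)$ sits in a spectral gap of the asymptotic Hamiltonians $H_\pm^{(\lambda,\delta)}$ (by Corollary \ref{spectral-gaps} above), and across the domain wall $\kappa(\delta\ktilde_2\cdot\bx)$ a protected $\vtilde_1$-edge state bifurcates as $\delta$ passes through $0$. Stability against the class of transverse-localized perturbations is inherited from the topological protection mechanism of \cites{FLW-2d_edge:16,FLW-2d_materials:15}, since that mechanism depends only on the Dirac-point structure and the no-fold condition, both of which we have verified.

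The main anticipated obstacle is not conceptual but bookkeeping: one must check that the precise form of the no-fold condition in \cites{FLW-2d_edge:16} -- which constrains the number and location of zeros of the two-band dispersion $\mathscr{W}_{_{\rm TB}}$ on the $\kpar$-slice $\bk\cdot\vtilde_1\equiv\bK_\star\cdot\vtilde_1\pmod{2\pi}$ -- is satisfied for the rational edges $\vtilde_1=a_1\bv_1+b_1\bv_2$ enumerated in Remark \ref{remarks_on_edgestates}. For edges outside this class, additional zeros of $\gamma(\bk)$ can lie on the slice, and the bifurcation argument of \cites{FLW-2d_edge:16} does not apply without further modification; this is the source of the restriction in the statement of Corollary \ref{rational-edge-states}.
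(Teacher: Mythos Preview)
Your proposal is correct and follows essentially the same route as the paper: invoke Theorem \ref{main-theorem} to verify the spectral no-fold condition along the dual slice, and then appeal to Theorem 7.3 and Corollary 7.4 of \cites{FLW-2d_edge:16}. The paper itself says only that Theorem \ref{main-theorem} ``immediately implies the validity of the spectral no-fold condition,'' so your argument is a faithful elaboration of that one-line proof.

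One small point of calibration: the no-fold condition, as the paper states it just before Corollary \ref{rational-edge-states}, concerns only the two lowest dispersion curves $E_1,E_2$ along the slice $\xi\mapsto \bK_\star+\xi\ktilde_2$, not the higher bands. Your separate argument that $E_b^\lambda(\bk)\ge E_D^\lambda+\tfrac12 c_{gap}$ for $b\ge 3$ is correct (it is essentially Corollary \ref{atmost2evalues} combined with part (1) of Theorem \ref{main-theorem}), but it is not part of the no-fold hypothesis that needs verification here; it enters instead in the background spectral setup of \cites{FLW-2d_edge:16}. Your two-band argument, combining parts (4)(a) and (4)(b) of Theorem \ref{main-theorem} to conclude that $E_\pm^\lambda(\bk)=E_D^\lambda$ forces $\bk$ to be a vertex, is exactly the content the paper has in mind.
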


\nit The class of admissible edges in Corollary \ref{rational-edge-states}  is presented in Remark \ref{remarks_on_edgestates} below.
 
  We next present a brief explanation of the spectral no-fold condition. 
    Given an edge in the direction of  $\vtilde_1\in \Lambda_h$,
    there is an associated {\it dual slice} of the band structure, which passes through a chosen Dirac point, $(\bK_\star,E_D)$.  The dual slice is the set of curves $\xi\in[-1/2,1/2)\mapsto E_b(\bK_\star+\xi\ktilde_2),\ b\ge1$, where $\ktilde_2\cdot\vtilde_1=0$. The significance of the dual slice is that the edge-states constructed in \cites{FLW-2d_edge:16},
     which are propagating parallel to and localized transverse to $\vtilde_1$, 
      are superpositions of Floquet-Bloch modes with the quasi-momenta of the dual slice.
       In the current setting, we verify the spectral no-fold condition for the two lowest spectral bands
     of $-\Delta +\lambda^2V(\bx)$ for $\lambda>\lambda_\star$ sufficiently large. 
     The spectral no-fold condition states that the line $E=E_D$
     intersects the pair of dispersion curves 
     $\xi\in[-1/2,1/2)\mapsto E_1(\bK_\star+\xi\ktilde_2),\ E_2(\bK_\star+\xi\ktilde_2)$, only 
     at Dirac points.

     Figure \ref{fig:dispersion_curves_lambda} illustrates such pairs of dispersion curves, associated with several edges: zigzag, armchair and $(2,1)$ and two choices of $\lambda$: $\lambda=1$ and $5$. For $\lambda=1$, the no-fold condition holds only for the zigzag edge, but it holds for all three types of edges if $\lambda=5$.
     
\begin{remark}
\label{remarks_on_edgestates}
     \begin{enumerate}
     \item   At present, the results in \cites{FLW-2d_edge:16} are stated for edges, $\R\vtilde_1$, for which 
      $\xi\mapsto \bK_\star+\xi\ktilde_2$, $|\xi|\le1/2$, passes through only one independent Dirac point. 
      There are edges for which the dual slice passes through two independent Dirac points, {\it i.e.} where $\xi\mapsto \bK_\star+\xi\ktilde_2$, $|\xi|\le1/2$
       intersects both lattices $\bK_\star+\Lambda_h^*$ and $\bK_\star^\prime+\Lambda_h^*$. We are currently working on extending  the methods of \cites{FLW-2d_edge:16}   to this case. See  \cites{FLW-2d_materials:15} for a discussion and numerical simulations of the multi-branch bifurcation from the intersection of bands in this case. 
            %
     \item Our results imply that  given an edge in the direction $\vtilde^{(m,n)}_1=m\bv_1+n\bv_1$, where $m$ and $n$ are relatively prime integers, there is a threshold $\lambda_\star(m,n)$, such that for all $\lambda>\lambda_\star(m,n)$, the spectral no-fold
     condition holds and there exist edge states  \cites{FLW-2d_materials:15}. 
          \item As discussed above, in \cites{FLW-2d_edge:16} we prove the existence of zigzag edge states ($\vtilde^{(m,n)}_1=\vtilde^{(1,0)}_1$) for a class of line-defect perturbations of Schr\"odinger operators with weak honeycomb potentials: $-\Delta+\eps V_h(\bx),\ |\eps|\ll1$,  satisfying the additional condition:  $\eps V_{1,1}>0$.
      This analysis also suggests that if $\eps V_{1,1}<0$, then there are edge quasi-modes, whose energy slowly leaks into the bulk. By appropriate choice of atomic potential well 
     $V_0(\bx)$, we may  arrange for $V(\bx)$, such that $\eps V_{1,1}<0$; see Appendix A of \cites{FLW-2d_edge:16}.  For honeycomb potentials, $V$, arising from such atomic potentials, Corollary \ref{rational-edge-states}
      shows that there is necessarily a transition from a ``leaky'' resonance mode to a truly localized mode along the edge, for $\lambda$ above some finite $\lambda_\star$.
          \end{enumerate}
\end{remark}
\begin{figure}
\centering
\includegraphics[width=0.75\textwidth]{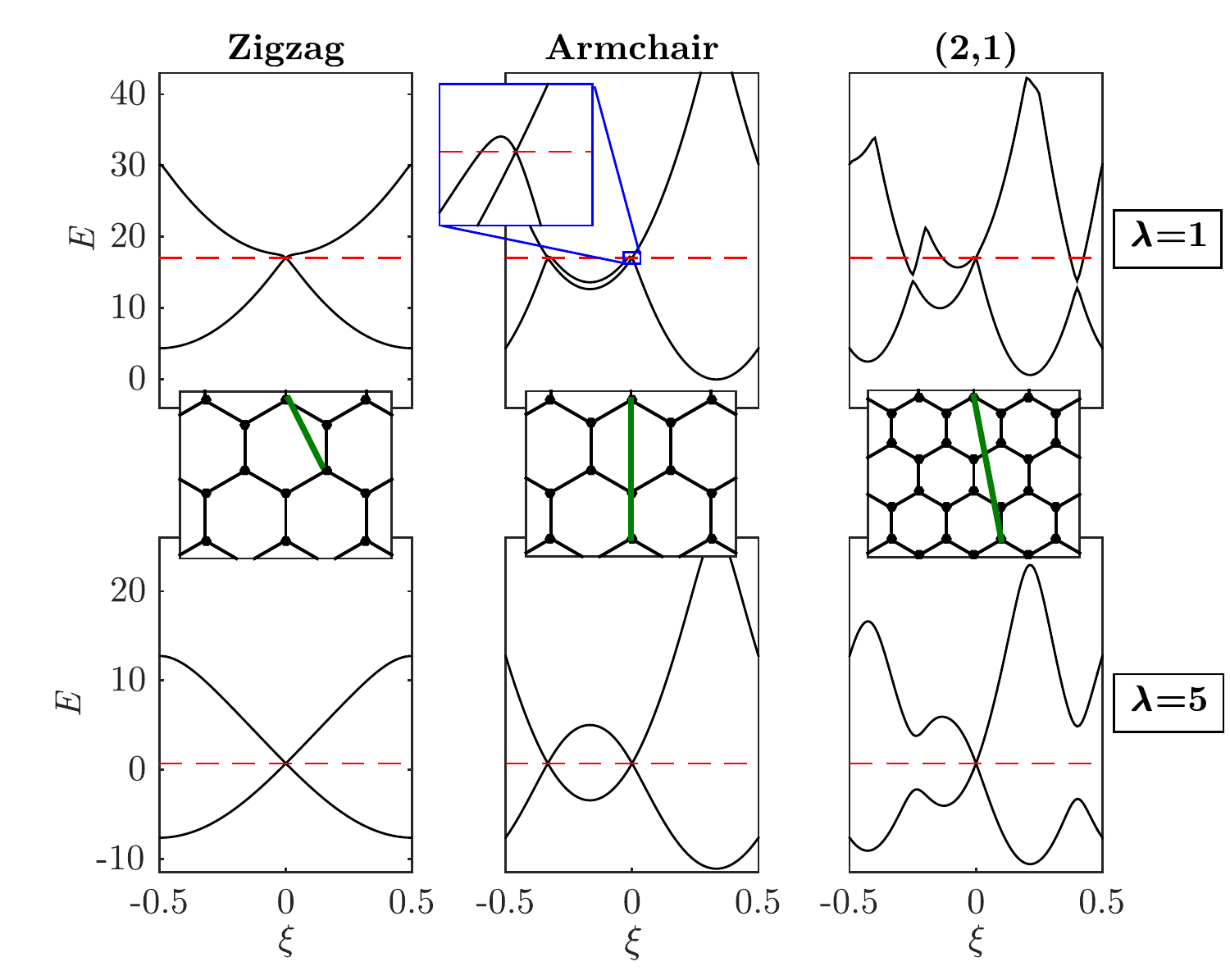}
\caption{\small
Band dispersion slices of $-\Delta+\lambda^2 V$ along the quasi-momentum segments:  $\bK+\xi\ktilde_2,\ |\xi|\le1/2$, for $\ktilde_2=\bk_2$ (zigzag), $\ktilde_2=-\bk_1+\bk_2$ (armchair) and $\ktilde_2=-\bk_1+2\bk_2$ ($(2,1)$), for $\lambda=1$ (top row) and $\lambda=5$ (bottom row). 
Numerical computations presented are for the simple trigonometric polynomial potential  $V(\bx)=\cos(\bk_1 \cdot\bx)+\cos(\bk_2 \cdot\bx)+\cos((\bk_1+\bk_2)\cdot\bx)$, satisfying $V_{1,1}>0$, which is not of the form of the potentials introduced in Section \ref{periodization}. Red dashed lines denote the Dirac point energy, $E^\lambda_D$.
{\bf Top row}: $\lambda=1$. Spectral no-fold condition holds for the zigzag edge but not the armchair and $(2,1)-$ edge.
{\bf Bottom row}: $\lambda=5$. Spectral no-fold condition holds for all three edges.
Insets between dispersion slice plots indicate zigzag, armchair and $(2,1)-$ (green) quasi-momentum segments (1D Brillouin zones) parametrized by $\xi$, for $0\leq\xi\leq1$.
}
\label{fig:dispersion_curves_lambda}
\end{figure}

\section{Dirac points}\label{dirac-points}

In this section we summarize results of  \cites{FW:12} on {\it Dirac points} of Schr\"odinger Hamiltonians, $H_V=-\Delta+V$, where $V$ is a honeycomb lattice potential. In the current context our Hamiltonian is $H^\lambda=-\Delta+\lambda^2 V(\bx)-E_0^\lambda$, with $V$  defined in \eqref{Vdef}.
 
Introduce the rotation and inversion operators, acting on functions:
\begin{equation} 
\mathcal{R}[f](\bx)=f(\bx_c+R^*(\bx-\bx_c)) \ {\rm and} \ \mathcal{I}[f](\bx)=f(\bx_c-(\bx-\bx_c))=f(2\bx_c-\bx).
\label{cRI-def}
\end{equation}

Let $\bK_\star$ denote any vertex of the Brillouin zone, $\B_h$, and let $f$ be a $\bK_\star-$ pseudo-periodic function. Since $R^*$ maps $\Lambda_h$ to itself and $R\bK_\star\in\bK_\star+\Lambda_h^*$, we have
\begin{align}
\mathcal{R}[f](\bx+\bv)&= f(\bx_c+R^*(\bx-\bx_c) + R^*\bv) = e^{i\bK_\star\cdot R^*\bv}f(\bx_c+R^*(\bx-\bx_c))\label{RfK}
\\ &=e^{iR\bK_\star\cdot \bv}f(\bx_c+R^*(x-\bx_c)) =e^{i\bK_\star\cdot \bv}f(\bx_c+R^*(\bx-\bx_c))\nn\\
&=
e^{i\bK_\star\cdot \bv}\ \mathcal{R}[f](\bx) .
\nn\end{align}
Therefore, in analogy with Proposition 2.2 of \cites{FW:12} we have
\begin{proposition}\label{commutator}
Let $\bK_\star$ be any of the six vertices of the Brillouin zone, $\brill_h$. 
Then, 
 $H^\lambda$ and $\mathcal{R}$ map a dense subspace of $L^2_{\bK_\star}$ to itself. Furthermore, restricted to this dense subspace, the commutator $[H^\lambda,\mathcal{R}]=H^\lambda\ \mathcal{R}-\mathcal{R}\ H^\lambda$ vanishes. In particular, if $\phi(\bx,\bK_\star)$ is a solution of the $\bK_\star-$ pseudo-periodic eigenvalue problem \eqref{evp-Hlam}, then $\mathcal{R}[\phi(\cdot,\bK_\star)](\bx)$ is also a solution of  \eqref{evp-Hlam}.
\end{proposition}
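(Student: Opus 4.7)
My plan is to verify the three claims in Proposition \ref{commutator} in sequence, using the symmetry properties of $V(\bx)$ established in Proposition \ref{VisHLP} and the rotation-invariance of the Laplacian.

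First, I would choose the dense subspace to be $H^2_{\bK_\star} \equiv H^2_{loc}(\R^2) \cap L^2_{\bK_\star}$, the natural domain of $H^\lambda$ acting on $L^2_{\bK_\star}$. That $H^\lambda$ maps this space into $L^2_{\bK_\star}$ is standard: $-\Delta$ sends $H^2_{\bK_\star}$ to $L^2_{\bK_\star}$, and multiplication by the smooth $\Lambda_h$-periodic potential $\lambda^2 V(\bx) - E_0^\lambda$ preserves $\bK_\star$-pseudo-periodicity. To see that $\mathcal{R}$ preserves $H^2_{\bK_\star}$, I first invoke the computation \eqref{RfK} already displayed in the text, which shows directly that $\mathcal{R}[f](\bx+\bv)=e^{i\bK_\star\cdot\bv}\mathcal{R}[f](\bx)$ for $\bv\in\Lambda_h$; since $R^*$ is orthogonal, $\mathcal{R}$ is an isometry on $L^2_{loc}$ and preserves all local Sobolev norms, so $H^2_{\bK_\star}$ is invariant under $\mathcal{R}$.

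Next, for the commutator, I would treat the three summands of $H^\lambda = -\Delta + \lambda^2 V(\bx) - E_0^\lambda$ individually. Because $R^*$ is an orthogonal transformation fixing $\bx_c$, an elementary chain-rule calculation gives $\Delta(\mathcal{R}[f])(\bx) = (\Delta f)(\bx_c+R^*(\bx-\bx_c)) = \mathcal{R}[\Delta f](\bx)$, so $[-\Delta,\mathcal{R}]=0$ on $H^2_{\bK_\star}$. The constant term $-E_0^\lambda$ commutes trivially with $\mathcal{R}$. For the potential term, by Proposition \ref{VisHLP} we have $V(\bx_c+R^*(\bx-\bx_c)) = V(\bx)$, hence
\begin{equation*}
\mathcal{R}[Vf](\bx) = V(\bx_c+R^*(\bx-\bx_c))\, f(\bx_c+R^*(\bx-\bx_c)) = V(\bx)\, \mathcal{R}[f](\bx),
\end{equation*}
so $[\lambda^2 V,\mathcal{R}]=0$. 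Adding the three contributions yields $[H^\lambda,\mathcal{R}]=0$ on $H^2_{\bK_\star}$.

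The final assertion is then immediate: if $\phi(\cdot,\bK_\star)\in H^2_{\bK_\star}$ solves $H^\lambda\phi = E\phi$, then $\mathcal{R}\phi\in H^2_{\bK_\star}$ and $H^\lambda(\mathcal{R}\phi) = \mathcal{R}(H^\lambda\phi) = E\,\mathcal{R}\phi$. I do not anticipate a substantive obstacle; the only subtlety worth flagging is that the rotation $R^*$ is anchored at $\bx_c$, not at the origin, and one must use that $\bx_c$ is a center of symmetry for $V$ (part of the honeycomb structure content of Proposition \ref{VisHLP}) together with $R\bK_\star \in \bK_\star + \Lambda_h^*$ (to ensure that pseudo-periodicity is preserved, as in \eqref{RfK}). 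These two facts are precisely why the proposition holds specifically at vertices $\bK_\star$ of $\brill_h$, and both are already available in the excerpt.
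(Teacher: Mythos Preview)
Your proposal is correct and follows essentially the same approach as the paper. The paper does not give a detailed proof here; it simply presents the computation \eqref{RfK} showing that $\mathcal{R}$ preserves $\bK_\star$-pseudo-periodicity and then states the proposition ``in analogy with Proposition 2.2 of \cite{FW:12}.'' Your write-up is exactly the natural elaboration of that reference: the rotation-invariance of $\Delta$, the rotational symmetry of $V$ about $\bx_c$ from Proposition \ref{VisHLP}, and the pseudo-periodicity calculation \eqref{RfK} are precisely the ingredients, and you have assembled them correctly.
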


  Since $\mathcal{R}$ has eigenvalues $1,\tau$ and $\overline{\tau}$, it is natural to split $L^2_{\bK_\star}$, the space of $\bK_\star-$ pseudo-periodic functions, into the direct sum:
  \begin{equation}
   L^2_{\bK_\star}\ =\ L^2_{\bK_\star,1}\oplus L^2_{\bK_\star,\tau}\oplus L^2_{\bK_\star,\overline\tau}.
   \label{L2-directsum}
   \end{equation}
Here,  $L^2_{\bK_\star,\sigma}$, where $\sigma=1,\tau,\overline{\tau}$ and  $\tau=\exp(2\pi i/3)$, denote the invariant eigenspaces of $\mathcal{R}$:
   \begin{equation}
   L^2_{\bK_\star,\sigma}\ =\ \Big\{g\in  L^2_{\bK_\star}: \mathcal{R}g=\sigma g\Big\}\ .
 \label{L2Ksigma}   
\end{equation}
We also introduce  $H^s_{\bK_\star}$, $s\ge0$,  the subspace of 
functions $f\in L^2_{\bK_\star}$, such that $e^{-i\bK_\star\cdot\bx}f(\bx)\in H^2(\R^2/\Lambda_h)$.
 $H^s_{\bK_\star,\sigma}$,  for $s\ge0$ and $\sigma=1,\tau,\overline\tau$, a subspace of
 $L^2_{\bK_\star,\sigma}$ is defined analogously. 
  
Proposition \ref{commutator} and the decomposition \eqref{L2-directsum} imply that the $L^2_{\bK_\star}- $ Floquet-Bloch eigenvalue problem may be reduced to the three independent   $L^2_{\bK_\star,\sigma}-$ eigenvalue problems;
\begin{equation}
\left(-\Delta+\lambda^2V\right)\Psi\  =\ E\Psi,\qquad \Psi\in L^2_{\bK_\star,\sigma}\qquad \sigma=1,\tau,\overline\tau ;
\nn\end{equation}
In particular, see Definition \ref{dirac-pt-defn} of Dirac point below.

\begin{proposition}\label{tau-taubar}
Let $f\in L^2_{\bK_\star,\tau}$. Then, 
\[ \left(\mathcal{C}\circ\mathcal{I}\right)[f](\bx)\equiv \overline{f(2\bx_c-\bx)}\ \in\ L^2_{\bK_\star,\overline\tau},\ \]
where $\mathcal{C}$ denotes complex-conjugation and $\mathcal{I}$ is the inversion with respect to $\bx_c$, defined in \eqref{cRI-def}. 
\end{proposition}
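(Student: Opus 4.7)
The plan is to verify directly the two properties defining $L^2_{\bK_\star,\overline\tau}$: that $g\equiv (\mathcal{C}\circ\mathcal{I})[f]$ is $\bK_\star$-pseudo-periodic, and that $\mathcal{R}g=\overline\tau\, g$.

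For the pseudo-periodicity, I would fix $\bv\in\Lambda_h$ and compute $g(\bx+\bv)=\overline{f(2\bx_c-\bx-\bv)}$. Applying the $\bK_\star$-pseudo-periodicity of $f$ with the lattice vector $-\bv\in\Lambda_h$ replaces $f(2\bx_c-\bx-\bv)$ by $e^{-i\bK_\star\cdot\bv}f(2\bx_c-\bx)$; complex conjugation, using that $\bK_\star\in\R^2$, then restores the phase to $e^{i\bK_\star\cdot\bv}$. Hence $g(\bx+\bv)=e^{i\bK_\star\cdot\bv}g(\bx)$, so $g\in L^2_{\bK_\star}$.

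For the $\mathcal{R}$-eigenvalue, the cleanest observation is that the operators $\mathcal{R}$, $\mathcal{I}$, and $\mathcal{C}$ mutually commute: $\mathcal{R}$ and $\mathcal{I}$ both fix $\bx_c$ and the linear parts $R^*$ and $-I$ commute, while complex conjugation commutes trivially with any real affine change of variables. Therefore
\begin{equation*}
\mathcal{R}\,(\mathcal{C}\circ\mathcal{I})[f]\;=\;(\mathcal{C}\circ\mathcal{I})\,\mathcal{R}[f]\;=\;(\mathcal{C}\circ\mathcal{I})[\tau f]\;=\;\overline\tau\,(\mathcal{C}\circ\mathcal{I})[f],
\end{equation*}
which gives $\mathcal{R}g=\overline\tau\,g$. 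Alternatively, if one prefers to avoid invoking the commutation without comment, the same identity can be checked by the coordinate substitution $\bz=2\bx_c-\bx$, using $\bx_c-R^*(\bx-\bx_c)=\bx_c+R^*(\bz-\bx_c)$, to rewrite $\mathcal{R}g(\bx)=\overline{\mathcal{R}[f](\bz)}=\overline{\tau f(\bz)}=\overline\tau\,g(\bx)$.

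There is no real obstacle here; the only thing to be careful about is the sign tracking in the pseudo-periodicity step (one must apply $f$'s pseudo-periodicity with lattice shift $-\bv$ before conjugating, not after) and the commutation of $\mathcal{R}$ with $\mathcal{I}$, which rests on the fact that both symmetries share the same center $\bx_c$. Combining the two verifications shows $g\in L^2_{\bK_\star,\overline\tau}$, as claimed.
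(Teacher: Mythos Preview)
Your proof is correct and matches the paper's approach essentially line for line: the paper verifies pseudo-periodicity exactly as you do (shift by $-\bv$, then conjugate), and for the $\mathcal{R}$-eigenvalue it carries out precisely your ``alternative'' coordinate computation via $\bz=2\bx_c-\bx$. Your commutation argument is just a clean repackaging of the same identity.
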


\begin{proof}[Proof of Proposition \ref{tau-taubar}] Let $S=\mathcal{C}\circ\mathcal{I}$ and suppose $f\in L^2_{\bK_\star,\tau}$. Then, for any $\bv\in\Lambda_h$, 
\[ 
[Sf](\bx+\bv)=
\overline{f(2\bx_c-\bx-\bv)}=\overline{f(2\bx_c-\bx)e^{i\bK_\star\cdot(-\bv)}}=e^{i\bK_\star\cdot\bv} [Sf](\bx).\]
 Hence, $Sf\in L^2_{\bK_\star}$. Furthermore, 
 \begin{align*}
  \mathcal{R}[Sf](\bx) &=\  
  (Sf)\left(\bx_c+R^*(\bx-\bx_c)\right)\ =\ \overline{f\left(2\bx_c-[\bx_c+R^*(\bx-\bx_c)]\right)}
  \nn\\
  &=\ \overline{f\left(\bx_c+R^*[2\bx_c-\bx-\bx_c])]\right)} = \overline{\tau\ f(2\bx_c-\bx)}\ =\ \overline{\tau}\ S[f](\bx)\  
 \end{align*}
 which completes the proof.\end{proof}

    We next give  a precise definition of a Dirac point of a honeycomb Schroedinger operator; see \cites{FW:12}.
\begin{definition}[Dirac point]\label{dirac-pt-defn}
Let $V(\bx)$ be smooth,  real-valued and  $\Lambda_h-$ periodic on $\R^2$. Assume, in addition that $V$ is inversion symmetric with respect to $\bx_c$ and rotationally invariant by $120^\circ$; see \eqref{V-inv}. 
\footnote{ In \cites{FW:12} we implicitly assume that $\bx_c=0$ but, with obvious changes,
 the discussion of  \cites{FW:12} applies here with $\bx_c$ given by \eqref{x_c-def}.}
That is, $V$ is a honeycomb lattice potential in the sense of \cites{FW:12}. Consider the Schr\"odinger operator 
$H_V=-\Delta+V$.

Denote by $\brill_h$, the Brillouin zone. Let $\bK_\star\in\brill_h$ be one of the vertices of the Brillouin zone.
The energy / quasi-momentum pair $(\bK_\star,E_D)\in\brill_h\times\R$ is called a {\it Dirac point} if there exists $b_\star\ge1$  such that:
\begin{enumerate}
\item $E_D$ is an $L^2_{\bK_\star}-$ eigenvalue of  $H_V$ of multiplicity two.
\item The eigenspace for the eigenvalue $E_D$, $\textrm{Nullspace}\Big(H_V-E_D I\Big)$, is equal to
${\rm span}\Big\{ \Phi_1, \Phi_2\Big\}$, where 
 $\Phi_1\in L^2_{\bK_\star,\tau}$ is a solution of the 
$L^2_{\bK_\star,\tau}-$ Floquet-Bloch eigenvalue problem and 
$\Phi_2(\bx) = \left(\mathcal{C}\circ\mathcal{I}\right)[\Phi_1](\bx) = \overline{\Phi_1(2\bx_c-\bx)}\in L^2_{\bK_\star,\bar\tau}$ is a solution of the $L^2_{\bK_\star,\overline\tau}-$ Floquet-Bloch eigenvalue problem. We may take $\inner{\Phi_a, \Phi_b}_{L^2_{\bK_\star}} = \delta_{ab}$, $a,b=1,2$.
\item There exist constants $\lamsharp\in\C$, $\lamsharp\ne0$ and $\zeta_0>0$, Floquet-Bloch eigenpairs 
\[ \bk\mapsto (\Phi_{b_\star+1}(\bx;\bk),E_{b_\star+1}(\bk))\ \ {\rm and}\ \   \bk\mapsto (\Phi_{b_\star}(\bx;\bk),E_{b_\star}(\bk)),\]
  and Lipschitz continuous functions $e_j(\bk),\ j=b_\star, b_\star+1$, where $e_j(\bK_\star)=0$, defined for $|\bk-\bK_\star|<\zeta_0$ such   that
\begin{align*}
E_{b_\star+1}(\bk)-E_D\ &=\ + |\lamsharp|\ 
\left| \bk-\bK_\star \right|\ 
\left( 1\ +\ e_{b_\star+1}(\bk) \right),\nn\\
E_{b_\star}(\bk)-E_D\ &=\ - |\lamsharp|\ 
\left| \bk-\bK_\star \right|\ 
\left( 1\ +\ e_{b_\star}(\bk) \right) .\label{cones}
\end{align*}
In particular,  $|e_j(\bk)|\lesssim |\bk-\bK_\star|,\ j=b_\star, b_\star+1$.\\
\end{enumerate}
\end{definition}

\begin{remark}\label{onedp-implies-all}
{\ }
\begin{enumerate}
\item 
The quantity $|\lamsharp|$ is known as the Fermi velocity; see, for example, \cites{RMP-Graphene:09}. 
\item 
In \cites{FW:12} we prove that parts (1) and (2) of Definition \ref{dirac-pt-defn} imply part (3), although 
$\lamsharp$ may be zero. We then show 
 that for generic honeycomb lattice potentials, $\lamsharp\ne0$. No assumptions are made on 
the size (contrast/depth) of the potential. 
\item
 Note, from Proposition 4.1 of \cites{FW:12} that $\lamsharp$ is given in terms of the Floquet-Bloch modes $\Phi_1$
 and $\Phi_2$ by the expression:
 \begin{equation}
 \lamsharp\ =\ 2\left\langle\Phi_2,\frac{1}{i}\partial_{x_1}\Phi_1\right\rangle
 \label{vF-Phi}
 \end{equation}
\item Suppose $(\bK,E_D)$ is a Dirac point with corresponding eigenspace ${\rm span}\{ \Phi_1, \Phi_2\}$, where $\Phi_1\in L^2_{\bK,\tau}$ and $\Phi_2\in L^2_{\bK,\bar\tau}$. Then, since $\mathcal{R}$ commutes with $-\Delta+\lambda^2V$,  and since the quasi-momenta $R\bK$ and $R^2\bK$ yield equivalent pseudo-periodicity to $\bK$
 we have that $(\bK,E_D)$, $(R\bK,E_D)$  and $(R^2\bK,E_D)$ are all  Dirac points. Moreover, since $V$ is real-valued, complex-conjugation yields that $(\bK^\prime,E_D)$, $(R\bK^\prime,E_D)$  and $(R^2\bK^\prime,E_D)$ are all  Dirac points. So to establish that there are Dirac points located at the six vertices of $\brill_h$, it suffices to prove this for a single vertex of $\brill_h$. 
  \end{enumerate}
\end{remark}

\section{Approximation of low-lying Floquet-Bloch modes for large $\lambda$}\label{low-lying-fb}

In this section and in Section \ref{en-estimates} we assume that ${\rm supp}(V_0)\subset B({\bf 0},r_0)$,
where $0<r_0<\frac12|\be_{A,1}|\times(1-\delta_0)$, for some  small given $\delta_0>0$. 
The stricter constraint $(PW_2)$
on the size of ${\rm supp}(V_0)$ in the statement of Theorem \ref{main-theorem} is used in the analysis starting in 
 Section \ref{dirac-points-sb}.

Starting with the ground state eigenfunction, $p_0^\lambda(\bx)$, we define
\begin{align*}
p_\bk^\lambda(\bx)\ &=\ e^{-i\bk\cdot\bx}\ p_0^\lambda(\bx),\qquad \bk\in\C^2,\ \ \bx\in\R^2.
\end{align*}
For now, assume $|\Im\bk|<C_1$, where $C_1$ is a fixed positive number. (We shall later further constrain $\bk$ by $|\Im\bk|\le\lambda^{-1}$. ) Since $p_0^\lambda(\bx)$ satisfies the exponential decay bound (see Corollary \ref{cor:expo-decay}) $p_0^\lambda(\bx)\lesssim e^{-c\lambda |\bx|},\ |\bx|>R_0$, for $\lambda$ larger than some constant depending on $C_1$, $p_\bk^\lambda(\bx)$ is also exponentially decaying.

For fixed $\bk\in\R^2$ we have
\begin{equation*}
\left[\ -\left(\ \nabla_\bx+i\bk\right)^2\ +\ \lambda^2 V_0(\bx)\ -\ E_0^\lambda\ \right]p_\bk^\lambda(\bx)=0
\end{equation*}
and 
\begin{equation}
\left\langle\ \left(\ -\left(\ \nabla_\bx+i\bk\right)^2 + \lambda^2 V_0\ \right)\psi,\psi\ \right\rangle_{_{L^2(\R^2)}}\ \ge\ (E_0^\lambda+c_{gap})\ \|\psi\|_{_{L^2(\R^2)}}^2,
\label{H0k-bound}\end{equation}
for $\psi\in H^2(\R^2)$ such that $\left\langle p_\bk^\lambda,\psi\right\rangle_{_{L^2(\R^2)}}=0$, by property {\bf (EG)}
 in Section \ref{atomic-well}. 

 Introduce the  recentering of $p_\bk^\lambda$ at $\hat\bv\in\Honeycomb=\Lambda_A\cup \Lambda_B$:
\begin{equation*}
p_{\bk,\hat\bv}^\lambda(\bx)\ \equiv\ p_\bk^\lambda(\bx-\hat\bv)\ ,
\end{equation*}
and  define the $\Lambda_h-$ periodic  approximate Floquet-Bloch amplitudes:
\begin{equation}
p_{\bk,I}^\lambda(\bx)\ \equiv\ \sum_{\hat\bv\in\Lambda_I}\ p_{\bk,\hat\bv}^\lambda(\bx)
 \ =\ \sum_{\hat\bv\in\Lambda_I}\ e^{-i\bk\cdot(\bx-\hat\bv)}p_0^\lambda(\bx-\hat\bv)\ ,\ I = A , B.
\label{p-bk-I}
\end{equation}
and the  $\bk-$ pseudo-periodic approximate Floquet-Bloch modes:
\begin{equation}
P_{\bk,A}^\lambda(\bx)\equiv e^{i\bk\cdot\bx}\ p_{\bk,A}^\lambda(\bx),\qquad P_{\bk,B}^\lambda(\bx)\equiv e^{i\bk\cdot\bx}\ p_{\bk,B}^\lambda(\bx)\ .
\label{P-bk-I}
\end{equation}

\begin{remark}\label{remark-on-PkAB} In Theorem \ref{solve-L2tau-evp} we construct eigenstates of $-\Delta+\lambda^2 V$,\ $\lambda$ large: 
 $\Phi_1^\lambda(\bx)$ near $P_{\bK,A}^\lambda(\bx)\equiv e^{i\bK\cdot\bx}\ p_{\bK,A}^\lambda(\bx)\in L^2_{\bK,\tau}$ and 
 $\Phi_2^\lambda(\bx)$ near $P_{\bK,B}^\lambda(\bx)\equiv e^{i\bK\cdot\bx}\ p_{\bK,B}^\lambda(\bx)\in L^2_{\bK,\overline\tau}$.
\end{remark}

We find it useful to let the mapping $\bk\mapsto p_{\bk,I}^\lambda$ depend on a {\it complex} quasi-momentum, $\bk$,
 varying in an appropriate domain in $\C^2$. Corollary \ref{cor:expo-decay}, to be proved later, shows that $p_0^\lambda(\bx)$ satisfies the exponential decay bound $p_0^\lambda(\bx)\lesssim e^{-c\lambda |\bx|},\ |\bx|>r_0+c_0$. The function $p_{\bk,I}^\lambda(\bx)$ is $\Lambda_h-$ periodic on $\R^2$ so it may be regarded as a function on
  $\R^2/\Lambda_h$.
  
Furthermore, by the exponential decay of $p_{\bk}^\lambda(\bx)$, for all $\lambda$ larger than a constant which depends on $C_1$, 
 the series \eqref{p-bk-I} converges uniformly to an analytic function:
 \begin{align*}
&\textrm{$\bk\mapsto p_{\bk,I}^\lambda$  from $\{\bk\in\C^2:|\Im\bk|<C_1\}$ to $H^2(\R^2/\Lambda_h)$. }
\end{align*}

This property is used  in Section \ref{mu-near-dpts}, where we obtain derivative bounds on 
  the rescaled dispersion maps near Dirac points via Cauchy estimates for quasi-momenta, $\bk$, in a narrow strip,
    $\{ \bk\in\C^2\ :\ |\Im\bk|<\hat{c}\lambda^{-1}\}$, where $\hat{c}$ is a small constant.

We state further consequences of exponential decay of $p_0^\lambda$.
 For $|\Im \bk|<C_1$, 
\begin{align}
&\Big|\ \|p_{\bk,I}^\lambda\|_{_{L^2(\R^2/\Lambda_h)}}\ -\ 1\ \Big|\ \lesssim\ e^{-c\lambda},\ I= A, B.
\label{pkI-normalize}\\
&\Big|\ \left\langle p^\lambda_{\bk,J},p^\lambda_{\bk,I}\right\rangle_{_{L^2(\R^2/\Lambda_h)}}\ -\ \delta_{JI}\ \Big|\ \lesssim\ e^{-c\lambda},\ I= A, B, 
\label{pkI-normalize1} \quad \text{and} \\
&\|\ p_{\bk,I}^\lambda -p_{\bk,\bv}^\lambda\ \|_{_{L^2(\mathscr{Z}_\bv)}}\ \lesssim\ e^{-c\lambda},\ \bv\in\Lambda_I,\ \ I=A, B.
\label{pk-diff}
\end{align}
Here,  $\mathscr{Z}_\bv$, for $\bv\in\Lambda_I$, is the set of points in $\R^2$ which are at least as close to $\bv$ as to any other point in $\Lambda_I$.

For $|\Im \bk|<C_1$, we claim that:
 \begin{align}
& \|H^\lambda(\bk)p_{\bk,I}^\lambda\| \equiv \| \left[-(\nabla_\bx+i\bk)^2 + V^\lambda(\bx)\right]p_{\bk,I}^\lambda  \|_{_{L^2(\R^2/\Lambda_h)}} \lesssim e^{-c\lambda}, \ I=A, B. 
\label{Hlambda-pk}
 \end{align}
 Here, $V^\lambda(\bx)=\lambda^2 V(\bx)-E_D^\lambda=\lambda^2 \sum_{\bv\in{\bf H}}\ V_0(\bx-\bv)-E_D^\lambda$; see \eqref{Vlam-def}.
The bound \eqref{Hlambda-pk} follows from exponential decay of $p_\bk^\lambda$, a consequence of Corollary \ref{cor:expo-decay} (exponential decay of $p_0^\lambda(\bx)$), and  the observation
\begin{align}
H^\lambda(\bk) p_{\bk,\hat\bv}^\lambda(\bx) &\equiv \left[ -(\nabla_\bx+i\bk)^2 + V^\lambda(\bx) \right] p_{\bk,\hat\bv}^\lambda(\bx)\nn\\
& = \left[ -(\nabla_\bx+i\bk)^2 + \lambda^2 V_0(\bx-\hat\bv) - E_0^\lambda +
 \sum_{\bv\in\Honeycomb\setminus\{\hat\bv\}}\lambda^2 V_0(\bx-\bv) \right] p_{\bk}^\lambda(\bx-\hat\bv)\nn\\
 & =
 \sum_{\bv\in\Honeycomb\setminus\{\hat\bv\}} \lambda^2 V_0(\bx-\bv) p_{\bk}^\lambda(\bx-\hat\bv) . \label{Hpkhatv0}
\end{align}
%

Let $\tK\in\R^2$ denote a generic real quasi-momentum and assume  $|\bk-\tK|\lesssim\lambda^{-1}$. Then, using the exponential decay of $p_0^\lambda$, we obtain 
\begin{equation}
\left\|\ p_{\bk,I}^\lambda-p_{\tK,I}^\lambda\ \right\|\ \lesssim\ \lambda^{-1} \ .
\label{pktK-diff}
\end{equation}

The following lemma facilitates our working with a nearly orthogonal decomposition 
of $L^2(\R^2/\Lambda_h)$ in terms of ${\rm span}\left\{p_{\bk,I}^\lambda : I=A,B \right\}$ and 
${\rm span}\left\{p_{\tK,I}^\lambda : I=A,B \right\}^\perp$ provided the difference $|\bk-\tK|$ is sufficiently small. 

\begin{lemma}\label{orthog}
  Introduce the orthogonal projection
\begin{align*}
&\Pi_{_{AB}}:L^2(\R^2/\Lambda_h)\to\mathscr{H}_{_{AB}}\ \ {\rm onto}\nn\\
&\mathscr{H}_{_{AB}}\ =\ \left\{\tilde\psi\in L^2(\R^2/\Lambda_h)\ :\ \left\langle p_{\tK,I}^\lambda,\tilde\psi\right\rangle=0,\ {\rm for}\ I=A,B \right\},
\end{align*}
the orthogonal complement of ${\rm span}\left\{p_{\tK,I}^\lambda\ :\ I=A,B\ \right\}$ in $L^2(\R^2/\Lambda_h)$.\\
Assume $\tK\in\R^2$ and $|\bk-\tK|\lesssim\lambda^{-1}$.
\begin{enumerate}
\item\  
 Then, for $F\in L^2(\R^2/\Lambda_h)$, we have that
  \[ F=0\ \ \iff\ \ \Pi_{_{AB}}F=0\ \ {\rm and}\ \ \left\langle p_{_{\overline{k},I}}^\lambda, F\right\rangle=0, \ \ I=A, B.\]
\item\ Any $\psi\in H^2(\R^2/\Lambda_h)$ may be expressed in the form
\begin{equation}
\psi\ =\ \sum_{I=A,B}\ \alpha_I\ p_{\bk,I}^\lambda\ +\ \tilde\psi, \label{psi-decomp-AktK}\end{equation}
where  $\tilde{\psi}\in \mathscr{H}_{_{AB}}^2$
 and $\alpha_A, \alpha_B\in\C$.
\end{enumerate}
\end{lemma}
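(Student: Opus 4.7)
My plan is to prove both parts by reducing each to showing invertibility of a small $2\times 2$ ``overlap'' matrix, leveraging the near-orthogonality bounds \eqref{pkI-normalize1} and the continuity estimate \eqref{pktK-diff} already established.

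First I would tackle part (2), which is the more substantial assertion. Given $\psi \in H^2(\R^2/\Lambda_h)$, I seek $\alpha_A, \alpha_B \in \C$ such that $\tilde\psi := \psi - \alpha_A p^\lambda_{\bk,A} - \alpha_B p^\lambda_{\bk,B}$ lies in $\mathscr{H}_{AB}$, i.e.\ satisfies $\langle p^\lambda_{\tK,I}, \tilde\psi\rangle = 0$ for $I = A, B$. Substituting turns this into the linear system
\begin{equation*}
\sum_{J \in \{A,B\}} G_{IJ}\, \alpha_J \;=\; \langle p^\lambda_{\tK,I},\psi\rangle, \quad I = A,B, \qquad G_{IJ} := \langle p^\lambda_{\tK,I}, p^\lambda_{\bk,J}\rangle.
\end{equation*}
Using \eqref{pktK-diff} and $|\bk - \tK|\lesssim \lambda^{-1}$ I would bound $\|p^\lambda_{\bk,J} - p^\lambda_{\tK,J}\| \lesssim \lambda^{-1}$, and then combine with \eqref{pkI-normalize1} to get $G_{IJ} = \delta_{IJ} + \mathcal{O}(\lambda^{-1})$. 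For $\lambda$ large enough the matrix $G$ is invertible, yielding unique $(\alpha_A,\alpha_B)$, and then $\tilde\psi\in\mathscr{H}_{AB}^2$ by construction (using that $\psi$ and the $p^\lambda_{\bk,I}$ are in $H^2(\R^2/\Lambda_h)$).

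For part (1), the forward implication is trivial. For the converse, suppose $\Pi_{AB}F = 0$ and $\langle p^\lambda_{\overline\bk,I}, F\rangle = 0$ for $I = A,B$. Since $\mathscr{H}_{AB}$ is by definition the orthogonal complement of $\mathrm{span}\{p^\lambda_{\tK,A}, p^\lambda_{\tK,B}\}$, the first condition places $F = c_A p^\lambda_{\tK,A} + c_B p^\lambda_{\tK,B}$. Substituting into the second set of conditions gives a homogeneous $2\times 2$ system $M\mathbf{c} = 0$ with $M_{IJ} = \langle p^\lambda_{\overline\bk,I}, p^\lambda_{\tK,J}\rangle$. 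Since $\tK$ is real and $|\bk - \tK|\lesssim\lambda^{-1}$ implies $|\overline\bk - \tK|\lesssim \lambda^{-1}$, the same argument as in part (2) gives $M = I + \mathcal{O}(\lambda^{-1})$, hence $\mathbf{c} = 0$ and $F = 0$.

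I do not anticipate a serious obstacle: the entire content is a quantitative near-orthonormality argument, and all of the required estimates have already been recorded in \eqref{pkI-normalize}--\eqref{pktK-diff}. The only thing to keep track of is the use of $\overline\bk$ rather than $\bk$ in part (1), which is the correct ``dual'' choice to make the pairing depend analytically on $\bk$; since $\bk$ is constrained to a strip of width $\lesssim \lambda^{-1}$ about the real quasi-momentum $\tK$, its complex conjugate satisfies the same proximity bound, and the overlap matrix is controlled identically.
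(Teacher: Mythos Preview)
Your proposal is correct and follows essentially the same route as the paper: both parts reduce to the invertibility of a $2\times 2$ overlap matrix which is $I+\mathcal{O}(\lambda^{-1})$ by \eqref{pkI-normalize1} and \eqref{pktK-diff}. The only cosmetic difference is that you treat part (2) first, and you are slightly more explicit about combining the two estimates to obtain the $\mathcal{O}(\lambda^{-1})$ bound on the off-identity part of the Gram matrix.
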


\begin{proof}[Proof of Lemma \ref{orthog}] To prove part (1), assume $\Pi_{_{AB}}F=0$ and $\left\langle p_{_{\overline{k},I}}^\lambda, F\right\rangle=0, \ \ I=A, B$. Then, $F=\sum_{I=A,B}\alpha_{_{I}} p_{_{\tK,I}}^\lambda$. 
Taking the inner product with $p_{_{\overline{k},A}}^\lambda$ and $p_{_{\overline{k},B}}^\lambda$ yields the equations
\[\sum_{I=A,B}\ \left\langle p_{_{\overline{k},J}}^\lambda, p_{_{\tK,I}}^\lambda\right\rangle\alpha_I\ =\  0, \quad \text{for\ \ } J=A, B.\]
The latter may be rewritten as
\begin{align}
\sum_{I=A,B}\ \Big[\ \left\langle p_{_{\tK,J}}^\lambda, p_{_{\tK,I}}^\lambda\right\rangle + 
\left\langle \left[p_{_{\overline{k},J}}^\lambda-p_{_{\tK,J}}^\lambda\right], p_{_{\tK,I}}^\lambda\right\rangle\ \Big]\ \alpha_I\ =\  0, \ J=A, B\ .
\end{align}
By \eqref{pkI-normalize1}, the first inner product within the square brackets is equal to $\delta_{JI}+ \mathcal{O}(e^{-c\lambda})$ and  by \eqref{pktK-diff}, the second inner product is $\mathcal{O}(\lambda^{-1})$. It follows that  $\alpha_I=0,\ I=A, B$. Hence, $F=0$.

To prove part (2), note that \eqref{psi-decomp-AktK} holds if and only if
\[ \left\langle p^\lambda_{\tK,J},\psi\right\rangle\ =\ 
\sum_{I=A,B} \left\langle p^\lambda_{\tK,J}, p^\lambda_{\bk,I}\right\rangle\ \alpha_I\ \ {\rm for}\ \  J=A,B.\]
By \eqref{pkI-normalize1}, $\left\langle p^\lambda_{\tK,J}, p^\lambda_{\bK,I}\right\rangle\ \alpha_I=\delta_{J,I}+\mathcal{O}(e^{-c\lambda})$, and we may solve for $\alpha_A$ and $\alpha_B$ for $\lambda$ sufficiently large. 
 This completes the proof of Lemma  \ref{orthog}.
\end{proof}

\section{Energy estimates}\label{en-estimates}

Throughout this section, we follow the convention (see Section \ref{preliminaries}) that in relations
 involving  norms and inner products for which the relevant function space is not explicitly indicated, it is to be understood that these are taken in $L^2(\R^2/\Lambda_h)$. 

The following result is the point of departure for our energy estimates and resolvent bounds.
\begin{lemma}\label{loc-en-lemma}
Fix $I= A$ or $ B$. 
Assume that ${\rm supp}(V_0)\subset B({\bf 0},r_0)$, 
where $0<r_0<\frac12|\be_{A,1}|\times(1-\delta_0)$ for $\delta_0>0$ and fixed. 

Suppose $\psi\in H^2(\R^2/\Lambda_h)$ and $\bk\in\R^2$.  Assume that $\textrm{supp}\ \psi\subset\{\bx\in\R^2/\Lambda_h: {\rm dist}(\bx,\Lambda_I)\le r_0\}$
  and that
\begin{equation*}
\left\langle p_{\bk,I}^\lambda, \psi\right\rangle_{L^2(\R^2/\Lambda_h)}\ =\ 0.
\end{equation*}
Then,
\begin{equation*}
\left\langle \left[-(\nabla_\bx+i\bk)^2+V^\lambda(\bx) \right]\psi,\psi \right\rangle_{L^2(\R^2/\Lambda_h)}\ge c\|\psi\|_{L^2(\R^2/\Lambda_h)}^2\ .
\end{equation*}
\end{lemma}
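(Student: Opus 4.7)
The plan is to use the smallness of $r_0$ and the $\Lambda_h$-periodicity to identify $\psi$ with a single, compactly supported function $\phi$ on $\R^2$, rewrite $\langle H^\lambda(\bk)\psi,\psi\rangle$ as a single-atom energy on $L^2(\R^2)$, and then invoke the gap hypothesis \textbf{(EG)} after verifying that $\phi$ is approximately orthogonal to the atomic ground state.

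\textbf{Localization.} The hypothesis $r_0<\tfrac12|\be_{A,1}|(1-\delta_0)$, together with the fact that $|\be_{A,1}|$ is the nearest-neighbor distance in $\Honeycomb$, forces the balls $\{B(\bv,r_0)\}_{\bv\in\Honeycomb}$ to be pairwise disjoint with positive separation (in particular the balls around $\Lambda_I$ are disjoint from all other balls around $\Honeycomb\setminus\Lambda_I$). Pick $\bv_0\in\Lambda_I$ and a fundamental domain $D$ containing $\overline{B(\bv_0,r_0)}$ in its interior, and set $\phi:=\psi\cdot\chi_{B(\bv_0,r_0)}$. Since $\psi$ vanishes in an open neighborhood of $\partial B(\bv_0,r_0)$ from outside and $H^2\hookrightarrow C^0$ in two dimensions, $\phi\in H^2(\R^2)$ is compactly supported, and $\Lambda_h$-periodicity gives $\psi(\bx)=\sum_{\bw\in\Lambda_h}\phi(\bx-\bw)$ with mutually disjoint supports. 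In particular $\|\phi\|_{L^2(\R^2)}=\|\psi\|_{L^2(\R^2/\Lambda_h)}$.

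\textbf{Reduction to a single well.} On $B(\bv_0+\bw,r_0)$ the disjointness property gives $V(\bx)=V_0(\bx-\bv_0-\bw)$, so $V^\lambda(\bx)=\lambda^2V_0(\bx-\bv_0-\bw)-E_0^\lambda$ there. Expanding $|(\nabla+i\bk)\psi|^2$ as a pointwise sum of translated pieces (disjoint supports), using the tiling $\R^2=\bigsqcup_\bw(D-\bw)$, and changing variables in each summand, one obtains
\[
\bigl\langle H^\lambda(\bk)\psi,\psi\bigr\rangle_{L^2(\R^2/\Lambda_h)}=\bigl\langle\bigl[H_\bk^{\bv_0}-E_0^\lambda\bigr]\phi,\phi\bigr\rangle_{L^2(\R^2)},\qquad H_\bk^{\bv_0}:=-(\nabla+i\bk)^2+\lambda^2V_0(\cdot-\bv_0).
\]
The operator $H_\bk^{\bv_0}$ has normalized ground state $p_{\bk,\bv_0}^\lambda$ with eigenvalue $E_0^\lambda$; translating and conjugating \textbf{(EG)} by the unitary $e^{-i\bk\cdot(\bx-\bv_0)}$ gives, for any $u\in H^2(\R^2)$ with $\langle p_{\bk,\bv_0}^\lambda,u\rangle_{L^2(\R^2)}=0$,
\[
\bigl\langle[H_\bk^{\bv_0}-E_0^\lambda]u,u\bigr\rangle \ge c_{gap}\,\|u\|^2.
\]

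\textbf{Approximate orthogonality and conclusion.} Unfolding as above turns the orthogonality hypothesis into
\[
0=\bigl\langle p_{\bk,I}^\lambda,\psi\bigr\rangle_{L^2(\R^2/\Lambda_h)}=\sum_{\bw\in\Lambda_h}\bigl\langle p_{\bk,\bv_0+\bw}^\lambda,\phi\bigr\rangle_{L^2(\R^2)}.
\]
For $\bw\ne 0$, $p_{\bk,\bv_0+\bw}^\lambda$ is centered at distance $|\bw|\ge 1$ from $\supp\phi\subset B(\bv_0,r_0)$, so the pointwise decay bound \eqref{expo-decay1} and Cauchy--Schwarz give $|\langle p_{\bk,\bv_0+\bw}^\lambda,\phi\rangle|\lesssim e^{-c\lambda(|\bw|-r_0)}\|\phi\|$, and summing over $\Lambda_h\setminus\{0\}$ yields $|\langle p_{\bk,\bv_0}^\lambda,\phi\rangle|\lesssim e^{-c\lambda}\|\phi\|$. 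Writing $\phi=\phi_\perp+\gamma\,p_{\bk,\bv_0}^\lambda$ with $\phi_\perp\perp p_{\bk,\bv_0}^\lambda$ in $L^2(\R^2)$, we have $|\gamma|\lesssim e^{-c\lambda}\|\phi\|$ and $\|\phi_\perp\|^2\ge(1-Ce^{-c\lambda})\|\phi\|^2$. Because $(H_\bk^{\bv_0}-E_0^\lambda)p_{\bk,\bv_0}^\lambda=0$ kills the cross and $|\gamma|^2$ diagonal term,
\[
\bigl\langle[H_\bk^{\bv_0}-E_0^\lambda]\phi,\phi\bigr\rangle=\bigl\langle[H_\bk^{\bv_0}-E_0^\lambda]\phi_\perp,\phi_\perp\bigr\rangle \ge c_{gap}\,\|\phi_\perp\|^2 \ge \tfrac12 c_{gap}\|\phi\|^2
\]
for all $\lambda$ larger than some $\lambda_\star$, proving the lemma with $c=c_{gap}/2$.

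The only mildly delicate step is the approximate-orthogonality estimate of the third paragraph; everything else is either an algebraic reduction exploiting the disjoint-support geometry or a direct translate/conjugate of \textbf{(EG)}. The strict inequality $r_0<\tfrac12|\be_{A,1}|(1-\delta_0)$ is used twice: to secure the disjoint-support identities, and to ensure that the tail $\sum_{\bw\ne 0}e^{-c\lambda(|\bw|-r_0)}$ is genuinely exponentially small in $\lambda$.
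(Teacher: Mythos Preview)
Your proof is correct and follows essentially the same approach as the paper: localize $\psi$ to a single compactly supported function $\phi=\psi^\sharp$ on $\R^2$, rewrite the torus energy as a one-atom energy, use the exponential decay of the ground state tails to show $\phi$ is $e^{-c\lambda}$-close to orthogonal to $p_{\bk,\bv_0}^\lambda$, and then apply the gap hypothesis \textbf{(EG)} to the orthogonal component. The only cosmetic differences are that the paper invokes the pre-established bound \eqref{pk-diff} in place of your explicit unfolding $\sum_{\bw}\langle p_{\bk,\bv_0+\bw}^\lambda,\phi\rangle$, and handles the $H^2$-regularity of $\phi$ without the (slightly misplaced) appeal to $H^2\hookrightarrow C^0$---the real reason is that the balls are separated so $\psi$ vanishes on an open neighborhood of $\partial B(\bv_0,r_0)$.
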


\begin{proof}[Proof of Lemma \ref{loc-en-lemma}]

 Fix $I= A$ or $ B$ and let $\psi^\sharp(\bx)=\psi(\bx)\ {\bf 1}_{\{|\bx-\bv_I|\le r_0\}}$ .
 Then, 
 \begin{align*}
 & \psi^\sharp\in H^2(\R^2),\ \ \textrm{supp}\ \psi^\sharp\subset\{\bx\in\R^2: |\bx-\bv_I|\le r_0\},\\
 & \psi(\bx)=\sum_{\bv\in\Lambda_h}\psi^\sharp(\bx-\bv),
 \end{align*}
 since the discs $\{\bx\in\R^2: |\bx-\bv|< r_0\}$, where $\bv\in\Lambda_I$, are disjoint subsets of $\R^2$;
  see Figure \ref{fig:discs}.
Using the $\Lambda_h-$ periodicity of $V^\lambda(\bx)$ we have:
 \begin{equation*}
\left\langle \left[-(\nabla_\bx+i\bk)^2+V^\lambda(\bx) \right]\psi,\psi \right\rangle_{L^2(\R^2/\Lambda_h)} 
= \left\langle \left[-(\nabla_\bx+i\bk)^2+V^\lambda(\bx) \right]\psi^\sharp,\psi^\sharp \right\rangle_{L^2(\R^2)}.
\end{equation*}

We shall make use of the following consequence of \eqref{H0k-bound}:\\
Let  $\eta\in H^2(\R^2)$ satisfy $\left\langle p_{\bk,\bv_I}^\lambda,\eta\right\rangle_{L^2(\R^2)}=0$. Then, 
 \begin{equation}
\left\langle\ \left(\ -\left(\ \nabla_\bx+i\bk\right)^2 + \lambda^2 V_0(\bx-\bv_I)\ - E_0^\lambda\ \right)\eta,\eta\ \right\rangle_{L^2(\R^2)}\ \ge\ c_{gap}\ \|\eta\|_{L^2(\R^2)}^2\ .
\label{H0k-bound1}\end{equation}
Recall that by hypothesis on $\psi$ and the $\Lambda_h-$ periodicity of $p_{\bk,I}^\lambda$:
\begin{equation*}
\left\langle p_{\bk,I}^\lambda,\psi^\sharp\right\rangle_{L^2(\R^2)}\ =\ 
\left\langle p_{\bk,I}^\lambda,\psi\right\rangle_{L^2(\R^2/\Lambda_h)}\ = 0.
\end{equation*}
So to make use of \eqref{H0k-bound1}, we should compare 
$\left\langle p_{\bk,\bv_I}^\lambda,\psi^\sharp\right\rangle_{L^2(\R^2)}$ with 
$\left\langle p_{\bk,I}^\lambda,\psi^\sharp\right\rangle_{L^2(\R^2)}$.

By \eqref{pk-diff} and the Cauchy-Schwarz inequality
\begin{equation*}
\Big|\ \left\langle p_{\bk,\bv_I}^\lambda,\psi^\sharp\right\rangle_{L^2(\R^2)}
\ -\ \left\langle p_{\bk,I}^\lambda,\psi^\sharp\right\rangle_{L^2(\R^2)}\ \Big|\ \lesssim\ e^{-c\lambda}\ \|\psi^\sharp\|_{L^2(\R^2)}.
\nn\end{equation*}
Hence, 
\begin{equation} \Big| \left\langle p_{\bk,\bv_I}^\lambda,\psi^\sharp\right\rangle_{L^2(\R^2)} \Big| \lesssim
e^{-c\lambda}\ \|\psi^\sharp\|_{L^2(\R^2)}.
\label{proj-vI}
\end{equation}
Recall that $\|p^\lambda_{\bk,\bv_I}\|_{L^2(\R^2)}=\|p_0^\lambda\|_{L^2(\R^2)}=1$. We may write
 \begin{equation}
 \psi^\sharp=  \alpha\ p^\lambda_{\bk,\bv_I}+\psi^{\sharp\sharp},\ \ \left\langle p^\lambda_{\bk,\bv_I},\psi^{\sharp\sharp}\right\rangle_{_{L^2(\R^2)}}=0\ ,
 \label{psi-sharp-decomp}
\end{equation} 
 where $\psi^{\sharp\sharp}\in H^2(\R^2)$ and $\alpha\in\C$ . From \eqref{proj-vI} we have $|\alpha|\le e^{-c\lambda}\ \|\psi^\sharp\|_{L^2(\R^2)}$  and therefore 
 \begin{equation}
 \|\psi^{\sharp\sharp}\|_{L^2(\R^2)}\ge (1-e^{-c\lambda})\|\psi^\sharp\|_{L^2(\R^2)}\ .
 \label{dblesharp}\end{equation}
 Since $\left\langle p^\lambda_{\bk,\bv_I},\psi^{\sharp\sharp}\right\rangle_{_{L^2(\R^2)}}=0$, \eqref{H0k-bound1} and \eqref{dblesharp} imply
 \begin{align*}
 &\left\langle\left[-(\nabla_\bx+i\bk)^2+\lambda^2V_0(\bx-\bv_I)-E_0^\lambda\right] \psi^{\sharp\sharp},\psi^{\sharp\sharp}\right\rangle_{L^2(\R^2)}\nn\\
 &\qquad\qquad \ge\ c_{gap}\| \psi^{\sharp\sharp}\|_{L^2(\R^2)}^2\ \ge \frac12 c_{gap}\ \| \psi^{\sharp}\|^2_{L^2(\R^2)}\ .
 \end{align*}
However using \eqref{psi-sharp-decomp} and the fact that $\left[-(\nabla_\bx+i\bk)^2+\lambda^2V_0(\bx-\bv_I)-E_0^\lambda\right] p_{\bk,\bv_I}^\lambda=0$ we have
 \begin{align*}
 &\left\langle\left[-(\nabla_\bx+i\bk)^2+\lambda^2V_0(\bx-\bv_I)-E_0^\lambda\right] \psi^{\sharp\sharp},\psi^{\sharp\sharp}\right\rangle_{L^2(\R^2)}\\
 &=\qquad \left\langle\left[-(\nabla_\bx+i\bk)^2+\lambda^2V_0(\bx-\bv_I)-E_0^\lambda\right] \psi^{\sharp},\psi^{\sharp}\right\rangle_{L^2(\R^2)}\ .
\end{align*}
Hence, 
 \begin{align*}
 &\left\langle\left[-(\nabla_\bx+i\bk)^2+\lambda^2V_0(\bx-\bv_I)-E_0^\lambda\right] \psi^{\sharp},\psi^{\sharp}\right\rangle_{L^2(\R^2)}\ \ge \frac12 c_{gap}\ \| \psi^{\sharp}\|^2_{L^2(\R^2)}.
 \end{align*}
Moreover, on $\textrm{supp}\ \psi^\sharp\subset B(\bv_I,r_0)$,  we have $\lambda^2V_0(\bx-\bv_I)-E^\lambda_0=V^\lambda(\bx)$ and therefore
 \begin{align}
 &\left\langle\left[-(\nabla_\bx+i\bk)^2+V^\lambda(\bx)\right] \psi^{\sharp},\psi^{\sharp}\right\rangle_{L^2(\R^2)}\ \ge \frac12 c_{gap}\ \| \psi^{\sharp}\|^2_{L^2(\R^2)}.
 \end{align}
 
 Finally, using that $\textrm{supp}\ \psi^\sharp\subset B(\bv_I,r_0)$ and that $\psi(\bx)=\sum_{\bv\in\Lambda_h}\psi^\sharp(\bx-\bv)$ for $\bx\in\R^2$, we conclude that 
  \begin{align}
 &\left\langle\left[-(\nabla_\bx+i\bk)^2+V^\lambda(\bx)\right] \psi,\psi\right\rangle_{L^2(\R^2/\Lambda_h)}\ \ge c'_{gap}\ \| \psi\|^2_{L^2(\R^2/\Lambda_h)}
 \end{align}
  for any $\psi\in H^2(\R^2/\Lambda_h)$ such that $\left\langle p_{\bk,I}^\lambda,\psi\right\rangle_{L^2(\R^2/\Lambda_h)}=0$
   and  $\textrm{supp}\ \psi\subset\{\bx\in\R^2/\Lambda_h: {\rm dist}(\bx,\Lambda_I)\le r_0\}$.
 This completes the proof of Lemma \ref{loc-en-lemma}.
 \end{proof}

\subsection{Localization and integration by parts}\label{ibp}

Let $\Theta\in C^\infty(\R^2/\Lambda_h)$ be real-valued and $\varphi\in H^2(\R^2/\Lambda_h)$. Then, for $\bk\in\R^2$,
\begin{align*}
&\left[-(\nabla_\bx+i\bk)^2+V^\lambda(\bx)\right]\left(\Theta\varphi\right)\nn\\
&=\quad \left[-\Delta_\bx-2i\bk\cdot\nabla_\bx+|\bk|^2+V^\lambda(\bx)\right]\left(\Theta\varphi\right)\nn\\
&=\quad \Theta\left[-(\nabla_\bx+i\bk)^2+V^\lambda(\bx)\right]\varphi-2\nabla_\bx\Theta\cdot\nabla_\bx\varphi-
(2i\bk\cdot\nabla_\bx\Theta)\varphi- (\Delta_\bx\Theta)\varphi\ .\nn
\end{align*}
Taking the $L^2(\R^2/\Lambda_h)-$
inner product with $\Theta\varphi$, we obtain, using self-adjointness:
\begin{align*}
&\left\langle \left[-(\nabla_\bx+i\bk)^2+V^\lambda(\bx)\right]\left(\Theta\varphi\right),\left(\Theta\varphi\right)\right\rangle_{L^2(\R^2/\Lambda_h)}\nn\\
&=\quad \Re\left\langle \Theta^2\left[-(\nabla_\bx+i\bk)^2+V^\lambda(\bx)\right]\varphi,\varphi \right\rangle_{L^2(\R^2/\Lambda_h)}
- 2\ \Re\left\langle \nabla_\bx\Theta\cdot\nabla_\bx\varphi,\Theta\varphi\right\rangle_{L^2(\R^2/\Lambda_h)}\nn\\
&\quad -\Re\left\langle
(2i\bk\cdot\nabla_\bx\Theta)\varphi, \Theta\varphi\right\rangle_{L^2(\R^2/\Lambda_h)} - \Re\left\langle (\Delta_\bx\Theta)\varphi,\Theta\varphi\right\rangle_{L^2(\R^2/\Lambda_h)}\ .
\nn\end{align*}
There are simplifications. First note that  
\[\Re\left\langle
(2i\bk\cdot\nabla_\bx\Theta)\varphi, \Theta\varphi\right\rangle_{L^2(\R^2/\Lambda_h)}=\Re\int\ -2i(\bk\cdot\nabla_\bx\Theta)\Theta|\varphi|^2d\bx=0.\]
Furthermore, 
\begin{align*}
-2\Re \left\langle \nabla_\bx\Theta\cdot\nabla_\bx\varphi,\Theta\varphi\right\rangle
&= -2\int_{\R^2/\Lambda_h}\left(\Theta\nabla_\bx\Theta\right)\cdot \Re\left(\varphi\nabla_\bx\overline{\varphi}\right)\ d\bx\nn\\
&=-2\int_{\R^2/\Lambda_h} \frac12\nabla_\bx(\Theta^2)\cdot \frac12\nabla_\bx|\varphi|^2 d\bx\nn\\
&=\frac12\int_{\R^2/\Lambda_h} \Delta_\bx(\Theta^2)\ |\varphi|^2 d\bx\ =
\left\langle \frac12\Delta_\bx(\Theta^2)\ \varphi,\varphi\right\rangle_{L^2(\R^2/\Lambda_h)}\ .
\end{align*}

In view of the above computations we have the following
\begin{lemma}[Integration by parts]\label{ibp-lemma} 
Let $\Theta\in C^\infty(\R^2/\Lambda_h)$ be real-valued, $\psi\in H^2(\R^2/\Lambda_h)$ and $\bk\in\R^2$. Then, 
\begin{align}
& \left\langle\left[-(\nabla_\bx+i\bk)^2+V^\lambda(\bx)\right]\left(\Theta\varphi\right),\left(\Theta\varphi\right)\right\rangle\nn\\
&=\qquad \Re\left\langle \Theta^2\left[-(\nabla_\bx+i\bk)^2+V^\lambda(\bx)\right]\varphi,\varphi \right\rangle\ +\ \left\langle\chi_{_\Theta}\varphi,\varphi\right\rangle 
\end{align}
where $\chi_{_\Theta}=\frac12\Delta_\bx(\Theta^2)-\Theta\Delta_\bx\Theta=|\nabla_\bx\Theta|^2$.
\end{lemma}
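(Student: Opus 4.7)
The plan is a direct algebraic computation: expand the operator acting on a product $\Theta\varphi$ via the product rule, pair with $\Theta\varphi$, and simplify real parts by integration by parts on $\R^2/\Lambda_h$.

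First I would expand
\[
\bigl[-(\nabla_\bx+i\bk)^2+V^\lambda\bigr](\Theta\varphi)
= \Theta\bigl[-(\nabla_\bx+i\bk)^2+V^\lambda\bigr]\varphi
- 2\nabla_\bx\Theta\!\cdot\!\nabla_\bx\varphi
- 2i(\bk\!\cdot\!\nabla_\bx\Theta)\varphi
- (\Delta_\bx\Theta)\varphi,
\]
which is just the Leibniz rule applied to $-\Delta(\Theta\varphi)=-\Theta\Delta\varphi-2\nabla\Theta\cdot\nabla\varphi-\varphi\Delta\Theta$, together with the cross term $-2i\bk\cdot\nabla(\Theta\varphi)$ from $(\nabla+i\bk)^2$. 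All other pieces combine into $\Theta$ times the operator applied to $\varphi$.

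Next I would take the $L^2(\R^2/\Lambda_h)$ inner product with $\Theta\varphi$. Since the operator is self-adjoint and $\Theta$ is real, the left-hand side is automatically real, so it equals its own real part. The first term on the right yields $\Re\langle\Theta^2[-(\nabla+i\bk)^2+V^\lambda]\varphi,\varphi\rangle$ (moving one copy of the real-valued $\Theta$ through the inner product). The $2i\bk$-term contributes $\Re\int -2i(\bk\!\cdot\!\nabla\Theta)\Theta|\varphi|^2\,d\bx$, which vanishes because the integrand is purely imaginary.

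The remaining work is the standard IMS/commutator identity. I would rewrite
\[
-2\Re\langle\nabla\Theta\!\cdot\!\nabla\varphi,\Theta\varphi\rangle
= -\int \nabla(\Theta^2)\!\cdot\!\Re(\overline{\varphi}\nabla\varphi)\,d\bx
= -\tfrac12\int \nabla(\Theta^2)\!\cdot\!\nabla|\varphi|^2\,d\bx
= \tfrac12\int \Delta(\Theta^2)\,|\varphi|^2\,d\bx,
\]
using $\Re(\overline\varphi\nabla\varphi)=\tfrac12\nabla|\varphi|^2$ and integrating by parts on the torus (no boundary terms). Combining with the contribution $-\int \Theta(\Delta\Theta)|\varphi|^2\,d\bx$ from the last summand gives the multiplier
\[
\tfrac12\Delta(\Theta^2)-\Theta\Delta\Theta \;=\; |\nabla\Theta|^2,
\]
where the final equality is the pointwise identity $\Delta(\Theta^2)=2\Theta\Delta\Theta+2|\nabla\Theta|^2$. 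This is exactly $\chi_\Theta$, completing the identity.

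There is no genuine obstacle here — the statement is a bookkeeping exercise once one is careful with real parts, complex conjugates, and the sign accounting in the integration by parts. The only thing worth flagging is that the computation relies on $\bk\in\R^2$ (so $2i\bk\cdot\nabla\Theta$ is purely imaginary) and on $\Theta$ being real-valued; for complex $\bk$ or complex $\Theta$ additional terms would survive.
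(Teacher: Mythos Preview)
Your argument is correct and matches the paper's proof essentially line for line: the same Leibniz expansion, the same observation that the $2i\bk\cdot\nabla\Theta$ term has vanishing real part, and the same integration-by-parts computation reducing the gradient cross term to $\tfrac12\Delta(\Theta^2)$. Nothing to add.
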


\subsection{Localized energy estimate}\label{localized-energy-estimate}

Assume that ${\rm supp}(V_0)\subset B({\bf 0},r_0)$, 
where $0<r_0<\frac12|\be_{A,1}|\times(1-\delta_0)$,\ $0<\delta_0<1$. 
Suppose $1<\delta^\prime<\delta^{\prime\prime}$ is such that 
\begin{equation}
 0<r_0<\delta^\prime r_0< \delta^{\prime\prime} r_0<\frac12|\be_{A,1}|\times(1-\delta_0)\ .
 \label{ddp} \end{equation} 
\begin{proposition}[Main localized energy estimate]\label{loc-en-est}
Fix $I\in\{A,B\}$, and  $\bk\in\R^2$. Assume $\psi\in H^2(\R^2/\Lambda_h)$ and $\left\langle p_{\bk,I}^\lambda,\psi\right\rangle_{L^2(\R^2/\Lambda_h)}=0$. Let $\Theta\in C_0^\infty(\R^2/\Lambda_h)$ be real-valued 
 and suppose that  
\begin{equation*}
\Theta(\bx) =
\begin{cases} 
1 & \textrm{if dist$(\bx,\Lambda_I)\le\delta^\prime r_0$} \\ 
0 & \textrm{if dist$(\bx,\Lambda_I)\ge \delta^{\prime\prime}r_0$}\ .
\end{cases}
\end{equation*}
Thus, 
\[
\textrm{supp}\ \Theta\subset \{\bx\in\R^2/\Lambda_h: {\rm dist}(\bx,\Lambda_I)\le \delta^{\prime\prime}r_0\}.
\]
Then, 
\begin{align*}
c\|\Theta\psi\|_{L^2(\R^2/\Lambda_h)}^2 & \le
 \left\langle\left[-(\nabla_\bx+i\bk)^2+V^\lambda(\bx)\right]\left(\Theta\psi\right),
 \left(\Theta\psi\right)\right\rangle_{L^2(\R^2/\Lambda_h)} \\
 &\qquad + e^{-c\lambda}\|\psi\|_{L^2(\R^2/\Lambda_h)}^2 .
\end{align*}
Here, the constants, $c$, are determined by $V_0$, $\delta_0$, $\delta^\prime$ and $\delta^{\prime\prime}$.
\end{proposition}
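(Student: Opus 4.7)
The plan is to mirror the proof of Lemma \ref{loc-en-lemma}, exploiting the geometric fact that $\delta''r_0<\tfrac12|\be_{A,1}|(1-\delta_0)$ makes the balls $B(\bv,\delta''r_0)$, $\bv\in\Lambda_I$, pairwise disjoint and disjoint from the atomic wells of the opposite sublattice. On each such ball the periodized potential reduces to a single well, $V^\lambda(\bx)=\lambda^2V_0(\bx-\bv)-E_0^\lambda$, so hypothesis \textbf{(EG)} will apply locally. The essential new feature, compared with Lemma \ref{loc-en-lemma}, is that $\Theta\psi$ is no longer exactly orthogonal to $p^\lambda_{\bk,I}$, and the first task is to show that this defect is exponentially small.

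To quantify the approximate orthogonality, note that since $\langle p^\lambda_{\bk,I},\psi\rangle=0$,
\begin{equation*}
\langle p^\lambda_{\bk,I},\Theta\psi\rangle \;=\; -\langle p^\lambda_{\bk,I},(1-\Theta)\psi\rangle,
\end{equation*}
and $(1-\Theta)$ is supported where $\mathrm{dist}(\bx,\Lambda_I)\ge\delta'r_0$. Summing the pointwise ground-state decay from Corollary \ref{cor:expo-decay} over $\bv\in\Lambda_I$ gives $|p^\lambda_{\bk,I}(\bx)|\lesssim e^{-c\lambda}$ uniformly on this set, hence $|\langle p^\lambda_{\bk,I},\Theta\psi\rangle|\lesssim e^{-c\lambda}\|\psi\|$. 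Next, choose a representative $\bv_I\in\Lambda_I$ and set $f^\sharp(\bx)=(\Theta\psi)(\bx)\mathbf{1}_{|\bx-\bv_I|\le\delta''r_0}$. Disjointness of the balls gives $\Theta\psi(\bx)=\sum_{\bv\in\Lambda_h}f^\sharp(\bx-\bv)$, and the periodicity-to-full-space computation from the proof of Lemma \ref{loc-en-lemma} produces
\begin{equation*}
\langle H^\lambda(\bk)\Theta\psi,\Theta\psi\rangle_{L^2(\R^2/\Lambda_h)}\;=\;\langle H^\lambda(\bk)f^\sharp,f^\sharp\rangle_{L^2(\R^2)},
\end{equation*}
while on $\mathrm{supp}\,f^\sharp$ the operator $H^\lambda(\bk)$ coincides with $-(\nabla+i\bk)^2+\lambda^2V_0(\bx-\bv_I)-E_0^\lambda$. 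The same translation argument gives $\langle p^\lambda_{\bk,I},f^\sharp\rangle_{L^2(\R^2)}=\langle p^\lambda_{\bk,I},\Theta\psi\rangle_{L^2(\R^2/\Lambda_h)}$, and together with \eqref{pk-diff} this yields $|\langle p^\lambda_{\bk,\bv_I},f^\sharp\rangle_{L^2(\R^2)}|\lesssim e^{-c\lambda}\|\psi\|$.

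To finish, decompose $f^\sharp=\alpha\,p^\lambda_{\bk,\bv_I}+f^{\sharp\sharp}$ with $f^{\sharp\sharp}\perp p^\lambda_{\bk,\bv_I}$ and $|\alpha|\lesssim e^{-c\lambda}\|\psi\|$. Because $p^\lambda_{\bk,\bv_I}$ lies in the kernel of $-(\nabla+i\bk)^2+\lambda^2V_0(\bx-\bv_I)-E_0^\lambda$, the cross-terms vanish, and hypothesis \textbf{(EG)} in its $\bk$-twisted form (obtained via the unitary conjugation $\varphi\mapsto e^{-i\bk\cdot\bx}\varphi$ used already in Section \ref{low-lying-fb}) applied to $f^{\sharp\sharp}$ gives
\begin{equation*}
\langle H^\lambda(\bk)f^\sharp,f^\sharp\rangle_{L^2(\R^2)}\;\ge\;c_{gap}\|f^{\sharp\sharp}\|^2\;\ge\;c_{gap}\|f^\sharp\|^2-e^{-c\lambda}\|\psi\|^2.
\end{equation*}
Combining with $\|f^\sharp\|_{L^2(\R^2)}=\|\Theta\psi\|_{L^2(\R^2/\Lambda_h)}$ (a consequence of disjointness plus $\Lambda_h$-periodicity) closes the estimate. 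The main obstacle I anticipate is the first step: Corollary \ref{cor:expo-decay} guarantees exponential smallness of $p_0^\lambda$ only beyond $|\bx|\ge r_0+c_0$, so if $\delta'r_0$ barely exceeds $r_0$ one must sharpen this bound — either by absorbing the requirement $\delta'r_0>r_0+c_0$ into the admissible range of $\lambda$ (permissible since the proposition's constants depend on $\delta'$ and $\delta''$), or by exploiting that $p_0^\lambda$ satisfies $(-\Delta-|E_0^\lambda|)p_0^\lambda=0$ on $\{|\bx|>r_0\}$ with $|E_0^\lambda|\gtrsim\lambda^2$, which via an Agmon- or subsolution-type argument propagates exponential decay of rate $\sim\lambda$ all the way up to the boundary of the well.
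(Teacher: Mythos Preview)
Your proof is correct and follows essentially the same route as the paper's: both use $\langle p^\lambda_{\bk,I},\Theta\psi\rangle=-\langle(1-\Theta)p^\lambda_{\bk,I},\psi\rangle$ together with the exponential smallness of $(1-\Theta)p^\lambda_{\bk,I}$, then orthogonally project off the ground state and invoke \textbf{(EG)}; the only organizational difference is that the paper first corrects $\psi\mapsto\psi-\alpha_I p^\lambda_{\bk,I}$ on the torus and then calls Lemma~\ref{loc-en-lemma} as a black box, whereas you localize first and carry out the $\R^2$ decomposition inline. Your stated worry is a non-issue: since $\delta'>1$, take $c_0=(\delta'-1)r_0>0$ in Corollary~\ref{cor:expo-decay}, and the resulting $\delta'$-dependent constants are exactly what the proposition allows.
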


\begin{proof}[Proof of Proposition \ref{loc-en-est}]   
Let $\bk\in\R^2$. Suppose $\psi\in L^2(\R^2/\Lambda_h)$ is such that $\left\langle p^\lambda_{\bk,I},\psi\right\rangle=0$.  
We localize $\psi$ near $\Lambda_I$, while maintaining orthogonality, by defining $\varphi=\Theta(\psi-\alpha_I\ p_{\bk,I}^\lambda)$ with 
$\alpha_I\in\C$ chosen so that $\left\langle p^\lambda_{\bk,I},\varphi\right\rangle=0$. Hence, we require: 
\[\alpha_I \left\langle p_{\bk,I}^\lambda, \Theta p_{\bk,I}^\lambda\right\rangle = 
\left\langle p_{\bk,I}^\lambda, \Theta\psi\right\rangle=-\left\langle (1-\Theta)p_{\bk,I}^\lambda, \psi\right\rangle\ .\]
Using \eqref{pkI-normalize}, one sees that $\left|\ \left\langle p_{\bk,I}^\lambda, \Theta p_{\bk,I}^\lambda\right\rangle-1\ \right|$ 
 and $\|(1-\Theta)p_{\bk,I}^\lambda\|$ are $\lesssim e^{-c\lambda}$, and we conclude that 
 $ | \alpha_I |\ \lesssim\ e^{-c\lambda}\ \|\psi\|$.
Since 
\[ {\rm supp}\ \Theta(\psi-\alpha_I p_{\bk,I}^\lambda)\subset \{\bx\in\R^2/\Lambda_h: {\rm dist}(\bx,\Lambda_I)\le \delta^{\prime\prime}r_0\}\]
 and $\left\langle p^\lambda_{\bk,I},\Theta(\psi-\alpha_I\ p_{\bk,I}^\lambda)\right\rangle=0$, by Lemma \ref{loc-en-lemma} we have the lower bound:
\begin{align}
&\left\langle \left[-(\nabla_\bx+i\bk)^2+V^\lambda(\bx)\right]\Theta\cdot(\psi-\alpha_I p^\lambda_{\bk,I}),
\Theta\cdot(\psi-\alpha_I p^\lambda_{\bk,I}) \right\rangle\nn\\
&\qquad \ge c\ \|\Theta\cdot(\psi-\alpha_I p^\lambda_{\bk,I}) \|^2\nn\\
&\qquad \ge \frac{c}2\ \|\Theta \psi\|^2 - c \|\alpha_I\Theta p^\lambda_{\bk,I}\|^2
 \gtrsim \frac{c}2\ \|\Theta \psi\|^2 - e^{-c\lambda} \|\psi\|^2,
\label{lower-en}\end{align}
where the last inequality follows from the  bound  $| \alpha_I |\lesssim e^{-c\lambda}$.

On the other hand,  also using the above bound on $| \alpha_I |$, we see that
\begin{align}
&\left\langle \left[-(\nabla_\bx+i\bk)^2+V^\lambda(\bx)\right]\Theta\cdot(\psi-\alpha_I p^\lambda_{\bk,I}),
\Theta\cdot(\psi-\alpha_I p^\lambda_{\bk,I}) \right\rangle\nn\\
&\ =\ \left\langle \left[-(\nabla_\bx+i\bk)^2+V^\lambda(\bx)\right] (\Theta\psi),(\Theta\psi)\right\rangle\nn\\
&\qquad -\ 2\Re\ 
\left\langle \left[-(\nabla_\bx+i\bk)^2+V^\lambda(\bx)\right] (\alpha_I \Theta p_{\bk,I}^\lambda),(\Theta\psi)\right\rangle\nn\\
&\qquad +\ 
\left\langle \left[-(\nabla_\bx+i\bk)^2+V^\lambda(\bx)\right] (\alpha_I \Theta p_{\bk,I}^\lambda),(\alpha_I \Theta p_{\bk,I}^\lambda)\right\rangle\nn\\
&\lesssim\left\langle \left[-(\nabla_\bx+i\bk)^2+V^\lambda(\bx)\right] (\Theta\psi),(\Theta\psi)\right\rangle + e^{-c\lambda}\|\psi\|^2\ .
\label{upper-en}
\end{align}
Putting together  \eqref{lower-en} and \eqref{upper-en} completes the proof of Proposition \ref{loc-en-est}.
\end{proof}

\subsection{Global energy estimates} 

\begin{proposition}[Main global energy estimate]\label{global-en-est}
 Let $K_{max}>0$ be given. 
There exist constants $c$ and $\lambda_\star$, depending on $K_{max}$ such that for all $\lambda>\lambda_\star$
 the following holds: 
Let $\bk\in\R^2$ and  $|\bk|\le K_{max}$.  %
 Let $\psi\in H^2(\R^2/\Lambda_h)$ be such that 
\begin{equation*}
\left\langle p^\lambda_{\bk,A},\psi\right\rangle\ =\ 0\ \ {\rm and}\ \  \left\langle p^\lambda_{\bk,B},\psi\right\rangle\ =\ 0.
\end{equation*}
Then, 
\begin{align}
&c\lambda^{-2}\|(\nabla_\bx+i\bk)\psi\|^2 + c\|\psi\|^2 \le \left\langle \left[-(\nabla_\bx+i\bk)^2+V^\lambda(\bx)\right]\psi,\psi\right\rangle.
\label{main-en}
\end{align}
\end{proposition}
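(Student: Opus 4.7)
The strategy is a partition-of-unity reduction. Split $\R^2/\Lambda_h$ into neighborhoods of $\Lambda_A$, $\Lambda_B$, and the complementary ``vacuum'' region where $V\equiv 0$. Near the sublattices, the orthogonality hypothesis and Proposition \ref{loc-en-est} give a lower bound of order $\|\psi\|^2$; on the vacuum region, Hypothesis \textbf{(GS)} gives the much larger lower bound $V^\lambda = -E_0^\lambda\ge C\lambda^2$, which we use as a ``reserve'' to absorb the error terms produced by the partition.

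Specifically, I would construct a smooth quadratic partition $1=\Theta_A^2+\Theta_B^2+\Theta_{\rm out}^2$ on $\R^2/\Lambda_h$ so that (a) $\Theta_A,\Theta_B$ satisfy the hypotheses of Proposition \ref{loc-en-est} (supported in disjoint $\delta'' r_0$-neighborhoods of $\Lambda_A,\Lambda_B$, equal to $1$ on $\delta' r_0$-neighborhoods); (b) $\Theta_{\rm out}$ is supported in $\{\mathrm{dist}(\bx,\Honeycomb)>r_0\}$, so $V\equiv 0$ on $\supp\Theta_{\rm out}$; and (c) $\Theta_{\rm out}\ge c_0>0$ on $\supp\nabla\Theta_A\cup\supp\nabla\Theta_B$. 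Property (c) is engineered by giving each $\Theta_I$ a \emph{plateau} profile along the radial direction: transition from $1$ to a constant value $\sqrt{1-c_0^2}$ on a thin inner shell, stay constant on a middle plateau annulus, then decrease from $\sqrt{1-c_0^2}$ to $0$ on an outer shell. This ensures $\Theta_I\le\sqrt{1-c_0^2}$ wherever $\nabla\Theta_I\ne 0$, hence $\Theta_{\rm out}^2=1-\Theta_I^2\ge c_0^2$ there.

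Summing the integration-by-parts identity of Lemma \ref{ibp-lemma} over the partition yields
\[
\langle H^\lambda(\bk)\psi,\psi\rangle \;=\; \sum_{X\in\{A,B,\mathrm{out}\}}\langle H^\lambda(\bk)(\Theta_X\psi),\Theta_X\psi\rangle \;-\; \int T\,|\psi|^2,
\]
with $T=\sum_X|\nabla\Theta_X|^2$ smooth, bounded, and supported in the annular transition region where property (c) holds. I would then estimate each piece: for $I=A,B$, the orthogonality hypothesis plus Proposition \ref{loc-en-est} gives $\langle H^\lambda(\bk)(\Theta_I\psi),\Theta_I\psi\rangle\ge c\|\Theta_I\psi\|^2-e^{-c\lambda}\|\psi\|^2$; for the ``out'' piece, since $V\equiv 0$ there, $\langle H^\lambda(\bk)(\Theta_{\rm out}\psi),\Theta_{\rm out}\psi\rangle=\|(\nabla+i\bk)(\Theta_{\rm out}\psi)\|^2-E_0^\lambda\|\Theta_{\rm out}\psi\|^2\ge C\lambda^2\|\Theta_{\rm out}\psi\|^2$ by \textbf{(GS)}. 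Using (c), the error term satisfies $|\psi|^2\le c_0^{-2}|\Theta_{\rm out}\psi|^2$ pointwise on $\supp T$, so $\int T|\psi|^2\le c_0^{-2}\|T\|_\infty\|\Theta_{\rm out}\psi\|^2$, which is absorbed into $C\lambda^2\|\Theta_{\rm out}\psi\|^2$ once $\lambda$ is large. Combined with $\sum_X\|\Theta_X\psi\|^2=\|\psi\|^2$ and absorption of the $e^{-c\lambda}\|\psi\|^2$ term, this yields $c\|\psi\|^2\le \langle H^\lambda(\bk)\psi,\psi\rangle$. The gradient estimate then follows from $\|(\nabla+i\bk)\psi\|^2=\langle H^\lambda(\bk)\psi,\psi\rangle-\int V^\lambda|\psi|^2\le \langle H^\lambda(\bk)\psi,\psi\rangle+\lambda^2\|\psi\|^2\lesssim \lambda^2\langle H^\lambda(\bk)\psi,\psi\rangle$.

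The main obstacle is property (c) in Step 1. A naive partition of unity (say $\Theta_{\rm out}=\sqrt{1-\Theta_A^2-\Theta_B^2}$ built from a single transition) forces $\Theta_{\rm out}$ to vanish on the inner edge of the annulus, and then the error term $\int T|\psi|^2$ is only controlled by $\|T\|_\infty\|\psi\|^2$ -- a bound that competes with, rather than being dominated by, the Proposition \ref{loc-en-est} lower bound, since the constants on both sides are $O(1)$ and the annular width is bounded above by the geometry. The plateau construction reroutes the error to the $\|\Theta_{\rm out}\psi\|^2$ channel, which carries the enormous $C\lambda^2$ coefficient from the vacuum region, and this is what makes the argument close for large $\lambda$.
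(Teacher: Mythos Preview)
Your plateau construction does not give property~(c). On the inner transition shell, where $\Theta_I$ descends from~$1$ to~$\sqrt{1-c_0^2}$, you have $\nabla\Theta_I\ne 0$ while $\Theta_I$ ranges over $[\sqrt{1-c_0^2},1]$; hence $\Theta_{\rm out}=\sqrt{1-\Theta_I^2}$ ranges over $[0,c_0]$ and in particular vanishes at the inner edge. The pointwise inequality $|\psi|^2\le c_0^{-2}|\Theta_{\rm out}\psi|^2$ on $\supp T$ therefore fails there, and the inner-shell portion of $\int T|\psi|^2$ cannot be routed into the $C\lambda^2\|\Theta_{\rm out}\psi\|^2$ reserve. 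You correctly identified this very obstruction for the naive single-transition partition; the plateau merely pushes it to a thinner shell rather than eliminating it.

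The gap is repairable within your framework. If you build $\Theta_{\rm out}$ first (smooth, rising from $0$ to $c_0$ across an inner shell of width~$w$), then $T\lesssim (c_0/w)^2$ on that shell, so its contribution to the error is at most $C(c_0/w)^2\|\psi\|^2$; choosing $c_0$ small relative to~$w$ and to the constant of Proposition~\ref{loc-en-est} lets you absorb this piece directly into the $c\|\psi\|^2$ lower bound, after which your outer-shell argument (where~(c) does hold) goes through unchanged. The paper handles the obstruction by a different device: it introduces \emph{two} nested quadratic partitions $(\Theta_A,\Theta_B,\Theta_0)$ and $(\tilde\Theta_A,\tilde\Theta_B,\tilde\Theta_0)$ with the tilde cutoffs supported strictly inside the un-tilde ones, so that $\tilde\Theta_0\equiv 1$ on the full support of $\sum_X|\nabla\Theta_X|^2$. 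The error from the $\Theta$-partition is then controlled by $C\|\tilde\Theta_0\psi\|^2$, and a dichotomy (either $\|\tilde\Theta_0\psi\|^2\le\hat c\|\psi\|^2$, or the $c\lambda^2\|\tilde\Theta_0\psi\|^2$ term from the tilde partition dominates everything) closes the estimate.
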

Before turning to the proof of Proposition \ref{global-en-est}, we first give three immediate corollaries.

 \begin{corollary}\label{atmost2evalues}
 Under the conditions of Proposition \ref{global-en-est}, for all $\lambda>\lambda_\star$
  the operator
  \[\textrm{ $-(\nabla_\bx+i\bk)^2+\lambda^2 V$ has at most 2 eigenvalues in the range $E<E_0^\lambda+\frac{1}{2}c$,}\]
where $c$ is the constant in \eqref{main-en}.
 \end{corollary}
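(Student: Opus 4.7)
The plan is to use a standard min-max / dimension-count argument, leveraging Proposition \ref{global-en-est} as the essential input. The proposition tells us that the quadratic form of $H^\lambda(\bk) = -(\nabla_\bx+i\bk)^2 + \lambda^2 V - E_0^\lambda$ is bounded below by $c\|\psi\|^2$ on the codimension-$2$ subspace
\[
\mathcal{S}_\bk \;=\; \bigl\{\psi \in H^2(\R^2/\Lambda_h) : \langle p^\lambda_{\bk,A},\psi\rangle = \langle p^\lambda_{\bk,B},\psi\rangle = 0\bigr\}.
\]
Translating back to $-(\nabla_\bx+i\bk)^2 + \lambda^2 V$, this says $\langle [-(\nabla+i\bk)^2 + \lambda^2 V]\psi,\psi\rangle \geq (E_0^\lambda + c)\|\psi\|^2$ for all $\psi \in \mathcal{S}_\bk$.

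Now I would argue by contradiction: suppose $-(\nabla+i\bk)^2 + \lambda^2 V$ had at least three eigenvalues (with multiplicity) strictly below $E_0^\lambda + \tfrac12 c$. Pick orthonormal associated eigenfunctions $\phi_1,\phi_2,\phi_3$ and let $W = \operatorname{span}\{\phi_1,\phi_2,\phi_3\}$, a three-dimensional subspace. The linear map
\[
T : W \longrightarrow \C^2, \qquad \psi \longmapsto \bigl(\langle p^\lambda_{\bk,A},\psi\rangle,\, \langle p^\lambda_{\bk,B},\psi\rangle\bigr)
\]
has domain of dimension $3$ and range of dimension at most $2$, so $\ker T$ contains a nonzero vector $\psi$; normalize it so that $\|\psi\|=1$. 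Then $\psi \in \mathcal{S}_\bk$, so Proposition \ref{global-en-est} yields $\langle [-(\nabla+i\bk)^2+\lambda^2 V]\psi,\psi\rangle \geq E_0^\lambda + c$.

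On the other hand, writing $\psi = \sum_{j=1}^3 c_j \phi_j$ with $\sum |c_j|^2 = 1$, the spectral expansion gives
\[
\langle [-(\nabla+i\bk)^2+\lambda^2 V]\psi,\psi\rangle \;=\; \sum_{j=1}^3 E_j^\lambda(\bk)|c_j|^2 \;<\; E_0^\lambda + \tfrac12 c.
\]
This contradicts the previous lower bound since $c > 0$, and establishes that at most two eigenvalues lie below $E_0^\lambda + \tfrac12 c$. There is no real obstacle here beyond correctly performing the dimension count and invoking the appropriate inequality from Proposition \ref{global-en-est}; the whole content of the corollary is packaged into that proposition, and the min-max framework (three candidate eigenvectors versus two orthogonality conditions) does the rest.
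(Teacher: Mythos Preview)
Your proof is correct and is essentially the same approach as the paper's, which simply states that the corollary follows from the variational characterization of eigenvalues of self-adjoint operators. You have merely spelled out the min-max/dimension-count argument explicitly.
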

Corollary \ref{atmost2evalues} follows from the variational characterization of eigenvalues of self-adjoint operators.

 \begin{corollary}\label{cor1-main-en}
Let $K_{max}>0$ be given. 
There exist constants $c$ and $\lambda_\star$, depending on $K_{max}$ such that for all $\lambda>\lambda_\star$
 the following holds: 
Let $\bk\in\R^2$ and  $|\bk|\le K_{max}$.  %
 Let $\psi\in H^2(\R^2/\Lambda_h)$ be such that 
\begin{equation*}
\left\langle p^\lambda_{\bk,A},\psi\right\rangle\ =\ 0\ \ {\rm and}\ \  \left\langle p^\lambda_{\bk,B},\psi\right\rangle\ =\ 0.
\end{equation*}
Then, 
\begin{align*}
&c\|\psi\|^2\ +\ c\lambda^{-2}\|(\nabla_\bx+i\bk)\psi\|^2\ \le\ \|\left[-(\nabla+i\bk)^2+V^\lambda(\bx)\right]\psi\|^2\ .
\end{align*}
 \end{corollary}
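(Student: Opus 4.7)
The plan is to extract the pointwise resolvent-type bound of Corollary \ref{cor1-main-en} directly from the quadratic form inequality \eqref{main-en} of Proposition \ref{global-en-est} via a one-line Cauchy--Schwarz argument, with no further PDE analysis required. Concretely, writing $H^\lambda(\bk) = -(\nabla_\bx+i\bk)^2 + V^\lambda(\bx)$ for brevity, the hypotheses on $\psi$ are exactly those of Proposition \ref{global-en-est}, so
\begin{equation*}
c\|\psi\|^2 + c\lambda^{-2}\|(\nabla_\bx+i\bk)\psi\|^2 \le \langle H^\lambda(\bk)\psi, \psi\rangle \le \|H^\lambda(\bk)\psi\|\cdot\|\psi\|,
\end{equation*}
where the second inequality is Cauchy--Schwarz in $L^2(\R^2/\Lambda_h)$.

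From the first and last quantities, dropping the nonnegative gradient term yields $c\|\psi\|^2 \le \|H^\lambda(\bk)\psi\|\cdot \|\psi\|$, hence
\begin{equation*}
\|\psi\| \le c^{-1}\|H^\lambda(\bk)\psi\|,
\end{equation*}
i.e.\ $\|\psi\|^2 \lesssim \|H^\lambda(\bk)\psi\|^2$. Substituting this bound back into the chain of inequalities above absorbs the $\|\psi\|$ on the right-hand side:
\begin{equation*}
c\lambda^{-2}\|(\nabla_\bx+i\bk)\psi\|^2 \le \|H^\lambda(\bk)\psi\|\cdot \|\psi\| \lesssim \|H^\lambda(\bk)\psi\|^2,
\end{equation*}
so the gradient term is also controlled by $\|H^\lambda(\bk)\psi\|^2$. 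Adding the two estimates (with an appropriately reduced constant, still denoted $c$) produces
\begin{equation*}
c\|\psi\|^2 + c\lambda^{-2}\|(\nabla_\bx+i\bk)\psi\|^2 \le \|H^\lambda(\bk)\psi\|^2,
\end{equation*}
which is exactly the conclusion.

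There is essentially no obstacle here: the work has already been done in Proposition \ref{global-en-est}, and the passage from a form bound to an operator-norm-type bound is the standard trick. The only care needed is tracking the constants so that the final $c$ in Corollary \ref{cor1-main-en} depends only on $V_0$ and $K_{\max}$, inheriting the dependencies from Proposition \ref{global-en-est}.
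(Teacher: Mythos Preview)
Your proof is correct and follows essentially the same approach as the paper: both start from the form bound \eqref{main-en} and apply Cauchy--Schwarz to $\langle H^\lambda(\bk)\psi,\psi\rangle$. The only cosmetic difference is that the paper uses the weighted Young inequality $\|H^\lambda(\bk)\psi\|\,\|\psi\|\le \frac{1}{4\tilde w^2}\|H^\lambda(\bk)\psi\|^2+\tilde w^2\|\psi\|^2$ and absorbs the $\tilde w^2\|\psi\|^2$ term into the left side of \eqref{main-en}, whereas you divide through by $\|\psi\|$ to isolate $\|\psi\|\le c^{-1}\|H^\lambda(\bk)\psi\|$ and substitute back; these are equivalent packagings of the same idea.
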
 
To prove Corollary \ref{cor1-main-en}, note that for $\tilde{w}>0$ we have
\begin{align*}
&\left\langle \left[-(\nabla+i\bk)^2+V^\lambda(\bx)\right]\psi,\psi\right\rangle\ \le \frac{1}{4\tilde{w}^2}\ \|\left[-(\nabla+i\bk)^2+V^\lambda(\bx)\right]\psi\|^2
 + \tilde{w}^2\ \|\psi\|^2 .
 \end{align*}
 For small enough $\tilde{w}$, the term $\tilde{w}^2\ \|\psi\|_{L^2(\R^2/\Lambda_h)}^2$ may be absorbed back into the left-hand side of \eqref{main-en}. Corollary \ref{cor1-main-en} now follows.
  
 Next, since $\|\nabla_\bx\psi\|^2\ \le\  2\left(\|(\nabla_\bx+i\bk)\psi\|^2\ + \ |\bk|^2\ \|\psi\|^2\right)$, 
 Corollary \ref{cor1-main-en} implies
 
  \begin{corollary}\label{cor2-main-en}
 Let $K_{max}>0$ be given. Let $\bk\in\R^2$ with $|\bk|\le K_{max}$. 
There exist constants $c$ and $\lambda_\star$, depending on $K_{max}$ such that for all $\lambda>\lambda_\star$
 the following holds: Let $\psi\in H^2(\R^2/\Lambda_h)$ be such that 
\begin{equation*}
\left\langle p^\lambda_{\bk,A},\psi\right\rangle\ =\ 0\ \ {\rm and}\ \  \left\langle p^\lambda_{\bk,B},\psi\right\rangle\ =\ 0.
\end{equation*}
Then, 
\begin{align*}
& c\|\psi\|^2\ +\ c\lambda^{-2}\|\nabla_\bx\psi\|^2\ 
 \le\ \|\left[-(\nabla+i\bk)^2+V^\lambda(\bx)\right]\psi\|^2\ .
\end{align*}
 \end{corollary}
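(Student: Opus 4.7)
The plan is to deduce Corollary \ref{cor2-main-en} directly from Corollary \ref{cor1-main-en} by a simple algebraic manipulation, using the elementary identity relating the ordinary gradient to the magnetic (covariant) gradient. The only input beyond Corollary \ref{cor1-main-en} is the elementary pointwise/integral estimate
\begin{equation*}
\|\nabla_\bx\psi\|^2 \;\le\; 2\,\|(\nabla_\bx+i\bk)\psi\|^2 \;+\; 2|\bk|^2\,\|\psi\|^2,
\end{equation*}
which follows from $\nabla_\bx\psi = (\nabla_\bx+i\bk)\psi - i\bk\psi$ and $\|a+b\|^2\le 2\|a\|^2+2\|b\|^2$.

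First, I would invoke Corollary \ref{cor1-main-en} to obtain
\begin{equation*}
c\,\|\psi\|^2 \;+\; c\lambda^{-2}\,\|(\nabla_\bx+i\bk)\psi\|^2 \;\le\; \bigl\|\bigl[-(\nabla+i\bk)^2+V^\lambda(\bx)\bigr]\psi\bigr\|^2.
\end{equation*}
Rearranging the elementary estimate above gives $\|(\nabla_\bx+i\bk)\psi\|^2 \ge \tfrac{1}{2}\|\nabla_\bx\psi\|^2 - |\bk|^2\|\psi\|^2$. Substituting this lower bound into the previous inequality yields
\begin{equation*}
c\,\|\psi\|^2 \;-\; c\lambda^{-2}|\bk|^2\,\|\psi\|^2 \;+\; \tfrac{c}{2}\lambda^{-2}\,\|\nabla_\bx\psi\|^2
\;\le\; \bigl\|\bigl[-(\nabla+i\bk)^2+V^\lambda(\bx)\bigr]\psi\bigr\|^2.
\end{equation*}

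Finally, using the hypothesis $|\bk|\le K_{max}$, we have $c\lambda^{-2}|\bk|^2\le c\lambda^{-2}K_{max}^2$, which can be made no larger than $c/2$ by choosing $\lambda>\lambda_\star$ with $\lambda_\star$ depending on $K_{max}$ (take $\lambda_\star^2\ge 2K_{max}^2$, and also at least as large as the $\lambda_\star$ produced by Corollary \ref{cor1-main-en}). Then the $c\lambda^{-2}|\bk|^2\|\psi\|^2$ term is absorbed into $c\|\psi\|^2$, leaving $\tfrac{c}{2}\|\psi\|^2 + \tfrac{c}{2}\lambda^{-2}\|\nabla_\bx\psi\|^2$ on the left-hand side. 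Relabeling $c/2$ as $c$ gives the desired bound. There is no serious obstacle here; the entire argument is two-line bookkeeping, and the only point requiring any care is the choice of $\lambda_\star$ large enough (in terms of $K_{max}$) to absorb the $|\bk|^2\|\psi\|^2$ remainder.
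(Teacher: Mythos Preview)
Your proof is correct and follows exactly the paper's approach: the paper's entire justification for Corollary \ref{cor2-main-en} is the one-line observation that $\|\nabla_\bx\psi\|^2 \le 2\bigl(\|(\nabla_\bx+i\bk)\psi\|^2 + |\bk|^2\|\psi\|^2\bigr)$ combined with Corollary \ref{cor1-main-en}. You have simply written out the bookkeeping that the paper left implicit.
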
  

\begin{proof}[Proof of Proposition \ref{global-en-est} (Main global energy estimate)]
Choose $\delta_0\in(0,1)$ and constants $\delta_1, \delta_2,\ \tilde\delta_1, \tilde\delta_2$ such that 
 $\tilde\delta_1<\tilde\delta_2<\delta_1<\delta_2$,
\begin{align}
& 0<r_0<\delta_1 r_0< \delta_2 r_0< \frac12|\be_{A,1}|(1-\delta_0) , \quad \text{and} \nn\\
& 0<r_0<\tilde\delta_1 r_0< \tilde\delta_2 r_0< \frac12|\be_{A,1}|(1-\delta_0) . \nn
\end{align} 

On $\R^2/\Lambda_h$, we introduce two partitions of unity:
\begin{equation}
1=\Theta_A^2+\Theta_B^2+\Theta_0^2,\qquad 1=\tilde\Theta_A^2+\tilde\Theta_B^2+\tilde\Theta_0^2,
\label{POU}\end{equation}
 where $\Theta_I$ and $\tilde\Theta_I$, $I=A,B,0$, are non-negative and $C^\infty$, and where 
  $\Theta_A$ and $\Theta_B$ have disjoint support and are localized, respectively, near $\Lambda_A$ and $\Lambda_B$. 
  Similarly, $\tilde\Theta_A$ and $\tilde\Theta_B$ have disjoint support and are localized, respectively, near $\Lambda_A$ and $\Lambda_B$.
%
In particular, for $I=A, B$:
  \begin{align*}
  \Theta_I\ \equiv\ \begin{cases} 
  1, & {\rm dist}(\bx,\Lambda_I)\le \delta_1 r_0\\
  0, & {\rm dist}(\bx,\Lambda_I)\ge \delta_2 r_0
  \end{cases}
 \end{align*}
 and $\Theta_0$ is defined via the first relation in \eqref{POU}.  Also,
  \begin{align*}
  \tilde\Theta_I\ \equiv\ \begin{cases} 
  1, & {\rm dist}(\bx,\Lambda_I)\le \tilde\delta_1 r_0\\
  0, & {\rm dist}(\bx,\Lambda_I)\ge \tilde\delta_2 r_0
  \end{cases}
 \end{align*}
 and $\tilde\Theta_0$ is defined via the second relation in \eqref{POU}. 
%
%
%
%
%
%

We assume  $\bk\in\R^2$. Note that the local energy estimate gives the following:\\
If $\psi\in H^2(\R^2/\Lambda_h)$ is such that $\left\langle p_{\bk,A}^\lambda,\psi\right\rangle=0$ and $ \left\langle p_{\bk,B}^\lambda,\psi\right\rangle=0$, then for $I= A, B$:
\begin{align}
&c\|\Theta_I\psi\|^2\ \le\ \left\langle H^\lambda(\bk)(\Theta_I \psi),(\Theta_I \psi)\right\rangle\ +\ e^{-c\lambda}\ \|\psi\|^2 , \label{TI-bd}\\
&c\|\tilde\Theta_I\psi\|^2\ \le\ \left\langle H^\lambda(\bk)(\tilde\Theta_I \psi),(\tilde\Theta_I \psi)\right\rangle\ +\ e^{-c\lambda}\ \|\psi\|^2 , \label{tTI-bd}
\end{align}
where $H^\lambda(\bk)=-(\nabla+i\bk)^2+V^\lambda(\bx)$.

Next consider $\Theta_0\psi$. For $\bx\in{\rm supp}\ \Theta_0$, we have 
${\rm dist}(\bx,\Lambda_A\cup\Lambda_B)\ge\delta_1 r_0>r_0$. 
On this set $V(\bx)=0$ (see \eqref{Veq0}) and  hence, by hypothesis {\bf (GS)}, \eqref{GS}, $V^\lambda(\bx)=-E^\lambda_0\ge c\lambda^2$.
It follows that, for all $\psi\in H^2(\R^2/\Lambda_h)$,
\begin{align}
c\lambda^2\|\Theta_0\psi\|^2 &\le  \left\langle V^\lambda(\bx)(\Theta_0\psi),(\Theta_0\psi)\right\rangle\ \le\  
\left\langle H^\lambda(\bk)(\Theta_0\psi),(\Theta_0\psi)\right\rangle\ .
\label{T0-bd}\end{align}
 Similarly, for all $\psi\in H^2(\R^2/\Lambda_h)$,
\begin{align}
c\lambda^2\|\tilde\Theta_0\psi\|^2 
&\le  \left\langle H^\lambda(\bk)(\tilde \Theta_0\psi),(\tilde\Theta_0\psi)\right\rangle\ .
\label{tT0-bd}\end{align}
Summing \eqref{TI-bd} over $I=A,B$ with \eqref{T0-bd}, and recalling \eqref{POU}, we obtain
\begin{align}
&c\|\psi\|^2\ +\ c\lambda^2\|\Theta_0\psi\|^2\ \le\ \sum_{I=A,B,0}\left\langle H^\lambda(\bk)(\Theta_I\psi),(\Theta_I\psi)\right\rangle\ +\ e^{-c\lambda}\|\psi\|^2\ .
\label{sumThetaAB0}
\end{align}
Furthermore, summing \eqref{tTI-bd} over $I=A,B$ with \eqref{tT0-bd} we obtain
\begin{align}
&c\|\psi\|^2\ +\ c\lambda^2\|\tilde\Theta_0\psi\|^2\ \le \sum_{I=A,B,0}\left\langle H^\lambda(\bk)(\tilde\Theta_I\psi),(\tilde\Theta_I\psi)\right\rangle\ + e^{-c\lambda}\|\psi\|^2\ .
\label{sumtildeThetaAB0}
\end{align}
Estimates \eqref{sumThetaAB0} and \eqref{sumtildeThetaAB0} hold for $\psi\in H^2(\R^2/\Lambda_h)$ such that 
$\left\langle p^\lambda_{\bk,J},\psi\right\rangle=0$ for $J\in\{A,B\}$. 

Next, we apply the integration-by-parts Lemma \ref{ibp-lemma} and again recall \eqref{POU} to conclude that
\begin{align}
& \sum_{I=A,B,0}\left\langle H^\lambda(\bk)(\Theta_I\psi),(\Theta_I\psi)\right\rangle 
= \left\langle H^\lambda(\bk)\psi,\psi\right\rangle\ +\ \left\langle \left[\sum_{I=A,B,0}\chi_{_{\Theta_I}}(\bx)\right]\psi,\psi\right\rangle,
\label{Theta-IMS}
\end{align}
where $\chi_{_{\Theta_I}}=\frac12\Delta_\bx(\Theta_I)^2-\Theta_I\Delta_\bx\Theta_I=|\nabla_\bx\Theta_I|^2$, for $I=A,B,0$.
An analogous formula to \eqref{Theta-IMS} holds for the  $\tilde\Theta_A^2+\tilde\Theta_B^2+\tilde\Theta_0^2=1$ partition of unity,
 where $\chi_{_{\Theta_I}}$ is replaced by $\chi_{_{\tilde\Theta_I}}=|\nabla_\bx\tilde\Theta_I|^2$, for $I=A,B,0$.
 
 Substituting \eqref{Theta-IMS} and its $\tilde\Theta-$ analogue into \eqref{sumThetaAB0} and \eqref{sumtildeThetaAB0}, respectively, yields
%
\begin{align}
&c\|\psi\|^2 + c\lambda^2\|\Theta_0\psi\|^2
\le \left\langle H^\lambda(\bk)\psi,\psi\right\rangle + 
\left\langle \sum_{I=A,B,0}\chi_{_{\Theta_I}}(\bx)\cdot \psi,\psi\right\rangle\ +\ e^{-c\lambda}\|\psi\|^2\ ,
\label{sumThetaAB0-1}
\end{align}
and similarly
\begin{align}
&c\|\psi\|^2 + c\lambda^2\|\tilde\Theta_0\psi\|^2
\le \left\langle H^\lambda(\bk)\psi,\psi\right\rangle\ +\ 
\left\langle \sum_{I=A,B,0}\chi_{_{\tilde\Theta_I}}(\bx)\cdot \psi,\psi\right\rangle\ +\ e^{-c\lambda}\|\psi\|^2\ .
\label{sumtildeThetaAB0-2}
\end{align}
From the definitions of 
$\Theta_I, \chi_{_{\Theta_I}}, \tilde\Theta_I, \chi_{_{\tilde\Theta_I}}$, we see that 
\begin{equation}
\left|\sum_{I=A,B,0}\chi_{_{\Theta_I}}(\bx) \right|\le C\ {\bf 1}_{\{\bx:{\rm dist}(\bx,\Honeycomb)\ge \delta_1r_0\}}\ ,\ \ 
 \Honeycomb=\Lambda_A\cup\Lambda_B.
\label{chi-bound1}
\end{equation}
Moreover, $\tilde\Theta_0=1$ for $\bx$ such that ${\rm dist}(\bx,\Honeycomb)\ge \tilde\delta_2r_0$ 
 and 
 \begin{equation}
 \left|\sum_{I=A,B,0}\chi_{_{\tilde\Theta_I}}(\bx) \right|\le C,\qquad \bx\in\R^2/\Lambda_h\ .
 \label{chi-bound2}
 \end{equation}

 By \eqref{sumThetaAB0-1} and \eqref{chi-bound1}, and since $\tilde\Theta_0=1$ on
  ${\rm supp}\left(\sum_{I=A,B,0}\chi_{_{\Theta_I}}\right)$  we have:
 \begin{align}
&c_1\|\psi\|^2  \le \left\langle H^\lambda(\bk)\psi,\psi\right\rangle\ +\ 
C_1\ \|\tilde\Theta_0\psi\|^2\ +\ C_2\ e^{-c\lambda}\|\psi\|^2\ .
\label{bound1}
\end{align}
By \eqref{sumtildeThetaAB0-2} and \eqref{chi-bound2},
\begin{align}
&c_1\|\psi\|^2 + c\lambda^2\|\tilde\Theta_0\psi\|^2
 \le \left\langle H^\lambda(\bk)\psi,\psi\right\rangle\ +\ C\ \|\psi\|^2\ .
\label{bound2}
\end{align}
Let $\hat{c}$ be a small enough constant and $\lambda$ be sufficiently large such that $\hat{c}+C_2e^{-c\lambda}<c_1/2$.
We consider two cases. If 
$C_1\|\tilde\Theta_0\psi\|^2\le\ \hat{c}\ \|\psi\|^2,$
then  \eqref{bound1} implies
 \begin{align*}
& \frac{c_1}2\|\psi\|^2  \le \left\langle H^\lambda(\bk)\psi,\psi\right\rangle .
\nn\end{align*}
On the other hand, if instead 
$C_1\|\tilde\Theta_0\psi\|^2\ge \hat{c} \|\psi\|^2,$
then \eqref{bound2} implies
\begin{align*}
& c\ \lambda^2\|\psi\|^2  \le \left\langle H^\lambda(\bk)\psi,\psi\right\rangle . \nn\end{align*}
Therefore, in either case  $H^\lambda(\bk)=-(\nabla_\bx+i\bk)^2+V^\lambda(\bx)$ satisfies
\begin{align}
& c\|\psi\|^2  \le \left\langle \left[-(\nabla_\bx+i\bk)^2+V^\lambda(\bx)\right]\psi,\psi\right\rangle .
\label{both-cases}\end{align}

To bound $\|(\nabla_\bx+i\bk)\psi\|^2$ we observe that 
\begin{align}
&\left\langle \left[-(\nabla_\bx+i\bk)^2\right]\psi,\psi\right\rangle\nn\\
&=\left\langle \left[-(\nabla_\bx+i\bk)^2+V^\lambda(\bx)\right]\psi,\psi\right\rangle
 - \left\langle V^\lambda(\bx)\psi,\psi\right\rangle\nn\\
 &\le \left\langle \left[-(\nabla_\bx+i\bk)^2+V^\lambda(\bx)\right]\psi,\psi\right\rangle
 + C\lambda^2\|\psi\|^2,
 \nn\end{align}
 since $ |V^\lambda(\bx)|\le C\lambda^2$ everywhere.  Therefore, by \eqref{both-cases}
 \begin{equation}
 \|(\nabla_\bx+i\bk)\psi\|^2\ \le\ C'\lambda^2\ \left\langle \left[-(\nabla_\bx+i\bk)^2+V^\lambda(\bx)\right]\psi,\psi\right\rangle\ .
 \label{grad-bound}\end{equation}
 Estimates \eqref{both-cases} and \eqref{grad-bound} imply the main global energy estimate and complete the proof of Proposition \ref{global-en-est}.
\end{proof}

\subsection{Global energy estimate on a fixed Hilbert space}\label{global-en-fixed}

We continue with the convention that  norms and inner products are taken in $L^2(\R^2/\Lambda_h)$, if not otherwise specified.

Proposition \ref{global-en-est} (see also Corollaries \ref{cor1-main-en} and \ref{cor2-main-en}) provides a lower bound on $H^\lambda(\bk)=-(\nabla_\bx+i\bk)^2+V^\lambda(\bx)$ subject to the $\bk-$ dependent orthogonality conditions:
 \begin{equation}\label{orthogs-2}
\left\langle p^\lambda_{\bk,A},\psi\right\rangle\ =\ 0\ \ {\rm and}\ \  \left\langle p^\lambda_{\bk,B},\psi\right\rangle\ =\ 0.
\end{equation}
For our Lyapunov-Schmidt reduction strategy of Section \ref{LS-reduction}, we require bounds on $H^\lambda(\bk)$
 and invertibility on a {\it fixed} subspace of the Hilbert space $L^2(\R^2/\Lambda_h)$, defined in terms of the conditions
 \eqref{orthogs-2} for {\it fixed} quasi-momentum, $\bk=\tK$.

 %
 \begin{corollary}\label{main-en-cor3}
 Fix $K_{max}>0$. There exist a small positive constants $\hat{c}$, which decreases with increasing $K_{max}$, 
  and $\lambda_\star$, depending on $V_0$ and $K_{max}$, 
 such that the following holds:
 Let $\tK\in\R^2$ with $|\tK|\le K_{max}$.  Let $\psi\in H^2(\R^2/\Lambda_h)$ with 
 \begin{equation}\label{orthogs-tK1}
\left\langle p^\lambda_{\tK,A},\psi\right\rangle\ =\ 0\ \ {\rm and}\ \  \left\langle p^\lambda_{\tK,B},\psi\right\rangle\ =\ 0.
\end{equation}
Then, for all $\bk\in\C^2$ such that $|\bk-\tK|<\hat{c} \lambda^{-1}$ we have
\begin{align*}
c\ \|\psi\|\ +\ c\lambda^{-1} \|\nabla_\bx\psi\|\
\le\ \left\|\left[-(\nabla_\bx+i\bk)^2+V^\lambda(\bx)\right]\psi\right\|\ .
\end{align*}
 \end{corollary}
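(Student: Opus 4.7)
\textbf{Proof plan for Corollary \ref{main-en-cor3}.} The strategy is to \emph{freeze} the quasi-momentum at the real value $\bk=\tK$ and treat the variation $H^\lambda(\bk)-H^\lambda(\tK)$ as a first-order perturbation that can be absorbed into the left-hand side provided $|\bk-\tK|\ll\lambda^{-1}$. Concretely, I would first invoke Corollary \ref{cor2-main-en} with quasi-momentum $\tK$: since $|\tK|\le K_{max}$ and $\psi$ satisfies the orthogonality conditions \eqref{orthogs-tK1} (which are precisely the hypotheses of Corollary \ref{cor2-main-en} when $\bk=\tK$), I obtain
\begin{equation*}
c_1\|\psi\|^2\ +\ c_1\lambda^{-2}\|\nabla_\bx\psi\|^2\ \le\ \|H^\lambda(\tK)\psi\|^2,
\end{equation*}
and hence, by $\sqrt{a^2+b^2}\ge\tfrac{1}{\sqrt2}(a+b)$, the equivalent ``linear'' form
$\|H^\lambda(\tK)\psi\|\ \ge\ c_2(\|\psi\|+\lambda^{-1}\|\nabla_\bx\psi\|)$ with $c_2=c_1^{1/2}/\sqrt2$.

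Next I would compare $H^\lambda(\bk)$ with $H^\lambda(\tK)$. Expanding
$-(\nabla+i\bk)^2=-\Delta-2i\bk\cdot\nabla+\bk\cdot\bk$ (the dot products are formal, with no conjugation, since $\bk\in\C^2$), one has
\begin{equation*}
H^\lambda(\bk)-H^\lambda(\tK)\ =\ -2i(\bk-\tK)\cdot\nabla_\bx\ +\ (\bk\cdot\bk-\tK\cdot\tK),
\end{equation*}
and $|\bk\cdot\bk-\tK\cdot\tK|=|(\bk-\tK)\cdot(\bk+\tK)|\le |\bk-\tK|(2K_{max}+\hat c\lambda^{-1})$. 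Therefore
\begin{equation*}
\|[H^\lambda(\bk)-H^\lambda(\tK)]\psi\|\ \le\ 2|\bk-\tK|\,\|\nabla_\bx\psi\|\ +\ C\,|\bk-\tK|\,\|\psi\|,
\end{equation*}
where $C$ depends only on $K_{max}$.

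With $|\bk-\tK|<\hat c\lambda^{-1}$, the triangle inequality combined with Step~1 yields
\begin{equation*}
\|H^\lambda(\bk)\psi\|\ \ge\ \|H^\lambda(\tK)\psi\|\ -\ \|[H^\lambda(\bk)-H^\lambda(\tK)]\psi\|\ \ge\ (c_2-2\hat c)\lambda^{-1}\|\nabla_\bx\psi\|\ +\ (c_2-C\hat c\lambda^{-1})\|\psi\|.
\end{equation*}
Choosing $\hat c$ small enough (depending only on $c_2$, hence on $K_{max}$) so that $2\hat c\le c_2/2$, and then $\lambda_\star$ large enough so that $C\hat c\lambda^{-1}\le c_2/2$ for all $\lambda>\lambda_\star$, gives the claimed bound with $c=c_2/2$.

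The only real bookkeeping issue is the simultaneous smallness requirement: the same $\hat c$ must tame both the first-order perturbation $(\bk-\tK)\cdot\nabla_\bx\psi$ (which competes directly with the $\lambda^{-1}\|\nabla_\bx\psi\|$ term from Step~1) and the potential-type perturbation of size $|\bk-\tK|(|\bk|+|\tK|)$. The former is the binding constraint and forces $\hat c$ to scale inversely with $c_2$, explaining the dependence of $\hat c$ on $K_{max}$ noted in the statement. No further analysis is required: the global estimate on the fixed subspace defined by the $\tK$-orthogonality conditions is obtained by a one-line perturbation of the $\bk$-dependent estimate already established.
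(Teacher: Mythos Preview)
Your proof is correct and follows essentially the same route as the paper: apply Corollary~\ref{cor2-main-en} at the frozen real quasi-momentum $\tK$, expand $H^\lambda(\bk)-H^\lambda(\tK)$ as a first-order differential operator with coefficients of size $|\bk-\tK|<\hat c\lambda^{-1}$, and absorb the resulting perturbation into the left-hand side. The only cosmetic difference is that the paper writes the perturbation as $2i(\bk-\tK)\cdot(\nabla_\bx+i\bk)+(\bk-\tK)\cdot(\bk-\tK)$ and absorbs it back into $\|H^\lambda(\tK)\psi\|$ before substituting, whereas you expand about $\nabla_\bx$ directly and absorb into the explicit lower bound $c_2(\|\psi\|+\lambda^{-1}\|\nabla_\bx\psi\|)$; the two bookkeepings are equivalent.
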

 
\begin{proof}[Proof of Corollary \ref{main-en-cor3}] Fix $K_{max}>0$ and let 
\begin{equation}
 \tK\in\R^2,\quad |\tK|\le K_{max},\quad \bk\in\C^2\cap\{|\bk-\tK|<\hat{c}\lambda^{-1}\}\ ,
\label{chat-tK}
\end{equation}
 where $\hat{c}$ is to be chosen small enough below.  Let $\psi\in H^2(\R^2/\Lambda_h)$ be such that 
orthogonality conditions \eqref{orthogs-tK1} hold. 
By Corollary \ref{cor2-main-en}, with $\tK$ in place of $\bk$, we have 
\begin{align}
c\ \|\psi\|^2\ +\ c\lambda^{-2} \|\nabla_\bx\psi\|^2\
\le\ \left\|\left[-(\nabla_\bx+i\tK)^2+V^\lambda(\bx)\right]\psi\right\|^2\ .
\label{tK-Hlambda-bound-1}
\end{align}
 To conclude the proof, it suffices to bound the right hand side of estimate \eqref{tK-Hlambda-bound-1} by the same expression, but with $\tK$ replaced by $\bk$. 
Using \eqref{chat-tK}, we have
\begin{align*}
&\left\|\left[-(\nabla_\bx+i\tK)^2+V^\lambda(\bx)\right]\psi\right\|\  \\
&\quad \le   \left\|\left[-(\nabla_\bx+i\bk)^2+V^\lambda(\bx)\right]\psi \right\| + \left\|2i(\bk-\tK)\cdot(\nabla_\bx+i\bk)\psi\right\| \\
& \qquad \qquad +  \left|(\bk-\tK)\cdot(\bk-\tK)\right|\cdot \|\psi\| \\
 &\quad \le  \left\|\left[-(\nabla_\bx+i\bk)^2+V^\lambda(\bx)\right]\psi\right\| \\ 
 &\qquad \qquad + \hat{c}\lambda^{-1}\left(\ 2\left\|\nabla_\bx\psi\right\|+4K_{max}\ \|\psi\|\ +\ \hat{c}\lambda^{-1}\|\psi\|\right)\ .
 \end{align*}
 By \eqref{tK-Hlambda-bound-1}, the  latter three terms are controlled by
 $\hat{c}\ \left\|\left[-(\nabla_\bx+i\tK)^2+V^\lambda(\bx)\right]\psi\right\|$. Therefore, by choosing $\hat{c}$ sufficiently small, we find  
 \[\left\|\left[-(\nabla_\bx+i\tK)^2+V^\lambda(\bx)\right]\psi\right\|\lesssim \left\|\left[-(\nabla_\bx+i\bk)^2+V^\lambda(\bx)\right]\psi\right\|.\]
 Substituting this bound into  \eqref{tK-Hlambda-bound-1}, 
 completes  the proof of Corollary \ref{main-en-cor3}.
 \end{proof}

 \subsection{The resolvent}\label{resolvent-bounds}

 The following result is required to control the resolvent of $H^\lambda(\bk)$ on the subspace defined by the orthogonality conditions
 \eqref{orthogs-tK1}; see Lemma \ref{lem1-resolvent} and Proposition \ref{prop2-resolvent} below.
 
 \begin{corollary}\label{main-en-cor4} Fix $K_{max}>0$ and let $\tK\in\R^2$ with $|\tK|\le K_{max}$. Let $\psi\in H^2(\R^2/\Lambda_h)$ with $\left\langle p_{\tK,I}^\lambda,\psi\right\rangle=0$ for $I=A$ and $B$. Suppose that 
 $\psi$ satisfies 
 \begin{equation}
   H^\lambda(\tK)\ \psi\ =\ \varphi\ +\ \sum_{I=A,B}\ \mu_I\ p_{\tK,I}^\lambda , \label{HtkPsi}
   \end{equation}
 with $\mu_I\in\C$,\ and 
 $\left\langle p_{\tK,I}^\lambda,\varphi \right\rangle=0$ for $I=A$ and $B$.
 
 Then, 
 \begin{equation}
 c \|\psi\|\ +\ c\lambda^{-1}\ \|\nabla_\bx\psi\|\ \le\ \|\varphi\|\ .
\label{en-est-cor4} \end{equation}
 \end{corollary}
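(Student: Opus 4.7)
\begin{proofof}[Proof plan for Corollary \ref{main-en-cor4}]
The strategy is to apply Corollary \ref{main-en-cor3} (with $\bk=\tK$) directly to $\psi$, which satisfies the orthogonality conditions $\langle p_{\tK,I}^\lambda,\psi\rangle=0$ for $I=A,B$ by hypothesis. This yields
\begin{equation*}
c\|\psi\|\ +\ c\lambda^{-1}\|\nabla_\bx\psi\|\ \le\ \|H^\lambda(\tK)\psi\|\ =\ \Big\|\varphi+\sum_{I=A,B}\mu_I\,p_{\tK,I}^\lambda\Big\|\ \le\ \|\varphi\|\ +\ \sum_{I=A,B}|\mu_I|\,\|p_{\tK,I}^\lambda\|.
\end{equation*}
Thus the main task is to bound $|\mu_A|$ and $|\mu_B|$ by an exponentially small multiple of $\|\psi\|$, so that the extra terms on the right can be absorbed into the left.

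To do this, I take the $L^2(\R^2/\Lambda_h)$ inner product of \eqref{HtkPsi} with $p_{\tK,J}^\lambda$ for $J=A,B$. By self-adjointness of $H^\lambda(\tK)$ on $H^2(\R^2/\Lambda_h)$,
\begin{equation*}
\langle p_{\tK,J}^\lambda,H^\lambda(\tK)\psi\rangle\ =\ \langle H^\lambda(\tK)p_{\tK,J}^\lambda,\psi\rangle,
\end{equation*}
and the bound \eqref{Hlambda-pk} gives $|\langle H^\lambda(\tK)p_{\tK,J}^\lambda,\psi\rangle|\lesssim e^{-c\lambda}\|\psi\|$. Since $\langle p_{\tK,J}^\lambda,\varphi\rangle=0$ by hypothesis, we obtain the linear system
\begin{equation*}
\sum_{I=A,B}\langle p_{\tK,J}^\lambda,p_{\tK,I}^\lambda\rangle\,\mu_I\ =\ \mathcal{O}(e^{-c\lambda})\,\|\psi\|,\qquad J=A,B.
\end{equation*}
By \eqref{pkI-normalize1}, the Gram matrix $\big(\langle p_{\tK,J}^\lambda,p_{\tK,I}^\lambda\rangle\big)_{I,J}$ is an $\mathcal{O}(e^{-c\lambda})$ perturbation of the identity, hence invertible with uniformly bounded inverse for $\lambda$ large. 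Solving the system yields $|\mu_A|+|\mu_B|\lesssim e^{-c\lambda}\|\psi\|$.

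Substituting back, and using $\|p_{\tK,I}^\lambda\|\le 2$ (again from \eqref{pkI-normalize}) for $\lambda$ sufficiently large, gives
\begin{equation*}
c\|\psi\|\ +\ c\lambda^{-1}\|\nabla_\bx\psi\|\ \le\ \|\varphi\|\ +\ C\,e^{-c\lambda}\,\|\psi\|.
\end{equation*}
For $\lambda>\lambda_\star$ large enough so that $Ce^{-c\lambda}\le c/2$, the last term is absorbed into the left-hand side (at the cost of halving the constant $c$), which gives the desired estimate \eqref{en-est-cor4}. No step here presents a serious obstacle: the only technical point is ensuring the Gram matrix of $\{p_{\tK,A}^\lambda,p_{\tK,B}^\lambda\}$ is boundedly invertible, which follows immediately from the near-orthonormality \eqref{pkI-normalize1} already proved in Section \ref{low-lying-fb}.
\end{proofof}
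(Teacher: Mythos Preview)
Your proof is correct and follows essentially the same approach as the paper: apply the energy estimate to $\psi$, then bound $|\mu_A|+|\mu_B|$ by $e^{-c\lambda}\|\psi\|$ via taking inner products of \eqref{HtkPsi} with $p_{\tK,J}^\lambda$, using self-adjointness, the orthogonality $\langle p_{\tK,J}^\lambda,\varphi\rangle=0$, the near-orthonormality \eqref{pkI-normalize1}, and the bound \eqref{Hlambda-pk}. The only cosmetic difference is that the paper invokes Corollary \ref{cor2-main-en} directly (with $\bk=\tK$), whereas you invoke Corollary \ref{main-en-cor3} at $\bk=\tK$, which is itself derived from Corollary \ref{cor2-main-en}; either reference yields the needed inequality.
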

 
 \begin{proof}[Proof of Corollary \ref{main-en-cor4}] By Corollary \ref{cor2-main-en}, with $\bk=\tK$,
 \begin{align}
c \|\psi\| + c\lambda^{-1} \|\nabla_\bx\psi\|\
\le \left\|\left[-(\nabla_\bx+i\tK)^2+V^\lambda(\bx)\right]\psi\right\| \lesssim\
\|\varphi\| +  \sum_{I=A,B} |\mu_I| .
\label{eqn-cor4a}\end{align}
Taking the inner product of $p_{\tK,J}^\lambda$ with \eqref{HtkPsi}, using self-adjointness and the assumed orthogonality to $\varphi$, we obtain 
$\sum_I\left\langle p_{\tK,J}^\lambda,p_{\tK,I}^\lambda\right\rangle\mu_I=\left\langle H^\lambda(\tK)p_{\tK,J}^\lambda,\psi\right\rangle$. By the near-orthogonality 
relation \eqref{pkI-normalize1} and the Cauchy-Schwarz inequality, we have for $I=A, B$ that
  \[|\mu_I|\le\left(1+\mathcal{O}(e^{-c\lambda})\right)\ \sum_{J=A,B} \|H^\lambda(\tK)p_{\tK,J}^\lambda\|\ \|\psi\|. \]
Next, the bound \eqref{Hlambda-pk} implies $|\mu_I|\lesssim e^{-c\lambda}\|\psi\|$.  Therefore,  $\sum_I|\mu_I|$  can be absorbed into the left hand side of \eqref{eqn-cor4a}, for $\lambda$ large and the  estimate \eqref{en-est-cor4} follows. This completes the proof of Corollary \ref{main-en-cor4}.
\end{proof}

Fix $K_{max}>0$  and let $\tK\in\R^2$ with $|\tK|\le K_{max}$. We now introduce the Hilbert space, $\mathscr{H}_{_{AB}}$: 
\begin{equation}
\mathscr{H}_{_{AB}}\ \equiv\ \left[\ {\rm span}\left\{\ p^\lambda_{\tK,I}\ :\ I=A,B\ \right\}\ \right]^\perp\  \ {\rm in}\ \ L^2(\R^2/\Lambda_h)\ ,
\label{scr-H-def}
\end{equation}
and the associated orthogonal projection: $\Pi_{_{AB}}: L^2(\R^2/\Lambda_h)\to\mathscr{H}_{_{AB}}$.
 The space $\mathscr{H}_{_{AB}}$ depends on the choice of $\tK\in\R^2$. 
Also, introduce the subspace  $\mathscr{H}_{_{AB}}^2=\mathscr{H}_{_{AB}}\cap H^2(\R^2/\Lambda_h)$. The norms and inner products on $\mathscr{H}_{_{AB}}$ and $\mathscr{H}_{_{AB}}^2$ are those inherited from $L^2(\R^2/\Lambda_h)$ and 
$H^2(\R^2/\Lambda_h)$, respectively.
Recall that 
$H^\lambda(\tK)=-\left(\nabla_\bx+i\tK\right)^2+V^\lambda(\bx):H^2(\R^2/\Lambda_h)\to L^2(\R^2/\Lambda_h)$. 

For $\varphi\in\mathcal{H}_{_{AB}}$, we now study the solvability in $\mathcal{H}^2_{_{AB}}$ of $\Pi_{_{AB}}H^\lambda(\tK)\psi=\varphi$ (Lemma \ref{lem1-resolvent})  and then 
$\Pi_{_{AB}}\left(H^\lambda(\bk)-\Omega\right)\psi=\varphi$, for $\bk$ near $\tK$  
 (Proposition \ref{prop2-resolvent}). 
\begin{lemma}\label{lem1-resolvent}
For any $\varphi\in\mathscr{H}_{_{AB}}$, there exists one and only one 
$\psi\in\mathscr{H}_{_{AB}}^2$ such that 
\begin{equation}
\Pi_{{_{AB}}}\ H^\lambda(\tK)\ \psi\ =\ \varphi\ .
\label{PiHvphi}
\end{equation}
Moreover, that $\psi$ satisfies the bounds
\begin{equation}
c\ \left(\ \|\psi\|_{_{\mathscr{H}_{_{AB}}}}\ +\ \lambda^{-1}\ \|\nabla_\bx\psi\|_{_{L^2(\R^2/\Lambda_h)}}\ \right)\ \le\ 
\|\varphi\|_{_{\mathscr{H}_{_{AB}}}}.
\label{lem1-bound}\end{equation}
\end{lemma}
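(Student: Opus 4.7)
The plan is to recognize $T := \Pi_{_{AB}} H^\lambda(\tK)\big|_{\mathscr{H}_{_{AB}}^2}$ as a self-adjoint operator on $\mathscr{H}_{_{AB}}$ with domain $\mathscr{H}_{_{AB}}^2$, and to show that it is a bijection whose inverse obeys the bound \eqref{lem1-bound}.

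The a priori estimate and uniqueness fall out of Corollary \ref{main-en-cor4} with almost no work. Indeed, if $\psi \in \mathscr{H}_{_{AB}}^2$ satisfies $T\psi = \varphi$ for some $\varphi \in \mathscr{H}_{_{AB}}$, then $H^\lambda(\tK)\psi - \varphi$ lies in the two-dimensional space $\operatorname{span}\{p^\lambda_{\tK,A}, p^\lambda_{\tK,B}\}$, which is the $L^2(\R^2/\Lambda_h)$-orthogonal complement of $\mathscr{H}_{_{AB}}$. Hence
\[
H^\lambda(\tK)\psi \;=\; \varphi \;+\; \sum_{I=A,B} \mu_I\, p^\lambda_{\tK,I}
\]
for some $\mu_I \in \C$, and Corollary \ref{main-en-cor4} applies verbatim to yield \eqref{lem1-bound}. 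In particular, $T$ is injective and has closed range.

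For existence, I would first check that $T$ is symmetric on $\mathscr{H}_{_{AB}}^2$: for $\psi_1, \psi_2 \in \mathscr{H}_{_{AB}}^2$, using $\Pi_{_{AB}} \psi_i = \psi_i$ and self-adjointness of $H^\lambda(\tK)$ on $H^2(\R^2/\Lambda_h)$,
$\langle T\psi_1, \psi_2\rangle = \langle H^\lambda(\tK)\psi_1, \psi_2\rangle = \langle \psi_1, H^\lambda(\tK)\psi_2\rangle = \langle \psi_1, T\psi_2\rangle$.
To promote symmetry to self-adjointness, I would verify that $\operatorname{dom}(T^\ast) \subseteq \mathscr{H}_{_{AB}}^2$: if $\eta \in \operatorname{dom}(T^\ast)$, then $\psi \mapsto \langle H^\lambda(\tK)\psi, \eta\rangle$ is bounded on $\mathscr{H}_{_{AB}}^2$, and the decomposition $\psi = \psi_{AB} + \sum_I c_I\, p^\lambda_{\tK,I}$ extends this boundedness to all of $H^2(\R^2/\Lambda_h)$ at the cost of the term $\sum_I c_I\, \langle H^\lambda(\tK) p^\lambda_{\tK,I}, \eta\rangle$, which is harmless by \eqref{Hlambda-pk}. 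Self-adjointness of $H^\lambda(\tK)$ on $H^2(\R^2/\Lambda_h)$ then forces $\eta \in H^2(\R^2/\Lambda_h) \cap \mathscr{H}_{_{AB}} = \mathscr{H}_{_{AB}}^2$.

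With $T$ self-adjoint, injective, and of closed range, the closed-range theorem gives $\operatorname{range}(T) = (\ker T^\ast)^\perp = (\ker T)^\perp = \mathscr{H}_{_{AB}}$, so $T$ is a bijection and its inverse satisfies \eqref{lem1-bound}. The only genuinely technical step is the self-adjointness check; everything else is a direct appeal to Corollary \ref{main-en-cor4} together with the exponential smallness $\|H^\lambda(\tK)\, p^\lambda_{\tK,I}\| \lesssim e^{-c\lambda}$ recorded in \eqref{Hlambda-pk}.
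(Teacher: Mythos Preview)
Your proposal is correct and follows essentially the same route as the paper. Both arguments reduce surjectivity to showing that any $\varphi_0\in\mathscr{H}_{AB}$ orthogonal to the range must lie in $\mathscr{H}_{AB}^2$ (via the distributional identity $-(\nabla+i\tK)^2\varphi_0\in L^2$ obtained by peeling off the $p^\lambda_{\tK,I}$ components) and then killing it with the energy estimate; you package this as self-adjointness of $T$ plus the closed range theorem, while the paper phrases it as density of the range plus Proposition~\ref{global-en-est} applied to $\langle H^\lambda(\tK)\varphi_0,\varphi_0\rangle=0$. One minor remark: in your domain argument for $T^\ast$, the appeal to \eqref{Hlambda-pk} is more than you need---all that matters is that $H^\lambda(\tK)p^\lambda_{\tK,I}\in L^2$, not its exponential smallness.
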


\begin{proof}[Proof of Lemma \ref{lem1-resolvent}]
We first prove that \eqref{PiHvphi} admits a solution $\psi\in\mathscr{H}_{_{AB}}^2$ for a dense subset of $\varphi\in\mathscr{H}_{_{AB}}$.  Indeed, if not, then there would exist a nontrivial $\varphi_0\in\mathscr{H}_{_{AB}}$  such that\begin{equation}
\left\langle \varphi_0,\Pi_{_{AB}}\ H^\lambda(\tK)\ \psi\right\rangle_{_{\mathscr{H}_{_{AB}}}}\ =\ 0,\ \ 
\textrm{for all}\ \psi\in\mathscr{H}_{_{AB}}^2\ .
\label{orthog-dense}
\end{equation}
Since $\varphi_0\in\mathscr{H}_{_{AB}}$, \eqref{orthog-dense} is equivalent to 
\begin{equation}
\left\langle \varphi_0, H^\lambda(\tK)\ \psi\right\rangle_{_{\mathscr{H}_{_{AB}}}}\ =\ 0,\ \ 
\textrm{for all}\ \psi\in\mathscr{H}_{_{AB}}^2\ .
\label{orthog-dense1}
\end{equation}
We shall show that $\varphi_0=0$ yielding a contradiction. To do so, we first show
that $\varphi_0\in\mathscr{H}^2_{_{AB}}$, so that we may write \eqref{orthog-dense1} as an orthogonality condition 
 on $H^\lambda(\tK)\varphi_0$.

Now, given any $\psi\in H^2(\R^2/\Lambda_h)$ we may write
\begin{equation}
\psi\ =\ \sum_{I=A,B}\ \alpha_I\ p^\lambda_{\tK,I}\ +\ \tilde\psi,\ \ \textrm{with}\ \ \tilde\psi\in \mathscr{H}_{_{AB}}^2\ .
\label{dense-psi-decomp}
\end{equation}
In  particular, 
\begin{equation*}
\left\langle p^\lambda_{\tK,J},\psi\right\rangle\ =\ \sum_{I=A,B}\ \alpha_I\ \left\langle p^\lambda_{\tK,J},p_{\tK,I}\right\rangle,\ \ 
{\rm for}\ \ J=A,B.
\nn\end{equation*}
By \eqref{pkI-normalize1},  $\left\langle p^\lambda_{\tK,J},p^\lambda_{\tK,J}\right\rangle$ differs from $\delta_{JI}$ by at most order $e^{-c\lambda}$. Therefore, 
$\alpha_I=\left\langle \sum_J\gamma_I^J p^\lambda_{\tK,J},\psi\right\rangle$, for a matrix $(\gamma^J_I)$, which is independent of $\psi$.  

Substituting \eqref{dense-psi-decomp} into \eqref{orthog-dense1}, we have
\begin{align}
\left\langle\varphi_0,H^\lambda(\tK)\psi\right\rangle\ &=\ 
\left\langle \varphi_0,H^\lambda(\tK)\ 
\left[
\sum_{I=A,B}\ \alpha_I\ p^\lambda_{\tK,I}\ +\ \tilde\psi 
\right]\ 
\right\rangle\nn\\ 
&=\ \sum_{I,J}\left\langle\gamma_I^J p^\lambda_{\tK,J},\psi\right\rangle\ \left\langle\varphi_0,H^\lambda(\tK)p^\lambda_{\tK,I} \right\rangle= \sum_{J=A,B}\left\langle\tilde{p}^\lambda_J,\psi\right\rangle\ ,
\label{tpJ}\end{align}
where $\tilde{p}^\lambda_J=\sum_{I=A,B}\gamma_I^J\overline{\left\langle\varphi_0,H^\lambda(\tK)p^\lambda_{\tK,I}\right\rangle }p^\lambda_{\tK,J}\ \in H^2(\R^2/\Lambda_h)$ is independent of $\psi$. 
 Rewriting \eqref{tpJ} we have
\begin{equation*}
\left\langle \varphi_0,-(\nabla_\bx+i\tK)^2\psi\right\rangle\ =\ 
\left\langle -V^\lambda\varphi_0+\ \sum_{J=A,B}\tilde{p}^\lambda_J\ ,\ \psi \right\rangle
\end{equation*}
for arbitrary $\psi\in H^2(\R^2/\Lambda_h)$.  Thus, 
 $-(\nabla_\bx+i\tK)^2\varphi_0=\ 
 -V^\lambda\varphi_0+\ \sum_{J=A,B}\tilde{p}_J^\lambda\in L^2(\R^2)$ in the sense of distributions, which implies that 
  $\varphi_0\in H^2(\R^2/\Lambda_h)$.  Furthermore, since  $\left\langle p_{\tK,I}^\lambda,\varphi_0\right\rangle=0$ for $I=A,B$,  we have $\varphi_0\in\mathscr{H}_{_{AB}}^2$ as claimed. Therefore, setting $\psi=\varphi_0$ in \eqref{orthog-dense1} gives  $\left\langle H^\lambda(\tK)\varphi_0,\varphi_0\right\rangle=0$. Applying Proposition \ref{global-en-est} we have
   $c\|\varphi_0\|^2 \le \left\langle H^\lambda(\tK)\varphi_0,\varphi_0\right\rangle=0$. Hence, $\varphi_0=0$.
  This proves that equation \eqref{PiHvphi}, 
$ \Pi_{_{AB}}\ H^\lambda(\tK)\ \psi\ =\ \varphi$,  
 has a solution $\psi\in\mathscr{H}_{_{AB}}^2$ for a dense subset of $\varphi\in\mathscr{H}_{_{AB}}$. 
 Moreover, the bound  \eqref{lem1-bound} holds,  thanks to Corollary  \ref{main-en-cor4}.
Standard arguments  using \eqref{lem1-bound} extend these assertions to all $\varphi\in\mathscr{H}_{_{AB}}$. 
  
 Finally, uniqueness holds since the difference of two solutions of \eqref{PiHvphi}, denoted $\Upsilon$, satisfies the homogeneous equation
  $\Pi_{_{AB}}H^\lambda(\tK) \Upsilon=0$. Therefore, $\Upsilon$ satisfies \eqref{HtkPsi} with $\varphi\equiv0$. 
  Applying  \eqref{en-est-cor4} yields that $\Upsilon=0$. This completes the proof of Lemma \ref{lem1-resolvent}.
  \end{proof}
  %
   %
 
 Our next step is to extend results on the invertibility of
$ \Pi_{_{AB}} H^\lambda(\tK)$ on $\mathscr{H}_{_{AB}}$ to results 
  on the invertibility of
 $ \Pi_{_{AB}}\left(H^\lambda(\bk)-\Omega\right)$ on $\mathscr{H}_{_{AB}}$, for $\bk\in\C^2$ sufficiently near $\tK\in\R^2$
  and $\Omega$ sufficiently small. 
 
For  $\varphi\in\mathscr{H}_{_{AB}}$, consider the equation 
 \begin{equation}
 \Pi_{_{AB}} H^\lambda(\tK)\psi=\varphi\ .
 \label{eqn10}
 \end{equation}
Via Lemma \ref{lem1-resolvent}, we define the mapping $A:\mathscr{H}_{_{AB}}\to\mathscr{H}_{_{AB}}^2$, 
  \begin{align}
\varphi\ \mapsto\ A\varphi\ &=\ \psi\in\mathscr{H}_{_{AB}}^2 ,
 \label{eqn11}
 \end{align}
which gives the unique solution of \eqref{eqn10}.

We then set
 \begin{align*}
 B_j\varphi\ &=\ \Pi_{_{AB}} \partial_{x_j}A\varphi\ =\ \Pi_{_{AB}} \partial_{x_j}\psi,\ \ j=1,2\  .
 \end{align*}
 Lemma \ref{lem1-resolvent} tells us that $A, B_1, B_2$ are bounded operators on $\mathscr{H}_{_{AB}}$ with norm bounds:
 \begin{equation*}
 \|A\|_{_{\mathscr{H}_{_{AB}}\to\mathscr{H}_{_{AB}}}}\lesssim1,\quad \|B_j\|_{_{\mathscr{H}_{_{AB}}\to\mathscr{H}_{_{AB}}}}\ \lesssim\ \lambda,\quad j=1,2\ .
 \end{equation*}
 
 For $\varphi\in\mathscr{H}_{_{AB}}$,  we now try to solve the equation
 \begin{equation}
 \Pi_{_{AB}} \left(H^\lambda(\bk)-\Omega\right)\psi\ =\ \varphi,
 \label{eqn14}
 \end{equation}
 for $\psi\in\mathscr{H}_{_{AB}}^2$. Here, $\bk\in\C^2$ and $\Omega\in\C$. 
 
 For $\psi\in\mathscr{H}_{_{AB}}^2$, set $\Pi_{_{AB}}H^\lambda(\tK)\psi=\tilde\varphi$. 
 Then, $\tilde\varphi\in\mathscr{H}_{_{AB}}$, $\psi=A\tilde\varphi$ and \eqref{eqn14} is equivalent to the equation
 \begin{equation*}
 \tilde\varphi - 2i\sum_{j=1,2}(k_j-\tilde{K}_j) B_j\tilde\varphi + \left(\ (\bk-\tK)\cdot(\bk-\tK)-\Omega\ \right)A\tilde\varphi
  =\ \varphi\ .
 \end{equation*}
 Therefore, the solution to \eqref{eqn14} (under conditions on $\bk$ and $\Omega$ to be spelled out below) is given by 
 \begin{align}
 &\psi\ =\ A \tilde\varphi,\ \ {\rm where}\ \tilde\varphi\in\mathscr{H}_{_{AB}}\ {\rm solves}\label{eqn17-minus}\\
& \left\{\ I\ -\ 2i\sum_{j=1,2}(k_j-\tilde{K}_j) B_j\ +\ \left(\ (\bk-\tK)\cdot(\bk-\tK)-\Omega\ \right)A\ \right\}\  \tilde\varphi\ =\ \varphi\ .\label{eqn17}
\end{align}
The operator in curly brackets in \eqref{eqn17}  can be inverted, via a Neumman series, provided the following three quantities
\[
 |\bk-\tK|\cdot\|B\|_{_{\mathscr{H}_{_{AB}}\to\mathscr{H}_{_{AB}}}}\lesssim |\bk-\tK|\ \lambda,\ \ |\bk-\tK|,\ {\rm and}\ \ 
 |\Omega| \cdot\|A\|_{_{\mathscr{H}_{_{AB}}\to\mathscr{H}_{_{AB}}}}
 \]
are all less than a small constant $\hat{c}$. Thus, \eqref{eqn17} 
has a unique solution $\tilde\varphi\in\mathscr{H}_{_{AB}}$ with 
$\|\tilde\varphi\|_{_{\mathscr{H}_{AB}}}\lesssim\|\varphi\|_{_{\mathscr{H}_{AB}}}$. By \eqref{eqn17-minus} and \eqref{eqn11} ,
 $\psi=A\tilde\varphi$ solves \eqref{eqn14}. Furthermore,  by  \eqref{lem1-bound} $\psi$ satisfies the bound:
 \begin{align*}
 &\|\psi\|_{_{\mathscr{H}_{AB}}}\ +\ \lambda^{-1} \|\nabla_\bx\psi\|_{_{L^2(\R^2/\Lambda_h)}}
  =  \|A\tilde\varphi\|_{_{\mathscr{H}_{AB}}}\ +\ \lambda^{-1} \|\nabla_\bx A\tilde\varphi\|_{_{L^2(\R^2/\Lambda_h)}}\
 \lesssim\|\varphi\|_{_{\mathscr{H}_{AB}}} .
   \label{psi-est}
 \end{align*}

Introducing the solution mapping
 $\varphi\mapsto \psi={\rm Res}^{\lambda,\tK}(\bk,\Omega)\ \varphi$, for equation \eqref{eqn14},
we have the following

\begin{proposition}[Bound on the resolvent, ${\rm Res}^{\lambda,\tK}(\bk,\Omega)$]\label{prop2-resolvent}
%
Fix $\tK\in\R^2$ with $|\tK|\le K_{max}$. Let 
\begin{equation}
U\ \equiv\ \left\{(\bk,\Omega)\in\C^2\times\C\ :\ |\bk-\tK|<\hat{c}\lambda^{-1},\ |\Omega|<\hat{c}\ \right\}\ .
\label{Udef}
\end{equation}
For a small enough constant $\hat{c}$, and $\lambda>\lambda_\star$ sufficiently large, we have the following:
\begin{enumerate}
\item Let $\varphi\in\mathscr{H}_{_{AB}}$ and $(\bk,\Omega)\in U$. Then the equation
\begin{equation}
 \Pi_{_{AB}} \left(H^\lambda(\bk)-\Omega\right)\psi\ =\ \varphi \label{eqn-star}
 \end{equation}
 has a unique solution $\psi\in\mathscr{H}_{_{AB}}^2$. 
 \item Let ${\rm Res}^{\lambda,\tK}(\bk,\Omega) \varphi$ denote the solution, $\psi$, of equation \eqref{eqn-star}.
Then, the mapping:
 \begin{align}
 &\varphi\mapsto {\rm Res}^{\lambda,\tK}(\bk,\Omega)\ \varphi,\quad {\rm Res}^{\lambda,\tK}(\bk,\Omega):
 \mathscr{H}_{_{AB}}\to \mathscr{H}_{_{AB}}^2
  \label{res-map}
 \end{align}
 depends analytically on $(\bk,\Omega)\in U$ and satisfies the estimates
 \begin{equation}
\qquad \frac{c^j}{j!}\ \left\|\ {\rm Res}_j^{\lambda,\tK}(\bk,\Omega) \right\|_{_{\mathscr{H}_{_{AB}}\to\mathscr{H}_{_{AB}}}},\ \ 
 \lambda^{-1}\ \left\|\ \nabla_\bx{\rm Res}^{\lambda,\tK}(\bk,\Omega)\right\|_{_{\mathscr{H}_{_{AB}}\to L^2}}\ \le\ C
\label{deriv-bounds}\end{equation}
for $j\ge0$ and $(\bk,\Omega)\in U$,
where
\begin{equation}
{\rm Res}_j^{\lambda,\tK}(\bk,\Omega) \equiv\  \partial_\Omega^j\ {\rm Res}^{\lambda,\tK}(\bk,\Omega)\ ,
\label{Resj-def}
\end{equation}
and we adopt the convention ${\rm Res}_0^{\lambda,\tK}={\rm Res}^{\lambda,\tK}$.
Here, the constants $\hat{c}, \lambda_\star, c, C$ may depend on $K_{max}$.
\item For $(\bk,\Omega)\in(\R^2\times\R)\cap U$, the mapping ${\rm Res}^{\lambda,\tK}(\bk,\Omega):
 \mathscr{H}_{_{AB}}\to \mathscr{H}_{_{AB}}$ is self-adjoint. 
\end{enumerate}
\end{proposition}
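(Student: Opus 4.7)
The plan is to build directly on the Neumann-series setup developed in the paragraph preceding the proposition. That discussion reduces equation \eqref{eqn-star} to inverting, on $\mathscr{H}_{_{AB}}$, the operator
\[ T(\bk,\Omega) \equiv I - 2i\sum_{j=1,2}(k_j-\tilde K_j)B_j + \bigl((\bk-\tK)\cdot(\bk-\tK)-\Omega\bigr)A,\]
after which $\psi = A\,T(\bk,\Omega)^{-1}\varphi$. Combining $\|A\|\lesssim 1$ and $\|B_j\|\lesssim\lambda$ from Lemma \ref{lem1-resolvent} with the constraints $|\bk-\tK|<\hat c\lambda^{-1}$, $|\Omega|<\hat c$ built into \eqref{Udef}, one checks that the $\lambda$ factor in $\|B_j\|$ is exactly canceled by $|\bk-\tK|<\hat c\lambda^{-1}$, so $\|T(\bk,\Omega)-I\|_{\mathscr{H}_{_{AB}}\to\mathscr{H}_{_{AB}}}\le C\hat c$ with $C$ independent of $\lambda$. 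Choosing $\hat c$ small enough (depending on the constants in Lemma \ref{lem1-resolvent}, hence only on $K_{max}$), the Neumann series $T(\bk,\Omega)^{-1}=\sum_{m\ge 0}(I-T(\bk,\Omega))^m$ converges absolutely in operator norm, uniformly on $U$, with $\|T(\bk,\Omega)^{-1}\|\le 2$. This yields part (1) together with the $j=0$ case of \eqref{deriv-bounds}; the gradient bound $\lambda^{-1}\|\nabla_\bx\mathrm{Res}^{\lambda,\tK}(\bk,\Omega)\|_{\mathscr{H}_{_{AB}}\to L^2}\lesssim 1$ then follows from $\|\nabla_\bx A\varphi\|_{L^2}\lesssim\lambda\|\varphi\|$, which is encoded in Lemma \ref{lem1-resolvent}.

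For the analyticity and the higher $\Omega$-derivative bounds in part (2), I would observe that $T(\bk,\Omega)$ depends polynomially on its arguments (of degree at most $2$ in $\bk$ and $1$ in $\Omega$), and the uniform operator-norm convergence of its Neumann inverse makes $(\bk,\Omega)\mapsto T(\bk,\Omega)^{-1}$, and therefore $\mathrm{Res}^{\lambda,\tK}(\bk,\Omega)=A\,T(\bk,\Omega)^{-1}$, operator-norm analytic on $U$. The bound $(c^j/j!)\|\mathrm{Res}_j^{\lambda,\tK}(\bk,\Omega)\|\le C$ then follows from Cauchy's integral formula applied in $\Omega$: for each fixed $\bk$ and each $\Omega$ with $|\Omega|<\hat c/2$, integration around the circle $|\Omega'-\Omega|=\hat c/2$ (which lies inside the domain of analyticity) together with the uniform operator-norm bound of the previous step gives $\|\partial_\Omega^j\mathrm{Res}^{\lambda,\tK}(\bk,\Omega)\|\lesssim j!\,(\hat c/2)^{-j}$; after shrinking the $|\Omega|$-range in $U$ harmlessly and relabeling, this is \eqref{deriv-bounds} with $c=\hat c/2$.

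Part (3) is a short functional-analytic consequence of part (1) and the self-adjointness of $H^\lambda(\bk)-\Omega$ on $H^2(\R^2/\Lambda_h)$ for real $\bk,\Omega$. For $\psi_1,\psi_2\in\mathscr{H}^2_{_{AB}}$, since $\Pi_{_{AB}}\psi_j=\psi_j$,
\[ \left\langle\psi_1,\Pi_{_{AB}}(H^\lambda(\bk)-\Omega)\psi_2\right\rangle = \left\langle(H^\lambda(\bk)-\Omega)\psi_1,\psi_2\right\rangle = \left\langle\Pi_{_{AB}}(H^\lambda(\bk)-\Omega)\psi_1,\psi_2\right\rangle,\]
so the restricted operator $\Pi_{_{AB}}(H^\lambda(\bk)-\Omega)\big|_{\mathscr{H}^2_{_{AB}}}$ is symmetric; by part (1) it is a bijection onto $\mathscr{H}_{_{AB}}$, hence its bounded inverse $\mathrm{Res}^{\lambda,\tK}(\bk,\Omega)$ is self-adjoint on $\mathscr{H}_{_{AB}}$. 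The only genuinely delicate step in the whole argument is the first one: ensuring that $\hat c$ can be chosen independently of $\lambda$, which hinges precisely on the $\lambda^{-1}$ in the definition of $U$ offsetting the $\lambda$ in $\|B_j\|$. Once this is in hand, the remaining items — uniform operator-norm analyticity, Cauchy estimates in $\Omega$, and passing from symmetric-with-bounded-inverse to self-adjoint on an invariant closed subspace — are routine.
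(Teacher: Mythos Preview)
Your proof is correct and follows essentially the same route as the paper: parts (1) and (2) are exactly the Neumann-series construction the paper develops in the paragraph preceding the proposition (the paper itself just cites that discussion), with the Cauchy-estimate argument for the $\partial_\Omega^j$ bounds being the natural way to extract \eqref{deriv-bounds} from operator-norm analyticity. For part (3) the paper writes out the direct chain $\langle\varphi_1,\mathrm{Res}\,\varphi_2\rangle=\langle\mathrm{Res}\,\varphi_1,\varphi_2\rangle$ using $\Pi_{_{AB}}(H^\lambda(\bk)-\Omega)\Pi_{_{AB}}\mathrm{Res}=\mathrm{Id}$, which is equivalent to your ``symmetric bijection $\Rightarrow$ self-adjoint bounded inverse'' formulation.
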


\begin{proof}[Proof of Proposition \ref{prop2-resolvent}]
Assertions (1) and (2) are proved in the discussion just before the proposition. 
Assertion (3) on self-adjointness is proved as follows. Let $\varphi_j\in\mathscr{H}_{_{AB}},\ j=1,2$.  Then, by construction 
\[ \Pi_{_{AB}} \left(H^\lambda(\bk)-\Omega\right)\Pi_{_{AB}} {\rm Res}^{\lambda,\tK}(\bk,\Omega)\varphi_j=\Pi_{_{AB}} \left(H^\lambda(\bk)-\Omega\right){\rm Res}^{\lambda,\tK}(\bk,\Omega)\varphi_j=\varphi_j.
\] 
By self-adjointness of $H^\lambda(\bk)-\Omega$ for  $(\bk,\Omega)\in \left(\R^2\times\R\right)\cap U$, we have 
\begin{align}
&\left\langle\varphi_1, {\rm Res}^{\lambda,\tK}(\bk,\Omega)\varphi_2\right\rangle_{_{\mathscr{H}_{_{AB}} }} \nn \\
 & \quad = \left\langle \Pi_{_{AB}} \left(H^\lambda(\bk)-\Omega\right) \Pi_{_{AB}}  {\rm Res}^{\lambda,\tK}(\bk,\Omega)\varphi_1, {\rm Res}^{\lambda,\tK}(\bk,\Omega)\varphi_2\right\rangle_{_{\mathscr{H}_{_{AB}} }}  \nn\\
 & \quad = \left\langle   {\rm Res}^{\lambda,\tK}(\bk,\Omega)\varphi_1, 
 \Pi_{_{AB}} \left(H^\lambda(\bk)-\Omega\right) \Pi_{_{AB}}{\rm Res}^{\lambda,\tK}(\bk,\Omega)\varphi_2\right\rangle_{_{\mathscr{H}_{_{AB}} }}  \nn\\
 & \quad = \left\langle   {\rm Res}^{\lambda,\tK}(\bk,\Omega)\varphi_1, \varphi_2\right\rangle_{_{\mathscr{H}_{_{AB}} }} .  \nn
\end{align}
This proves assertion (3) of the Proposition \ref{prop2-resolvent}.
\end{proof}
%
%

 We next  bootstrap the above arguments to obtain a bound on the norm of ${\rm Res}^{\lambda,\tK}(\bk,\Omega):
 \mathscr{H}_{_{AB}}\to \mathscr{H}_{_{AB}}^2$.
\begin{corollary}\label{cor1-resolvent}
 For all $(\bk,\Omega)\in U$, defined in \eqref{Udef} we have the additional bound for ${\rm Res}^{\lambda,\tK}(\bk,\Omega)$:
 \begin{equation*}
 \left\|{\rm Res}^{\lambda,\tK}(\bk,\Omega)\right\|_{_{\mathscr{H}_{_{AB}}\to\mathscr{H}^2_{_{AB}}}} \le\ C(\lambda,\tK)\ .
 \end{equation*}
\end{corollary}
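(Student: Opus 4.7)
The plan is a straightforward elliptic bootstrap: upgrade the $L^2$ and gradient bounds already supplied by Proposition \ref{prop2-resolvent} to an $H^2$ bound, by using the equation satisfied by $\psi = {\rm Res}^{\lambda,\tK}(\bk,\Omega)\varphi$ to recover $\Delta\psi$ in $L^2$. Throughout, the constant is allowed to depend on $\lambda$ and $\tK$, so we only need qualitative control, with worst-case polynomial-in-$\lambda$ loss.

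First, given $\varphi\in\mathscr{H}_{_{AB}}$, set $\psi={\rm Res}^{\lambda,\tK}(\bk,\Omega)\varphi\in\mathscr{H}^2_{_{AB}}$. By definition, $\Pi_{_{AB}}(H^\lambda(\bk)-\Omega)\psi=\varphi$, so there exist scalars $\mu_A,\mu_B\in\C$ with
\[
(H^\lambda(\bk)-\Omega)\psi \;=\; \varphi \;+\; \mu_A p^\lambda_{\tK,A} \;+\; \mu_B p^\lambda_{\tK,B}.
\]
Taking inner products with $p^\lambda_{\tK,J}$, $J=A,B$, and inverting the near-identity Gram matrix via \eqref{pkI-normalize1} (exactly as in the proof of Corollary \ref{main-en-cor4}), one finds $|\mu_A|+|\mu_B|\lesssim (\|\varphi\| + \|H^\lambda(\bk)p^\lambda_{\tK,J}\|\cdot\|\psi\|)\lesssim \|\varphi\|$, where the last step uses \eqref{Hlambda-pk} and \eqref{deriv-bounds}.

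Next, expand $H^\lambda(\bk) = -\Delta -2i\bk\cdot\nabla + |\bk|^2 + V^\lambda(\bx)$ to isolate the Laplacian:
\[
-\Delta\psi \;=\; \varphi + \Omega\psi + \mu_A p^\lambda_{\tK,A} + \mu_B p^\lambda_{\tK,B} \;+\; 2i\bk\cdot\nabla\psi \;-\; |\bk|^2\psi \;-\; V^\lambda(\bx)\psi.
\]
Each term on the right is bounded in $L^2(\R^2/\Lambda_h)$ by a constant $C(\lambda,\tK)$ times $\|\varphi\|_{\mathscr{H}_{_{AB}}}$: the terms $\varphi$, $\Omega\psi$, $\mu_I p^\lambda_{\tK,I}$, $|\bk|^2\psi$ use the $L^2$ bound in \eqref{deriv-bounds}; the term $\bk\cdot\nabla\psi$ uses the gradient bound in \eqref{deriv-bounds}; and $V^\lambda(\bx)\psi$ uses the pointwise bound $|V^\lambda(\bx)|\le \lambda^2\|V\|_{L^\infty}+|E_0^\lambda|\lesssim\lambda^2$ together with the $L^2$ bound on $\psi$. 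Combining these yields $\|\Delta\psi\|_{L^2}\le C(\lambda,\tK)\|\varphi\|_{\mathscr{H}_{_{AB}}}$, and standard elliptic regularity on the torus (equivalence of $\|\psi\|_{H^2(\R^2/\Lambda_h)}$ with $\|\psi\|_{L^2}+\|\Delta\psi\|_{L^2}$) delivers the asserted bound.

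Since the construction throughout this step only uses inequalities that have already been established in the preceding sections, there is no genuine obstacle; the only mild subtlety is tracking the polynomial-in-$\lambda$ loss incurred from the $\|V^\lambda\|_{L^\infty}\lesssim\lambda^2$ estimate and from the factor $\lambda$ in the gradient bound \eqref{deriv-bounds}, both of which are harmlessly absorbed into $C(\lambda,\tK)$.
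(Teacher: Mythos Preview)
Your argument is correct and follows the same elliptic-bootstrap strategy as the paper: write out the equation satisfied by $\psi$, control the Lagrange multipliers, isolate $\Delta\psi$, and bound each term in $L^2$. The paper takes a slightly less direct route, first proving the $\mathscr{H}_{_{AB}}\to\mathscr{H}^2_{_{AB}}$ bound for the operator $A$ at the base point $(\tK,0)$ and then using the factorization ${\rm Res}^{\lambda,\tK}(\bk,\Omega)\varphi=A\tilde\varphi$ with $\|\tilde\varphi\|\lesssim\|\varphi\|$; it also bounds the multipliers via an $H^{-2}$ norm-equivalence trick rather than your direct inner-product computation. One small slip: the bound \eqref{Hlambda-pk} you invoke is for $H^\lambda(\bk)p^\lambda_{\bk,I}$ (same quasi-momentum in both places), whereas you need $H^\lambda(\bar\bk)p^\lambda_{\tK,J}$; the latter is only $O(1)$, not $e^{-c\lambda}$, because of the cross term $-2i(\bar\bk-\tK)\cdot\nabla p^\lambda_{\tK,J}$. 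This is harmless here since $|\mu_I|\le C(\lambda,\tK)\|\varphi\|$ is all you need, and you explicitly allow polynomial loss in $\lambda$.
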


\begin{proof}[Proof of Corollary \ref{cor1-resolvent}] 
Recall the mapping $A:\varphi\in\mathscr{H}_{_{AB}}\mapsto\psi=A\varphi\in\mathscr{H}^2_{_{AB}}$, which solves 
 $\Pi_{_{AB}} H^\lambda(\tK) \psi = \varphi$.  We claim that this mapping is bounded, {\it i.e.}
$\|A\|_{_{\mathscr{H}_{_{AB}}\to\mathscr{H}_{_{AB}}^2}} \ \le\ C$, with $C$ a constant depending on $\lambda$ and $\tK$. 

 Let us first prove Corollary \ref{cor1-resolvent}, assuming this claim. Assume $\varphi\in\mathscr{H}_{_{AB}}$.
   As shown in the discussion leading up 
  to the assertion of Proposition \ref{prop2-resolvent}, 
  ${\rm Res}^{\lambda,\tK}(\bk,\Omega)\varphi = A\tilde \varphi$, where $\tilde\varphi\in\mathscr{H}_{_{AB}}$ is the unique solution of  \eqref{eqn17}. Moreover, $\|\tilde\varphi\|_{_{\mathscr{H}_{_{AB}} }}\lesssim \|\varphi\|_{_{\mathscr{H}_{_{AB}} }}$. Therefore, 
  $\|{\rm Res}^{\lambda,\tK}(\bk,\Omega)\varphi\|_{_{\mathscr{H}^2_{_{AB}} }}\le C \|A\|_{_{\mathscr{H}_{_{AB}}\to\mathscr{H}_{_{AB}}^2}}\|\tilde\varphi\|_{_{\mathscr{H}_{_{AB}} }}\le C^\prime \|A\|_{_{\mathscr{H}_{_{AB}}\to\mathscr{H}_{_{AB}}^2}}
 \|\varphi\|_{_{\mathscr{H}_{_{AB}} }}$.

We now prove the above claim. By definition, $\psi$ satisfies
\[ \Pi_{_{AB}} \left[-\Delta_\bx-2 i\tK\cdot\nabla_\bx+|\tK|^2\ +\ V^\lambda(\bx)\ \right] \psi\ =\ \varphi ,\]
and therefore
\begin{equation}\label{eqn18}
 \left[-\Delta_\bx-2 i\tK\cdot\nabla_\bx+|\tK|^2\ +\ V^\lambda(\bx)\ \right] \psi\ =\ 
 \varphi\ -\ \sum_{I=A,B}\alpha_I p_{\tK,I}^\lambda ,
 \end{equation}
 for complex scalars $\alpha_A$ and $\alpha_B$. By Corollary \ref{main-en-cor4},
$ \|\psi\|_{_{\mathscr{H}_{_{AB}}}}\ \le\ C\ \|\varphi\|_{_{\mathscr{H}_{_{AB}}}}$, 
 and hence
 \[ \left\|\ \left[-\Delta_\bx-2 i\tK\cdot\nabla_\bx+|\tK|^2\ +\ V^\lambda(\bx)\ \right]\psi\ \right\|_{_{H^{-2}(\R^2/\Lambda_h)}}
 \le C\ \|\varphi\|_{_{\mathscr{H}_{_{AB}}}}\ .\]
 So, $\left\|  \varphi\ -\ \sum_{I=A,B}\alpha_I p_{\tK,I}^\lambda \right\|_{_{H^{-2}(\R^2/\Lambda_h)}}
\le C\ \|\varphi\|_{_{\mathscr{H}_{_{AB}}}}$,
 and consequently 
 \[\left\|\sum_{I=A,B}\alpha_I\ p_{\tK,I}^\lambda\right\|_{_{H^{-2}(\R^2/\Lambda_h)}}\le C\|\varphi\|_{_{\mathscr{H}_{_{AB}}}}.\]

Since any two norms on a 2-dimensional vector space are equivalent, we have
 $\sum_{I=A,B} |\alpha_I| \le C(\lambda,\tK)\ \|\varphi\|_{\mathscr{H}_{_{AB}}}$, where we make no attempt to see how 
 $C(\lambda,\tK)$ depends on $\lambda$ and $\tK$.  
 Returning now to \eqref{eqn18}, we have
 \[\|\Delta_\bx\psi\| \le 2|\tK| \|\nabla_\bx\psi\|\ +\ \left( |\tK|^2 + C\lambda^2\right)\ \|\psi\|
 \ +\ C(\lambda,\tK)\ \|\varphi\|_{_{\mathscr{H}_{_{AB}}}}\ .
 \]
By Lemma \ref{lem1-resolvent}, all terms on the right hand side are dominated by 
$C^\prime(\lambda,\tK)\ \|\varphi\|_{_{\mathscr{H}_{_{AB}}}}$. Therefore, 
$ \|\psi\|_{L^2(\R^2/\Lambda_h)},\ \ \|\Delta_\bx\psi\|_{L^2(\R^2/\Lambda_h)}\ \le\ 
C(\lambda,\tK)\ \|\varphi\|_{_{\mathscr{H}_{_{AB}}}}$. 
Consequently, $\|\psi\|_{_{\mathscr{H}_{_{AB}}^2}}\le C(\lambda,\tK)\ \|\varphi\|_{_{\mathscr{H}_{_{AB}}}}$, {\it i.e.}
$\|A\|_{_{\mathscr{H}_{_{AB}}\to\mathscr{H}_{_{AB}}^2}} \ \le\ C(\lambda,\tK)$.
 This completes the proof of the claim and therewith Corollary \ref{cor1-resolvent}.
\end{proof}

 \section{Dirac points of $H^\lambda$ in the strong binding regime}\label{dirac-points-sb}
  
 Let $\bK_\star$ be any vertex of $\B_h$. 
We study the  eigenvalue problem $(H^\lambda-\Omega) \psi = 0$,\ $\psi\in H^2_{\bK_\star}$, where
$H^\lambda=-\Delta+V^\lambda=-\Delta+\lambda^2 V-E_0^\lambda$.  
 Recall 
\begin{enumerate}
\item $H^\lambda$ and $\mathcal{R}$ map a dense subspace of $L^2_{\bK_\star}$ to itself. 
 \item The commutator vanishes; $\left[ H^\lambda,\mathcal{R}\right]=0$.
\item $L^2_{\bK_\star}= L^2_{\bK_\star,1}\oplus L^2_{\bK_\star,\tau}\oplus L^2_{\bK_\star,\overline\tau}$,
 where $L^2_{\bK_\star,\sigma}$ is the subspace of $L^2_{\bK_\star}$, that is also the eigenspace of $\mathcal{R}$ with eigenvalue $\sigma,\ \sigma=1,\tau,\overline\tau$. 
\end{enumerate} 

 Since $\left[ H^\lambda,\mathcal{R}\right]=0$, we may decouple the eigenvalue problem for $H^\lambda$ into the three eigenvalue problems, defined by the equation $H^\lambda \psi = \Omega\psi$, with  $\psi\in L^2_{\bK_\star,\sigma}$, 
 $ \sigma=1,\tau,\overline\tau$.  Thus, for $\sigma=1, \tau, \overline\tau$, we define the {\it $L^2_{\bK_\star,\sigma}-$ eigenvalue problem}
\begin{equation}
\left( H^\lambda - \Omega \right) \Psi\ =\ 0,\qquad \Psi\in H^2_{\bK_\star,\sigma}\ .
\label{L2sigma-evp}\end{equation}

To establish the existence of Dirac points at all vertices of $\brill_h$, by part (2) of Remark \ref{onedp-implies-all}, it suffices to consider $\bK_\star=\bK$.
%

\begin{theorem}\label{solve-L2tau-evp}
There exists $\lambda_\star>0$, depending on $V_0$,  such that the following holds: For all $\lambda>\lambda_\star$,
\begin{enumerate}
\item The $L^2_{\bK,\tau}$ eigenvalue problem, \eqref{L2sigma-evp} with $\sigma=\tau$, 
has a simple eigenvalue, $\Omega^\lambda$, which satisfies the bound
  $|\Omega^\lambda|\lesssim \rho_\lambda e^{-c\lambda}$, with corresponding eigenfunction 
$\Psi^\lambda=\Phi_1^\lambda$. Here, $\rho_\lambda$ is displayed in \eqref{rho-lam-def} and satisfies the upper and lower bounds \eqref{rho-lambda-bounds}.
\item $\Omega^\lambda$ is a simple $L^2_{\bK,\overline\tau}-$ eigenvalue of the eigenvalue problem, \eqref{L2sigma-evp} with $\sigma=\overline\tau$, with corresponding eigenfunction $\Phi_2^\lambda=(\mathcal{C}\circ\mathcal{I})[\Phi_1^\lambda](\bx)\equiv\overline{\Phi_1^\lambda(2\bx_c-\bx)}$.
 \item The $L^2_{\bK,1}$ eigenvalue problem, \eqref{L2sigma-evp} with $\sigma=1$, 
 has no nontrivial solution. Therefore, the eigenspace of $H^\lambda$, for the eigenvalue $\Omega^\lambda$,
  is two-dimensional and has a basis $\{\Phi_1^\lambda,\Phi_2^\lambda\}$. 
\item If in $(1)$ we consider instead the  eigenvalue problem with $\bK^\prime-$ pseudo-periodic boundary conditions ($\bK^\prime$ instead of $\bK$), then all assertions of parts $(1)-(3)$ hold with 
$(\bK,\tau)$ interchanged with $(\bK^\prime,\overline\tau)$. See Remark \ref{remark-on-PkAB}.
%
%
\end{enumerate}
\end{theorem}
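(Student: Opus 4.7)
The plan is a Lyapunov--Schmidt reduction within the symmetry-restricted space $L^2_{\bK,\tau}$, taking $P^\lambda_{\bK,A}$ of Remark \ref{remark-on-PkAB} as the single trial mode, followed by symmetry transport arguments for parts (2)--(4). Writing any $\Psi\in L^2_{\bK,\tau}\cap H^2$ as $\Psi=\alpha P^\lambda_{\bK,A}+\widetilde\Psi$ with $\widetilde\Psi\in\mathscr{H}_{_{AB}}^2\cap L^2_{\bK,\tau}$ for the space $\mathscr{H}_{_{AB}}$ from \eqref{scr-H-def} at $\tK=\bK$, and noting that $P^\lambda_{\bK,B}\in L^2_{\bK,\bar\tau}$ is automatically orthogonal to $L^2_{\bK,\tau}$ so that the single orthogonality $\langle P^\lambda_{\bK,A},\widetilde\Psi\rangle=0$ suffices, projecting $(H^\lambda-\Omega)\Psi=0$ onto $\mathscr{H}_{_{AB}}$ and onto $P^\lambda_{\bK,A}$ produces the usual bifurcation pair.

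Proposition \ref{prop2-resolvent} with $\tK=\bk=\bK$ then solves the orthogonal equation explicitly as $\widetilde\Psi(\Omega)=-\alpha\,{\rm Res}^{\lambda,\bK}(\bK,\Omega)\,\Pi_{_{AB}}H^\lambda P^\lambda_{\bK,A}$, analytic in $|\Omega|<\hat c$ with $\|\widetilde\Psi\|\lesssim|\alpha|\,e^{-c\lambda}$ by \eqref{Hlambda-pk}; because $\mathcal R$ commutes with $H^\lambda$, $\Pi_{_{AB}}$, and the resolvent equation (Proposition \ref{commutator}), the $L^2_{\bK,\tau}$ structure is preserved. Substitution yields the scalar bifurcation equation $\mathcal N^\lambda(\Omega)\,\alpha=0$ with
\begin{equation*}
\mathcal N^\lambda(\Omega)=\langle P^\lambda_{\bK,A},(H^\lambda-\Omega)P^\lambda_{\bK,A}\rangle-\langle H^\lambda P^\lambda_{\bK,A},{\rm Res}^{\lambda,\bK}(\bK,\Omega)\,\Pi_{_{AB}}H^\lambda P^\lambda_{\bK,A}\rangle.
\end{equation*}
Since $\|P^\lambda_{\bK,A}\|=1+O(e^{-c\lambda})$, one checks $\partial_\Omega\mathcal N^\lambda=-1+O(e^{-c\lambda})$, and the implicit function theorem produces a unique simple root $\Omega^\lambda$ near $0$.

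The sharp bound $|\Omega^\lambda|\lesssim\rho_\lambda e^{-c\lambda}$ is the main technical obstacle. A direct Cauchy--Schwarz bound via \eqref{Hlambda-pk} would give only $|\mathcal N^\lambda(0)|\lesssim e^{-c\lambda}$; the extra factor $\rho_\lambda$ must be extracted by expanding both inner products into overlap integrals indexed by triples of honeycomb sites via \eqref{Hpkhatv0}, and combining (i) the geometric fact $|\bv_1|=1>|\be_{A,1}|=1/\sqrt3$, which makes pure $A$--$A$ overlaps exponentially smaller than $\rho_\lambda$ by Corollary \ref{cor:expo-decay}, with (ii) a cancellation between the $A$--$B$--$A$ on-site contributions of the two pieces of $\mathcal N^\lambda(0)$ that is the Lyapunov--Schmidt counterpart of the tight-binding identity $\gamma(\bK)=0$ of Lemma \ref{gamma0}. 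Morally, the reduced scalar coincides with the $(A,A)$-entry of the $2\times2$ tight-binding matrix at $\bk=\bK$, where the off-diagonal hopping $-\overline{\gamma(\bK)}\,\rho_\lambda$ vanishes; the bookkeeping is delicate and will rely on the overlap estimates to be developed in Section \ref{ME-bounds}.

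Parts (2)--(4) are symmetry consequences. For (2), Proposition \ref{tau-taubar} together with the commutativity of $H^\lambda$ and $\mathcal C\circ\mathcal I$ (from $V$ real and inversion-symmetric about $\bx_c$) produces $\Phi_2^\lambda:=(\mathcal C\circ\mathcal I)[\Phi_1^\lambda]\in L^2_{\bK,\bar\tau}$ as an eigenfunction at $\Omega^\lambda$, simple by the analogous reduction in $L^2_{\bK,\bar\tau}$ starting from the trial mode $P^\lambda_{\bK,B}$. For (3), any $\Psi\in L^2_{\bK,1}\cap H^2$ is orthogonal to both $P^\lambda_{\bK,A}$ and $P^\lambda_{\bK,B}$ (distinct $\mathcal R$-eigenspaces are orthogonal), hence $\Psi\in\mathscr{H}_{_{AB}}^2$, and Corollary \ref{main-en-cor4} gives $\|H^\lambda\Psi\|\gtrsim\|\Psi\|$, ruling out all $L^2_{\bK,1}$-eigenvalues in a fixed neighborhood of $0$ and in particular $\Omega^\lambda$; the eigenspace at $\Omega^\lambda$ is therefore exactly $\mathrm{span}\{\Phi_1^\lambda,\Phi_2^\lambda\}$. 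Part (4) repeats the entire construction at $\bK'=-\bK$, where complex conjugation combined with the orientation reversal of the $\mathcal R$-eigenvalue labelling at the two inequivalent Brillouin-zone vertices swaps $\tau\leftrightarrow\bar\tau$, yielding the stated interchange.
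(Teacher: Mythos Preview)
Your approach matches the paper's: Lyapunov--Schmidt in $L^2_{\bK,\tau}$ with the single trial mode $P^\lambda_{\bK,A}$, yielding the scalar equation (the paper's $\mathcal{M}_{AA}^{\lambda,\bK}(\bK,\Omega)=0$), solved by contraction/implicit function, with the sharp bound on $\mathcal N^\lambda(0)$ forward-referenced to Proposition \ref{ME} parts (2) and (3), proved in Section \ref{ME-bounds}. Parts (2)--(4) via $\mathcal C\circ\mathcal I$ and the energy estimate on $\mathscr{H}_{_{AB}}$ are likewise the paper's arguments.

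Your heuristic for the bound $|\mathcal N^\lambda(0)|\lesssim\rho_\lambda e^{-c\lambda}$ is wrong, however, and would mislead you if you tried to carry it out. There is \emph{no} cancellation between the two pieces of $\mathcal N^\lambda(0)$, and $\gamma(\bK)=0$ plays no role here: that identity governs the off-diagonal $(A,B)$ entry of the $2\times2$ matrix, not the diagonal $(A,A)$ entry. Both pieces are \emph{separately} $O(\rho_\lambda e^{-c\lambda})$. For $\langle P^\lambda_{\bK,A},H^\lambda P^\lambda_{\bK,A}\rangle$, the $A$--$B$--$A$ terms you single out are handled by applying the \emph{relative} pointwise bound \eqref{p0-bound3} of Lemma \ref{lem:p0-lam-bounds} to one factor only, $p_0^\lambda(\by-\be_{B,\nu'})\lesssim e^{-c\lambda}p_0^\lambda(\by)$, leaving the other factor to form $\int|V_0(\by)|\,p_0^\lambda(\by)\,p_0^\lambda(\by-\be_{B,\nu})\,d\by=\lambda^{-2}\rho_\lambda$; see the estimation of $\mathscr{I}_7$. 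For the resolvent piece, the mechanism is \eqref{sqrt-bound}, namely $\|H^\lambda(\bK)p^\lambda_{\bK,A}\|\lesssim e^{-c\lambda}\sqrt{\rho_\lambda}$, so Cauchy--Schwarz against the bounded resolvent gives $\rho_\lambda e^{-2c\lambda}$ directly. Your point (i) via Corollary \ref{cor:expo-decay} is also too crude on its own: raw exponential decay produces exponents that cannot be compared to those implicit in $\rho_\lambda$; the actual tool throughout is the family of relative bounds in Lemma \ref{lem:p0-lam-bounds}, which in turn rest on the Geometric Lemma \ref{euclid1} and the constraint $r_0<r_{critical}$.
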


\nit Theorem \ref{solve-L2tau-evp} is proved below in  Section \ref{pf-solve-evp}.
\begin{corollary}[Dirac points]\label{tb-dirac-pts}
Let $\bK_\star$ be any vertex of the hexagonal Brillouin zone, $\B_h$. There exists $\lambda_\star$, depending on $V_0$, such that for all $\lambda>\lambda_\star$,
$-\Delta +\lambda^2 V(\bx) = H^\lambda+E_0^\lambda$ has a multiplicity two $L^2_{\bK_\star}-$ eigenvalue $E^\lambda_D=E_0^\lambda+\Omega^\lambda$, where $E_0^\lambda$ denotes the ground state eigenvalue of $-\Delta+\lambda^2 V_0(\bx)$ acting in $L^2(\R^2)$ and $|\Omega^\lambda|\lesssim \rho_\lambda\ e^{-c\lambda}$. Furthermore, $(\bK_\star,E^
\lambda)$ is a Dirac point in the sense of Definition \ref{dirac-pt-defn} with Fermi velocity, $|v_F^\lambda|$, given by 
\eqref{vF-expand}.
\end{corollary}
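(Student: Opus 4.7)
The corollary essentially repackages Theorem \ref{solve-L2tau-evp} in the language of Definition \ref{dirac-pt-defn} and extends from the single vertex $\bK$ to all six vertices of $\brill_h$.

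Writing $H^\lambda = -\Delta + \lambda^2 V - E_0^\lambda$, an $L^2_\bK$-eigenvalue $\Omega$ of $H^\lambda$ corresponds to the $L^2_\bK$-eigenvalue $E = E_0^\lambda + \Omega$ of $-\Delta + \lambda^2 V$. Setting $E_D^\lambda = E_0^\lambda + \Omega^\lambda$ and combining parts (1)--(3) of Theorem \ref{solve-L2tau-evp} with the $\mathcal{R}$-isotypic decomposition $L^2_\bK = L^2_{\bK,1} \oplus L^2_{\bK,\tau} \oplus L^2_{\bK,\overline\tau}$ (which is forced by $[H^\lambda,\mathcal{R}]=0$) pins down the eigenspace for $E_D^\lambda$ as exactly two-dimensional, spanned by $\Phi_1^\lambda \in L^2_{\bK,\tau}$ and $\Phi_2^\lambda = (\mathcal{C}\circ\mathcal{I})[\Phi_1^\lambda] \in L^2_{\bK,\overline\tau}$; after normalizing so that $\langle \Phi_a^\lambda,\Phi_b^\lambda\rangle=\delta_{ab}$, this verifies items (1) and (2) of Definition \ref{dirac-pt-defn}, while the bound $|\Omega^\lambda| \lesssim \rho_\lambda e^{-c\lambda}$ is a direct restatement of part (1) of that theorem. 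Propagating to the remaining five vertices of $\brill_h$ is immediate from Remark \ref{onedp-implies-all}(4): the commutation $[H^\lambda,\mathcal{R}]=0$ yields the same conclusion at $R\bK$ and $R^2\bK$, and complex conjugation (which preserves $H^\lambda$ since $V$ is real) transfers the result to $\bK'=-\bK$ and its two rotates.

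For item (3) of Definition \ref{dirac-pt-defn}, Remark \ref{onedp-implies-all}(2) (invoking \cites{FW:12}) automatically produces Lipschitz Floquet--Bloch eigenpair branches $\bk\mapsto(\Phi_{b_\star\pm1}^\lambda,E_{b_\star\pm1}^\lambda)$ with a conical touching at $(\bK_\star,E_D^\lambda)$ driven by \emph{some} slope $\lamsharplam\in\C$. The real content is therefore the sharp expansion \eqref{vF-expand}: $|\lamsharplam|=[\sqrt3/2+\mathcal{O}(e^{-c\lambda})]\,\rho_\lambda$. The plan is to apply formula \eqref{vF-Phi},
\[\lamsharplam \;=\; 2\left\langle \Phi_2^\lambda,\tfrac{1}{i}\partial_{x_1}\Phi_1^\lambda\right\rangle,\]
using the exponentially accurate approximations $\Phi_1^\lambda \approx P_{\bK,A}^\lambda$ and $\Phi_2^\lambda \approx P_{\bK,B}^\lambda$ supplied by the proof of Theorem \ref{solve-L2tau-evp}. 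Expanding the defining lattice sums \eqref{P-bk-I} for $P_{\bK,A}^\lambda$ and $P_{\bK,B}^\lambda$, the dominant terms come from nearest-neighbor pairs $\hat\bv\in\Lambda_A,\ \hat\bv+\be_{A,\nu}\in\Lambda_B,\ \nu=1,2,3$; the three phase factors $e^{i\bK\cdot\be_{A,\nu}}$ combine as in Lemma \ref{gamma0} to produce the $\sqrt3/2$ prefactor appearing in the Taylor expansion of $|\mathscr{W}_{_{\rm TB}}|$ at $\bK$, while each single-neighbor derivative overlap can be converted, via integration by parts against $(-\Delta+\lambda^2V_0-E_0^\lambda)p_0^\lambda=0$, into a multiple of $\rho_\lambda$ modulo exponentially small error.

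The main obstacle is this last calculation: separating the combinatorial prefactor $\sqrt3/2$ from the coarser overall scale $\rho_\lambda$, while uniformly controlling non-nearest-neighbor tails in the lattice sums as well as the correction $\Phi_j^\lambda - P_{\bK,J}^\lambda$. In the body of the paper this bookkeeping is organized cleanly by the Lyapunov--Schmidt reduction of Sections \ref{LS-reduction}--\ref{mu_pm-rescaled}, which identifies $\lamsharplam$ as a derivative of an eigenvalue of a reduced $2\times2$ matrix whose leading asymptotics is exactly $\rho_\lambda H_{_{\rm TB}}(\bk)$; given this structure, \eqref{vF-expand} drops out of the uniform convergence in Theorem \ref{main-theorem} together with the Taylor expansion of $|\mathscr{W}_{_{\rm TB}}|$ at $\bK_\star$ in Lemma \ref{gamma0}(iii).
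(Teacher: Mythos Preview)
Your proposal is correct and ultimately lands on the same argument as the paper: items (1) and (2) of Definition \ref{dirac-pt-defn} come directly from Theorem \ref{solve-L2tau-evp}, the conical structure in item (3) is supplied abstractly by Remark \ref{onedp-implies-all}(2), and the precise Fermi velocity \eqref{vF-expand} is read off from the uniform convergence in Theorem \ref{main-theorem} combined with the Taylor expansion of $\mathscr{W}_{_{\rm TB}}$ at $\bK_\star$ in Lemma \ref{gamma0}(iii). This is exactly what the paper does; see the paragraph following the statement and Remark \ref{vFne0-notused}, which makes the logical dependency (Theorem \ref{solve-L2tau-evp} $\Rightarrow$ Theorem \ref{main-theorem} $\Rightarrow$ $\lamsharplam\ne0$ and \eqref{vF-expand}) explicit.

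Your middle paragraph, sketching a direct attack on $\lamsharplam$ through the overlap formula \eqref{vF-Phi} and the approximations $\Phi_j^\lambda\approx P_{\bK,J}^\lambda$, is a genuine alternative route and would in principle yield \eqref{vF-expand} without invoking the full dispersion-surface analysis. But as you correctly diagnose, carrying it through requires the same nearest-neighbor / non-nearest-neighbor bookkeeping (Propositions \ref{ME-BA1}--\ref{ME-IJ-higher-order} and Section \ref{ME-bounds}) that already feeds into Theorem \ref{main-theorem}, so nothing is saved. The paper simply skips this detour and cites Theorem \ref{main-theorem} directly.
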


 To prove Corollary \ref{tb-dirac-pts} it is necessary to show that $(E_D^\lambda,\bK_\star)$ is a Dirac point in the sense of Definition \ref{dirac-pt-defn}. The properties that need to be checked are a consequence of  Theorem \ref{solve-L2tau-evp} and part (2) of Remark \ref{onedp-implies-all} and the main theorem, Theorem \ref{main-theorem}.
  In particular,  the non-vanishing of the Fermi velocity: 
\begin{equation}
\textrm{$\lamsharplam\ne0$,\ \ for all $\lambda>\lambda_\star$ sufficiently large}
\label{vF-nonzero}
\end{equation}
 is a consequence of the uniform convergence
\[ \left(\ E^\lambda_\pm(\bk)-E^\lambda_D\ \right) / \rho_\lambda \to \pm\mathscr{W}_{_{TB}}(\bk)\ ,\ \ \lambda\to\infty.
\] 
stated in Theorem \ref{conv_to_wallace-47}. See the discussion of Theorem \ref{main-theorem} in the introduction.

\begin{remark}\label{vFne0-notused} We wish to emphasize the dependencies of various assertions. 
The main theorem,  Theorem \ref{main-theorem} requires Theorem \ref{solve-L2tau-evp}. 
The property of Dirac points, that  $\lamsharplam$ is non-zero, \eqref{vF-nonzero}, 
follows from Theorem \ref{main-theorem}; see the discussion around \eqref{vF-expand}. 
Corollary \ref{tb-dirac-pts} follows from Theorem  \ref{solve-L2tau-evp} and Theorem \ref{main-theorem} .
 \end{remark}
\subsection{Proof of Theorem \ref{solve-L2tau-evp}}\label{pf-solve-evp}

We first show that: 
\begin{align}
& \text{Some } \Omega^\lambda, \text{ with } |\Omega^\lambda| \lesssim \rho_\lambda e^{-c\lambda}, 
\text{is an eigenvalue of } \label{assert1} \\
& H^\lambda \equiv -\Delta+\lambda^2V(\bx)-E_0^\lambda
\text{ with corresponding eigenfunction } \Phi_1^\lambda\in L^2_{\bK,\tau} \nn
\end{align}
 implies that part (1) holds. That is, such an eigenvalue, $\Omega^\lambda$,  is necessarily a \underline{simple} $L^2_{\bK,\tau}-$ eigenvalue, and furthermore that 
  parts (2) and (3) of the theorem hold. 
The assertion \eqref{assert1} will be proved below. Verification of part (4) is straightforward. 

Assuming \eqref{assert1},  since $\mathcal{C}\circ\mathcal{I}$ is an isomorphism of $L^2_{\bK,\tau}$ and $L^2_{\bK,\overline\tau}$, and commutes with $-\Delta+\lambda^2V(\bx)$,  it follows that $\Omega^\lambda$ is a  $L^2_{\bK,\overline\tau}-$ eigenvalue of $-\Delta+\lambda^2V(\bx)-E_0^\lambda$
 with corresponding eigenfunction: $\Phi_2^\lambda\equiv(\mathcal{C}\circ\mathcal{I})[\Phi_1^\lambda]$. 
 Note that
  \[
   H^\lambda-\Omega^\lambda\ =\  -\Delta+\lambda^2 V(\bx)-E_0^\lambda-\Omega^\lambda
 = e^{i\bK\cdot\bx} \left(H^\lambda(\bK)-\Omega^\lambda\right) e^{-i\bK\cdot\bx} . \] 
 Therefore, 
  the $L^2_{\bK}-$ kernel of $H^\lambda-\Omega^\lambda$, and hence the $L^2(\R^2/\Lambda_h)-$ kernel of $H^\lambda(\bK)-\Omega^\lambda$, are at least two-dimensional. 
  Furthermore, the resolvent bounds of Section \ref{resolvent-bounds} (Proposition \ref{prop2-resolvent}) imply, for $\lambda$ sufficiently large,  that  $H^\lambda-\Omega^\lambda$
  is invertible on the $L^2_{\bK}-$ orthogonal complement of
  \[ {\rm span}\{P^\lambda_{_{\bK,A}},P^\lambda_{_{\bK,B}}\}\ =\ 
   {\rm span}\{e^{i\bK\cdot\bx}p^\lambda_{_{\bK,A}},e^{i\bK\cdot\bx}p^\lambda_{_{\bK,B}}\}.\]
It follows that the $L^2_{\bK}$ kernel of $H^\lambda-\Omega^\lambda$ is exactly two-dimensional.
 Moreover, since $\Phi_1^\lambda$ and $\Phi_2^\lambda$ lie in orthogonal subspaces of $L^2_{\bK}$,  $\Omega^\lambda$ is a simple eigenvalue in the spaces $L^2_{\bK,\tau}$ and 
 $L^2_{\bK,\overline\tau}$, respectively. Furthermore, since the kernel of $H^\lambda-\Omega^\lambda$ is two-dimensional kernel,  $\Omega^\lambda$ cannot be not a $L^2_{\bK,1}-$ eigenvalue.
This completes the proof that assertion \eqref{assert1} implies parts (2) and (3). 
 
We now turn to the proof of assertion \eqref{assert1}, from which Theorem \ref{solve-L2tau-evp} will then follow. 
Consider $P^\lambda_{\bK,A}(\bx)$ and $P^\lambda_{\bK,B}(\bx)$, defined in \eqref{p-bk-I}. 

\begin{lemma}\label{Ptau-taubar}
$P^\lambda_{\bK,A}(\bx) \equiv e^{i\bK\cdot\bx}\ p_{\bK,A}^\lambda(\bx)\in L^2_{\bK,\tau}$ and 
  $P^\lambda_{\bK,B}(\bx)\equiv e^{i\bK\cdot\bx}\ p_{\bK,B}^\lambda(\bx)\in L^2_{\bK,\overline\tau}$. \\ In particular, $\mathcal{R}[P^\lambda_{\bK,A}]=\tau\ P^\lambda_{\bK,A}$ and 
  $\mathcal{R}[P^\lambda_{\bK,B}]=\overline\tau\ P^\lambda_{\bK,B}$. If $\bK$ is replaced by $\bK^\prime=-\bK$,
  then the same relations hold with $\tau$ and $\bar\tau$ interchanged. 
\end{lemma}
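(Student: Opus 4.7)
The strategy is to compute $\mathcal{R}[P^\lambda_{\bK,I}]$ directly from the defining sums \eqref{p-bk-I}, \eqref{P-bk-I}, using three ingredients: (a) the $120^\circ$ rotational invariance of $p_0^\lambda$ about the origin, inherited from hypothesis $(PW_3)$ on $V_0$; (b) the geometric fact that clockwise rotation by $120^\circ$ about the hexagon-center $\bx_c$ preserves each sublattice $\Lambda_A$, $\Lambda_B$ individually; and (c) a short calculation identifying the resulting constant phase factor on $\Lambda_I$.

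First I would verify that $P^\lambda_{\bK,I}\in L^2_\bK$ by reindexing $\hat\bv\mapsto\hat\bv+\bv$ in \eqref{P-bk-I}, which is valid because $\Lambda_I=\bv_I+\Lambda_h$ is invariant under $\Lambda_h$-translation; this produces $P^\lambda_{\bK,I}(\bx+\bv)=e^{i\bK\cdot\bv}P^\lambda_{\bK,I}(\bx)$. Then, applying $\mathcal{R}$, the algebraic identity
\begin{equation*}
\bx_c+R^*(\bx-\bx_c)-\hat\bv \;=\; R^*\bigl(\bx-\sigma(\hat\bv)\bigr),\qquad \sigma(\hat\bv)\,:=\,\bx_c+R(\hat\bv-\bx_c),
\end{equation*}
combined with $p_0^\lambda(R^*\by)=p_0^\lambda(\by)$, yields
\begin{equation*}
\mathcal{R}[P^\lambda_{\bK,I}](\bx)\;=\;\sum_{\hat\bv\in\Lambda_I}e^{i\bK\cdot\hat\bv}\,p_0^\lambda\bigl(\bx-\sigma(\hat\bv)\bigr).
\end{equation*}
Since $\sigma$ is a $120^\circ$ rotation about $\bx_c$, and $\bx_c$ is a hexagon-center (not a vertex of $\Honeycomb$), inspecting Figure \ref{fig:fundamental-cell} shows that $\sigma$ cycles the three $A$-vertices of the central hexagon among themselves and, separately, the three $B$-vertices; composing with $\Lambda_h$-translation-invariance of $\Honeycomb$, $\sigma$ restricts to a bijection of $\Lambda_I$ onto itself for $I=A,B$. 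Reindexing $\hat\bv\to\sigma(\hat\bv)$ then yields
\begin{equation*}
\mathcal{R}[P^\lambda_{\bK,I}](\bx)\;=\;\sum_{\hat\bv\in\Lambda_I}e^{i\bK\cdot\sigma^{-1}(\hat\bv)}\,p_0^\lambda(\bx-\hat\bv),
\end{equation*}
and it remains only to identify the phase $e^{i\bK\cdot(\sigma^{-1}(\hat\bv)-\hat\bv)}$ as a constant on $\Lambda_I$ equal to $\tau$ for $I=A$ and $\overline\tau$ for $I=B$.

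For this last step I would expand $\sigma^{-1}(\hat\bv)=\bx_c+R^*(\hat\bv-\bx_c)$ and apply the adjoint identity $\bK\cdot R^*\by=(R\bK)\cdot\by$ to obtain
\begin{equation*}
\bK\cdot\sigma^{-1}(\hat\bv)-\bK\cdot\hat\bv\;=\;\bk^*\cdot(\hat\bv-\bx_c),\qquad \bk^*\,:=\,R\bK-\bK.
\end{equation*}
A direct calculation from \eqref{k1k2-def}, \eqref{KKprime} gives $\bk^*=\bk_2\in\Lambda_h^*$, so for $\hat\bv\in\Lambda_I=\bv_I+\Lambda_h$ we have $\bk^*\cdot\hat\bv\equiv\bk^*\cdot\bv_I\pmod{2\pi}$, confirming that the phase is constant on $\Lambda_I$. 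Evaluating with $\bx_c=\tfrac12(1/\sqrt3,-1)^{\rm t}$ gives $\bk_2\cdot\bx_c=4\pi/3$, and with $\bv_A=0$, $\bv_B=(1/\sqrt3,0)^{\rm t}$ gives $\bk_2\cdot\bv_A=0$ and $\bk_2\cdot\bv_B=2\pi/3$; hence the phase equals $e^{-4\pi i/3}=\tau$ for $I=A$ and $e^{i(2\pi/3-4\pi/3)}=\overline\tau$ for $I=B$. The $\bK'=-\bK$ assertion follows by replacing $\bK\to-\bK$ throughout, which negates $\bk^*$ and interchanges $\tau\leftrightarrow\overline\tau$.

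The only substantive point in this plan is verifying the sublattice-preservation property of $\sigma$, which is immediate from the geometry in Figure \ref{fig:fundamental-cell}; everything else is routine bookkeeping of $\Lambda_h^*$-phases.
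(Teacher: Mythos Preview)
Your proposal is correct and follows essentially the same approach as the paper: both compute $\mathcal{R}[P^\lambda_{\bK,I}]$ by inserting the definition as a lattice sum, using rotational invariance of $p_0^\lambda$ to pull the rotation onto the lattice index, reindexing, and identifying the resulting phase via $R\bK-\bK=\bk_2\in\Lambda_h^*$. The only organizational difference is that the paper carries out the $A$ case with explicit coordinates (exploiting $\Lambda_A=\Lambda_h$ and the identities $\bx_c=\bv_2-\bv_B$, $R\bx_c=-\bv_B$) and then remarks that the $B$ case is similar, whereas you treat both sublattices uniformly via the abstract map $\sigma(\hat\bv)=\bx_c+R(\hat\bv-\bx_c)$ and its sublattice-preservation property.
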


We shall will see below that for large $\lambda$ these are, respectively, approximate 
$L^2_{\bK,\tau}$ and $L^2_{\bK,\overline\tau}$ eigenstates.

\begin{proof}[Proof of Lemma \ref{Ptau-taubar}]   Recall that $\bv_A=0$ and therefore $\Lambda_A=\Lambda_h$.  From \eqref{p-bk-I} we have 
\[ P^\lambda_{\bK,A}(\bx)=\sum_{\bv\in\Lambda_A}e^{i\bK\cdot\bv}p_0^\lambda(\bx-\bv).\]
Therefore, using that $\bx_c=\bv_2-\bv_B$ and $R\bx_c=-\bv_B$, we have 
 we also have 
\begin{align}
&\mathcal{R}\left[P^\lambda_{\bK,A}\right](\bx)=P^\lambda_{\bK,A}\left(\bx_c+R^*(\bx-\bx_c)\right)\nn\\
&=\ \sum_{\bv\in\Lambda_A=\Lambda_h}\ e^{i\bK\cdot\bv}p_0^\lambda\left(\bx_c+R^*(\bx-\bx_c)-\bv\right) \nn\\
&=\ \sum_{\bv\in\Lambda_h}\ e^{i\bK\cdot\bv}p_0^\lambda\left(R\bx_c+\bx-\bx_c-R\bv\right)\quad
 \textrm{($V_0$ is rotationally invariant, (PW3))} \nn\\
&=\ \sum_{\bv\in\Lambda_h}\ e^{i\bK\cdot\bv}p_0^\lambda\left(\bx-(R\bv+\bv_2)\right)\nn\\
&=\ e^{-iR\bK\cdot \bv_2}\times \sum_{\bv\in\Lambda_h}\ e^{iR\bK\cdot (R\bv+\bv_2)}p_0^\lambda\left(\bx-(R\bv+\bv_2)\right)\nn\\
&=\ e^{-i(\bK+\bk_2)\cdot\bv_2}\times \sum_{\bw\in\Lambda_h}\ e^{iR\bK\cdot \bw}p_0^\lambda\left(\bx-\bw\right)
=\ e^{-i\bK\cdot \bv_2}\ P^\lambda_{\bK,A}(\bx)\ =\ \tau\ P^\lambda_{\bK,A}(\bx)\ .\nn
\end{align}
The proof that $\mathcal{R}\left[P^\lambda_{\bK,B}\right]=\bar\tau\ P^\lambda_{\bK,B}$ is similar.
For this, one uses that $\Lambda_B=\bv_B+\Lambda_h$ and that $R\bv_B=\bv_B-\bv_1$. The proof for 
quasi-momentum $\bK^\prime=-\bK$ is similar. This completes the proof of Lemma \ref{Ptau-taubar}.
%

Continuing with the proof of part (1) of Theorem \ref{solve-L2tau-evp},  
 we consider the eigenvalue problem in $L^2_{\bK,\tau}$. Using Lemma \ref{Ptau-taubar}, the $L^2_{\bK^\prime,\bar\tau}$ eigenvalue problem is treated analogously. 
 
We seek a solution of the $L^2_{\bK,\tau}$ eigenvalue problem for the operator $H^\lambda=-\Delta+\lambda^2V-E_0^\lambda$,\ 
 $(H^\lambda-\Omega)\Psi=0$, with non-zero $\Psi\in L^2_{\bK,\tau}$  in the form
\begin{align}
\Psi\ &=\  \alpha_A\ P_{\bK,A}^\lambda(\bx)\ +\ \tilde\Psi(\bx),\label{Psi-decomp-tau}
\qquad 
\left\langle P_{\bK,A}^\lambda,\tilde\Psi \right\rangle=0,\quad 
 \tilde{\Psi}\in H^2_{\bK,\tau},
\end{align}
where $\alpha_A\in\C$ and  $\Omega$ near zero. 
Substitution of \eqref{Psi-decomp-tau} into \eqref{L2sigma-evp} with $\sigma=\tau$ 
implies that  $(\Psi,\Omega)$ solves the $L^2_{\bK,\tau}$ eigenvalue problem for the operator $H^\lambda$ if we can find $\alpha_A\in\C$ and $\tilde\Psi\in H^2_{\bK,\tau}$ with $\left\langle P_{\bK,A}^\lambda,\tilde\Psi \right\rangle=0$, such that:
\begin{equation} 
\left( H^\lambda - \Omega \right)\tilde\Psi\ =\ -\alpha_A\ \left( H^\lambda - \Omega \right)P_{\bK,A}^\lambda .
\label{tPsi-eqn}
\end{equation}

For $\tilde\Psi$, we set $\tilde\Psi=e^{i\bK\cdot\bx}\tilde\psi$ and note that the 
 condition $\mathcal{R}[\tilde\Psi]=\tau\tilde\Psi$ transforms as $\mathcal{R}_\bK \tilde\psi(\bx) = \tau \tilde \psi$, 
 where $\mathcal{R}_\bK=e^{-i\bK\cdot\bx}\circ \mathcal{R}\circ e^{i\bK\cdot\bx}$ is given by
 \begin{equation}
 \mathcal{R}_\bK[g](\bx) = e^{i\bk_2\cdot (\bx-\bx_c)}\ g\left(\bx+R^*(\bx-\bx_c)\right) .
    \nn\end{equation}
We therefore obtain the following problem for $(\tilde\psi,\Omega)$:
\begin{align}
&\left( H^\lambda(\bK)-\Omega \right)\ \tilde\psi\ =\ 
-\alpha_A\ \left( H^\lambda(\bK) - \Omega \right)p_{\bK,A}^\lambda,\label{tpsi-eqn}\\
& \mathcal{R}_\bK \tilde\psi(\bx) = \tau \tilde \psi,\quad  \tilde\psi\in H^2(\R^2/\Lambda_h) ,
\nn
\end{align}
where $H^\lambda(\bK)=e^{-i\bK\cdot\bx}H^\lambda e^{i\bK\cdot\bx}=
 -(\nabla_\bx+i\bK)^2+\lambda^2 V(\bx)-E_0^\lambda$.

Define the orthogonal projection: $\Pi_{A,\tau}:L^2(\R^2/\Lambda_h)\to\mathscr{H}_{_{A,\tau}}$ onto the Hilbert space
%
\begin{equation*}
\mathscr{H}_{A,\tau} = \left\{\tilde\psi\in L^2(\R^2/\Lambda_h) : \left\langle p_{\bK,A}^\lambda,\tilde\psi\right\rangle=0,\ 
 \tilde\psi(\bx_c+R^*(\bx-\bx_c))= \tau e^{-i\bk_2\cdot(\bx-\bx_c)} \tilde\psi(\bx) \right\} .
\end{equation*}
In the natural way, we introduce $\mathscr{H}_{_{A,\tau}}^2$ the subspace of $\mathscr{H}_{A}$ consisting of $H^2(\R^2/\Lambda_h)$ functions, and similarly $\mathscr{H}_{_{B,\bar\tau}}$ and $\mathscr{H}^2_{_{B,\bar\tau}}$, where
%
\begin{equation*}
\mathscr{H}_{B,\bar\tau} = \left\{\tilde\psi\in L^2(\R^2/\Lambda_h) : \left\langle p_{\bK,B}^\lambda,\tilde\psi\right\rangle=0,\ 
 \tilde\psi(\bx_c+R^*(\bx-\bx_c)) = \overline\tau e^{-i\bk_2\cdot(\bx-\bx_c)} \tilde\psi(\bx) \right\} .
\end{equation*}
%

 Applying   $I-\Pi_{_{A,\tau}}$ and $\Pi_{_{A,\tau}}$ to equation \eqref{tpsi-eqn} yields the equivalent system of  two equations
  for $\alpha_A\in\C$, $\tilde\psi \in\mathscr{H}_A^2$ and $\Omega\in\C$:
\begin{align}
& \alpha_A\ \times\ \left\langle p_{_{\bK,A}}^\lambda, \left[H^\lambda(\bK)-\Omega\right] p_{_{\bK,A}}^\lambda  \right\rangle\ +\ \left\langle \left[\ H^\lambda(\bK)-\overline\Omega\ \right] p_{_{\bK,A}}^\lambda,\tilde\psi \right\rangle\ =\ 0,
\label{IminusP_A0}\\
&\Pi_{_{A,\tau}}\left[H^\lambda(\bK)-\Omega\right]\tilde\psi\ =\ -\alpha_A\ \Pi_{_{A,\tau}}\  H^\lambda(\bK)\ p_{_{\bK,A}}^\lambda\ .
\label{Pi_A0}\end{align}

$\mathscr{H}_{A,\tau}$ and $\mathscr{H}_{B,\bar\tau}$ are subspaces of $\mathscr{H}_{_{A}}$ and $\mathscr{H}_{_{B}}$, respectively. Moreover, $\mathcal{R}_\bK$ (whose eigenvalues are $1, \tau$ and $\bar\tau$) commutes with $H(\bK)$ and leaves $\mathscr{H}_{A,\tau}$ and $\mathscr{H}_{B,\bar\tau}$ invariant. Furthermore, the range of $\Pi_{_{A,\tau}}$ is orthogonal to $p_{\bK,A}^\lambda$ by definition, and is also orthogonal to $p_{\bK,B}^\lambda$ since
\[ \left\langle \Pi_{_{A,\tau}}f,p^\lambda_{\bK,B} \right\rangle 
= \left\langle \mathcal{R}\Pi_{_{A,\tau}}f, \mathcal{R}p^\lambda_{\bK,B}\right\rangle
=\left\langle \tau\Pi_{_{A,\tau}}f, \overline{\tau} p^\lambda_{\bK,B}\right\rangle
=(\overline\tau)^2\left\langle \Pi_{_{A,\tau}}f, p^\lambda_{\bK,B}\right\rangle\ .\]
 Therefore, by Proposition  \ref{prop2-resolvent}, the  resolvent
 ${\rm Res}^{\lambda,\bK}(\bK,\Omega) \Pi_{{_{A,\tau}}}$ is well-defined as a mapping from $\mathscr{H}_{{_{A,\tau}}}$ to $\mathscr{H}_{_{A,\tau}}^2$, and the solution of \eqref{Pi_A0} is given by:
\begin{equation}
\tilde\psi\ =\ -\ \alpha_A\times {\rm Res}^{\lambda,\bK}(\bK,\Omega) \Pi_{A,\tau}\  H^\lambda(\bK) p_{_{\bK,A}}^\lambda\ .
\label{tpsi-A}
\end{equation}
Substitution of \eqref{tpsi-A} into \eqref{IminusP_A0} gives the scalar equation 
\begin{equation*}
 \mathcal{M}_{AA}^{\lambda,\bK}(\bK,\Omega)\ \times\ \alpha_A\ =\ 0,\ 
\end{equation*}
where $\mathcal{M}_{AA}^{\lambda,\bK}(\bK,\Omega)$ is defined by (all inner products in $L^2(\R^2/\Lambda_h)$):
\begin{align}
\mathcal{M}_{AA}^{\lambda,\bK}(\bK,\Omega)\ &
\equiv\left\langle p_{_{\bK,A}}^\lambda, \left[H^\lambda(\bK)-\Omega\right] p_{_{\bK,A}}^\lambda \right\rangle \nn\\
 &\quad\ - \ \left\langle H^\lambda(\bK) p_{_{\bK,A}}^\lambda, 
 {\rm Res}^{\lambda,\bK}(\bK,\Omega) \Pi_{_{A,\tau}} H^\lambda(\bK) p_{\bK,A}^\lambda
  \right\rangle\ .
\label{MA-def}\end{align}
%
%
Here, we have used that the range of ${\rm Res}^{\lambda,\bK}(\bK,\Omega) \Pi_{{_{A,\tau}}}=\Pi_{{_{A,\tau}}}{\rm Res}^{\lambda,\bK}(\bK,\Omega) \Pi_{{_{A,\tau}}}$ is orthogonal to ${\rm span}\{p_{\bK,A}^\lambda\}$.
To obtain a non-trivial solution we may set $\alpha_A=1$. Thus, we obtain the following result.
\begin{proposition}\label{Kstar-tau}
For $\lambda>\lambda_\star$, a non-trivial solution of the $L^2_{\bK,\tau}-$ eigenvalue problem, \eqref{L2sigma-evp} with $\sigma=\tau$, exists
if and only if  $\mathcal{M}_{AA}^{\lambda,\bK}(\bK,\Omega)=0$.
\end{proposition}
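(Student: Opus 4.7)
The plan is to formalize the Lyapunov--Schmidt reduction already set up in the display preceding the statement. For the forward direction, suppose $\Psi\in L^2_{\bK,\tau}$ is a non-trivial solution of $(H^\lambda-\Omega)\Psi=0$. By Lemma \ref{Ptau-taubar}, $P_{\bK,A}^\lambda\in L^2_{\bK,\tau}$, so the $L^2$--orthogonal splitting $\Psi=\alpha_A P_{\bK,A}^\lambda+\tilde\Psi$ with $\langle P_{\bK,A}^\lambda,\tilde\Psi\rangle=0$ forces $\tilde\Psi\in L^2_{\bK,\tau}$ as well, and hence $\tilde\psi=e^{-i\bK\cdot\bx}\tilde\Psi\in\mathscr{H}^2_{A,\tau}$. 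Testing the transformed equation $(H^\lambda(\bK)-\Omega)(\alpha_A p^\lambda_{\bK,A}+\tilde\psi)=0$ against $p^\lambda_{\bK,A}$ (using self-adjointness) and applying $\Pi_{A,\tau}$ reproduces precisely the coupled system \eqref{IminusP_A0}--\eqref{Pi_A0}.

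I next apply Proposition \ref{prop2-resolvent} to invert \eqref{Pi_A0}. Two preliminary checks are required. First, $\mathscr{H}_{A,\tau}\subset\mathscr{H}_{AB}$: for $\tilde\psi\in\mathscr{H}_{A,\tau}$, Lemma \ref{Ptau-taubar} gives $\mathcal{R}_\bK p^\lambda_{\bK,B}=\bar\tau\, p^\lambda_{\bK,B}$, and unitarity of $\mathcal{R}_\bK$ yields
\[
\langle p^\lambda_{\bK,B},\tilde\psi\rangle=\langle \mathcal{R}_\bK p^\lambda_{\bK,B},\mathcal{R}_\bK\tilde\psi\rangle=\tau^2\langle p^\lambda_{\bK,B},\tilde\psi\rangle,
\]
so $\langle p^\lambda_{\bK,B},\tilde\psi\rangle=0$ since $\tau^2\neq 1$. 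Second, since $\mathcal{R}_\bK$ commutes with $H^\lambda(\bK)$ and with $\Pi_{AB}$, the resolvent $\mathrm{Res}^{\lambda,\bK}(\bK,\Omega)$ commutes with $\mathcal{R}_\bK$ and preserves $\mathscr{H}_{A,\tau}$. For $|\Omega|<\hat c$ and $\lambda>\lambda_\star$ sufficiently large, \eqref{Pi_A0} then has the unique solution $\tilde\psi=-\alpha_A\,\mathrm{Res}^{\lambda,\bK}(\bK,\Omega)\,\Pi_{A,\tau}\,H^\lambda(\bK)\,p^\lambda_{\bK,A}$, and substitution into \eqref{IminusP_A0} reduces the problem to the scalar equation $\alpha_A\,\mathcal{M}_{AA}^{\lambda,\bK}(\bK,\Omega)=0$.

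The main subtlety is ruling out the degenerate branch $\alpha_A=0$, $\tilde\psi\neq 0$, which would allow non-trivial eigenfunctions without forcing the scalar equation. If $\alpha_A=0$, then $\Psi=e^{i\bK\cdot\bx}\tilde\psi$ is non-trivial and $(H^\lambda(\bK)-\Omega)\tilde\psi=0$, so a fortiori $\Pi_{A,\tau}(H^\lambda(\bK)-\Omega)\tilde\psi=0$; invertibility from Proposition \ref{prop2-resolvent}(1) then forces $\tilde\psi=0$, a contradiction. Hence $\alpha_A\ne 0$ and $\mathcal{M}_{AA}^{\lambda,\bK}(\bK,\Omega)=0$. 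For the converse, assume $\mathcal{M}_{AA}^{\lambda,\bK}(\bK,\Omega)=0$, set $\alpha_A=1$, and define $\tilde\psi\in\mathscr{H}^2_{A,\tau}$ via the resolvent formula. Set $F\equiv(H^\lambda(\bK)-\Omega)(p^\lambda_{\bK,A}+\tilde\psi)$: the identity $\mathcal{M}_{AA}^{\lambda,\bK}=0$ gives $\langle p^\lambda_{\bK,A},F\rangle=0$; the construction of $\tilde\psi$ gives $\Pi_{A,\tau}F=0$, and since $F$ lies in the $\tau$-eigenspace of $\mathcal{R}_\bK$ it is automatically orthogonal to $p^\lambda_{\bK,B}$, so $\Pi_{AB}F=\Pi_{A,\tau}F=0$. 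Lemma \ref{orthog}(1) now delivers $F=0$, so $\Psi=e^{i\bK\cdot\bx}(p^\lambda_{\bK,A}+\tilde\psi)$ is a non-trivial element of $L^2_{\bK,\tau}$ satisfying $(H^\lambda-\Omega)\Psi=0$, completing the equivalence.
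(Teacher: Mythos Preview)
Your proof is correct and follows essentially the same Lyapunov--Schmidt reduction that the paper sets up in the paragraphs immediately preceding the proposition; the paper simply states the result as a summary of that reduction, while you have made both directions (and the exclusion of the degenerate branch $\alpha_A=0$) explicit. One minor point: in the $\alpha_A=0$ case and in the final step of the converse, you should invoke invertibility/uniqueness via $\Pi_{AB}$ rather than $\Pi_{A,\tau}$ (since Proposition~\ref{prop2-resolvent} is stated for $\Pi_{AB}$), but this is immediate because $(H^\lambda(\bK)-\Omega)\tilde\psi=0$ already gives $\Pi_{AB}(H^\lambda(\bK)-\Omega)\tilde\psi=0$, and in the converse $F\in\mathscr{H}_{A,\tau}$ together with $\Pi_{A,\tau}F=0$ directly yields $F=0$ without needing Lemma~\ref{orthog}(1).
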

 
 The equation $\mathcal{M}_{AA}^{\lambda,\bK}(\bK,\Omega)=0$ may be rewritten as
 \begin{align}
&\left\langle p_{_{\bK,A}}^\lambda, p_{_{\bK,A}}^\lambda \right\rangle\times\Omega\ 
\nn\\ 
&= \left\langle p_{_{\bK,A}}^\lambda, H^\lambda(\bK) p_{_{\bK,A}}^\lambda \right\rangle
- \left\langle H^\lambda(\bK) p_{_{\bK,A}}^\lambda, 
 {\rm Res}^{\lambda,\bK}(\bK,\Omega) \Pi_{_{A,\tau}} H^\lambda(\bK) p_{\bK,A}^\lambda  \right\rangle\ .
\label{eqn-for-Omegalam}
 \end{align}
%
We now solve \eqref{eqn-for-Omegalam} for $\lambda\mapsto\Omega^\lambda$, for $\lambda$ sufficiently large.
We shall make use of:
\begin{enumerate}
\item[(a)]  the analyticity and bounds on the mapping  of $\Omega\mapsto {\rm Res}^{\lambda,\bK}(\bK,\Omega)$
 of Proposition \ref{prop2-resolvent} and
 \item[(b)] bounds of Proposition \ref{ME} in Section \ref{sec:M-expanded}, proved in Section \ref{ME-bounds},
 on expressions of the type appearing on the right hand side of \eqref{eqn-for-Omegalam}, for general quasi-momentum, $\bk$, in an appropriate domain in $\C^2$ and small energy, $\Omega$.
 \end{enumerate}

We may rewrite 
  \eqref{eqn-for-Omegalam} as 
  \begin{align}
 \mathfrak{F}(\Omega,\lambda)\ &=\  \Omega,  \label{eqn-for-Omegalam1}
 \end{align}
 where
\begin{align}
\mathfrak{F}(\Omega,\lambda)\ &\equiv\ \Omega^\lambda_0\ +\  \Omega\ \mathscr{R}(\Omega;\lambda)\ ,\nn\\
  \Omega^\lambda_0\ &\equiv\ 
  \left[ \left\langle p_{_{\bK,A}}^\lambda, H^\lambda(\bK) p_{_{\bK,A}}^\lambda \right\rangle
- \left\langle H^\lambda(\bK) p_{_{\bK,A}}^\lambda, 
 {\rm Res}^{\lambda,\bK}(\bK,0)  \Pi_{_{A,\tau}}  H^\lambda(\bK)\ p_{\bK,A}^\lambda  \right\rangle \right]\nn\\
&\qquad\qquad \times \left[ \left\langle p_{_{\bK,A}}^\lambda, p_{_{\bK,A}}^\lambda \right\rangle\right]^{-1}\ ,\quad {\rm and}\nn\\
\mathscr{R}(\Omega;\lambda) &\equiv 
 - \int_0^1 ds \left\langle H^\lambda(\bK) p_{\bK,A}^\lambda , {\rm Res}_1^{\lambda,\bK}(\bK,s\Omega)
  H^\lambda(\bK) p_{\bK,A}^\lambda\right\rangle  \times \left[ \left\langle p_{_{\bK,A}}^\lambda, p_{_{\bK,A}}^\lambda \right\rangle\right]^{-1} ,
  \label{scrR-def}
 \end{align}
 where the operator ${\rm Res}_1^{\lambda,\bK}$ is defined in \eqref{Resj-def}.
We proceed to solve  \eqref{eqn-for-Omegalam1} for $\Omega=\Omega^\lambda$,  for $\lambda$ sufficiently large. 
 Note that by \eqref{pkI-normalize1}
   for $I=J=A$: $\left\langle p_{_{\bK,A}}^\lambda, p_{_{\bK,A}}^\lambda \right\rangle=\Big(1+\mathcal{O}(e^{-c\lambda})\Big)$. Note also that $\mathfrak{F}(\Omega,\lambda)$ is real-valued for $\Omega$ real, by self-adjointness of 
   ${\rm Res}_1^{\lambda,\bK}(\bK,\Omega)$. 

We first claim that $|\Omega_0^\lambda|\le C_0\times \rho_\lambda\times e^{-c\lambda}$,
 for some $C_0>0$. 
   This bound for $\Omega_0^\lambda$
  is  consequence of parts (2) and (3) of Proposition \ref{ME} for the special case $\bk=\bK$.
 %
   %
   
  Let $\mathfrak{g}_0(\lambda)=C_0\times \rho_\lambda\times e^{-c\lambda}$. For all $|\Omega|\le 2\mathfrak{g}_0(\lambda)$, we have
 \begin{align*}
 \left|\ \mathfrak{F}(\Omega,\lambda)\ \right| &\le \mathfrak{g}_0(\lambda) +
  2\mathfrak{g}_0(\lambda)\times e^{-c\lambda} \le 2\mathfrak{g}_0(\lambda) ,
  \end{align*}
  provided $\lambda$ is sufficiently large. Here we use \eqref{deriv-bounds} and the exponential smallness of 
  $\|H^\lambda(\bK)p_{\bK,\bA}^\lambda\|$; see \eqref{Hlambda-pk}.

We claim furthermore, for all $\lambda$ sufficiently large, that the mapping $\Omega\mapsto \mathfrak{F}(\Omega,\lambda)$ is a strict contraction on the set: $|\Omega|\le 2\mathfrak{g}_0(\lambda)$.  Indeed, let $\Omega_1$ and $\Omega_2$ be such that 
$|\Omega_j|\le 2\mathfrak{g}_0(\lambda)$, $j=1,2$, and note that
\begin{align}
&\mathfrak{F}(\Omega_1,\lambda)-\mathfrak{F}(\Omega_2,\lambda)
 = 
(\Omega_1-\Omega_2)\mathscr{R}(\Omega_1,\lambda) + \Omega_2\ \left(\mathscr{R}(\Omega_1,\lambda)-\mathscr{R}(\Omega_2,\lambda)\right) . \label{frakFdiff}
\end{align}
The first term on the right hand side of \eqref{frakFdiff} is $\lesssim  e^{-c\lambda}\ |\Omega_1-\Omega_2|$,
 by \eqref{deriv-bounds} (for $j=1$) and the bound \eqref{Hlambda-pk},
 $\|H^\lambda(\bK)p_{\bk,A}^\lambda\|\lesssim e^{-c\lambda}$.
 To obtain a similar upper bound on the second term on the right hand side of \eqref{frakFdiff}  we proceed as follows. Note that
 \begin{align}
 \mathscr{R}(\Omega_1,\lambda)-\mathscr{R}(\Omega_2,\lambda) 
 = (\Omega_1-\Omega_2)\ \int_0^1 \partial_{_{\Omega }}\mathscr{R}\left(\widehat{\Omega}(s),\lambda\right)\ ds,
\label{Rdiff} \end{align}
where $\widehat{\Omega}(s)=s\Omega_1+(1-s)\Omega_2$.
 From \eqref{scrR-def}, we obtain 
 \begin{align}
 &\partial_{_{\Omega }}\mathscr{R}\left(\widehat{\Omega}(s),\lambda\right)\nn\\
 &= 
 -\ \int_0^1\ s\ ds\ \left\langle H^\lambda(\bK) p_{\bK,A}^\lambda , {\rm Res}_2^{\lambda,\bK}\left(\bK,\widehat{\Omega}(s)\right)
  H^\lambda(\bK) p_{\bK,A}^\lambda\right\rangle  \times \left[ \left\langle p_{_{\bK,A}}^\lambda, p_{_{\bK,A}}^\lambda \right\rangle\right]^{-1}\ ,\label{DscrR-def}
 \end{align}
where ${\rm Res}_2^{\lambda,\bK}$ is defined in \eqref{Resj-def}. Combining \eqref{Rdiff} and \eqref{DscrR-def} with 
\eqref{deriv-bounds} (for $j=2$) and the bound \eqref{Hlambda-pk}, we obtain that the second term 
on the right hand side of \eqref{frakFdiff} is $\lesssim e^{-c\lambda}\ |\Omega_1-\Omega_2|$.
 
 It follows that for $|\Omega_1|$ and $|\Omega_2|\le 2\mathfrak{g}_0(\lambda)$, we have 
 \[
 \left|\ \mathfrak{F}(\Omega_1,\lambda)-\mathfrak{F}(\Omega_2,\lambda)\ \right|\ \lesssim\ e^{-c\lambda}\ |\Omega_1-\Omega_2|.\] 
  Therefore, for $\lambda$ sufficiently large $\Omega\mapsto \mathfrak{F}(\Omega,\lambda)$ 
 is a strict contraction mapping of $[-2\mathfrak{g}_0(\lambda),2\mathfrak{g}_0(\lambda)]$ to itself, and therefore has a unique real fixed point $\Omega^\lambda$. Therefore,  \eqref{eqn-for-Omegalam1}, and thus 
 $\mathcal{M}_{AA}^{\lambda,\bK}(\bK,\Omega^\lambda)=0$,  
  has a unique real solution $\Omega^\lambda$ which satisfies 
 $ |\Omega^\lambda|\le 2C_0\times \rho_\lambda\times e^{-c\lambda}$.

We have therefore found, for $\lambda>\lambda_\star$ sufficiently  large,  $\Omega^\lambda\in\R$ near zero and $\tilde\psi^\lambda\in H^2(\R^2/\Lambda_h)$, given by:
\begin{equation}
\tilde\psi^\lambda\ =\ -\ {\rm Res}^{\lambda,\bK}(\bK,\Omega^\lambda)  \Pi_{_{A,\tau}}\ H^\lambda(\bK)\ p_{_{\bK,A}}^\lambda
\label{psi-tilde1}
\end{equation}
 ($\alpha_A=1$) such that the pair $(\Omega^\lambda,\tilde\psi^\lambda)$ solve \eqref{tpsi-eqn}. Therefore,
$\tilde\Psi^\lambda\ \equiv\ e^{i\bK\cdot\bx} \tilde\psi^\lambda \in H^2_\bK$ and satisfies \eqref{tPsi-eqn} with $\alpha_A=1$.

We claim that $\tilde\Psi^\lambda\in H^2_{\bK,\tau}$. To see this, we rewrite \eqref{psi-tilde1} as:
\begin{align*}
\tilde\Psi^\lambda &= - \left( e^{i\bK\cdot\bx}\ {\rm Res}^{\lambda,\bK}(\bK,\Omega^\lambda)\ e^{-i\bK\cdot\bx} \right)\ 
 \circ\ \left( e^{i\bK\cdot\bx}\  \Pi_{_{A,\tau}}\ e^{-i\bK\cdot\bx} \right)\
 \circ\ H^\lambda\ P_{_{\bK,A}}^\lambda .
\end{align*}
Next, recall that $P_{\bK,A}\in H^2_{\bK,\tau}$ (Lemma \ref{Ptau-taubar}); that is, $P_{\bK,A}\in H^2_{\bK}$ and $\mathcal{R}[P_{\bK,A}]=\tau P_{\bK,A}$. By $120^\circ$ rotational invariance of $H^\lambda$, we have 
$H^\lambda\ P_{_{\bK,A}}^\lambda\in L^2_{\bK,\tau}$,  $ \Pi_{_{A,\tau}}e^{-i\bK\cdot\bx}\ H^\lambda\ P_{_{\bK,A}}^\lambda\in \mathscr{H}_{_{A,\tau}}$,  and $e^{i\bK\cdot\bx}  \Pi_{_{A,\tau}}e^{-i\bK\cdot\bx} H^\lambda\ P_{_{\bK,A}}^\lambda
\in \mathscr{S}$, the subspace of $L^2_{\bK,\tau}$ defined by $\mathscr{S}=\{f\in L^2_{\bK,\tau} : f\perp P_{\bK,A}^\lambda\}$ . 

 Finally, to prove that $\tilde\Psi^\lambda\in H^2_{\bK,\tau}$,  we need to show that  $G\mapsto e^{i\bK\cdot\bx}\ {\rm Res}^{\lambda,\bK}(\bK,\Omega^\lambda)\ e^{-i\bK\cdot\bx}G$ is a mapping from $\mathscr{S}$ to $H^2_{\bK,\tau}$. Let $G\in\mathscr{S}$. Then, 
  $e^{-i\bK\cdot\bx}G\in \mathscr{H}_{_{A,\tau}}$ and, by Proposition \ref{prop2-resolvent},  ${\rm Res}^{\lambda,\bK}(\bK,\Omega^\lambda)\ e^{-i\bK\cdot\bx}G\in\mathscr{H}^2_{_{A,\tau}}$. Finally, $e^{i\bK\cdot\bx}\ {\rm Res}^{\lambda,\bK}(\bK,\Omega^\lambda)\ e^{-i\bK\cdot\bx}G\in H^2_{\bK,\tau}$ .
 %
%
 This completes the proof of \eqref{assert1} and hence the proof Theorem \ref{solve-L2tau-evp}.
  \end{proof}

\section{Low-lying dispersion surfaces via Lyapunov-Schmidt reduction}\label{LS-reduction}

Fix $K_{max}>0$ and let $\tK\in\R^2$ with $|\tK|\le K_{max}$. Let 
\begin{equation*}
U=\left\{(\bk,\Omega)\in\C^2\times\C\ :\ |\bk-\tK|<\hat{c}\lambda^{-1},\ |\Omega|<c'\ \right\},
\end{equation*}
where $\hat{c}$ is less than the constant appearing in Corollary \ref{main-en-cor3}, chosen so that $\hat{c} K_{max}$ is small enough.
Recall, from Section \ref{resolvent-bounds},  the orthogonal projection: $\Pi_{_{AB}}:L^2(\R^2/\Lambda_h)\to\mathscr{H}_{_{AB}}$ onto
\begin{equation*}
\mathscr{H}_{_{AB}}\ =\ \left\{\tilde\psi\in L^2(\R^2/\Lambda_h)\ :\ \left\langle p_{\tK,I}^\lambda,\tilde\psi\right\rangle=0,\ {\rm for}\ I=A,B \right\}
\end{equation*}
and  
 $\mathscr{H}_{_{AB}}^2=\mathscr{H}_{_{AB}}\cap H^2(\R^2/\Lambda_h)$.
%
For $(\bk,\Omega)\in U$, we look for solutions of 
\begin{equation}
\left[ H^\lambda(\bk)\ -\ \Omega \right]\psi\ =\ 0,\ \ \psi\in H^2(\R^2/\Lambda_h)\ .
\label{EVP}
\end{equation}

By part (2) of Lemma \ref{orthog},
 that any $\psi\in H^2(\R^2/\Lambda_h)$ may be written in the form
\begin{equation}
\psi\ =\ \sum_{I=A,B}\ \alpha_I\ p_{\bk,I}^\lambda\ +\ \tilde\psi,\ \ \tilde{\psi}\in \mathscr{H}_{_{AB}}^2,
\label{psi-decomp-A}\end{equation}
 $\alpha_A, \alpha_B\in\C$.
 Note that $\mathscr{H}_{_{AB}}^2$ is defined in terms  of the  modes: $p_{\tK,I}^\lambda, \ I=A, B$, and is independent of $\bk$. 
 
Substitution of \eqref{psi-decomp-A} into \eqref{EVP} yields the equation
\begin{equation}
\sum_{I=A, B}\alpha_I\ \left[\ H^\lambda(\bk)-\Omega\ \right]\ p_{\bk,I}^\lambda\ +\ 
\left[\ H^\lambda(\bk)-\Omega\ \right]\ \tilde{\psi}\ =\ 0 .
\label{substit}\end{equation}
By part (1) of Lemma \ref{orthog}, equation \eqref{substit} is equivalent to the system of equations obtained by applying 
$\Pi_{_{AB}}$ to \eqref{substit} and by taking the inner product of \eqref{substit} with $p^\lambda_{_{\bar\bk,J}}, \ J=A, B$:
\begin{align}
\Pi_{_{AB}}\left[H^\lambda(\bk)-\Omega\right]\tilde\psi\ +\ \sum_{I=A,B}\alpha_I\ \Pi_{_{AB}}\  \left[H^\lambda(\bk)-\Omega\right]\ p_{\bk,I}^\lambda\ &=\ 0\label{PiAB} , \\
\left\langle p_{_{\overline\bk,J}}^\lambda, \sum_{I=A,B} \alpha_I \left[H^\lambda(\bk)-\Omega\right] p_{_{\bk,I}}^\lambda  \right\rangle\ +\ \left\langle \left[H^\lambda(\overline\bk)-\overline\Omega\right] p_{_{\overline\bk,J}}^\lambda,\tilde\psi\right\rangle\ &=\ 0 .\label{project}
\end{align}
We next solve \eqref{PiAB} for $\tilde\psi\in\mathscr{H}_{_{AB}}^2$ in the form
\begin{equation}
\tilde\psi\ =\ -\ \sum_{I=A,B}\alpha_I\ {\rm Res}^{\lambda,\tK}(\bk,\Omega) \Pi_{_{AB}}\  \left[H^\lambda(\bk)-\Omega\right] p_{\bk,I}^\lambda,
\label{tpsi}
\end{equation}
where ${\rm Res}^{\lambda,\tK}(\bk,\Omega)$ is the resolvent defined and bounded in Proposition  \ref{prop2-resolvent}.
Substituting \eqref{tpsi} into \eqref{project} gives the equivalent system $\mathcal{M}^{\lambda,\tK} \alpha=0$:
\begin{equation*}
\sum_{I=A,B}\  \mathcal{M}^{\lambda,\tK}_{JI}(\bk,\Omega)\ \alpha_I\ =\ 0,\ \ J=A,B, 
\end{equation*}
where
\begin{align}
\mathcal{M}^{\lambda,\tK}_{JI}(\bk,\Omega)\ &=\ 
\left\langle p_{_{\overline\bk,J}}^\lambda, \left[H^\lambda(\bk)-\Omega\right] p_{_{\bk,I}}^\lambda \right\rangle
\nn\\
 &\quad\ - \ \left\langle \left[H^\lambda(\overline\bk)-\overline\Omega\right] p_{_{\overline\bk,J}}^\lambda, 
 {\rm Res}^{\lambda,\tK}(\bk,\Omega) \Pi_{_{AB}}\  \left[H^\lambda(\bk)-\Omega\right] p_{\bk,I}^\lambda
  \right\rangle\ .
\label{MJI-def}\end{align}

\begin{remark}\label{calM-analytic}
We note that
\begin{enumerate}
\item[(a)] the mapping $(\bk,\Omega)\mapsto \mathcal{M}^{\lambda,\tK}_{JI}(\bk,\Omega)$ is analytic on the domain:
\begin{equation}
U\ \equiv\ \left\{(\bk,\Omega)\in\C^2\times\C\ :\ |\bk-\tK|<\hat{c}\lambda^{-1},\ |\Omega|<\hat{c}\ \right\}\ .
\label{Udef1}
\end{equation}
\item[(b)] for real $\bk$ and $\Omega$,
the matrix $\mathcal{M}^{\lambda,\tK}(\bk,\Omega)$ is Hermitian:
\begin{equation}
\left[\ \mathcal{M}_{JI}^{\lambda,\tK}(\bk,\Omega)\ \right]^* =\ \mathcal{M}_{IJ}^{\lambda,\tK}(\bk,\Omega),\ 
\ \bk\in\R^2,\ \ \Omega\in\R.
\label{hermitian}\end{equation}
Relation \eqref{hermitian} follows from self-adjointness of $H^\lambda(\bk)-\Omega$ and 
${\rm Res}^{\lambda,\tK}(\bk,\Omega)\Pi_{AB}$ for real $(\bk,\Omega)\in U$; see Proposition \ref{prop2-resolvent}. 
\end{enumerate}
\end{remark}

From the above the discussion we have:

\begin{proposition}\label{evp-det0} 
A given $(\bk,\Omega)\in U$, defined in \eqref{Udef1}, admits a nonzero solution, $\psi\in H^2(\R^2/\Lambda_h)$, 
 of $[H^\lambda(\bk) - \Omega]\psi=0$  if and only if 
 \begin{equation*}
 {\rm det}\left(\ \mathcal{M}^{\lambda,\tK}(\bk,\Omega)\ \right)\ =\ 0,
 \end{equation*}
 where $\mathcal{M}^{\lambda,\tK}(\bk,\Omega)$ is the $2\times2$ matrix with entries displayed in \eqref{MJI-def}.
 \end{proposition}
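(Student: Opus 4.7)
The plan is to verify that the Lyapunov--Schmidt reduction sketched in the paragraphs immediately preceding the statement genuinely produces an equivalence between nontrivial solutions of $[H^\lambda(\bk)-\Omega]\psi=0$ in $H^2(\R^2/\Lambda_h)$ and nonzero vectors $\alpha=(\alpha_A,\alpha_B)\in\C^2$ annihilated by the $2\times 2$ matrix $\mathcal{M}^{\lambda,\tK}(\bk,\Omega)$. First I would invoke part (2) of Lemma \ref{orthog} to write each $\psi\in H^2(\R^2/\Lambda_h)$ uniquely in the form $\psi=\sum_{I=A,B}\alpha_I\,p_{\bk,I}^\lambda+\tilde\psi$ with $\tilde\psi\in\mathscr{H}_{_{AB}}^2$; here uniqueness of the decomposition is essential, and it is precisely what part (2) of Lemma \ref{orthog} supplies (the argument there being that the Gram matrix $\langle p_{\tK,J}^\lambda,p_{\bk,I}^\lambda\rangle$ differs from the identity by an $\mathcal{O}(e^{-c\lambda})+\mathcal{O}(\lambda^{-1})$ perturbation on $U$, hence is invertible for $\lambda$ large).

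Next I would substitute this decomposition into $[H^\lambda(\bk)-\Omega]\psi=0$ to obtain \eqref{substit}, and then apply part (1) of Lemma \ref{orthog} with $F=[H^\lambda(\bk)-\Omega]\psi$. This splits \eqref{substit} equivalently into the pair \eqref{PiAB}--\eqref{project}: a projection onto $\mathscr{H}_{_{AB}}$ and a two-dimensional linear constraint obtained by testing against $p_{\bar\bk,J}^\lambda$ for $J=A,B$. On the parameter set $U$ defined in \eqref{Udef1}, Proposition \ref{prop2-resolvent} guarantees that $\Pi_{_{AB}}(H^\lambda(\bk)-\Omega)$ is a bijection from $\mathscr{H}_{_{AB}}^2$ onto $\mathscr{H}_{_{AB}}$ with bounded inverse $\mathrm{Res}^{\lambda,\tK}(\bk,\Omega)$; applying this inverse to \eqref{PiAB} solves uniquely for $\tilde\psi$ in the form \eqref{tpsi}, as a linear function of $(\alpha_A,\alpha_B)$.

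Substituting \eqref{tpsi} into \eqref{project} produces exactly the linear system $\sum_I\mathcal{M}^{\lambda,\tK}_{JI}(\bk,\Omega)\,\alpha_I=0$ with entries \eqref{MJI-def}. It remains only to observe that the correspondence $\psi\leftrightarrow(\alpha_A,\alpha_B,\tilde\psi)$ given by the decomposition is linear and injective, and that the constructed $\tilde\psi$ in \eqref{tpsi} vanishes when $\alpha_A=\alpha_B=0$. Consequently $\psi$ is nontrivial if and only if $(\alpha_A,\alpha_B)\neq 0$, and the latter is possible if and only if $\det\mathcal{M}^{\lambda,\tK}(\bk,\Omega)=0$. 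This gives both directions of the equivalence.

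There is no serious obstacle: all of the substantive analytical content (the invertibility of the projected operator on $\mathscr{H}_{_{AB}}$, its analyticity in $(\bk,\Omega)\in U$, and the nearly orthogonal direct sum decomposition of $L^2(\R^2/\Lambda_h)$) has already been established in Lemma \ref{orthog} and Proposition \ref{prop2-resolvent}. The only point requiring a touch of care is that the testing in \eqref{project} uses $p_{\bar\bk,J}^\lambda$ rather than $p_{\bk,J}^\lambda$ (so that the pairing is consistent with the analytic continuation of $\bk\mapsto p_{\bk,I}^\lambda$ to $\bk\in\C^2$ described in Section \ref{low-lying-fb}); this is precisely the choice made in part (1) of Lemma \ref{orthog}, and it guarantees that the resulting matrix entries depend analytically on $(\bk,\Omega)\in U$, as recorded in Remark \ref{calM-analytic}.
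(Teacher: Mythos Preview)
Your proposal is correct and follows essentially the same approach as the paper: the paper's proof is simply the phrase ``From the above discussion we have,'' referring to the Lyapunov--Schmidt reduction carried out in Section~\ref{LS-reduction}, and you have faithfully expanded that reduction, invoking Lemma~\ref{orthog} for the decomposition and its equivalence with the pair \eqref{PiAB}--\eqref{project}, and Proposition~\ref{prop2-resolvent} for the invertibility that yields \eqref{tpsi}. Your added remarks on uniqueness of the decomposition and on why $\psi\neq 0$ forces $(\alpha_A,\alpha_B)\neq 0$ fill in details the paper leaves implicit.
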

 
 \section{Expansion of $\mathcal{M}^{\lambda,\tK}(\bk,\Omega)$}\label{sec:M-expanded}
 
  We prove
 \begin{proposition}\label{M-expanded}
 Fix $K_{max}$. There exists $\lambda_\star$ such that for all $\lambda>\lambda_\star$ the following holds:
 Let $\mathcal{M}^{\lambda,\tK}(\bk,\Omega)$ be given by \eqref{MJI-def}, which is defined and analytic on
  the domain $U$; see Remark \ref{calM-analytic}. Introduce the 
 {\it rescaled eigenvalue} parameter, $\mu$, via the relation 
 \[ \Omega=\rho_\lambda\ \mu.\] 
Let
\begin{equation}
\mathcal{U}_\lambda(\tK)\ =\ \left\{(\bk,\mu)\in\C^2\times\C\ :\ |\bk-\tK|<\hat{c}\lambda^{-1},\ |\mu|<\widehat{C} \right\}\ ,
\label{rescaled-U}
\end{equation} 
where $\widehat{C}$ is a constant chosen in Section \ref{complex-analysis} to satisfy \eqref{Chat}.
Then, the mapping $(\bk,\mu)\mapsto  \mathcal{M}^{\lambda,\tK}(\bk,\rho_\lambda \mu)$ is analytic for $(\bk,\mu)\in \mathcal{U}_\lambda$ and satisfies the expansion:
 \begin{equation}\label{MIJ-expanded}
 \mathcal{M}^{\lambda,\tK}(\bk,\rho_\lambda \mu)\ =\
  -\rho_\lambda
  \left[\ 
  \begin{pmatrix} 
  \mu & \gamma(-\bk)\\
 \gamma(\bk) & \mu
 \end{pmatrix}
 \ +\ {\rm Error}^\lambda(\bk,\mu)\ \right]
 \end{equation}
 where   (see also \eqref{gamma-def0})
 \begin{align}
 \rho_\lambda &\equiv  \lambda^2\int |V_0(\by)| p_0^\lambda(\by) p_0^\lambda(\by+\be_{A,1}) d\by\qquad 
 \textrm{(see  \eqref{rho-lam-def})}, \nn \\ 
\gamma(\bk) &\equiv \sum_{\nu=1,2,3}e^{ i\bk\cdot \be_{B,\nu}}= \sum_{1\le\nu\le3}
 e^{-i\bk\cdot\be_{A,\nu}}
=e^{ i\bk\cdot \be_{B,1}} \left( 1+e^{i\bk\cdot\bv_1}+e^{i\bk\cdot\bv_2} \right),
\label{gamma_pm-def1}
\end{align}
 and
 \begin{align}
& \sup_{(\bk,\mu)\in\mathcal{U}_\lambda}|{\rm Error}^\lambda(\bk,\rho_\lambda\mu)|\lesssim e^{-c\lambda}\ .\nn
\end{align} 
 \end{proposition}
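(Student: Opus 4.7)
The plan is to work directly from the defining formula \eqref{MJI-def} and analyze each piece after the substitution $\Omega=\rho_\lambda\mu$. The resolvent (second) term in \eqref{MJI-def} is sandwiched between two factors of the form $[H^\lambda(\cdot)-\rho_\lambda\mu]p_{\bk,I}^\lambda$; by \eqref{Hlambda-pk} the ``kinetic'' part has norm $\lesssim e^{-c\lambda}$, while the $\rho_\lambda\mu$ correction contributes at most $\rho_\lambda|\mu|\,\|p_{\bk,I}^\lambda\|\lesssim\rho_\lambda$. Combined with the uniform resolvent bound of Proposition \ref{prop2-resolvent} (after noting that $(\bk,\rho_\lambda\mu)\in U$ once $\lambda$ is large, since $\rho_\lambda\widehat{C}<\hat c$), the entire second term is $\lesssim \rho_\lambda e^{-c\lambda}$ and is absorbed into ${\rm Error}^\lambda$.

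The first term in \eqref{MJI-def} splits as
\[
\langle p_{\overline\bk,J}^\lambda, H^\lambda(\bk) p_{\bk,I}^\lambda\rangle \;-\; \rho_\lambda\mu\,\langle p_{\overline\bk,J}^\lambda, p_{\bk,I}^\lambda\rangle.
\]
The near-orthogonality \eqref{pkI-normalize1}, together with the perturbation bound \eqref{pktK-diff} used to pass from $\tK$ to nearby complex $\bk$, yields $\langle p_{\overline\bk,J}^\lambda, p_{\bk,I}^\lambda\rangle=\delta_{IJ}+O(e^{-c\lambda})$. This accounts for the diagonal $-\rho_\lambda\mu$ in \eqref{MIJ-expanded} with an $O(\rho_\lambda e^{-c\lambda})$ remainder.

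The core computation is $\langle p_{\overline\bk,J}^\lambda, H^\lambda(\bk) p_{\bk,I}^\lambda\rangle$. I would use \eqref{Hpkhatv0} to write $H^\lambda(\bk) p_{\bk,I}^\lambda=\sum_{\hat\bv\in\Lambda_I}\sum_{\bv\in\Honeycomb\setminus\{\hat\bv\}}\lambda^2 V_0(\cdot-\bv)\,p_{\bk,\hat\bv}^\lambda$, and then unfold the sum over $\Lambda_J$ inside $p_{\overline\bk,J}^\lambda$ using $\Lambda_h$-periodicity of the right-hand side; this converts the $L^2(\R^2/\Lambda_h)$ inner product into an unrestricted $\R^2$-integral and produces
\[
\sum_{\hat\bv\in\Lambda_I}\sum_{\bv\in\Honeycomb\setminus\{\hat\bv\}} e^{i\bk\cdot(\hat\bv-\bv_J)}\int_{\R^2} p_0^\lambda(\by)\,\lambda^2 V_0(\by-(\bv-\bv_J))\,p_0^\lambda(\by-(\hat\bv-\bv_J))\,d\by.
\]
Classifying triples $(\bv_J,\bv,\hat\bv)$ geometrically, the only configuration producing a contribution not exponentially smaller than $\rho_\lambda$ is $\bv=\bv_J$ together with $\hat\bv$ a nearest $\Lambda_I$-neighbor of $\bv_J$; such triples exist only when $I\ne J$. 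Then $\hat\bv-\bv_J=\be_{J,\nu}$ for $\nu=1,2,3$, and by Remark \ref{rho-indep} the integral equals $-\rho_\lambda$ (the sign from $V_0\le 0$). Summing the three phases gives $-\rho_\lambda\gamma(\bk)$ in the $(B,A)$ entry and $-\rho_\lambda\gamma(-\bk)$ in the $(A,B)$ entry (using $\be_{A,\nu}=-\be_{B,\nu}$), matching \eqref{MIJ-expanded}; the diagonal ($I=J$) case has no such nearest-neighbor triple and contributes only to the error.

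The main obstacle is bounding the sum over all non-leading triples by $\rho_\lambda e^{-c\lambda}$. For any such triple, the support constraint $(PW_2)$ forces one of the distances $|\bv-\bv_J|$, $|\hat\bv-\bv|$, $|\hat\bv-\bv_J|$ to exceed $|\be_{A,1}|$ by a definite margin: either $\bv$ lies at distance $\ge|\bv_1|$ from $\bv_J$, or $\hat\bv$ lies at distance $\ge|\bv_1|$ from $\bv$, or $\hat\bv$ sits on the same sublattice as $\bv_J$ at distance $\ge|\bv_1|$. Translating each geometric gain into an exponential improvement of $p_0^\lambda$ via Corollary \ref{cor:expo-decay}, and using the lower bound $\rho_\lambda\gtrsim e^{-c_1\lambda}$ of Proposition \ref{prop:rho-lam-bounds} to compare exponents, produces the required factor $e^{-c\lambda}$ beyond $\rho_\lambda$; the geometric tail sums over $\Honeycomb$ converge by the same exponential decay. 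The Geometric Lemma deferred to Section \ref{ME-bounds} supplies the precise comparison of exponents that makes this matching work, and is the technical heart of the argument. Finally, analyticity of $(\bk,\mu)\mapsto\mathcal{M}^{\lambda,\tK}(\bk,\rho_\lambda\mu)$ on $\mathcal{U}_\lambda$ follows from analyticity of $\bk\mapsto p_{\bk,I}^\lambda\in H^2(\R^2/\Lambda_h)$ in the strip $|\Im\bk|<C_1$ (Section \ref{low-lying-fb}) together with the analyticity of the resolvent on $U$ (Proposition \ref{prop2-resolvent}).
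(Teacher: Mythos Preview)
Your outline matches the paper's architecture: split \eqref{MJI-def} into the direct inner product and the resolvent correction, extract $-\rho_\lambda\mu\,\delta_{IJ}$ from the near-orthonormality \eqref{pkI-normalize1}, identify the off-diagonal leading term $-\rho_\lambda\gamma(\pm\bk)$ from the three nearest-neighbor triples, and control all other contributions by $\rho_\lambda e^{-c\lambda}$. The analyticity argument is also as in the paper.

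The gap is in how you bound the remainders by $\rho_\lambda e^{-c\lambda}$ rather than merely by $e^{-c\lambda}$. For the resolvent term, \eqref{Hlambda-pk} together with the $O(1)$ bound on ${\rm Res}^{\lambda,\tK}$ gives only $\|[H^\lambda-\Omega]p_{\bk,I}^\lambda\|^2\lesssim (e^{-c\lambda}+\rho_\lambda)^2\lesssim e^{-2c\lambda}$, and absorbing this into $\rho_\lambda e^{-c'\lambda}$ via the lower bound $\rho_\lambda\gtrsim e^{-c_1\lambda}$ would require $2c>c_1$, which is nowhere established. The same issue arises for the non-nearest-neighbor contributions to $\langle p_{\overline\bk,J}^\lambda,H^\lambda(\bk)p_{\bk,I}^\lambda\rangle$: Corollary \ref{cor:expo-decay} yields an upper bound of the form $e^{-c'\lambda}$ for such an integral but gives no pointwise \emph{lower} bound on $p_0^\lambda$ with which to compare back to $\rho_\lambda$; invoking the abstract inequality $\rho_\lambda\gtrsim e^{-c_1\lambda}$ again demands an unproved relation between the constants $c'$ and $c_1$.

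The paper never compares exponents. Instead it writes $p_0^\lambda$ via the Bessel-kernel integral equation \eqref{gs2}--\eqref{gs2-symm} and uses the Geometric Lemma together with the monotonicity estimate \eqref{Kxpxpp} to prove \emph{pointwise ratio} bounds such as $p_0^\lambda(\by-\bn\vec\bv)\lesssim e^{-c|\bn|\lambda}\,p_0^\lambda(\by+\be_{I,\nu})$ for $\by\in\mathrm{supp}\,V_0$ (Lemma \ref{lem:p0-lam-bounds}). Substituting these directly into each error integral produces bounds in which $\rho_\lambda$ appears \emph{explicitly}, and in turn yields the sharpened norm estimate $\|H^\lambda(\bk)p_{\bk,I}^\lambda\|\lesssim e^{-c\lambda}\sqrt{\rho_\lambda}$ (see \eqref{sqrt-bound}) needed to control the resolvent term by $\rho_\lambda e^{-c\lambda}$. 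This pointwise comparison through the kernel representation, not crude exponential decay combined with the lower bound on $\rho_\lambda$, is the mechanism your sketch is missing.
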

 
 \subsection{Proof of Proposition \ref{M-expanded}}
We expand the matrix entries $\mathcal{M}^{\lambda,\tK}_{JI}(\bk,\Omega)$ for $\lambda>\lambda_\star$ and 
 $(\bk,\Omega)\in U$.

\begin{proposition}\label{ME}
  Under the hypotheses of Proposition \ref{M-expanded}, we have
  \begin{enumerate}
\item
  \begin{align*}
  \left\langle\ p_{_{\overline\bk,B}}^\lambda\ ,\ H^\lambda(\bk)p_{_{\bk,A}}^\lambda\ \right\rangle
 &=  -\ \rho_\lambda\times\ \gamma(\bk)\ +\ \mathfrak{I}_{BA}(\lambda),\\
  \left\langle\ p_{_{\overline\bk,A}}^\lambda\ ,\ H^\lambda(\bk)p_{_{\bk,B}}^\lambda\ \right\rangle
 &=  -\ \rho_\lambda\times\ \gamma(-\bk)\ +\ \mathfrak{I}_{AB}(\lambda), \nn
 \end{align*}
where
$ |\mathfrak{I}_{BA}|,\ \  |\mathfrak{I}_{AB}|\ \lesssim\ \rho_\lambda\times e^{-c\lambda}$, and 
  \item
\begin{equation*}
\Big|\ \left\langle p_{_{\overline\bk,A}}^\lambda,H^\lambda(\bk)p_{_{\bk,A}}^\lambda\right\rangle\ \Big|
\ \lesssim\ \rho_\lambda\times e^{-c\lambda}\ ,
\end{equation*}
and similarly with $B$ in place of $A$.
\item 
\begin{align}
 \left|  \left\langle \left[H^\lambda(\overline\bk)-\overline\Omega\right] p_{_{\overline\bk,J}}^\lambda, 
 {\rm Res}^{\lambda,\tK}(\bk,\Omega) \Pi_{AB}  \left[H^\lambda(\bk)-\Omega\right] p_{_{\bk,I}}^\lambda
  \right\rangle \right| \lesssim \rho_\lambda\times e^{-c\lambda} ,
 \label{ip-higherorder-bound}
\end{align}
where $J$ and $I$ vary over the set $\{A,B\}$.
\end{enumerate}
\end{proposition}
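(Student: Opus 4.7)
The starting point is the pointwise identity \eqref{Hpkhatv0}: $H^\lambda(\bk)p^\lambda_{\bk,\hat\bv}=\sum_{\bv\in\Honeycomb\setminus\{\hat\bv\}}\lambda^2V_0(\bx-\bv)p^\lambda_\bk(\bx-\hat\bv)$. Substituting this into $\langle p^\lambda_{\overline\bk,J},H^\lambda(\bk)p^\lambda_{\bk,I}\rangle$ and expanding the lattice sums defining $p^\lambda_{\bk,I}$ together with $\overline{p^\lambda_{\overline\bk,J}}=\sum_{\bw\in\Lambda_J}e^{i\bk\cdot(\bx-\bw)}p^\lambda_0(\bx-\bw)$ yields
\[
\sum_{\bw\in\Lambda_J}\sum_{\hat\bv\in\Lambda_I}\sum_{\bv\in\Honeycomb\setminus\{\hat\bv\}}e^{i\bk\cdot(\hat\bv-\bw)}\int_D\lambda^2V_0(\bx-\bv)p^\lambda_0(\bx-\bw)p^\lambda_0(\bx-\hat\bv)\,d\bx.
\]
The change of variables $\by=\bx-\hat\bv$, $\bu=\bw-\hat\bv$, $\bs=\bv-\hat\bv$ absorbs the $\hat\bv$-sum into an unfolding of the $D$-integral to $\R^2$ and collapses the expression to the master form
\[
\sum_{\bu\in\Lambda_J-\Lambda_I}\sum_{\bs\in\Honeycomb\setminus\{0\}}e^{-i\bk\cdot\bu}\int_{\R^2}\lambda^2V_0(\by-\bs)p^\lambda_0(\by-\bu)p^\lambda_0(\by)\,d\by,
\]
from which parts (1) and (2) will be read off directly.

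For part (1), take $(J,I)=(B,A)$, so $\bu\in\Lambda_B-\Lambda_A=\Lambda_B$. The pairs $(\bu,\bs)$ that minimize the geometric extent of the integrand are precisely $\bu=\bs=\be_{A,\nu}$, $\nu=1,2,3$, each at distance $|\be_{A,1}|$ from the origin. At each such pair the integral equals $-\rho_\lambda$, seen via the shift $\bz=\by-\be_{A,\nu}$, the sign convention $V_0\le 0$, and Remark \ref{rho-indep}; summed against the phases $e^{-i\bk\cdot\be_{A,\nu}}$ and recalling $\be_{A,\nu}=-\be_{B,\nu}$, these give exactly $-\rho_\lambda\gamma(\bk)$. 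For every other $(\bu,\bs)$, the support constraint $\by\in B(\bs,r_0)$ combined with the pointwise bound \eqref{expo-decay1} of Corollary \ref{cor:expo-decay} forces at least one of $p^\lambda_0(\by)$, $p^\lambda_0(\by-\bu)$ to be evaluated at distance strictly greater than $|\be_{A,1}|-r_0$ from the origin, producing an extra factor $e^{-c\lambda}$ beyond the $\rho_\lambda$-rate. The Geometric Lemma of Section \ref{ME-bounds}, applicable by virtue of $r_0<r_{critical}$, supplies a uniform positive gap between the leading distance and every competing distance, making the tail sum over $(\bu,\bs)$ absolutely convergent and of size $\rho_\lambda e^{-c\lambda}$. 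The companion calculation with $(J,I)=(A,B)$ gives $-\rho_\lambda\gamma(-\bk)$.

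For part (2), take $(J,I)=(A,A)$, so $\bu\in\Lambda_h$. No pair $(\bu,\bs)\in\Lambda_h\times(\Honeycomb\setminus\{0\})$ reproduces the leading configuration of part (1): if $\bu=0$, then $p^\lambda_0(\by)^2$ is evaluated on $B(\bs,r_0)$ with $|\bs|\ge|\be_{A,1}|$, producing by Corollary \ref{cor:expo-decay} a $(p^\lambda_0)^2$-factor of order $\rho_\lambda\cdot e^{-c\lambda}$; if $\bu\ne 0$, then $|\bu|\ge|\bv_1|=1>|\be_{A,1}|$, and the additional distance again gains a factor $e^{-c\lambda}$ relative to $\rho_\lambda$. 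The same Geometric Lemma lets us sum the exponentially decaying tails over $\Lambda_h\times(\Honeycomb\setminus\{0\})$ to conclude.

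For part (3), since the range of ${\rm Res}^{\lambda,\tK}(\bk,\Omega)\Pi_{AB}$ lies in $\mathscr{H}_{AB}$, the component $(I-\Pi_{AB})[H^\lambda(\overline\bk)-\overline\Omega]p^\lambda_{\overline\bk,J}$ contributes zero, and Cauchy--Schwarz together with the uniform resolvent bound of Proposition \ref{prop2-resolvent} give
\[
\bigl|\langle[H^\lambda(\overline\bk)-\overline\Omega]p^\lambda_{\overline\bk,J},{\rm Res}^{\lambda,\tK}(\bk,\Omega)\Pi_{AB}[H^\lambda(\bk)-\Omega]p^\lambda_{\bk,I}\rangle\bigr|\lesssim\|[H^\lambda(\bk)-\Omega]p^\lambda_{\bk,I}\|\cdot\|[H^\lambda(\overline\bk)-\overline\Omega]p^\lambda_{\overline\bk,J}\|.
\]
Using \eqref{Hlambda-pk} and the rescaling $|\Omega|\le\widehat{C}\rho_\lambda$ of Proposition \ref{M-expanded}, each factor on the right is $\lesssim e^{-c\lambda}$. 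Since the exponential rate of $\|H^\lambda(\bk)p^\lambda_{\bk,I}\|$ and that of $\rho_\lambda$ coincide up to polynomial prefactors (both controlled by $p^\lambda_0$ at the nearest-neighbor distance $|\be_{A,1}|-r_0$), the product is $\lesssim\rho_\lambda\cdot e^{-c'\lambda}$ for some $c'>0$, consistent with the lower bound $\rho_\lambda\gtrsim e^{-c_1\lambda}$ of Proposition \ref{prop:rho-lam-bounds}. The chief obstacle is the fine tail accounting in parts (1) and (2): the hypothesis $r_0<r_{critical}$ must be turned, via the Euclidean Geometric Lemma, into a quantitative distance gap forcing every off-leading term to beat $\rho_\lambda$ by a summable exponential.
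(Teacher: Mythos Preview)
Your approach is essentially the paper's. For parts (1) and (2), your ``master form'' obtained by unfolding the $\hat\bv$-sum to an $\R^2$-integral is an equivalent reorganization of what the paper does by restricting to the fundamental cell $D$ and splitting into the two balls $B(\bv_A,r_0)$ and $B(\bv_B,r_0)$; the leading term you identify, $\bu=\bs=\be_{A,\nu}$, corresponds exactly to the paper's $\bw=\bv_B$, $\bv=\bv_B+\be_{B,\nu}$ in the second double sum. In both treatments the tail accounting is then carried out via the Geometric Lemma through the pointwise comparisons of Lemma~\ref{lem:p0-lam-bounds}, bounding seven explicit lattice sums $\mathscr{I}_1,\dots,\mathscr{I}_7$ by $\rho_\lambda e^{-c\lambda}$.

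The one place your sketch is looser than the paper is part (3). Citing \eqref{Hlambda-pk} gives only $\|H^\lambda(\bk)p^\lambda_{\bk,I}\|\lesssim e^{-c\lambda}$ with an unspecified $c$, and your heuristic that this rate ``coincides with that of $\rho_\lambda$ up to polynomial prefactors'' is not yet a proof: nothing you have cited ties the two exponents together. The paper closes this by proving the sharper pointwise-based estimate
\[
\|H^\lambda(\bk)p^\lambda_{\bk,I}\|^2\ \lesssim\ e^{-c\lambda}\,\rho_\lambda
\]
(equation \eqref{sqrt-bound}), obtained by writing the dominant contribution as $\int|V_0(\by)|\,(p_0^\lambda(\by+\be_{A,1}))^2\,d\by$ and invoking the Geometric-Lemma consequence $p_0^\lambda(\by+\be_{A,\nu})\lesssim e^{-c\lambda}p_0^\lambda(\by)$ for $|\by|\le r_0$ (bound \eqref{p0-bound3}). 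That step converts one factor of $p_0^\lambda(\by+\be_{A,1})$ into $e^{-c\lambda}p_0^\lambda(\by)$, producing $\rho_\lambda$ explicitly. Once you add this, the Cauchy--Schwarz plus resolvent-bound argument you wrote is exactly the paper's.
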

We first prove Proposition \ref{ME} (modulo several assertions to be proven in Section \ref{ME-bounds}) and then return to the proof of Proposition \ref{M-expanded}. 

 We begin with some brief review and preliminary observations.
  Recall
  \begin{enumerate}
  \item[(i)] $p_0^\lambda(\bx)$ denotes the ground state of $-\Delta+\lambda^2 V_0(\bx)$, 
  with corresponding eigenvalue $E^\lambda_0$; $[-\Delta+\lambda^2 V_0(\bx)-E^\lambda_0]p_0^\lambda(\bx)=0$. Thus, $p_\bk^\lambda(\bx) \equiv e^{-i\bk\cdot\bx}\ p_0^\lambda(\bx)$ satisfies
   $[-(\nabla_\bx+i\bk)^2+\lambda^2 V_0(\bx)-E^\lambda_0]p_\bk^\lambda(\bx)=0$.
\item[(ii)] $p_{\bk,\hat\bv}^\lambda(\bx)\ \equiv\ p_\bk^\lambda(\bx-\hat\bv)$ and 
 $V^\lambda(\bx) \equiv \lambda^2 \sum_{\bv\in\Honeycomb} V_0(\bx-\bv) - E_0^\lambda$. 
 \item[(iii)] 
For $I = A, B$, we have the $\Lambda_h-$ periodic approximate Floquet-Bloch amplitudes:
\begin{equation}
 p_{\bk,I}^\lambda(\bx)\ \equiv\ \sum_{\hat\bv\in\Lambda_I}\ p_{\bk,\hat\bv}^\lambda(\bx)
 \ =\ \sum_{\hat\bv\in\Lambda_I}\ e^{-i\bk\cdot(\bx-\hat\bv)}p_0^\lambda(\bx-\hat\bv);
\label{p-bk-I1}
\end{equation}
For $|\Im\bk|<C_1$ and $\lambda$ sufficiently large, the series \eqref{p-bk-I1} is uniformly convergent.  
 $p_{\bk,I}^\lambda(\bx)$ is $\Lambda_h-$ periodic on $\R^2$ and is regarded as a function on $\R^2/\Lambda_h$.
\end{enumerate}
 Summing  the expression for $H^\lambda(\bk) p_{\bk,\hat\bv}^\lambda(\bx)$, displayed in 
  \eqref{Hpkhatv0},  over $\hat{\bv}\in\Lambda_I$, we obtain
\begin{align*}
&\left[\ -(\nabla_\bx+i\bk)^2 + V^\lambda(\bx)\ \right] p_{\bk,I}^\lambda(\bx)\ =\ 
 \sum_{\hat{\bv}\in\Lambda_I}\ \sum_{\bv\in\Honeycomb\setminus\{\hat\bv\}}\ \lambda^2 V_0(\bx-\bv) p_{\bk}^\lambda(\bx-\hat\bv)\ .
 \end{align*}
For $\bx\in D$, the fundamental domain, we have $V_0(\bx-\bv)=0$ for all $\bv\in\Lambda$ except $\bv= \bv_A, \bv_B$. This follows since the support of $V_0$ is contained in $B({\bf 0},r_0)$ and $r_0<|\be_{A,1}|/2$; see Figure \ref{fig:fundamental-cell}.
 Therefore, the inner sum over $\bv\in\Honeycomb\setminus\{\hat\bv\}$ can only have contributions from $\bv= \bv_A, \bv_B$; hence
\begin{align*}
H^\lambda(\bk)p_{\bk,I}^\lambda(\bx) &=
\lambda^2 V_0(\bx-\bv_A)\ \sum_{\bv\in\Lambda_I\setminus\{\bv_A\}}\  p_{\bk}^\lambda(\bx-\bv)\nn\\
&\ \ +\ \lambda^2 V_0(\bx-\bv_B)\ \sum_{\bv\in\Lambda_I\setminus\{\bv_B\}}\  p_{\bk}^\lambda(\bx-\bv),\ \ \bx\in D\ .
 \end{align*}
 Therefore, for all $\bx\in D$, we have
 \begin{align}
 \label{Hpkhatv3}
& H^\lambda(\bk)p_{\bk,A}^\lambda(\bx) =
 \begin{cases}
  0 , & \textrm{if}\ \bx\in D\setminus\left(B(\bv_A,r_0)\cup B(\bv_B,r_0)\right)\\
 \lambda^2 V_0(\bx-\bv_A) \sum\limits_{\bv\in\Lambda_A\setminus\{\bv_A\}}  p_{\bk}^\lambda(\bx-\bv) ,
  & \textrm{if}\  \bx\in B(\bv_A,r_0)\\
  \lambda^2 V_0(\bx-\bv_B) \sum\limits_{\bv\in\Lambda_A}  p_{\bk}^\lambda(\bx-\bv) , 
  & \textrm{if}\  \bx\in B(\bv_B,r_0) 
  \end{cases}
  \end{align}
  and similarly for $H^\lambda(\bk)\ p_{\bk,B}^\lambda(\bx)$.

\subsection{Matrix element $\left\langle\ p_{_{\overline\bk,B}}^\lambda\ ,\ H^\lambda(\bk)p_{_{\bk,A}}^\lambda\ \right\rangle$}

Using \eqref{p-bk-I1} and \eqref{Hpkhatv3}, we have
\begin{align}
&\left\langle p_{_{\overline\bk,B}}^\lambda,H^\lambda(\bk)p_{_{\bk,A}}^\lambda\right\rangle\nn\\
&\quad = \int_{B(\bv_A,r_0)\cup B(\bv_B,r_0)} \overline{p_{_{\overline\bk,B}}^\lambda(\bx)} H^\lambda(\bk) p_{_{\bk,A}}^\lambda(\bx) d\bx\nn\\
&\quad = \sum_{\bw\in\Lambda_B}\sum_{\bv\in\Lambda_A\setminus\{\bv_A\}}
\lambda^2  e^{i\bk\cdot(\bv-\bw)} \int_{B(\bv_A,r_0)} V_0(\bx-\bv_A) p_0^\lambda(\bx-\bw) p_0^\lambda(\bx-\bv) d\bx\nn\\
&\quad\quad + \sum_{\bw\in\Lambda_B}\sum_{\bv\in\Lambda_A}
\quad  \lambda^2 e^{i\bk\cdot(\bv-\bw)} \int_{B(\bv_B,r_0)} V_0(\bx-\bv_B) p_0^\lambda(\bx-\bw) p_0^\lambda(\bx-\bv) d\bx\nn\\
&\quad \equiv I + II.
\label{BA-terms}
\end{align}
Consider the first double-sum in \eqref{BA-terms} over $\bw\in\Lambda_B$ and $\bv\in\Lambda_A\setminus\{\bv_A\}$, which is integrated over $B(\bv_A,r_0)$.  To study this integral it is convenient to express the integrand in coordinates centered at $\bv_A$. Note that 
 since $V_0(\bx-\bv_A)$ is supported in $B(\bv_A,r_0)$ and $p_0^\lambda$ is exponentially decaying,
 we expect that the dominant contribution to the summation over $\bw\in\Lambda_B$ comes from the three vertices of $\Lambda_B$ which are closest to $\bv_A$. 
  These are: 
  \begin{equation*}
  \bw\ =\ \bv_A+\be_{A,\nu},\ \nu=1,2,3.
  \end{equation*}
   The non-nearest neighbors to $\bv_A$ in $\Lambda_B$ are  
   \begin{equation*}
    \bw=\bv_A+\be_{A,1}+\bn\vec{\bv},\quad \bn\ne (0,0), (0,-1), (-1,0).
    \end{equation*}
   We therefore write:
  \begin{equation*}
  \sum_{\bw\in\Lambda_B}\sum_{\bv\in\Lambda_A\setminus\{\bv_A\}}
   =\left(\ \sum_{\substack{\bw=\bv_A+\be_{A,\nu}\\ \nu=1,2,3} }
   + \sum_{\substack{\bw=\bv_A+\be_{A,1}+\vec{\bn}\vec{\bv}\\ \bn\ne(0,0),(0,-1),(-1,0)} }\right)
   \ \sum_{\substack{\bv=\bv_A+\bfm\vec{\bv}\\ \bfm\ne0}} .
   \end{equation*}
  Therefore, the first double-sum in \eqref{BA-terms} may be expressed as follows:
  \begin{align}
 I\ =\  &\sum_{1\le\nu\le3}\ \sum_{\vec{\bfm}\ne\vec0}
  \lambda^2  e^{i\bk\cdot(\bfm\vec\bv-\be_{A,\nu})}\nn\\
  &\qquad \times\ \int_{B(\bv_A,r_0)}\ V_0(\bx-\bv_A)\ p_0^\lambda(\bx-\bv_A-\be_{A,\nu})\ 
  p_0^\lambda(\bx-\bv_A-\bfm\vec\bv)\ d\bx\nn\\
  &\nn\\
  \ + &\sum_{\bn\ne(0,0),(0,-1),(-1,0)}\ \sum_{\bfm\ne\vec0}
   \lambda^2  e^{i\bk\cdot([\bfm-\bn]\vec\bv-\be_{A,1})}\nn\\
  &\qquad \times  
 \ \int_{B(\bv_A,r_0)}\ V_0(\bx-\bv_A)\ p_0^\lambda(\bx-\bv_A-\be_{A,1}-\bn\vec\bv)\ p_0^\lambda(\bx-\bv_A-\bfm\vec\bv)\ d\bx\ .\nn
  \end{align}
Hence, because $|\Im\bk|\lesssim \lambda^{-1}$, the first double-sum in \eqref{BA-terms} is bounded by
 \begin{align}
& \mathfrak{I}^1_{BA}(\lambda)\ \equiv\  \sum_{\substack{1\le\nu\le3\\ \vec{\bfm}\ne\vec0}}\  \lambda^2\
 e^{ C\lambda^{-1} |\bfm|}\  \nn \\
 &  \qquad\qquad\qquad\ \times 
 \int\ |V_0(\bx-\bv_A)|\ p_0^\lambda(\bx-\bv_A-\be_{A,\nu})\ 
 p_0^\lambda(\bx-\bv_A-\bfm\vec\bv)\ d\bx\nn\\
  &\nn\\
  & \qquad\qquad\ +   C\ \sum_{\substack{\bn\ne\vec0,(0,-1),(-1,0)\\ \bfm\ne\vec0}} \lambda^2\ 
   e^{ C\lambda^{-1} |\bfm-\bn|}\nn\\
 &  \qquad\qquad\qquad\ \times  \int\ |V_0(\bx-\bv_A)|\ p_0^\lambda(\bx-\bv_A-\be_{A,1}-\bn\vec\bv)\ 
  p_0^\lambda(\bx-\bv_A-\bfm\vec\bv)\ d\bx\ .
\label{1st-sum-bound}
  \end{align}

We now turn to the second double-sum in \eqref{BA-terms} over $\bw\in\Lambda_B$ and $\bv\in\Lambda_A$, which is integrated over $B(\bv_B,r_0)$. Here, we express the integrand in coordinates relative to a center at $\bv_B$. The dominant contributions come from summands with $\bw=\bv_B$ and $\bv\in \Lambda_A$ varying over the three nearest neighbors to $\bv_B$. Those neighbors are given by:
\begin{equation*}
\bv = \bv_B\ +\ \be_{B,\nu},\ \ \nu=1,2,3\ .
\end{equation*}
(For real $\bk$, these contributions to the sum are equal in magnitude, by symmetry.) 
The points of $\Lambda_A$ which are not among the nearest neighbors to $\bv_B$ are
\begin{equation*}
\bv = \bv_B+\be_{B,1}+\bn\vec\bv,\ \ \bn\ne (0,0), (1,0), (0,1)\ .
\end{equation*}

We therefore write:
  \begin{equation*}
  \sum_{\bw\in\Lambda_B}\sum_{\bv\in\Lambda_A}
   = \left(\ \sum_{\bw=\bv_B}+\sum_{\substack{\bw=\bv_B+\bfm\vec\bv\\ \bfm\ne\vec0}}\ \right)\ \left(\ \sum_{\substack{\bv = \bv_B\ +\ \be_{B,\nu}\\ 1\le\nu\le3}}\ +\
   \sum_{\substack{\bv = \bv_B+\be_{B,1}+\bn\vec\bv\\ \bn\ne (0,0), (1,0), (0,1)}}    \ \right)\ .
   \end{equation*}
   
   Therefore, the second double-sum in \eqref{BA-terms} may be expressed as $II\equiv$
   {\footnotesize{
\begin{align} 
& \sum_{1\le\nu\le3}
 \lambda^2 e^{i\bk\cdot\be_{B,\nu}} \int V_0(\bx-\bv_B) p_0^\lambda(\bx-\bv_B) p_0^\lambda(\bx-\bv_B-\be_{B,\nu}) d\bx\nn\\
&+ \sum_{\bn\ne (0,0), (1,0), (0,1)}
  \lambda^2 e^{i\bk\cdot(\be_{B,1}+\bn\vec\bv)} \int V_0(\bx-\bv_B) p_0^\lambda(\bx-\bv_B) p_0^\lambda(\bx-\bv_B-\be_{B,1}-\bn\vec\bv) d\bx\nn\\
&+ \sum_{\substack{1\le\nu\le3\\ \bfm\ne(0,0)}}
  \lambda^2 e^{i\bk\cdot(\be_{B,\nu}-\bfm\vec\bv)} \int V_0(\bx-\bv_B) p_0^\lambda(\bx-\bv_B-\bfm\bv) p_0^\lambda(\bx-\bv_B-\be_{B,\nu}) d\bx   \label{2nd-ds-breakdown} \\
&+ \sum_{\substack{\bfm\ne(0,0)\\ \bn\ne (0,0), (1,0), (0,1)}}\
  \lambda^2 e^{i\bk\cdot[\be_{B,1}+(\bn-\bfm)\vec\bv]} \nn \\
  &\times \int V_0(\bx-\bv_B) p_0^\lambda(\bx-\bv_B-\bfm\vec\bv) p_0^\lambda(\bx-\bv_B-\be_{B,1}-\bn\vec\bv) d\bx . \nn
  \end{align}
   }}
 The first term in \eqref{2nd-ds-breakdown} may be simplified by symmetry. Indeed,  
 \[\int V_0(\bx-\bv_B)\ p_0^\lambda(\bx-\bv_B)\ p_0^\lambda(\bx-\bv_B-\be_{B,\nu})\ d\bx,\ \ \nu=1,2,3\]
 is independent of $\nu$. Therefore, the first term in \eqref{2nd-ds-breakdown} becomes
 \begin{align*}  
 &\left(\ \sum_{1\le\nu\le3}
 e^{i\bk\cdot\be_{B,\nu}}\ \right)\ \lambda^2\ \int V_0(\bx-\bv_B)\ p_0^\lambda(\bx-\bv_B)\ p_0^\lambda(\bx-\bv_B-\be_{B,1})\ d\bx\nn\\
 &\quad =\ \left(\ \sum_{1\le\nu\le3}
 e^{i\bk\cdot\be_{B,\nu}}\ \right)\ \lambda^2\ \int V_0(\by)\ p_0^\lambda(\by)\ p_0^\lambda(\by+\be_{A,1})\ d\by
 \equiv \gamma(\bk) \times (\ - \rho_\lambda\ ),
 \end{align*}
 where $\gamma(\bk)$ is defined in \eqref{gamma_pm-def1} and 
 \begin{align*}
 \rho_\lambda &\equiv \lambda^2\int |V_0(\by)| p_0^\lambda(\by) p_0^\lambda(\by-\be_{B,1}) d\by
=  \lambda^2\int |V_0(\by)| p_0^\lambda(\by) p_0^\lambda(\by+\be_{A,1}) d\by . 
 \end{align*}
Thus,  the second double-sum in \eqref{BA-terms} may be written as
 \begin{align*}
& -\ \gamma(\bk) \times \rho_\lambda
\end{align*}
plus an expression which is bounded by $C\cdot \mathfrak{I}^{(2)}_{BA}(\lambda)$, where
{\footnotesize{
\begin{align} 
&\mathfrak{I}^{(2)}_{BA}(\lambda)\nn\\ 
&\equiv \sum_{\bn\ne (0,0), (1,0), (0,1)}
  \lambda^2  e^{C\lambda^{-1} |\bn|} \int |V_0(\bx-\bv_B)| p_0^\lambda(\bx-\bv_B) p_0^\lambda(\bx-\bv_B-\be_{B,1}-\bn\vec\bv) d\bx\nn\\
&+ \sum_{\substack{1\le\nu\le3\ \bfm\in\Z^2\setminus\{\vec0\}}}
  \lambda^2 e^{C\lambda^{-1} |\bfm|}   \int |V_0(\bx-\bv_B)| p_0^\lambda(\bx-\bv_B-\bfm\vec\bv) p_0^\lambda(\bx-\bv_B-\be_{B,\nu}) d\bx   
  \label{2nd-ds-breakdown-est} \\
&+ \sum_{\substack{\bfm\in\Z^2\setminus\{\vec0\}\ \bn\ne (0,0), (1,0), (0,1)}}\
  \lambda^2 e^{C\lambda^{-1} |\bn-\bfm|}  \nn \\
  &\qquad\times \int |V_0(\bx-\bv_B)| p_0^\lambda(\bx-\bv_B-\bfm\vec\bv) p_0^\lambda(\bx-\bv_B-\be_{B,1}-\bn\vec\bv) d\bx .  \nn
  \end{align}
  }}

We next use the above expansions to obtain
  \begin{proposition}\label{ME-BA1}
  Let $\gamma(\bk)$ be as defined in \eqref{gamma_pm-def1}. Under the conditions of Proposition 
   \ref{M-expanded}, we have 
  \begin{equation*}
  \left\langle\ p_{_{\overline\bk,B}}^\lambda\ ,\ H^\lambda(\bk)p_{_{\bk,A}}^\lambda\ \right\rangle
 =  -\ \rho_\lambda\times \gamma(\bk)\ 
 +\ \mathfrak{I}^{(1)}_{BA}(\lambda)\ +\ \mathfrak{I}^{(2)}_{BA}(\lambda),
 \end{equation*}
 where
 $\mathfrak{I}^{(j)}_{BA}(\lambda),\ j=1,2$, are bounded by the expressions displayed in \eqref{1st-sum-bound} and \eqref{2nd-ds-breakdown-est} and satisfy the bounds
 \begin{equation}
\mathfrak{I}^{(1)}_{BA}(\lambda) +  \mathfrak{I}^{(2)}_{BA}(\lambda)\ \lesssim\ \rho_\lambda\times e^{-c\lambda}\ .
 \label{ip-BA-bound}
 \end{equation}
 Similarly, $\left\langle\ p_{_{\overline\bk,A}}^\lambda\ ,\ H^\lambda(\bk)p_{_{\bk,B}}^\lambda\ \right\rangle
 =  -\ \rho_\lambda\times \gamma(-\bk)\ +\ \mathfrak{J}_{AB}(\lambda)$, where $\mathfrak{J}_{AB}$ satisfies an estimate analogous to \eqref{ip-BA-bound}.
  \end{proposition}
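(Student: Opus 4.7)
\textbf{Proof proposal for Proposition \ref{ME-BA1}.}
The algebraic decomposition of $\langle p_{\overline\bk,B}^\lambda,H^\lambda(\bk)p_{\bk,A}^\lambda\rangle$ into the ``main'' contribution $-\rho_\lambda\gamma(\bk)$ plus the remainder expressions $\mathfrak{I}_{BA}^{(1)}(\lambda)$ and $\mathfrak{I}_{BA}^{(2)}(\lambda)$ is already carried out in the discussion preceding the proposition. The symmetrization argument used to turn the three nearest-neighbor contributions near $\bv_B$ into $\gamma(\bk)\,(-\rho_\lambda)$ relies on the $60^\circ$ rotational symmetry of $p_0^\lambda$ (which follows from $(PW_3)$), which makes the three integrals $\int V_0(\by)\,p_0^\lambda(\by)\,p_0^\lambda(\by-\be_{B,\nu})\,d\by$, $\nu=1,2,3$, equal. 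All that remains is therefore to establish the exponential bound \eqref{ip-BA-bound}.

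My plan is to prove \eqref{ip-BA-bound} by estimating each summand in \eqref{1st-sum-bound} and \eqref{2nd-ds-breakdown-est} pointwise, using the Agmon-type decay bound
\[
p_0^\lambda(\bx)\ \lesssim\ e^{-c_1\lambda|\bx|},\qquad |\bx|\ge r_0+c_0,
\]
from Corollary \ref{cor:expo-decay}, combined with the interior bound $p_0^\lambda\lesssim \lambda$ on $B(\mathbf 0,r_0+c_0)$. After translating the integration variable, each summand has the form $\lambda^2\int_{B(\mathbf 0,r_0)}|V_0(\by)|\,p_0^\lambda(\by-\bz_1)\,p_0^\lambda(\by-\bz_2)\,d\by$ for suitable lattice vectors $\bz_1,\bz_2$ arising from the indices $\nu,\bfm,\bn$. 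For $\by$ in the integration region one has $|\by-\bz_j|\ge|\bz_j|-r_0$, so the integral is controlled by $C\lambda^4 e^{-c_1\lambda(|\bz_1|+|\bz_2|-2r_0)}$ times the volume factor. I will then compare the exponent $c_1\lambda(|\bz_1|+|\bz_2|-2r_0)$ to the exponent controlling $\rho_\lambda$ from below, namely $\rho_\lambda\gtrsim e^{-c_*\lambda}$ provided by Proposition \ref{prop:rho-lam-bounds} (whose proof is deferred to Section~\ref{ME-bounds}).

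The central task, which is the main obstacle, is to show that $|\bz_1|+|\bz_2|$ in every remainder summand strictly exceeds the ``effective distance'' that governs $\rho_\lambda$ by a fixed positive amount depending only on $V_0$, so that the ratio $(\text{bad term})/\rho_\lambda$ genuinely decays like $e^{-c\lambda}$. In $\mathfrak{I}_{BA}^{(1)}$, one of the vectors is a nearest-neighbor vector $\be_{A,\nu}$ (with $|\be_{A,\nu}|=|\be_{A,1}|$), while the other is $\bfm\vec\bv$ with $\bfm\neq\vec 0$, so $|\bfm\vec\bv|\ge|\bv_1|=1>|\be_{A,1}|=1/\sqrt3$; this yields an excess distance of at least $1-|\be_{A,1}|=1-1/\sqrt3>0$. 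In $\mathfrak{I}_{BA}^{(2)}$, the excluded indices $\bn\in\{(0,0),(1,0),(0,1)\}$ (and $\bfm=\vec0$) have been removed precisely to ensure that the remaining lattice sites lie strictly beyond the nearest-neighbor distance from $\bv_B$; a short case analysis, supported by the Geometric Lemma~\ref{euclid1}, gives a uniform positive excess. The constraint $r_0<r_{\rm critical}$ from hypothesis $(PW_2)$ enters here to keep the excess strictly positive.

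Finally, I will sum the pointwise bounds over $\bfm,\bn\in\Z^2$. The Gaussian-type factor $e^{-c_1\lambda|\bfm|}$ (and similarly for $\bn$) together with the moderate exponential growth $e^{C\lambda^{-1}|\bfm|}$ coming from $|\Im\bk|\lesssim\lambda^{-1}$ gives an absolutely convergent lattice sum for $\lambda$ large, bounded by $C\lambda^4$. Absorbing the polynomial $\lambda^4$ into the exponential excess (using the conventions of Section~\ref{preliminaries}) yields $\mathfrak{I}_{BA}^{(1)}+\mathfrak{I}_{BA}^{(2)}\lesssim e^{-c'\lambda}\le\rho_\lambda e^{-c\lambda}$ for an appropriate $c>0$, provided $\lambda>\lambda_\star$ is large enough. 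The analogous estimate for $\langle p_{\overline\bk,A}^\lambda,H^\lambda(\bk)p_{\bk,B}^\lambda\rangle$ follows by interchanging the roles of $A$ and $B$ and using $\be_{A,\nu}=-\be_{B,\nu}$, which changes $\gamma(\bk)$ to $\gamma(-\bk)$.
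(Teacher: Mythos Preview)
Your decomposition into the leading term $-\rho_\lambda\gamma(\bk)$ plus remainders is correct, and your geometric observation that in every remainder summand the total translation distance $|\bz_1|+|\bz_2|$ strictly exceeds $|\be_{A,1}|$ is sound. The gap is in the final step, where you assert $e^{-c'\lambda}\le\rho_\lambda e^{-c\lambda}$. That inequality requires the constant $c'$ you obtain from the \emph{upper} bound on the error terms to exceed the constant in the \emph{lower} bound $\rho_\lambda\gtrsim e^{-c_*\lambda}$ from Proposition~\ref{prop:rho-lam-bounds}. These two constants arise from completely different estimates: $c'$ comes from the crude pointwise decay of Corollary~\ref{cor:expo-decay}, whose rate is $c\cdot c_0/(r_0+c_0)$ (strictly smaller than the optimal rate $\sqrt{|E_0^\lambda|}/\lambda$), while $c_*$ comes from a lower bound on the Bessel kernel over the full support and involves the distance $|\be_{A,1}|+2r_0$. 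You give no argument linking them, and in fact a direct comparison of exponents does not close for all $r_0<r_{critical}$; for $r_0$ near the upper limit the margin $|\bz_1|+|\bz_2|-2r_0$ versus $|\be_{A,1}|+2r_0$ becomes too small once the weaker decay constant is factored in. Your invocation of the Geometric Lemma~\ref{euclid1} only addresses the sign of the excess distance, not this loss of constants.

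The paper avoids this lossy two-sided comparison entirely. Instead of bounding each $p_0^\lambda$ factor by a pure exponential, it proves \emph{ratio} estimates (Lemma~\ref{lem:p0-lam-bounds}) of the form $p_0^\lambda(\by-\bfm\vec\bv)\lesssim e^{-c|\bfm|\lambda}p_0^\lambda(\by)$ and $p_0^\lambda(\by-\bn\vec\bv)\lesssim e^{-c|\bn|\lambda}p_0^\lambda(\by+\be_{I,\nu})$, obtained by combining the integral representation \eqref{gs2}--\eqref{gs2-symm} with the pointwise comparison $\mathcal{K}_\lambda(\bx')\lesssim e^{-c\lambda(|\bx'|-|\bx''|)}\mathcal{K}_\lambda(\bx'')$ and the Geometric Lemma's distance inequalities. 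Substituting these ratio bounds into the summands of \eqref{1st-sum-bound} and \eqref{2nd-ds-breakdown-est} converts each directly into $e^{-c|\bfm|\lambda}\cdot\rho_\lambda$ (or similar), so that the conclusion $\lesssim\rho_\lambda e^{-c\lambda}$ follows by summation with no separate lower bound on $\rho_\lambda$ ever needed. This is the essential idea you are missing: the Geometric Lemma is used not merely to certify a positive distance gap, but (through the $60^\circ$-symmetrized kernel representation) to compare $p_0^\lambda$ at one lattice point to $p_0^\lambda$ at another, so that $\rho_\lambda$ appears as a common factor rather than as a lower bound to be beaten.
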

 %
%
To complete the proof of Proposition \ref{ME-BA1}, we need to establish the estimate \eqref{ip-BA-bound}. This is deferred  to Section \ref{ME-bounds}.
  
\subsection{Matrix element $\left\langle\ p_{_{\overline\bk,A}}^\lambda\ ,\ H^\lambda(\bk)\ p_{_{\bk,A}}^\lambda\ \right\rangle$}

Thanks to  \eqref{Hpkhatv3} we have
\begin{align}
&\left\langle p_{_{\overline\bk,A}}^\lambda,H^\lambda(\bk)p_{_{\bk,A}}^\lambda\right\rangle\nn\\
&\quad = \int_{B(\bv_A,r_0)\cup B(\bv_B,r_0)} \overline{p_{_{\overline\bk,A}}^\lambda(\bx)} H^\lambda(\bk) p_{_{\bk,A}}^\lambda(\bx) d\bx\nn\\
&\quad = \sum_{\bw\in\Lambda_A}\sum_{\bv\in\Lambda_A\setminus\{\bv_A\}}
\lambda^2  e^{i\bk\cdot(\bv-\bw)} \int_{B(\bv_A,r_0)} V_0(\bx-\bv_A) p_0^\lambda(\bx-\bw) p_0^\lambda(\bx-\bv) d\bx\nn\\
&\quad\quad + \sum_{\bw\in\Lambda_A}\sum_{\bv\in\Lambda_A}
\quad  \lambda^2 e^{i\bk\cdot(\bv-\bw)} \int_{B(\bv_B,r_0)} V_0(\bx-\bv_B) p_0^\lambda(\bx-\bw) p_0^\lambda(\bx-\bv) d\bx . \label{AA-terms}
\end{align}

We now bound both double-sums in \eqref{AA-terms}. For the first double-sum,  
express $\bw\in\Lambda_A$ as $\bw=\bv_A+\bfm\vec\bv$ and $\bv\in\Lambda_A\setminus\{\bv_A\}$ as $\bv=\bv_A+\bn\vec\bv,\ \bn\ne\vec0$.  The first double-sum is bounded by
\begin{equation}
\mathfrak{I}_{AA}^{(1)}\equiv \lambda^2\sum_{\bfm}\sum_{\bn\ne{\bf 0}}\ e^{C|\bfm-\bn|\lambda^{-1}} \int |V_0(\bx-\bv_A)|\ p_0^\lambda(\bx-\bv_A-\bfm\vec\bv)\ p_0^\lambda(\bx-\bv_A-\bn\vec\bv)\ d\bx\ .
\label{IAA-1}\end{equation}
For the second double-sum, express $\bw\in\Lambda_A$ as $\bw=\bv_B+\be_{B,1}+\bfm\vec\bv$, and 
$\bv\in\Lambda_A$ as $\bv=\bv_B+ \be_{B,1}+\bn\vec\bv$, $\bn\in\Z^2$. 
The second double-sum is bounded $C\cdot \mathfrak{I}_{AA}^{(2)}$, where 
\begin{align}
&\mathfrak{I}_{AA}^{(2)}\ \equiv\lambda^2\sum_{\bfm}\sum_{\bn} \ e^{C|\bn-\bfm|\lambda^{-1}}\nn\\
&\qquad\qquad\times \int |V_0(\bx-\bv_B)|\ p_0^\lambda(\bx-\bv_B-\be_{B,1}-\bfm\vec\bv)\ p_0^\lambda(\bx-\bv_B-\be_{B,1}-\bn\vec\bv)\ d\bx\ .\label{IAA-2}
\end{align}

\begin{proposition}\label{ME-AA1} Under the conditions of Proposition 
   \ref{M-expanded}, we have 
\begin{equation}
\Big|\ \left\langle p_{_{\overline\bk,A}}^\lambda,H^\lambda(\bk)p_{_{\bk,A}}^\lambda\right\rangle\ \Big|\ \le\
\mathfrak{I}_{AA}^{(1)}(\lambda)\ +\ \mathfrak{I}_{AA}^{(2)}(\lambda) \ \lesssim\ \rho_\lambda\times e^{-c\lambda},
\label{ip-AA-bound}\end{equation}
where the expressions for $\mathfrak{I}_{AA}^{(j)},\ j=1,2$  are displayed in 
\eqref{IAA-1} and \eqref{IAA-2}.
\end{proposition}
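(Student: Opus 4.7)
The first inequality $|\langle p^\lambda_{\overline{\bk},A}, H^\lambda(\bk)\, p^\lambda_{\bk,A}\rangle| \leq \mathfrak{I}^{(1)}_{AA}(\lambda) + \mathfrak{I}^{(2)}_{AA}(\lambda)$ is immediate: I would apply the triangle inequality term-by-term to the explicit expansion \eqref{AA-terms} after the reindexing $\bw = \bv_A + \bfm\vec{\bv}$, $\bv = \bv_A + \bn\vec{\bv}$ with $\bn\ne\vec0$ (for the first double-sum) and $\bw = \bv_B + \be_{B,1} + \bfm\vec{\bv}$, $\bv = \bv_B + \be_{B,1} + \bn\vec{\bv}$ (for the second). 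The phase factors are controlled by $|e^{i\bk\cdot(\bv-\bw)}| \le e^{C|\bfm-\bn|/\lambda}$ on the strip $|\Im\bk|\le \hat{c}\lambda^{-1}$, and this produces exactly the expressions \eqref{IAA-1} and \eqref{IAA-2}. So the essential point is the exponential smallness $\mathfrak{I}^{(1)}_{AA} + \mathfrak{I}^{(2)}_{AA} \lesssim \rho_\lambda\, e^{-c\lambda}$.

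My plan for the exponential bound is to invoke the same pointwise decay machinery developed for Proposition \ref{ME-BA1}, whose analogous estimate \eqref{ip-BA-bound} is deferred to Section \ref{ME-bounds}. The steps are: (i) use the pointwise bound \eqref{expo-decay1} on each of the two $p_0^\lambda$ factors in the integrand, distinguishing whether the translate $\bx - \bv_A - \bfm\vec{\bv}$ (resp.\ $\bx - \bv_B - \be_{B,1} - \bfm\vec{\bv}$) lies in the core $B({\bf 0}, r_0 + c_0)$ or outside; (ii) perform the integration over the ball of radius $r_0$ using $\|V_0\|_{L^\infty} \le 1$; and (iii) carry out the summation in $(\bfm,\bn)$, which converges as a geometric series in $\lambda$ once the mild growth $e^{C|\bfm-\bn|/\lambda}$ is absorbed into the exponential decay of $p_0^\lambda$ (true for $\lambda$ sufficiently large). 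The key geometric observations that drive the proof are: in $\mathfrak{I}^{(1)}_{AA}$, the restriction $\bn \ne \vec 0$ forces $|\bn\vec{\bv}| \ge |\bv_1| = 1$, so that for $\bx \in B(\bv_A, r_0)$ one has $|\bx - \bv_A - \bn\vec{\bv}| \ge 1 - r_0$; and in $\mathfrak{I}^{(2)}_{AA}$, every lattice point in $\Lambda_A$ lies at distance at least $|\be_{B,1}| - r_0 = 1/\sqrt3 - r_0$ from $\bx \in B(\bv_B, r_0)$, since $\be_{B,1}$ realizes the nearest-neighbor distance from $\bv_B$ to $\Lambda_A$.

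The main obstacle is the comparison of exponential rates with $\rho_\lambda$: a single term of $\mathfrak{I}^{(j)}_{AA}$ contains the \emph{product} of two $p_0^\lambda$ factors, whereas the integrand of $\rho_\lambda$ contains only one such factor evaluated at the distance $|\be_{A,1}| = 1/\sqrt{3}$. To close the estimate, I need the sum of the two separation distances appearing in each term of $\mathfrak{I}^{(j)}_{AA}$ to strictly exceed $|\be_{A,1}|$ by a definite amount. This is precisely the content of the Geometric Lemma \ref{euclid1} together with the stricter constraint $(PW_2)$, namely $r_0 < r_{\text{critical}} < \tfrac12|\be_{A,1}|$: for any $\bx \in B(\bv_A, r_0)$ and $\bn\ne\vec0$, $|\bx - \bv_A| + |\bx - \bv_A - \bn\vec{\bv}| \ge |\bn\vec{\bv}| \ge 1 > |\be_{A,1}|$, with a strict gap, and an analogous statement for $\mathfrak{I}^{(2)}_{AA}$ centered at $\bv_B$. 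Combining this gap with the lower bound $\rho_\lambda \gtrsim e^{-c_1\lambda}$ of Proposition \ref{prop:rho-lam-bounds} and the polynomial prefactor in \eqref{expo-decay1} yields the factor $e^{-c\lambda}$ after a routine absorption step. Once this geometric comparison is established --- the technical heart of Section \ref{ME-bounds} --- Proposition \ref{ME-AA1} follows from precisely the same estimates that will be used to establish \eqref{ip-BA-bound}, with only trivial modifications to keep track of the fact that both factors now come from the single sublattice $\Lambda_A$.
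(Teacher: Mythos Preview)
Your reduction to the two sums $\mathfrak{I}^{(1)}_{AA}$ and $\mathfrak{I}^{(2)}_{AA}$ is correct, and you are right that the real work is the comparison with $\rho_\lambda$ carried out in Section~\ref{ME-bounds}. However, the mechanism you describe for that comparison does not match the paper's and, as stated, does not close.

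Your plan is to bound each factor $p_0^\lambda(\cdot)$ separately via the crude decay estimate \eqref{expo-decay1}, obtaining something like $e^{-c_1\lambda\cdot(\text{distance sum})}$, and then to divide by $\rho_\lambda$ using the crude lower bound $\rho_\lambda\gtrsim e^{-c_1'\lambda}$ of Proposition~\ref{prop:rho-lam-bounds}. The difficulty is that the constants $c_1$ and $c_1'$ arise from different, non-sharp arguments: $c_1$ in Corollary~\ref{cor:expo-decay} carries the loss factor $c_0/(r_0+c_0)$ and the constant from hypothesis~{\bf(GS)}, while $c_1'$ in the lower bound for $\rho_\lambda$ is essentially $|\be_{A,1}|+2r_0$ and does not see the constant in {\bf(GS)} at all. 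There is no reason these satisfy $c_1(1-r_0)>c_1'$; for potentials with $|E_0^\lambda|/\lambda^2$ small the inequality fails. Your invocation of Geometric Lemma~\ref{euclid1} here is also off: the inequality $|\bn\vec\bv|\ge 1>|\be_{A,1}|$ that you cite is just the triangle inequality and lattice geometry, not the content of that lemma.

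The paper avoids this mismatch entirely. Instead of bounding $p_0^\lambda$ above and $\rho_\lambda$ below separately, it proves \emph{pointwise ratio} estimates (Lemma~\ref{lem:p0-lam-bounds}) such as $p_0^\lambda(\by-\bn\vec\bv)\lesssim e^{-c|\bn|\lambda}\,p_0^\lambda(\by+\be_{B,1})$ for $\by\in\mathrm{supp}\,V_0$. These are obtained from the integral representation $p_0^\lambda=\mathcal{K}_\lambda\star(\lambda^2|V_0|p_0^\lambda)$, the $60^\circ$-symmetrized version \eqref{gs2-symm}, the kernel monotonicity $\mathcal{K}_\lambda(\bx')\lesssim e^{-c\lambda(|\bx'|-|\bx''|)}\mathcal{K}_\lambda(\bx'')$, and the Geometric Lemma, whose real role is to compare $|\bz+\bn\vec\bv-\by|$ with $|\bz+\be_{I,\nu}-R^{l_0}_{60}\by|$ for \emph{all} $\by,\bz$ in the support of $V_0$ and a suitable rotation $l_0$. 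Substituting these ratio bounds into the integrand of $\mathfrak{I}^{(j)}_{AA}$ produces $\int|V_0|\,p_0^\lambda(\by)\,p_0^\lambda(\by+\be_{B,1})\,d\by$ directly, i.e.\ $\rho_\lambda$ itself (up to the $\lambda^2$), with an extra $e^{-c\lambda}$ to spare --- no comparison of unrelated exponential rates is needed.
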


\nit 
We defer the proof of Proposition \ref{ME-AA1}, along with that of Proposition \ref{ME-BA1}, to Section  
\ref{ME-bounds}.

\subsection{Bounds on the higher order matrix elements}\label{sec:higher-order}
  
  \begin{proposition}\label{ME-IJ-higher-order}
Assume $\lambda>\lambda_\star$ and $|\Omega|\le \widehat{C}\rho_\lambda$, for some constant $\widehat{C}$. Then, 
 \begin{equation}
   \left|\  \left\langle \left[H^\lambda(\overline\bk)-\overline\Omega\right] p_{_{\overline\bk,J}}^\lambda, 
 {\rm Res}^{\lambda,\tK}(\bk,\Omega) \Pi_{AB}\  \left[H^\lambda(\bk)-\Omega\right] p_{\bk,I}^\lambda
  \right\rangle\ \right|\ \lesssim\ \rho_\lambda\ e^{-c\lambda} .
  \label{higher-order-ME-est}
  \end{equation}
  \end{proposition}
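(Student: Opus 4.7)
The plan is to bound the inner product by Cauchy–Schwarz and then estimate each resulting factor using structural bounds already developed earlier in the paper. Since $\mathrm{Res}^{\lambda,\tK}(\bk,\Omega)\Pi_{AB}$ takes values in $\mathscr{H}_{AB}$, only the $\mathscr{H}_{AB}$-projection of the left argument of the inner product contributes. The hypotheses $|\bk-\tK|\le\hat c\lambda^{-1}$ and $|\Omega|\le\widehat C\rho_\lambda$ (which is itself $\le\hat c$ for $\lambda$ large, since $\rho_\lambda\lesssim e^{-c\lambda}$) place $(\bk,\Omega)\in U$, so the uniform resolvent bound $\|\mathrm{Res}^{\lambda,\tK}(\bk,\Omega)\|_{\mathscr{H}_{AB}\to\mathscr{H}_{AB}}\lesssim 1$ of Proposition \ref{prop2-resolvent} applies. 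Cauchy–Schwarz then yields
\[
\bigl|\mathcal{E}\bigr|\;\lesssim\;\bigl\|\Pi_{AB}[H^\lambda(\overline\bk)-\overline\Omega]p^\lambda_{\overline\bk,J}\bigr\|\cdot\bigl\|\Pi_{AB}[H^\lambda(\bk)-\Omega]p^\lambda_{\bk,I}\bigr\|,
\]
where $\mathcal{E}$ denotes the matrix element on the left-hand side of \eqref{higher-order-ME-est}.

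Next I would bound each factor by the triangle inequality,
\[
\bigl\|\Pi_{AB}[H^\lambda(\bk)-\Omega]p^\lambda_{\bk,I}\bigr\|\;\le\;\bigl\|H^\lambda(\bk)p^\lambda_{\bk,I}\bigr\|\;+\;|\Omega|\cdot\bigl\|\Pi_{AB}p^\lambda_{\bk,I}\bigr\|.
\]
The first term is controlled directly by the pre-established bound \eqref{Hlambda-pk}, giving $\lesssim e^{-c\lambda}$. For the second, the key observation is that by the very definition of $\Pi_{AB}$ one has $\Pi_{AB}p^\lambda_{\tK,I}=0$, so
\[
\Pi_{AB}p^\lambda_{\bk,I}\;=\;\Pi_{AB}\bigl(p^\lambda_{\bk,I}-p^\lambda_{\tK,I}\bigr),
\]
and by \eqref{pktK-diff} together with $\|\Pi_{AB}\|\le 1$ this has norm $\lesssim|\bk-\tK|\lesssim\lambda^{-1}$. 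Combined with the hypothesis $|\Omega|\le\widehat C\rho_\lambda$ and Proposition \ref{prop:rho-lam-bounds}, we obtain $|\Omega|\cdot\|\Pi_{AB}p^\lambda_{\bk,I}\|\lesssim\rho_\lambda\lambda^{-1}\lesssim e^{-c\lambda}$. Each of the two factors is therefore $\lesssim e^{-c\lambda}$, and the product gives $|\mathcal{E}|\lesssim e^{-2c\lambda}$.

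Finally, to recast $e^{-2c\lambda}$ as $\rho_\lambda e^{-c\lambda}$, I would appeal to the fact that the exponential constant in \eqref{Hlambda-pk} originates, via the explicit formula \eqref{Hpkhatv3}, from the decay of $p_0^\lambda$ across the nearest-neighbor distance $|\be_{A,1}|$ — exactly the mechanism driving the upper bound on $\rho_\lambda$ in Proposition \ref{prop:rho-lam-bounds}. Under the paper's convention of absorbing polynomial factors in $\lambda$ into exponential rates, this makes $e^{-2c\lambda}\lesssim\rho_\lambda e^{-c'\lambda}$ automatic for a suitable $c'>0$. The main obstacle I anticipate is keeping careful track of these exponential rates, since $e^{-c\lambda}$ in \eqref{Hlambda-pk} and $\rho_\lambda$ must be of compatible exponential order for the bookkeeping to close. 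A more direct route — which I would prefer in the actual write-up — is to sharpen \eqref{Hlambda-pk} to $\|H^\lambda(\bk)p^\lambda_{\bk,I}\|\lesssim\rho_\lambda$ by running the same nearest-neighbor overlap computation used in the proofs of Propositions \ref{ME-BA1} and \ref{ME-AA1}, now at the level of an $L^2$ norm rather than an inner product; then $|\mathcal{E}|\lesssim\rho_\lambda^2\lesssim\rho_\lambda e^{-c\lambda}$ follows at once from the upper bound $\rho_\lambda\lesssim e^{-c\lambda}$.
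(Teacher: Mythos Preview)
Your overall strategy matches the paper's exactly: Cauchy--Schwarz combined with the resolvent bound of Proposition~\ref{prop2-resolvent} reduces the problem to estimating $\|[H^\lambda(\bk)-\Omega]p^\lambda_{\bk,I}\|$, and the heart of the matter is sharpening \eqref{Hlambda-pk}. You are right that your first route (multiplying two copies of the unspecified $e^{-c\lambda}$ from \eqref{Hlambda-pk} and hoping this beats $\rho_\lambda$) cannot be closed without tracking constants that the paper never makes explicit.

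The gap lies in your second route. The sharpening you propose, $\|H^\lambda(\bk)p^\lambda_{\bk,I}\|\lesssim\rho_\lambda$, is not what the nearest-neighbor overlap computation actually yields, and a rough heuristic (taking $p_0^\lambda$ concentrated on a ball of radius $\sim\lambda^{-1}$) suggests the ratio $\|H^\lambda(\bk)p^\lambda_{\bk,I}\|/\rho_\lambda$ grows polynomially in $\lambda$. The paper instead proves
\[
\|H^\lambda(\bk)p^\lambda_{\bk,I}\|\ \lesssim\ e^{-c\lambda}\sqrt{\rho_\lambda},
\]
which squares to $e^{-2c\lambda}\rho_\lambda\lesssim e^{-c\lambda}\rho_\lambda$ and closes immediately. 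The mechanism is exactly the one you describe---restricting to nearest-neighbor terms via \eqref{Hpkhatv3} and controlling the remainder through Lemma~\ref{lem:p0-lam-bounds}---but because one is estimating an $L^2$ norm, the leading term looks like $\lambda^2\bigl(\int|V_0|^2\,(p_0^\lambda(\cdot+\be_{A,1}))^2\bigr)^{1/2}$. Applying \eqref{p0-bound3} once under the integral replaces one factor $p_0^\lambda(\cdot+\be_{A,1})$ by $e^{-c\lambda}p_0^\lambda$, and what remains under the square root is precisely $\rho_\lambda$ (up to $\|V_0\|_\infty$). That is the step you are missing: the square root naturally produces $\sqrt{\rho_\lambda}$, not $\rho_\lambda$, and the extra $e^{-c\lambda}$ from \eqref{p0-bound3} is what compensates.
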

  
\nit The proof of Proposition \ref{ME-IJ-higher-order} is given in Section \ref{ME-bounds}.  

Once Propositions \ref{ME-BA1}, \ref{ME-AA1} and \ref{ME-IJ-higher-order} are established, Proposition \ref{ME} follows.
Furthermore, Proposition \ref{M-expanded} follows at once from Proposition \ref{ME} and the near-orthonormality
 \eqref{pkI-normalize1} of the $p_{\bk,I}^\lambda$.

 \section{Dispersion surfaces}\label{dispersion-surfaces}
 
By Proposition \ref{M-expanded},
 \begin{equation*}
 \mathcal{M}^{\lambda,\tK}(\bk,\rho_\lambda \mu)\ =\
  -\rho_\lambda
  \left[\ 
  \begin{pmatrix} 
  \mu & \gamma(-\bk)\\
 \gamma(\bk) & \mu
 \end{pmatrix}
 \ +\ {\rm Error}^\lambda(\bk,\mu)\ \right],
 \end{equation*}
 $ \gamma(\bk) \equiv \sum_{\nu=1,2,3}e^{i\bk\cdot \be_{B,\nu}} $,
  and $|{\rm Error}^\lambda(\bk,\mu)|\lesssim e^{-c\lambda}$ on the domain $\mathcal{U}_\lambda(\tK)$, given in
   \eqref{rescaled-U}.
%
%
%
%
%
Therefore, 
 \begin{equation*}
 \rho_\lambda^{-2}\ \det(\mathcal{M}^{\lambda,\tK}\left(\bk,\rho_\lambda \mu)\right)
 = \mu^2-\gamma(\bk)\gamma(-\bk)\mu + f(\bk,\mu),
 \end{equation*}
 where $f(\bk,\mu)$ is analytic on the domain $\mathcal{U}_\lambda$, defined in \eqref{rescaled-U},  and $|f(\bk,\mu)|\lesssim e^{-c\lambda}$. {\it Although  $f(\bk,\mu)$ depends on $\lambda$,  we suppress this dependence. }\\
Recall also that  $\mathcal{M}^{\lambda,\tK}(\bk,\Omega)$ depends analytically on $(\bk,\Omega)$ and hence also on $(\bk,\mu)$. Thus, we obtain the following result.

 \begin{proposition}\label{detM-scaled-poly}
 Fix $\tK\in\R^2$ with $|\tK|\le K_{max}$.  Suppose $\lambda\ge\lambda_\star$, where $\lambda_\star$ is a large enough  constant depending only on $V$ and $K_{max}$. Let 
 \begin{equation*}
U(\tK) \equiv \left\{ (\bk,\mu)\in\C^2\times\C : |\bk-\tK|<\hat{c}\lambda^{-1},\ \ |\mu|<\widehat{C}\ \right\},
 \end{equation*}
 where $\hat{c}$ is a  small enough constant, dependent on $K_{max}$, and $\widehat{C}$ is a large enough constant, chosen below to 
  satisfy \eqref{Chat}.
  
 Then, there exists an analytic function $f(\bk,\mu)$ defined on $U(\tK)$, with the following properties:
 \begin{itemize}
 \item[(i)] For real $(\bk,\mu)\in U(\tK)$, the quantity $\Omega=\rho_\lambda\times\mu$ is an eigenvalue of $H^\lambda(\bk)$, {\it i.e.} there exists $(\Omega,\psi)$ with $\psi\ne0$ such that 
 \[ \left(\ -(\nabla_\bx+i\bk)^2\ +\ V^\lambda(\bx)\ -\ \Omega\ \right)\psi\ =\ 0,\ \ \psi\in H^2(\R^2/\Lambda_h),\]
  if and only if $\mu$ is a root of the equation:
  \begin{equation*}
  \rho_\lambda^{-2}\ \det(\mathcal{M}^{\lambda,\tK}\left(\bk,\rho_\lambda \mu)\right)\ \equiv\ \mu^2\ -\ \gamma(\bk)\cdot\gamma(-\bk)+f(\bk,\mu)=0\ .
  \end{equation*}
  \item[(ii)] $|f(\bk,\mu)| \lesssim e^{-c\lambda}$ for all $(\bk,\mu)\in U(\tK)$.
  \end{itemize}
 \end{proposition}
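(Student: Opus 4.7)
The plan is to deduce Proposition~\ref{detM-scaled-poly} essentially as an immediate corollary of the two preceding results, Proposition~\ref{evp-det0} and Proposition~\ref{M-expanded}, together with a straightforward $2\times 2$ determinant computation. First, I would invoke Proposition~\ref{evp-det0} to translate the eigenvalue problem into the vanishing of the determinant: for $(\bk,\Omega)\in U$, there exists a nontrivial $\psi\in H^2(\R^2/\Lambda_h)$ with $[H^\lambda(\bk)-\Omega]\psi=0$ if and only if $\det \mathcal{M}^{\lambda,\tK}(\bk,\Omega)=0$. Under the substitution $\Omega=\rho_\lambda\mu$, and provided $\widehat{C}$ is chosen so that $\widehat{C}\rho_\lambda<\hat{c}$ for all $\lambda>\lambda_\star$ (which holds since $\rho_\lambda\lesssim e^{-c\lambda}$), the domain $U(\tK)$ of part~(i) is contained in the domain $\mathcal{U}_\lambda(\tK)$ from Proposition~\ref{M-expanded}.

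Next I would substitute the matrix expansion from Proposition~\ref{M-expanded},
\[
\mathcal{M}^{\lambda,\tK}(\bk,\rho_\lambda\mu)
= -\rho_\lambda\left[\begin{pmatrix}\mu & \gamma(-\bk)\\ \gamma(\bk) & \mu\end{pmatrix} + \mathrm{Error}^\lambda(\bk,\mu)\right],
\]
and compute the $2\times 2$ determinant explicitly:
\[
\rho_\lambda^{-2}\det\mathcal{M}^{\lambda,\tK}(\bk,\rho_\lambda\mu)
= \mu^2 - \gamma(\bk)\gamma(-\bk) + f(\bk,\mu),
\]
where $f(\bk,\mu)$ collects every contribution that involves at least one matrix entry from $\mathrm{Error}^\lambda$. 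Schematically, if $\mathrm{Error}^\lambda$ has entries $\varepsilon_{ij}(\bk,\mu)$, then $f$ is a polynomial expression in $\mu,\ \gamma(\pm\bk),\ \varepsilon_{ij}$ of total degree at most~$2$ in the $\varepsilon_{ij}$, linear in the $\varepsilon_{ij}$ at leading order.

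To verify part~(ii), I would use that $|\varepsilon_{ij}(\bk,\mu)|\lesssim e^{-c\lambda}$ uniformly on $\mathcal{U}_\lambda(\tK)$ by Proposition~\ref{M-expanded}, while $|\mu|<\widehat{C}$ and $|\gamma(\pm\bk)|$ is bounded by a constant on the relevant domain (since $\bk$ lies in a bounded set of $\C^2$ with $|\mathrm{Im}\,\bk|\lesssim\lambda^{-1}$). Thus each term in the expression for $f$ is bounded by a constant multiple of $e^{-c\lambda}$, giving the required estimate $|f(\bk,\mu)|\lesssim e^{-c\lambda}$ on $U(\tK)$. Analyticity of $f$ on $U(\tK)$ follows because both $\mathcal{M}^{\lambda,\tK}(\bk,\Omega)$ and hence its entries are analytic on $U$ (see Remark~\ref{calM-analytic}), and $f$ is obtained from these entries via polynomial operations plus the affine substitution $\Omega=\rho_\lambda\mu$.

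There is no real obstacle here; the proposition is essentially a packaging step that combines the Lyapunov--Schmidt characterization with the matrix-element expansion. The only points requiring minor care are (a) checking that the $\mu$-domain $|\mu|<\widehat{C}$ is compatible with the resolvent domain $|\Omega|<\hat{c}$ appearing in the construction of $\mathcal{M}^{\lambda,\tK}$, which amounts to choosing $\lambda_\star$ large enough that $\widehat{C}\rho_\lambda<\hat{c}$, and (b) confirming that the determinant computation preserves analyticity, which is immediate.
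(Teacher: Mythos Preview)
Your proposal is correct and follows essentially the same approach as the paper: invoke Proposition~\ref{evp-det0} for the eigenvalue--determinant equivalence, substitute the matrix expansion from Proposition~\ref{M-expanded}, compute the $2\times2$ determinant, and read off $f(\bk,\mu)$ as the error contribution with the required bound and analyticity. Your remarks on the domain compatibility $\widehat{C}\rho_\lambda<\hat{c}$ are a minor elaboration the paper leaves implicit.
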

 
 \begin{definition}\label{rescaled-eig}
 If $\rho_\lambda\times \mu$ is an eigenvalue of $H^\lambda(\bk)$, then we say that $(\bk,\mu)$ belongs to the 
 {\it rescaled dispersion surface} and we say that $\mu$ is a rescaled eigenvalue of $H^\lambda(\bk)$.
\end{definition}
 
 \subsection{Rescaled dispersion surfaces}
 
 We show that the locus of quasi-momentum / energy pairs which comprise a rescaling of the two lowest dispersion surfaces
  of $H=-\Delta+\lambda^2V(\bx)$
is uniformly approximated, for  large $\lambda$ and on any prescribed compact quasi-momentum set, by the zero-set of an analytic perturbation of Wallace{'}s dispersion relation 
   of the 2-band tight-binding model.
 
\subsection{Complex analysis}\label{complex-analysis}
 We pick $\widehat{C}$ large enough, depending on $K_{max}$, to guarantee that for $(\bk,\mu)\in U(\tK)$ 
\begin{equation}
|\mu|\ge\frac{\widehat{C}}2\ \ \implies\ \ \left|\ \mu^2\ -\ \gamma(\bk)\ \gamma(-\bk)\ \right|\ >\ 2 .
\label{Chat}
\end{equation}
 Denote by $F(\mu,\bk)$ the function
 \begin{equation*}
 F(\mu,\bk)\ =\ \mu^2\ -\ \gamma(\bk)\ \gamma(-\bk)\ + f(\mu,\bk)\ .
 \end{equation*}
We suppress the dependence of  $F(\mu,\bk)$ on $\lambda$, inherited from $f(\mu,\bk)$. 

The function $\mu\mapsto \mu^2\ -\ \gamma(\bk)\ \gamma(-\bk)$ has two zeros (multiplicity counted), for fixed $\bk\in\C^2$ such that $|\bk-\tK|<\hat{c}\lambda^{-1}$. These zeros lie in disc $\{\mu\in\C: |\mu|<\widehat{C}/2\}$, and moreover 
$ \left|\mu^2\ -\ \gamma(\bk)\ \gamma(-\bk)\right|>2$ and $|f(\mu,\bk)|\le e^{-c\lambda}$ on the boundary of that disc. Hence, by Rouch\'e's theorem, the function $\mu\mapsto F(\mu,\bk)$ has two zeros (multiplicities counted)
in the disc $\{\mu\in\C: |\mu|<\widehat{C}/2\}$. 
\begin{equation}
\textrm{We call the two zeros of  $F(\mu,\bk)$:\ $\mu_+(\bk)$ and $\mu_-(\bk)$.}
\nn\end{equation}

If $\bk\in\R^2$, then $\mu_+(\bk)$ and $\mu_-(\bk)$ are the two real rescaled eigenvalues of $H^\lambda(\bk)$ in the interval $\left[-\widehat{C}/2,\widehat{C}/2\right]$. The dependence of $\mu_+(\bk)$ and $\mu_-(\bk)$ on $\lambda$ has been suppressed. 

Standard residue calculations give
\begin{equation}
(\mu_+(\bk))^l\ +\ (\mu_-(\bk))^l\ =\ \frac{1}{2\pi i}\oint_{|\mu|=\frac12\widehat{C}}\ 
\mu^l\ \frac{\D_\mu F(\mu,\bk)}{F(\mu,\bk)}\ d\mu,\ \ l=1,2.
\label{residue-l}
\end{equation}
Since $|f(\mu,\bk)|\le e^{-c\lambda}$ for $|\mu|<\widehat{C}$ and $|\bk-\tK|\le\hat{c}\lambda^{-1}$,
we have for $|\mu|=\frac12\widehat{C}$ and $|\bk-\tK|\le\hat{c}\lambda^{-1}$ the estimates:
\begin{align}
&\left|\ \partial_\mu F(\mu,\bk)\ -\ 2\mu\ \right| \lesssim\ e^{-c\lambda}\nn\\
& \left|\ F(\mu,\bk)\ -\ \left[\mu^2-\gamma(\bk)\gamma(-\bk)\right]\ \right|\ \lesssim\ 
e^{-c\lambda} \label{bounds_on_F} \\
& \left|\ \mu^2-\gamma(\bk)\gamma(-\bk)\ \right|\ >\ 2.\nn
\end{align}
From \eqref{residue-l} and  \eqref{bounds_on_F} 
we obtain that $\mu_+(\bk)+\mu_-(\bk)$ and  $(\mu_-(\bk))^2\ +\ (\mu_-(\bk))^2$ are analytic functions 
 of $\bk$.
Furthermore, 
\begin{equation*}
 \left|\ 
\frac{\partial_\mu F(\mu,\bk)}{F(\mu,\bk)}\ -\ 
\frac{\partial_\mu[  \mu^2-\gamma(\bk)\gamma(-\bk)\ ]}{\mu^2-\gamma(\bk)\gamma(-\bk)}\ \right| 
\lesssim e^{-c\lambda}
\end{equation*}
for $|\mu|=\frac12\widehat{C}$,\ \ $|\bk-\tK|<\hat{c}\lambda^{-1}$. Therefore, for $l=1,2$:
 \begin{align}
 (\mu_+(\bk))^l\ +\ (\mu_-(\bk))^l\ &=\ 
 \frac{1}{2\pi i}\oint_{|\mu|=\frac12\widehat{C} }\ 
\frac{2\mu^{l+1}}{\mu^2-\gamma(\bk)\gamma(-\bk)}\ d\mu\ +\ \mathcal{O}(e^{-c\lambda})\nn\\
\ &=\ \begin{cases}
\mathcal{O}(e^{-c\lambda}), & l=1\\
2\gamma(\bk)\gamma(-\bk)\ +\ \mathcal{O}(e^{-c\lambda}), & l=2
\end{cases}
\label{mu-mu_squared} 
\end{align}
%
%
for $|\bk-\tK|<\hat{c}\lambda^{-1}$.
Note that 
\[\mu_+(\bk)\mu_-(\bk)=\frac12(\mu_+(\bk)+\mu_-(\bk))^2-\frac12(\mu_+(\bk)^2+\mu_-(\bk)^2).\]
It follows that $\mu_+(\bk)\mu_-(\bk)$ is analytic for $|\bk-\tK|<\hat{c}\lambda^{-1}$ and satisfies the bound
\begin{equation}
\left| \mu_+(\bk)\mu_-(\bk) + \gamma(\bk)\gamma(-\bk) \right|\lesssim e^{-c\lambda},\ \  |\bk-\tK|<\hat{c}\lambda^{-1}\ .
\label{mu_pm2gamma_pm}
\end{equation}

From \eqref{mu-mu_squared}  and \eqref{mu_pm2gamma_pm}, we obtain the following results regarding the quadratic equation $(\mu-\mu_+(\bk))\times (\mu-\mu_-(\bk))=0$.

 \begin{lemma}\label{F-poly-mu}
 For $\bk\in\C^2$, $|\bk-\tK|<\hat{c}\lambda^{-1}$, the roots of $F(\mu,\bk)=0$ in the disc $|\mu|<\widehat{C}/2$ are the roots of a quadratic equation:
 \begin{equation*}
 \mu^2\ +\ b_1(\bk)\mu\ +\ b_0(\bk)\ =\ 0,
\end{equation*}
 where $b_1(\bk), b_0(\bk)$ are analytic functions on 
 \begin{equation}
 D(\tK)\ =\ \{\bk\in\C^2 : |\bk-\tK|<\hat{c}\lambda^{-1}\}
 \label{DtK}
 \end{equation}
 and $|b_1(\bk)|$,\ $|b_0(\bk)+\gamma(\bk)\gamma(-\bk)|\lesssim e^{-c\lambda}$ for all  such $\bk$.
 \end{lemma}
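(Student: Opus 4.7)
The plan is to recognize that Lemma \ref{F-poly-mu} is essentially a repackaging of the Rouch\'e/residue analysis already carried out in Section \ref{complex-analysis}. I would simply define
\[
b_1(\bk) \;=\; -\bigl(\mu_+(\bk) + \mu_-(\bk)\bigr), \qquad b_0(\bk) \;=\; \mu_+(\bk)\,\mu_-(\bk),
\]
so that $\mu_\pm(\bk)$ are tautologically the two roots of $\mu^2 + b_1(\bk)\mu + b_0(\bk) = 0$, since any two complex numbers are the roots of a unique monic quadratic given by their elementary symmetric functions. The only real content of the lemma is the analyticity of $b_0$ and $b_1$ on the disc $D(\tK)$ together with the exponential smallness estimates, both of which follow immediately from the preceding work.

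First I would dispense with $b_1(\bk)$. The residue formula \eqref{residue-l} with $l=1$ writes
\[
\mu_+(\bk) + \mu_-(\bk) \;=\; \frac{1}{2\pi i}\oint_{|\mu|=\widehat{C}/2} \mu\, \frac{\partial_\mu F(\mu,\bk)}{F(\mu,\bk)}\, d\mu.
\]
The integrand is analytic in $(\mu,\bk)$ on a neighborhood of the contour $\{|\mu| = \widehat{C}/2\} \times D(\tK)$ because, by the Rouch\'e argument used to locate $\mu_\pm(\bk)$, the bounds \eqref{bounds_on_F} guarantee $F(\mu,\bk) \neq 0$ for $|\mu| = \widehat{C}/2$ and $\bk \in D(\tK)$. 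Differentiation under the integral sign (or Morera) yields analyticity of $b_1$, and the bound $|b_1(\bk)| \lesssim e^{-c\lambda}$ is exactly the $l=1$ case of \eqref{mu-mu_squared}.

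Next I would handle $b_0(\bk)$ via Newton's identity
\[
\mu_+(\bk)\mu_-(\bk) \;=\; \tfrac12\bigl[(\mu_+(\bk)+\mu_-(\bk))^2 - (\mu_+(\bk)^2 + \mu_-(\bk)^2)\bigr].
\]
The second symmetric power $\mu_+(\bk)^2 + \mu_-(\bk)^2$ is analytic on $D(\tK)$ by the same residue formula \eqref{residue-l} with $l=2$, so $b_0(\bk)$ is the difference of two analytic functions and hence analytic. The estimate $|b_0(\bk) + \gamma(\bk)\gamma(-\bk)| \lesssim e^{-c\lambda}$ is then precisely the bound \eqref{mu_pm2gamma_pm} already established.

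In short, there is no genuine obstacle here: the lemma follows by taking the two symmetric functions of the roots produced by the Rouch\'e argument and invoking \eqref{residue-l}--\eqref{mu_pm2gamma_pm}. The only point requiring a moment of care is verifying non-vanishing of $F$ on the contour, which must hold uniformly for $\bk \in D(\tK)$ (and not just pointwise) in order to legitimately differentiate the contour integral in $\bk$; this uniformity is supplied by \eqref{bounds_on_F}.
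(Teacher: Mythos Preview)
Your proposal is correct and is essentially identical to the paper's own argument: the paper states the lemma as an immediate consequence of \eqref{mu-mu_squared} and \eqref{mu_pm2gamma_pm}, obtained by forming the monic quadratic $(\mu-\mu_+(\bk))(\mu-\mu_-(\bk))=0$ whose coefficients are the elementary symmetric functions of $\mu_\pm(\bk)$.
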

 
 Therefore, after possibly interchanging $\mu_+(\bk)$ and $\mu_-(\bk)$ for each $\bk$, we find:
 \begin{lemma}\label{solve-mu_pm}
 For $\bk\in D(\tK)$, defined in \eqref{DtK}, the roots of $F(\mu,\bk)=0$, are given by
 \begin{equation}
 \mu_\pm(\bk) = h_\tK(\bk)\ \pm\ \sqrt{\gamma(\bk)\gamma(-\bk)\ +\ g_\tK(\bk)},
 \label{mupmofk}
 \end{equation}
 where $h_\tK$ and $g_\tK$ are analytic functions on $D(\tK)$ and 
 \[ |h_\tK(\bk)|,\ |g_\tK(\bk)|\ \lesssim\ e^{-c\lambda},\ \ \bk\in D(\tK).\]
 When $\bk\in\R^2\cap D(\tK)$, $\mu_\pm(\bk)$ are real and therefore
  $h_\tK$ and $g_\tK$ are real for $\bk\in\R^2\cap D(\tK)$.
\end{lemma}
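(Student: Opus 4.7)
The plan is to reduce the assertion directly to Lemma \ref{F-poly-mu} by applying the quadratic formula. Since $F(\mu,\bk) = 0$ is equivalent to $\mu^2 + b_1(\bk)\mu + b_0(\bk) = 0$ on $D(\tK)$, with $b_0, b_1$ analytic there, the roots are
\[
\mu_\pm(\bk) \;=\; -\tfrac12 b_1(\bk) \;\pm\; \sqrt{\tfrac14 b_1(\bk)^2 - b_0(\bk)}.
\]
I would then set
\[
h_\tK(\bk) \;:=\; -\tfrac12\, b_1(\bk), \qquad g_\tK(\bk) \;:=\; \tfrac14\, b_1(\bk)^2 \;-\; b_0(\bk) \;-\; \gamma(\bk)\gamma(-\bk),
\]
so that $\tfrac14 b_1^2 - b_0 = \gamma(\bk)\gamma(-\bk) + g_\tK(\bk)$ and the formula $\mu_\pm(\bk) = h_\tK(\bk) \pm \sqrt{\gamma(\bk)\gamma(-\bk)+g_\tK(\bk)}$ holds by construction.

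Analyticity of $h_\tK$ and $g_\tK$ on $D(\tK)$ is immediate from Lemma \ref{F-poly-mu} together with entireness of $\bk \mapsto \gamma(\bk)$. The exponential bounds follow by direct estimation: from $|b_1(\bk)| \lesssim e^{-c\lambda}$ we get $|h_\tK(\bk)| \lesssim e^{-c\lambda}$, and
\[
|g_\tK(\bk)| \;\le\; \tfrac14\, |b_1(\bk)|^2 \;+\; \bigl|b_0(\bk) + \gamma(\bk)\gamma(-\bk)\bigr| \;\lesssim\; e^{-2c\lambda} + e^{-c\lambda} \;\lesssim\; e^{-c\lambda},
\]
after perhaps absorbing the first term by shrinking $c$.

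For the reality claim on $\R^2 \cap D(\tK)$, I would invoke the Hermiticity of $\mathcal{M}^{\lambda,\tK}(\bk,\Omega)$ for real arguments (Remark \ref{calM-analytic}), which forces the roots $\mu_\pm(\bk)$ to be real eigenvalues of a self-adjoint problem. Consequently $b_1(\bk) = -(\mu_+(\bk)+\mu_-(\bk))$ and $b_0(\bk) = \mu_+(\bk)\mu_-(\bk)$ are real, so $h_\tK(\bk)$ is real. Noting that for real $\bk$ one has $\gamma(-\bk) = \overline{\gamma(\bk)}$, so $\gamma(\bk)\gamma(-\bk) = |\gamma(\bk)|^2 \in \R$, it follows that $g_\tK(\bk) \in \R$ as well; moreover $\gamma(\bk)\gamma(-\bk) + g_\tK(\bk) = \tfrac14 b_1^2 - b_0 = \tfrac14(\mu_+-\mu_-)^2 \ge 0$, so the square root makes sense pointwise on $\R^2 \cap D(\tK)$.

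I do not foresee a genuine obstacle; this step is essentially bookkeeping from the quadratic formula. The only conceptual point worth flagging is that $\mu_\pm(\bk)$ themselves are \emph{not} claimed to be individually analytic on all of $D(\tK)$ — and indeed they cannot be, since the two branches may coincide where $\gamma(\bk)\gamma(-\bk) + g_\tK(\bk)$ vanishes (the Dirac points). What the lemma asserts, and what the proof delivers, is analyticity of the symmetric data $h_\tK$ and $g_\tK$, together with the explicit radical presentation of the two roots.
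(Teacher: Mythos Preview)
Your proof is correct and follows exactly the approach the paper intends: the paper simply writes ``Therefore, after possibly interchanging $\mu_+(\bk)$ and $\mu_-(\bk)$ for each $\bk$, we find:'' and states the lemma, so it is just the quadratic formula applied to Lemma~\ref{F-poly-mu}, which you have spelled out in full detail. Your additional remarks on nonnegativity of the radicand for real $\bk$ and on the non-analyticity of the individual branches $\mu_\pm$ are correct and clarifying.
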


\begin{lemma}\label{analytic-extension}
Let $\tK_1, \tK_2\in\R^2$ with $|\tK_1|, |\tK_2|\le K_{max}$. Suppose $D(\tK_1)\cap D(\tK_2)$ is non-empty.
Let $h_{\tK_1}, g_{\tK_1}$ and $h_{\tK_2}, g_{\tK_2}$ be analytic functions arising from $\tK_1$ and $\tK_2$, respectively, in Lemma \ref{solve-mu_pm}.  Then, $h_{\tK_1}=h_{\tK_2}$ and $g_{\tK_1}=g_{\tK_2}$ on
 $D(\tK_1)\cap D(\tK_2)$.
  \end{lemma}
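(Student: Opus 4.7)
The plan is to exploit the fact that, although both the equation $\det \mathcal{M}^{\lambda,\tK}(\bk,\rho_\lambda\mu)=0$ and the reduced monic quadratic of Lemma \ref{F-poly-mu} are obtained via $\tK$-dependent Lyapunov--Schmidt machinery, the quantities they describe --- the eigenvalues of $H^\lambda(\bk)$ in an intrinsic energy window --- depend only on $H^\lambda(\bk)$ itself and not on the auxiliary basepoint $\tK$. Write $F^{(\tK)}(\mu,\bk)$, $b_1^{(\tK)}(\bk)$, $b_0^{(\tK)}(\bk)$ and $\mu^{(\tK)}_\pm(\bk)$ for the quantities of Section \ref{complex-analysis} and Lemmas \ref{F-poly-mu}, \ref{solve-mu_pm} with basepoint $\tK$.

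The first step is to compare $F^{(\tK_1)}$ and $F^{(\tK_2)}$ on the real slice. For any $\bk\in\R^2\cap D(\tK_1)\cap D(\tK_2)$, Proposition \ref{evp-det0} applied with either $\tK=\tK_1$ or $\tK=\tK_2$ asserts that a real $\mu$ with $|\mu|<\widehat{C}/2$ is a rescaled eigenvalue of $H^\lambda(\bk)$ if and only if $F^{(\tK_j)}(\mu,\bk)=0$. Hence the zero sets of $F^{(\tK_1)}(\cdot,\bk)$ and $F^{(\tK_2)}(\cdot,\bk)$ in the disc $\{|\mu|<\widehat{C}/2\}$ coincide. By Lemma \ref{F-poly-mu} and the Rouch\'e argument used to derive it, each of these zero sets (counted with zero-multiplicity) consists of exactly two points and coincides with the root set of the monic quadratic $\mu^2+b_1^{(\tK_j)}(\bk)\mu+b_0^{(\tK_j)}(\bk)$. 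Since two monic quadratics with the same unordered multiset of roots are equal, Vieta's formulas yield
\[
b_1^{(\tK_1)}(\bk)=b_1^{(\tK_2)}(\bk),\qquad b_0^{(\tK_1)}(\bk)=b_0^{(\tK_2)}(\bk),\qquad \bk\in\R^2\cap D(\tK_1)\cap D(\tK_2).
\]

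The next step is analytic continuation. The set $D(\tK_1)\cap D(\tK_2)$ is a nonempty intersection of two Euclidean balls in $\C^2$, hence convex and in particular connected; on this set both $b_j^{(\tK_1)}$ and $b_j^{(\tK_2)}$ are holomorphic by Lemma \ref{F-poly-mu}. I would propagate the equality from the real slice to the whole domain by applying the one-variable identity theorem twice: writing $\bk=(k^{(1)},k^{(2)})$, for each real $k^{(1)}$ the difference $b_j^{(\tK_1)}-b_j^{(\tK_2)}$ vanishes on a real interval in the $k^{(2)}$-slice and hence on the entire complex $k^{(2)}$-slice; then, for each complex $k^{(2)}$ in this slice, it vanishes for all real $k^{(1)}$ and hence for all complex $k^{(1)}$. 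This gives $b_1^{(\tK_1)}=b_1^{(\tK_2)}$ and $b_0^{(\tK_1)}=b_0^{(\tK_2)}$ throughout $D(\tK_1)\cap D(\tK_2)$.

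The last step is to translate back to $h_\tK,g_\tK$. Combining Vieta with the representation \eqref{mupmofk} gives
\[
h_\tK(\bk)=-\tfrac12 b_1^{(\tK)}(\bk),\qquad g_\tK(\bk)=\tfrac14\bigl(b_1^{(\tK)}(\bk)\bigr)^2-\gamma(\bk)\gamma(-\bk)-b_0^{(\tK)}(\bk).
\]
Since $\gamma(\pm\bk)$ is independent of $\tK$, the identities $h_{\tK_1}=h_{\tK_2}$ and $g_{\tK_1}=g_{\tK_2}$ on $D(\tK_1)\cap D(\tK_2)$ follow immediately. I do not anticipate a serious obstacle; the only delicate point is to verify that the multiset of zeros of $F^{(\tK)}(\cdot,\bk)$ in the disc, not just its support, is intrinsic, which is precisely what Rouch\'e combined with Proposition \ref{evp-det0} (or equivalently Proposition \ref{detM-scaled-poly}(i)) delivers on the real slice.
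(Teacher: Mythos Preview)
Your proof is correct and follows essentially the same approach as the paper's: on the real slice of $D(\tK_1)\cap D(\tK_2)$ the $\mu_\pm(\bk)$ are the intrinsic rescaled eigenvalues of $H^\lambda(\bk)$, so the functions $h_{\tK}$ and $g_{\tK}$ (equivalently, your $b_1^{(\tK)}$ and $b_0^{(\tK)}$) recovered from them via symmetric functions must agree on the real slice, and analytic continuation finishes the job. The paper simply invokes the formula \eqref{mupmofk} directly for $h$ and $g$ rather than passing through the $b_j$'s, and it phrases the continuation step in one line, but the content is the same.
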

 \begin{proof}[Proof of Lemma \ref{analytic-extension}] For $\bk\in\R^2\cap\left[D(\tK_1)\cap D(\tK_2)\right]$ we know that $\mu_\pm(\bk)$ are the rescaled eigenvalues of $H^\lambda(\bk)$  in the interval $[-\frac12\widehat{C},\frac12\widehat{C}]$. Since these $\mu_\pm(\bk)$ are given by the formulae \eqref{mupmofk} in 
   $D(\tK_1)\cap\R^2$ and in $D(\tK_2)\cap\R^2$, it follows that $h_{\tK_1}=h_{\tK_2}$ and $g_{\tK_1}=g_{\tK_2}$ on
 $D(\tK_1)\cap D(\tK_2)\cap\R^2$. The Lemma now follows by analytic continuation.
 \end{proof}
 
 Thanks to Lemma \ref{analytic-extension}, we can put together all $h_\tK$ into a single analytic function defined on 
  \begin{equation*}
  \mathcal{D}_{K_{max}}=\{\bk\in\C^2: |\Re\bk|< K_{max}, \ |\Im\bk|<\hat{c}\lambda^{-1}\},
  \end{equation*} 
  and similarly for $g_\tK$. Thus, we obtain the following result.
 
  \begin{lemma}\label{global-mupm}
 There exist analytic functions $g$ and $h$, defined on $\mathcal{D}_{K_{max}}$, with the following properties:
  \begin{enumerate}
  \item $|g(\bk)|$, $|h(\bk)|\lesssim e^{-c\lambda}$ for all $\bk\in \mathcal{D}_{K_{max}}$.
  \item $g(\bk)$ and $h(\bk)$ are real for real $\bk\in \mathcal{D}_{K_{max}}$, and for each $\bk\in \mathcal{D}_{K_{max}}\cap\R^2$, the rescaled eigenvalues of $H^\lambda(\bk)$ in the interval  $[-\frac12\widehat{C},\frac12\widehat{C}]$ are given by
  \begin{equation*}
  \mu_\pm(\bk)\ =\ h(\bk)\pm\sqrt{\gamma(\bk)\gamma(-\bk)+g(\bk)}\ .
  \end{equation*}
  \end{enumerate}
  \end{lemma}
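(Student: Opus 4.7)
The plan is to assemble the global analytic functions $g$ and $h$ on $\mathcal{D}_{K_{max}}$ by gluing together the local pieces $g_\tK, h_\tK$ produced in Lemma \ref{solve-mu_pm}, using the compatibility result in Lemma \ref{analytic-extension}. The lemma is essentially a sheaf-theoretic assertion: the local data already constructed satisfies a cocycle condition on overlaps, hence descends to a global section.

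First, I would verify that the family of discs
\[
\left\{\, D(\tK) : \tK\in\R^2,\ |\tK|\le K_{max}\,\right\}
\]
covers $\mathcal{D}_{K_{max}}$. Indeed, given any $\bk\in\mathcal{D}_{K_{max}}$, the choice $\tK=\Re\bk$ lies in $\R^2$ with $|\tK|=|\Re\bk|<K_{max}$, and
\[
|\bk-\tK| \;=\; |\Im\bk| \;<\; \hat{c}\lambda^{-1},
\]
so $\bk\in D(\tK)$. For each such $\tK$, Lemma \ref{solve-mu_pm} supplies analytic functions $h_\tK, g_\tK$ on $D(\tK)$ satisfying $|h_\tK(\bk)|, |g_\tK(\bk)|\lesssim e^{-c\lambda}$, real-valued on $D(\tK)\cap\R^2$, and such that the rescaled eigenvalues of $H^\lambda(\bk)$ in $[-\tfrac12\widehat{C},\tfrac12\widehat{C}]$ are given by $\mu_\pm(\bk)=h_\tK(\bk)\pm\sqrt{\gamma(\bk)\gamma(-\bk)+g_\tK(\bk)}$ for $\bk\in D(\tK)\cap\R^2$.

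Next, the compatibility statement in Lemma \ref{analytic-extension} says that whenever $D(\tK_1)\cap D(\tK_2)\neq\emptyset$, the identifications $h_{\tK_1}=h_{\tK_2}$ and $g_{\tK_1}=g_{\tK_2}$ hold on the overlap. Consequently the definitions
\[
h(\bk)\;\equiv\;h_\tK(\bk),\qquad g(\bk)\;\equiv\;g_\tK(\bk),\qquad \bk\in D(\tK),
\]
are unambiguous and yield single-valued functions on $\mathcal{D}_{K_{max}}$. Since each point has a neighborhood $D(\tK)$ on which they coincide with an analytic function, $h$ and $g$ are analytic on $\mathcal{D}_{K_{max}}$.

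Finally, the bounds $|h(\bk)|,|g(\bk)|\lesssim e^{-c\lambda}$ on $\mathcal{D}_{K_{max}}$ follow from the uniform bounds on each local piece (the implicit constants in Lemma \ref{solve-mu_pm} depend only on $V_0$ and $K_{max}$, not on $\tK$); similarly, reality on $\mathcal{D}_{K_{max}}\cap\R^2$ and the eigenvalue formula transfer immediately from the local statements. Since all the real analysis has been absorbed into Lemmas \ref{solve-mu_pm} and \ref{analytic-extension}, there is no genuine obstacle — the only thing to check carefully is that $\R^2\cap\{|\cdot|\le K_{max}\}$ does supply enough centers to cover the entire complex strip $\mathcal{D}_{K_{max}}$, which is what the trivial observation $\tK=\Re\bk$ above accomplishes.
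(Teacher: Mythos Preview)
Your proposal is correct and takes essentially the same approach as the paper: the paper's argument is the single sentence ``Thanks to Lemma \ref{analytic-extension}, we can put together all $h_\tK$ into a single analytic function defined on $\mathcal{D}_{K_{max}}$, and similarly for $g_\tK$,'' and you have simply spelled out the covering and gluing in detail, including the observation $\tK=\Re\bk$ to exhibit the cover.
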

  
  The following result is a consequence of Corollary \ref{atmost2evalues}.
  
  \begin{corollary}\label{low-lying}
The maps $\bk\mapsto E^\lambda_\pm(\bk)$, where
 \begin{equation}
 E^\lambda_\pm(\bk)\ \equiv\ E_0^\lambda+\rho_\lambda\ \mu_\pm(\bk) = E_0^\lambda+\rho_\lambda\ \left(\ 
  \ h(\bk)\pm\sqrt{\gamma(\bk)\gamma(-\bk)+g(\bk)}\ \right)
  \label{Elam_pm}
  \end{equation}
  define the two lowest dispersion surfaces of $-\Delta +\lambda^2 V(\bx)$.  
%
  \end{corollary}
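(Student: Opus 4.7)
The plan is to combine Lemma~\ref{global-mupm} (which exhibits $E^\lambda_\pm(\bk) = E_0^\lambda + \rho_\lambda \mu_\pm(\bk)$ as eigenvalues of $-(\nabla+i\bk)^2 + \lambda^2 V$) with Corollary~\ref{atmost2evalues} (which bounds the number of eigenvalues below $E_0^\lambda + c/2$). The identification of $E^\lambda_\pm$ as the \emph{lowest two} dispersion surfaces will then follow from a simple size comparison, since $\rho_\lambda$ is exponentially small while the gap $c/2$ is of order one.

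More concretely, I would first fix $K_{max}$ and work with $\bk \in \R^2$, $|\bk| \le K_{max}$. By Lemma~\ref{global-mupm}(2), for each such $\bk$ the numbers $\mu_\pm(\bk)$ are real, lie in $[-\widehat{C}/2, \widehat{C}/2]$, and are rescaled eigenvalues of $H^\lambda(\bk)$ in the sense of Definition~\ref{rescaled-eig}. Unwinding the rescaling, $E^\lambda_\pm(\bk) = E_0^\lambda + \rho_\lambda \mu_\pm(\bk)$ are genuine eigenvalues of the Floquet--Bloch operator $-(\nabla+i\bk)^2 + \lambda^2 V$. Moreover, by the Rouch\'e-based argument preceding Lemma~\ref{global-mupm}, together with Proposition~\ref{detM-scaled-poly} and Proposition~\ref{evp-det0}, these are \emph{all} the eigenvalues of $H^\lambda(\bk)$ corresponding to rescaled eigenvalues in $\{|\mu| \le \widehat{C}/2\}$, i.e., all eigenvalues $\Omega$ with $|\Omega| \le \rho_\lambda \widehat{C}/2$.

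Next I would invoke Corollary~\ref{atmost2evalues}: for $\lambda > \lambda_\star$ sufficiently large, the operator $-(\nabla+i\bk)^2 + \lambda^2 V$ has at most two eigenvalues strictly below $E_0^\lambda + \tfrac{1}{2}c$, where $c > 0$ is the universal constant appearing in the main global energy estimate (Proposition~\ref{global-en-est}). By Proposition~\ref{prop:rho-lam-bounds}, $\rho_\lambda \lesssim e^{-c_2 \lambda}$, so
\[
|E^\lambda_\pm(\bk) - E_0^\lambda| \;=\; \rho_\lambda |\mu_\pm(\bk)| \;\le\; \tfrac{1}{2}\widehat{C}\,\rho_\lambda \;<\; \tfrac{1}{2}c
\]
for all $\lambda$ sufficiently large, uniformly in $\bk$ with $|\bk|\le K_{max}$. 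Hence both $E^\lambda_\pm(\bk)$ lie strictly below $E_0^\lambda + \tfrac{1}{2}c$, and by Corollary~\ref{atmost2evalues} they exhaust the spectrum of $H^\lambda(\bk) + E_0^\lambda$ in that range. Since any other Floquet--Bloch eigenvalue $E^\lambda_b(\bk)$ with $b \ge 3$ must satisfy $E^\lambda_b(\bk) \ge E_0^\lambda + \tfrac{1}{2}c > E^\lambda_\pm(\bk)$, we conclude that $\bk \mapsto E^\lambda_\pm(\bk)$ are exactly the first two band dispersion functions $E^\lambda_1(\bk) \le E^\lambda_2(\bk)$.

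Finally, to remove the restriction $|\bk| \le K_{max}$, I would use $\Lambda_h^*$-periodicity of the dispersion surfaces in $\bk$ (which holds for both the eigenvalues and, upon inspection of $\gamma$, for the formula \eqref{Elam_pm} up to an appropriate choice of branch), so that it suffices to verify the conclusion on a single Brillouin zone $\brill_h$, which is compact. The only step requiring any care is making sure that $\widehat{C}\rho_\lambda < c$ uniformly in $\bk \in \brill_h$; this is immediate from the exponential decay of $\rho_\lambda$, so I do not anticipate a genuine obstacle. The proof is essentially a bookkeeping exercise that packages together the quantitative resolvent and Lyapunov--Schmidt results already established.
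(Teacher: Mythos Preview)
Your argument is correct and is essentially the same as the paper's, which simply records the corollary as ``a consequence of Corollary~\ref{atmost2evalues}.'' You have spelled out the details the paper leaves implicit: Lemma~\ref{global-mupm} already asserts that $\mu_\pm(\bk)$ exhaust the rescaled eigenvalues in $[-\widehat C/2,\widehat C/2]$, and then comparing $\rho_\lambda\widehat C/2$ against the order-one gap $c/2$ of Corollary~\ref{atmost2evalues} forces $E_\pm^\lambda(\bk)$ to be the first two bands.
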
 

Recall, by Theorem \ref{solve-L2tau-evp} and Corollary \ref{tb-dirac-pts}, the operator $-\Delta+\lambda^2 V(\bx)$ has a Dirac point at $(\bK_\star,E_D^\lambda)$,  where $\bk=\bK_\star$, any vertex of $\B_h$.
 Therefore, $E_+(\bK_\star)=E_-(\bK_\star)=E_D^\lambda$. 
 By Lemma \ref{gamma0}, $\gamma(\bK_\star)\gamma(-\bK_\star)=0$ and therefore 
 it follows that 
$ E_D^\lambda\ =\ E_0^\lambda + \rho_\lambda\ \left(\ 
  \ h(\bK_\star)\pm\sqrt{ g(\bK_\star)}\ \right),$
and since $E_D^\lambda$ is a double-eigenvalue, we have $g(\bK_\star)=0$, {\it i.e.} $g(\bk)$ vanishes at the vertices of $\B_h$. Thus,
\begin{equation*}
 E_D^\lambda\ =\ E^\lambda_+(\bK_\star)=E^\lambda_-(\bK_\star)= E_0^\lambda\ +\ \rho_\lambda\ h(\bK_\star) .
 \end{equation*}
Note $\mu_\pm(\bK_\star)=h(\bK_\star)$.  By Corollary \ref{low-lying}
 \begin{align*}
 E^\lambda_\pm(\bk)\ -\ E_D^\lambda &\equiv\ \rho_\lambda\ \left(\ \mu_\pm(\bk)\ -\ \mu_\pm(\bK_\star)\ \right)\\   %
&=\ \rho_\lambda\ \left(h(\bk)-h(\bK_\star)\right) \pm\ \rho_\lambda\ \sqrt{\gamma(\bk)\gamma(-\bk)+g(\bk)}\ 
  \end{align*}
  Dividing by $\rho_\lambda$ gives
   \begin{align}
 \left(\ E^\lambda_\pm(\bk)\ -\ E_D^\lambda\ \right) / \rho_\lambda\ &=\ \left(h(\bk)-h(\bK_\star)\right) \pm\ \sqrt{\gamma(\bk)\gamma(-\bk)+g(\bk)}.  \label{Elam_pm1}
  \end{align}
  
  The expression \eqref{Elam_pm1} is an expression for the rescaled low-lying dispersion surfaces, which we now study 
  for $\lambda$ large.

\section{Expansion of $\mu_\pm(\bk)$ and rescaled dispersion surfaces for $\lambda$ large}\label{mu_pm-rescaled}

Introduce the rescaled low-lying dispersion maps:
\begin{equation}
\mu_\pm(\bk)\ \equiv\ \left(\ E^\lambda_\pm(\bk)\ -\ E_D^\lambda\ \right) / \rho_\lambda
\label{mu_pm-def}
\end{equation}
Also, recall (Lemma \ref{gamma0}) that for $\bk\in\R^2$ :
\begin{equation}
\mathscr{W}_{_{\rm TB}}(\bk)\ \equiv\ |\gamma(\bk)|\ =\ \sqrt{\gamma(\bk)\gamma(-\bk)}.
\end{equation}

\subsection{Rescaled dispersion surfaces away from Dirac points}\label{mu-away-from-dpts}

Assume $\bk\in\R^2$, $|\bk|<K_{max}$ and  $|\gamma(\bk)\gamma(-\bk)|\ \ge\  \lambda^{-1/2}$.
(Note that for $\bk\in\R^2$,  we have $\gamma(\bk)\gamma(-\bk)=\mathscr{W}_{_{\rm TB}}(\bk)$.) 
 Then, using \eqref{Elam_pm1} we write
 \begin{align*}
\mu_\pm(\bk)\ &=\  \left(\ E^\lambda_\pm(\bk)\ -\ E_D^\lambda\ \right) / \rho_\lambda\\
 &=\  \pm\sqrt{\gamma(\bk)\gamma(-\bk)}\ \left[1+\frac{g(\bk)}{\gamma(\bk)\gamma(-\bk)}\right]^{\frac{1}{2}} +
   h(\bk)-h(\bK_\star)\\
   &=\ \pm\sqrt{\gamma(\bk)\gamma(-\bk)}\ \left[\ 1\ +\ \tilde{f}_{1,\pm}(\bk)\ \right]\ .
   \end{align*}
%
Thanks to our estimates for $|g(\bk)|$ and $|h(\bk)|$, and the assumed  lower bound for $\gamma(\bk)\gamma(-\bk)$,
 we have
%
%
\begin{proposition}[Rescaled dispersion surfaces away from Dirac points]\label{Proposition-RDSADP}
On the set $\{\bk\in\R^2:|\bk|< K_{max},\ |\gamma(\bk)\gamma(-\bk)|>\lambda^{-{\frac{1}{2}}}\}$, the rescaled eigenvalues of $H^\lambda(\bk)$ in the interval $[-\frac12\widehat{C},\frac12\widehat{C}]$ are given by
\begin{equation}
\mu_\pm(\bk) = \pm\sqrt{\gamma(\bk)\gamma(-\bk)}\ \left[\ 1\ +\ \tilde{f}_{2,\pm}(\bk)\ \right].
\label{mu-away1}
\end{equation}
Equivalently, for the rescaled eigenvalues of $-(\nabla+i\bk)^2+\lambda^2 V(\bx)$, we have:
\begin{align}
& \left(\ E^\lambda_\pm(\bk)\ -\ E_D^\lambda\ \right) / \rho_\lambda\ =\ \pm\sqrt{\gamma(\bk)\gamma(-\bk)}\ \left[\ 1\ +\ \tilde{f}_{1,\pm}(\bk)\ \right]\ .\label{Epm-away}
   \end{align}
Here, the corrections $\tilde{f}_{j,\pm}(\bk),\ j=1,2$ in \eqref{mu-away1} and \eqref{Epm-away}  are real-valued and satisfy 
$|\partial_\bk^\beta\tilde{f}_{\pm,j}(\bk)|\le C(\beta_{max})\ e^{-c\lambda}$ for $|\beta|\le \beta_{max}$.
\end{proposition}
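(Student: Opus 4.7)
The plan is to derive Proposition \ref{Proposition-RDSADP} essentially by algebraic manipulation of the exact factorization already provided by Lemma \ref{global-mupm} and equation \eqref{Elam_pm1}, together with Cauchy estimates on a $\lambda^{-1}$-wide complex strip. The key structural input is that $h(\bk)$ and $g(\bk)$ are \emph{analytic} on the strip $\mathcal{D}_{K_{max}}=\{|\Re\bk|<K_{max},\,|\Im\bk|<\hat c\lambda^{-1}\}$ with uniform bounds $|h|,|g|\lesssim e^{-c\lambda}$, and that $g(\bK_\star)=0$.

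Step 1 (factor out $\sqrt{\gamma(\bk)\gamma(-\bk)}$). On $S_\lambda\equiv\{\bk\in\R^2:|\bk|<K_{max},\,\gamma(\bk)\gamma(-\bk)>\lambda^{-1/2}\}$, write
\[
 \sqrt{\gamma(\bk)\gamma(-\bk)+g(\bk)}\ =\ \sqrt{\gamma(\bk)\gamma(-\bk)}\,\Bigl[1+\tfrac{g(\bk)}{\gamma(\bk)\gamma(-\bk)}\Bigr]^{1/2}.
\]
Since $|g/(\gamma\gamma)|\lesssim\lambda^{1/2}e^{-c\lambda}$, a Taylor expansion of $(1+x)^{1/2}$ gives $[1+g/(\gamma\gamma)]^{1/2}=1+\tilde h_{\rm sqrt}(\bk)$ with $|\tilde h_{\rm sqrt}|\lesssim\lambda^{1/2}e^{-c\lambda}$. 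Combining this with the $h(\bk)-h(\bK_\star)$ term in \eqref{Elam_pm1} and dividing/multiplying by $\sqrt{\gamma\gamma}$, one obtains
\[
 \tfrac{E_\pm^\lambda(\bk)-E_D^\lambda}{\rho_\lambda}\ =\ \pm\sqrt{\gamma(\bk)\gamma(-\bk)}\,\bigl[1+\tilde f_{1,\pm}(\bk)\bigr],\qquad \tilde f_{1,\pm}=\tilde h_{\rm sqrt}\pm\tfrac{h(\bk)-h(\bK_\star)}{\sqrt{\gamma(\bk)\gamma(-\bk)}},
\]
with the pointwise bound $|\tilde f_{1,\pm}|\lesssim\lambda^{1/4}e^{-c\lambda}$. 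The variant $\tilde f_{2,\pm}$ (rescaled eigenvalues of $H^\lambda(\bk)$ itself, differing by the exponentially small shift $h(\bK_\star)$) is handled identically.

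Step 2 (holomorphic extension to a $\lambda^{-1}$-polydisc). To bound derivatives, I extend $\tilde f_{j,\pm}$ analytically to a complex neighborhood of each $\bk_0\in S_\lambda$. Fix $\bk_0\in S_\lambda$ and consider the polydisc $N(\bk_0)=\{\bk\in\C^2:|\bk-\bk_0|<c_1\lambda^{-1}\}$ for $c_1>0$ small (compatible with $\mathcal{D}_{K_{max}}$). Since $\bk\mapsto\gamma(\bk)\gamma(-\bk)$ is entire with $O(1)$ derivatives on $N(\bk_0)$, we have $|\gamma(\bk)\gamma(-\bk)-\gamma(\bk_0)\gamma(-\bk_0)|\lesssim\lambda^{-1}$, which for $\lambda$ large is much smaller than $\lambda^{-1/2}$; hence $\gamma(\bk)\gamma(-\bk)$ remains in a small sector around the positive ray with modulus $\ge\tfrac12\lambda^{-1/2}$, and so does $\gamma\gamma+g$ (since $|g|\lesssim e^{-c\lambda}\ll\lambda^{-1/2}$). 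On this polydisc the principal branches $\sqrt{\gamma\gamma}$ and $\sqrt{\gamma\gamma+g}$ are single-valued and analytic, and the algebraic manipulation of Step 1 produces a holomorphic extension of $\tilde f_{j,\pm}$ to $N(\bk_0)$ with the same bound $\lesssim\lambda^{1/4}e^{-c\lambda}$.

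Step 3 (Cauchy estimates). Cauchy's integral formula on the polydisc of polyradius $c_1\lambda^{-1}$ centered at $\bk_0$ yields
\[
 |\partial_\bk^\beta\tilde f_{j,\pm}(\bk_0)|\ \le\ \frac{|\beta|!}{(c_1\lambda^{-1})^{|\beta|}}\sup_{N(\bk_0)}|\tilde f_{j,\pm}|\ \lesssim\ \lambda^{|\beta|+1/4}e^{-c\lambda},
\]
and the polynomial prefactor is absorbed into the exponential (per the convention that $\lambda^m e^{-c\lambda}\le e^{-c'\lambda}$ for $\lambda$ large), giving the required $|\partial_\bk^\beta\tilde f_{j,\pm}|\le C(\beta_{max})e^{-c\lambda}$ for all $|\beta|\le\beta_{max}$.

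The main technical point where things could go wrong is Step 2: one must verify that under a complex $O(\lambda^{-1})$ perturbation of $\bk$, the quantity $\gamma(\bk)\gamma(-\bk)+g(\bk)$ cannot cross the branch cut of the square root, and that the factor $1/\sqrt{\gamma\gamma}$ does not blow up. This is ensured precisely because the complex strip width $\lambda^{-1}$ is asymptotically much smaller than the lower bound $\lambda^{-1/2}$ defining $S_\lambda$ — that gap between $\lambda^{-1}$ and $\lambda^{-1/2}$ is what makes the argument robust and is the reason the threshold $\lambda^{-1/2}$ is chosen. Once Step 2 is established, Steps 1 and 3 are routine.
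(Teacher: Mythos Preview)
Your proof is correct and follows essentially the same approach as the paper: factor out $\sqrt{\gamma(\bk)\gamma(-\bk)}$ from \eqref{Elam_pm1} using the lower bound $|\gamma\gamma|>\lambda^{-1/2}$, and then control the correction terms via the bounds on $g$ and $h$ from Lemma~\ref{global-mupm}. The paper's own argument is quite terse about the derivative bounds, and your Steps~2--3 (holomorphic extension to a $\lambda^{-1}$-polydisc followed by Cauchy estimates, exploiting that $\lambda^{-1}\ll\lambda^{-1/2}$) make explicit precisely the mechanism the paper leaves implicit.
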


\subsection{Rescaled dispersion surfaces in a neighborhood of Dirac points}\label{mu-near-dpts}
  We shall use Lemma \ref{global-mupm} to study the  rescaled dispersion surfaces in a neighborhood of Dirac points, $(\bk,E)$,  with $\bk\in\mathcal{D}_{K_{max}}$ and rescaled energy, within $[-\frac12\widehat{C},\frac12\widehat{C}]$. 

We begin by noting that rotational symmetry of the Hamiltonian $-\Delta+\lambda^2 V$ implies rotational symmetry of the maps $\bk\mapsto\mu_\pm(\bk)$ with respect to Dirac points. 
\begin{lemma}\label{symmetry-lemma}
Let $(\bK_\star,E_D^\lambda)$, where $\bK_\star$ is a vertex of $\brill_h$, denote a Dirac point of $-\Delta+\lambda^2 V$
 (see Definition \ref{dirac-pt-defn}) 
guaranteed, for $\lambda$ sufficiently large, by Theorem \ref{solve-L2tau-evp} and Corollary \ref{tb-dirac-pts}.
 Thus ,  $E_D^\lambda\ =\ E^\lambda_+(\bK_\star)=E^\lambda_-(\bK_\star)$. 
Then, for all $\bkappa\in\R^2$ with $0<|\bkappa|<\bkappa_0$ sufficiently small, we have
\[ \{\mu_+(\bK_\star+R\bkappa)\ ,\ \mu_-(\bK_\star+R\bkappa)\}\ =\ 
\{\mu_+(\bK_\star+\bkappa)\ ,\ \mu_-(\bK_\star+\bkappa)\}\ .\]
Here, $R$ denotes the $120^{\circ}$ clockwise rotation matrix. The analogous assertion holds with $\mu_\pm(\cdot)$ replaced by $E_\pm^\lambda(\cdot)$.  Here, $\bkappa_0$ is independent of $\lambda$. 
\end{lemma}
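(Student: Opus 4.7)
\bigskip

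\noindent\textbf{Proof proposal for Lemma \ref{symmetry-lemma}.}

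The plan is to exploit the $120^\circ$ rotational symmetry of $H^\lambda = -\Delta + \lambda^2 V$ (Proposition \ref{VisHLP}) together with $\Lambda_h^\ast$-periodicity of the Floquet-Bloch spectrum. The basic mechanism is the same as that used in Proposition \ref{commutator} and in part (4) of Remark \ref{onedp-implies-all}: the rotation operator $\mathcal{R}$ intertwines the pseudo-periodic Hilbert spaces at rotated quasi-momenta.

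First, I will verify that for any $\bk \in \R^2$, $\mathcal{R}$ maps $L^2_{\bk}$ isomorphically onto $L^2_{R\bk}$. This is a direct computation generalizing \eqref{RfK}: if $f \in L^2_{\bk}$, then for $\bv \in \Lambda_h$,
\[
\mathcal{R}[f](\bx+\bv) = f\bigl(\bx_c + R^\ast(\bx-\bx_c) + R^\ast \bv\bigr) = e^{i\bk \cdot R^\ast \bv} \mathcal{R}[f](\bx) = e^{iR\bk \cdot \bv}\mathcal{R}[f](\bx),
\]
so $\mathcal{R}[f]\in L^2_{R\bk}$, and $\mathcal{R}^{-1} = \mathcal{R}^2$ furnishes the inverse. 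Combined with $[H^\lambda,\mathcal{R}] = 0$ (on the relevant dense subspace), this yields the spectral identity
\[
\mathrm{spec}\bigl(H^\lambda \!\upharpoonright L^2_{\bk}\bigr) = \mathrm{spec}\bigl(H^\lambda \!\upharpoonright L^2_{R\bk}\bigr).
\]

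Next, I will invoke $\Lambda_h^\ast$-periodicity of the Floquet-Bloch spectrum to replace $R\bk$ by its representative in the Brillouin zone. Specifically, apply the above to $\bk = \bK_\star + \bkappa$. Because $\bK_\star$ is a vertex of $\brill_h$ and $R$ permutes the three vertices in its $\Lambda_h^\ast$-orbit (as seen in Lemma \ref{gamma0}(ii) and the proof of Lemma \ref{Ptau-taubar}), we have $R\bK_\star \in \bK_\star + \Lambda_h^\ast$. Therefore
\[
R(\bK_\star + \bkappa) \;=\; R\bK_\star + R\bkappa \;\equiv\; \bK_\star + R\bkappa \pmod{\Lambda_h^\ast},
\]
and the Floquet-Bloch eigenvalues of $H^\lambda$ at these two equivalent quasi-momenta coincide (the multiplication operator $e^{-i\bkappa^\ast \cdot \bx}$ by a dual-lattice vector $\bkappa^\ast$ is a unitary equivalence between $L^2_{\bk}$ and $L^2_{\bk + \bkappa^\ast}$).

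Finally I will conclude. By Corollary \ref{low-lying} and its derivation from Lemma \ref{global-mupm}, the pair $\{\mu_+(\bk),\mu_-(\bk)\}$ is exactly the set of rescaled eigenvalues of $H^\lambda$ on $L^2_{\bk}$ that lie in $[-\tfrac12\widehat{C},\tfrac12\widehat{C}]$. For $\bkappa_0$ small enough (independent of $\lambda$, since $\mu_\pm$ are continuous and vanish at $\bK_\star$), both rescaled eigenvalues at $\bK_\star + \bkappa$ and at $\bK_\star + R\bkappa$ remain in this interval, and by the spectral identity of the previous paragraph they form the same unordered pair. The analogous assertion for $E^\lambda_\pm$ follows from \eqref{mu_pm-def}. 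The only mildly delicate point is confirming $R\bK_\star \in \bK_\star + \Lambda_h^\ast$ for every vertex $\bK_\star$; this is a finite check already implicit in earlier sections, so I do not anticipate a genuine obstacle.
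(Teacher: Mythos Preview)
Your proposal is correct and follows essentially the same argument as the paper: both use that $\mathcal{R}$ commutes with $H^\lambda$ and carries $L^2_{\bK_\star+\bkappa}$ to $L^2_{\bK_\star+R\bkappa}$ (via $R\bK_\star\in\bK_\star+\Lambda_h^\ast$), then identify $\{\mu_+,\mu_-\}$ as the pair of rescaled eigenvalues in $[-\tfrac12\widehat{C},\tfrac12\widehat{C}]$. The paper works directly with individual eigenfunctions and establishes the two set inclusions separately (using $\mathcal{R}$ and $\mathcal{R}^2$), whereas you phrase it once as a unitary equivalence of spectra; this is a purely cosmetic difference.
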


\begin{proof}[Proof of Lemma \ref{symmetry-lemma}] 
Consider $0<|\bkappa|<\kappa_0$ sufficiently small. 
Let $E_\bkappa\in[-\frac12\widehat{C},\frac12\widehat{C}]$ be an eigenvalue of $-\Delta+\lambda^2 V$ acting in the space $L^2_{\bK_\star+\bkappa}$. Thus, $E_\bkappa=E^\lambda_-(\bK_\star+\bkappa)$ or $E_\bkappa=E^\lambda_+(\bK_\star+\bkappa)$. Denote 
the corresponding eigenfunction, by $\psi(\bx)$; $(-\Delta+\lambda^2 V)\psi=E_\bkappa\psi$.  Now consider $\widetilde\psi\equiv (\mathcal{R}\psi)(\bx)=\psi(\bx_c+R^*(\bx-\bx_c))$. Recall that $\mathcal{R}$ commutes with $-\Delta+\lambda^2 V$  (Proposition \ref{VisHLP}) and therefore, 
 $(-\Delta+\lambda^2 V)\widetilde\psi=E_\bkappa\widetilde\psi$. 
 By \eqref{RfK}, we have $\widetilde\psi(\bx+\bv) =e^{i(\bK_\star+R\bkappa)\cdot\bv}\widetilde\psi(\bx)$ for $\bv\in\Lambda_h$.
Therefore, $\psi(\bx)$ and $\widetilde\psi(\bx)$ are respectively $\bK_\star+\bkappa$ and $\bK_\star+R\bkappa$ pseudo-periodic eigenstate of $-\Delta+\lambda^2 V$ with the same eigenvalue, $E_\bkappa$. Thus, 
\[ \{E^\lambda_+(\bK_\star+\bkappa),E^\lambda_-(\bK_\star+\bkappa)\}\subset
 \{E^\lambda_+(\bK_\star+R\bkappa),E^\lambda_-(\bK_\star+R\bkappa)\}\ .\]
To prove the reverse inclusion, we start with an $L^2_{\bK_\star+R\bkappa}-$ eigenvalue, $E_{_{R\bkappa}}$, with corresponding eigenstate $\tilde{\phi}\in L^2_{\bK_\star+R\bkappa}$. Then,  $(\mathcal{R}^2\tilde\phi)(\bx)\in L^2_{\bK_\star+\bkappa}$ is an eigenfunction with eigenvalue $E_{_{R\bkappa}}$ and therefore
\[ \{E^\lambda_+(\bK_\star+R\bkappa),E^\lambda_-(\bK_\star+R\bkappa)\}
\subset \{E^\lambda_+(\bK_\star+\bkappa),E^\lambda_-(\bK_\star+\bkappa)\}\ .\]
By \eqref{Elam_pm}, this result transfers to $\mu_\pm(\bk)$, completing the proof of Lemma \ref{symmetry-lemma}. 
\end{proof}

By a careful study of the Taylor expansions of the analytic functions $h(\bk)$ and $g(\bk)$ in a neighborhood of $\bK_\star$ we will prove the following characterization of the local behavior of the low-lying dispersion surfaces near the vertices of $\B_h$.
\begin{proposition}[Rescaled dispersion surfaces near Dirac points]\label{Proposition-nearDPs}
Let $c_{**}$ denote a small constant and  $\widehat{C}$ denote a sufficiently large constant, determined by $V_0$.
For 
 $\lambda\ge\lambda_\star(V_0,\beta_{max})$ sufficiently large,
\begin{enumerate}
\item  the rescaled eigenvalues of $H^\lambda(\bk)$ in the interval $[-\frac12\widehat{C},\frac12\widehat{C}]$ are given by $\mu_\pm(\bk)$, for $|\bk-\bK_\star|<c_{**}$, where 
\begin{align*}
\left|\ \partial_\bk^\beta
\left\{\ \mu_\pm(\bk) - \left[\ h(\bK_\star)\pm\sqrt{\gamma(\bk)\gamma(-\bk)}\ \right]\ \right\}
\ \right|
&\lesssim e^{-c\lambda}\ |\bk-\bK_\star|^{1-|\beta|},
\end{align*}
for $0\le|\beta|\le\beta_{max}$. 
Here, $\mu_\pm(\bK_\star)=h(\bK_\star)$ (see Lemma \ref{global-mupm})  satisfies $|h(\bK_\star)|\lesssim e^{-c\lambda}$. 
\item Equivalently, $E^\lambda_\pm(\bk)$, the low-lying eigenvalues of $-(\nabla+i\bk)^2+\lambda^2 V(\bx)$,   when rescaled, satisfy
\begin{align*}
& \left|\ \partial_\bk^\beta
\left\{\ \left(\ E^\lambda_\pm(\bk)\ -\ E_D^\lambda\ \right) / \rho_\lambda\ -\ 
\left[\ \pm\sqrt{\gamma(\bk)\gamma(-\bk)} \ \right]\ \right\}\ \right| \lesssim\ 
\ e^{-c\lambda}\ |\bk-\bK_\star|^{1-|\beta|}.
   \end{align*}

\end{enumerate}
\end{proposition}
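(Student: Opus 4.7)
The plan is to work from the explicit representation
\begin{equation*}
\mu_\pm(\bk)=h(\bk)\pm\sqrt{\gamma(\bk)\gamma(-\bk)+g(\bk)}
\end{equation*}
furnished by Lemma \ref{global-mupm}, where $h$ and $g$ are analytic on the strip $\mathcal{D}_{K_{max}}$, bounded there by $e^{-c\lambda}$, and $g(\bK_\star)=0$ (since $(\bK_\star,E^\lambda_D)$ is a double eigenvalue). The quantity to be estimated splits cleanly:
\begin{equation*}
\mu_\pm(\bk)-\bigl[h(\bK_\star)\pm\sqrt{\gamma(\bk)\gamma(-\bk)}\bigr]=\bigl[h(\bk)-h(\bK_\star)\bigr]\pm\bigl[\sqrt{\gamma(\bk)\gamma(-\bk)+g(\bk)}-\sqrt{\gamma(\bk)\gamma(-\bk)}\bigr],
\end{equation*}
so one needs the asserted derivative bound for each summand separately.

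My first step is to upgrade the global smallness of $h$ and $g$ to \emph{local vanishing of order two} at $\bK_\star$, by exploiting the $120^{\circ}$ symmetry established in Lemma \ref{symmetry-lemma}. Because the unordered pair $\{\mu_+(\bk),\mu_-(\bk)\}$ is invariant under rotation by $R$ about $\bK_\star$, the symmetric combinations $\mu_++\mu_-=2h(\bk)$ and $(\mu_+-\mu_-)^2/4=\gamma(\bk)\gamma(-\bk)+g(\bk)$ are individually rotation-invariant about $\bK_\star$; and $\gamma(\bk)\gamma(-\bk)=|\gamma(\bk)|^2$ is also rotation-invariant about $\bK_\star$ by Lemma \ref{gamma0}(ii) combined with $\Lambda_h^*$-periodicity of $|\gamma|^2$. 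Hence both $h$ and $g$ inherit the rotational invariance about $\bK_\star$, which forces $\nabla h(\bK_\star)=\nabla g(\bK_\star)=0$ (since $R$ has no nonzero fixed direction). Cauchy estimates on the strip then give $\|D^j h\|_\infty,\|D^j g\|_\infty\lesssim\lambda^j e^{-c\lambda}\lesssim e^{-c'\lambda}$ for $|j|\leq\beta_{max}+2$, and Taylor's theorem yields
\begin{equation*}
|\partial^\alpha h(\bk)|,\ |\partial^\alpha g(\bk)|\lesssim e^{-c\lambda}\,|\bk-\bK_\star|^{\max(2-|\alpha|,0)}\qquad (|\alpha|\leq\beta_{max}).
\end{equation*}
In particular $|\partial^\alpha(h(\bk)-h(\bK_\star))|\lesssim e^{-c\lambda}|\bk-\bK_\star|^{1-|\alpha|}$ (absorbing a bounded power of $c_{**}$), which disposes of the first summand.

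For the second summand I would set $F(\bk):=\sqrt{\gamma(\bk)\gamma(-\bk)+g(\bk)}-\sqrt{\gamma(\bk)\gamma(-\bk)}$ and use the identity $(\sqrt{a+b}-\sqrt{a})(\sqrt{a+b}+\sqrt{a})=b$ to write $F=g\,H$ with
\begin{equation*}
H(\bk):=\bigl[\sqrt{\gamma(\bk)\gamma(-\bk)+g(\bk)}+\sqrt{\gamma(\bk)\gamma(-\bk)}\bigr]^{-1}.
\end{equation*}
Leibniz gives $\partial^\beta F=\sum_{\alpha\le\beta}\binom{\beta}{\alpha}\partial^\alpha g\cdot\partial^{\beta-\alpha}H$, and pairing the derivative estimate for $g$ above with the key bound $|\partial^{\beta-\alpha}H(\bk)|\lesssim|\bk-\bK_\star|^{-1-(|\beta|-|\alpha|)}$ (with constants independent of $\lambda$), a short arithmetic on exponents shows that every term is $\lesssim e^{-c\lambda}|\bk-\bK_\star|^{1-|\beta|}$: for $|\alpha|\leq 2$ the exponents add to exactly $1-|\beta|$, while for $|\alpha|\geq 3$ they add to $|\alpha|-1-|\beta|\geq 2-|\beta|$, which is larger (hence dominated) since $|\bk-\bK_\star|<c_{**}<1$.

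The hard part is the uniform-in-$\lambda$ bound $|\partial^\beta H(\bk)|\lesssim|\bk-\bK_\star|^{-1-|\beta|}$. Naive Cauchy estimates on complex discs around $\bk$ fail here, because the analytic extension of $\gamma(\bk)\gamma(-\bk)$ acquires non-real zeros arbitrarily close to $\bK_\star$, so the largest polydisc on which the square root is single-valued and bounded away from zero shrinks roughly like $|\bk-\bK_\star|^2/\lambda$, which is too small to yield the required derivative estimate. Instead, I would work directly on the real domain using the factorization from Lemma \ref{gamma0}(iii): writing $\bkappa:=\bk-\bK_\star$, the expansion $\gamma(\bK_\star+\bkappa)\gamma(-\bK_\star-\bkappa)=\tfrac{3}{4}|\bkappa|^2+\mathcal{O}(|\bkappa|^3)$ permits the factorization $\gamma(\bk)\gamma(-\bk)=|\bk-\bK_\star|^2 Q(\bk)$ with $Q$ smooth on $\{0<|\bk-\bK_\star|<c_{**}\}$ and bounded there above and below by positive constants independent of $\lambda$. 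Therefore $\sqrt{\gamma(\bk)\gamma(-\bk)}=|\bk-\bK_\star|\sqrt{Q(\bk)}$; since the Taylor bound on $g$ yields $|g|\leq Ce^{-c\lambda}|\bk-\bK_\star|^2\ll|\bk-\bK_\star|^2$ for $\lambda$ large, the same structure applies to $\sqrt{\gamma(\bk)\gamma(-\bk)+g(\bk)}$. A straightforward induction on $|\beta|$ via the product rule, using the elementary fact $\bigl|\partial^\beta|\bk-\bK_\star|\bigr|\lesssim|\bk-\bK_\star|^{1-|\beta|}$, then delivers the required derivative bound on each square root and hence on $H$, completing the proof.
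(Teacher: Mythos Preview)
Your proof is correct and shares the paper's overall strategy: use the $120^\circ$ symmetry (Lemma \ref{symmetry-lemma}) to force the Taylor expansions of $g$ and $h$ to vanish to order two (resp.\ one beyond the constant) at $\bK_\star$, then exploit the factorization $\gamma(\bk)\gamma(-\bk)=|\bkappa|^2 Q(\bk)$ from Lemma \ref{gamma0}(iii) to control derivatives of the square-root difference. The packaging differs: where you write the difference as $g\cdot H$ and bound $|\partial^\beta H|\lesssim|\bkappa|^{-1-|\beta|}$ directly via Fa\`a di Bruno on $1/S$ with $S=|\bkappa|(\sqrt{Q+g/|\bkappa|^2}+\sqrt{Q})$, the paper instead uses the further symmetry consequence that the Hessians $g_2,h_2$ are \emph{scalar} multiples of the identity to rewrite both radicands as $(\text{const})\cdot|\bkappa|^2(1+\text{small})$, splits $\sqrt{A+g}-\sqrt{A}$ into two terms (one from the differing prefactors, one from the differing $(1+\cdot)^{1/2}$ factors), and handles the latter via a dedicated Bookkeeping Lemma (Lemma \ref{BL}) estimating $(1+F)^{1/2}-(1+F_0)^{1/2}$. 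Your route is a bit more economical --- it never needs the scalar-Hessian observation, only $\nabla g(\bK_\star)=\nabla h(\bK_\star)=0$ --- while the paper's decomposition makes the leading conical behavior $(a_{00}^2+\tfrac12 g_2^0)^{1/2}|\bkappa|$ explicit.
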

 We now embark on the proof of Proposition \ref{Proposition-nearDPs}, which will occupy the remainder of Section 
 \ref{mu_pm-rescaled}. 
 
  We Taylor expand $h(\bk)$ and $g(\bk)$ using Symmetry Lemma \ref{symmetry-lemma}
 and the results of Section \ref{complex-analysis}. For $\bk\in\R^2$ and $|\bk-\bK_\star|<K_{max}$ we write
  $\bk=\bK_\star+\bkappa$, and we have
  \begin{align*}
  g(\bK_\star+\bkappa)\ &=\ g_0 + \vec{g}_1\cdot\bkappa + \frac12 \bkappa^T g_2 \bkappa 
  + \sum_{|\bn|=3}g_\bn(\bkappa)\ \bkappa^\bn \ , \\
  h(\bK_\star+\bkappa)\ &=\ h_0 + \vec{h}_1\cdot\bkappa + \frac12 \bkappa^T h_2 \bkappa 
  + \sum_{|\bn|=3}h_\bn(\bkappa)\ \bkappa^\bn \ ,
  \end{align*}
  where $g_0=g(\bK_\star), h_0=h(\bK_\star)$ are numbers, $\vec{g}_1, \vec{h}_1\in\R^2$ are vectors, $g_2, h_2$ are symmetric $2\times2$ matrices,
 and  $g_\bn, h_\bn$ ($|\bn|=3$) are scalar and single-valued functions which  depend on multi-indices $\bn$. Moreover, because of the analyticity and bounds given for $g$ and $h$ in Lemma \ref{global-mupm}, we have the norm bounds:
  \begin{align}
  |g_0|, |h_0|\lesssim e^{-c\lambda},\quad  |\vec{g}_1|, |\vec{h}_1|\lesssim e^{-c\lambda},\quad
    |g_2|, |h_2|\lesssim e^{-c\lambda}\ ,
  \label{bounds-gh012}  \end{align}
    and 
     \begin{align}
      \left| \partial^\beta_\bkappa g_\bn(\bkappa)\right|,  \left| \partial^\beta_\bkappa h_\bn(\bkappa)\right|\lesssim e^{-c\lambda},
   \label{bounds-DgDhn}     \end{align}
   for $\bkappa\in\R^2$, $|\bkappa|<\frac{1}{2}K_{max}$, $|\beta|\le\beta_{max}$, $|\bn|=3$.
Note: The derivative bounds follow from
 Cauchy estimates for derivatives of analytic functions. The small thickness $|\Im\bk|<\hat{c} \lambda^{-1}$ of $D_{K_{max}}$ in the imaginary directions in Lemma \ref{global-mupm}
 is overwhelmed by the tiny upper bound $e^{-c\lambda}$ in that lemma.  

As observed above, by part (1) of Lemma \ref{symmetry-lemma} and the vanishing of 
$\gamma(\bK_\star)\gamma(-\bK_\star)$, we have $g_0=0$. By part (2) of Lemma \ref{symmetry-lemma}, we have
$R\vec{g}_1=\vec{g}_1$, $R\vec{h}_1=\vec{h}_1$. Since $1$ is not an eigenvalue of $R$, $\vec{g}_1=0$
and $\vec{h}_1=0$.  Furthermore, $R^T g_2 R=g_2$ and $R^T h_2 R=h_2$. Therefore, by symmetry of $g_2$ and $h_2$, $g_2=g_2^0 I$ and $h_2=h_2^0 I$, for scalars $g_2^0, h_2^0$.
%
%
 Our estimates for $|g_2|, |h_2|$ yield $|g_2^0|, |h_2^0|\lesssim e^{-c\lambda}$. Thus, we obtain the following result.

 \begin{lemma}\label{US1}
 For $\bk=\bK_\star+\bkappa$, $\bkappa\in\R^2$, $|\bkappa|<\frac{1}{2}K_{max}$, the rescaled eigenvalues of $H^\lambda(\bk)$ in the interval $[-\frac12 \widehat{C},\frac12 \widehat{C}]$ are given by
 \begin{equation*}
 \mu_\pm(\bk)\ =\ h(\bk)\ \pm\ \sqrt{\gamma(\bk)\gamma(-\bk)+g(\bk)},
 \end{equation*}
 where 
 \begin{align*}
 h(\bK_\star+\bkappa)\ &=\ h_0 + \frac12 h_2^0 |\bkappa|^2 + \sum_{|\bn|=3}h_\bn(\bkappa)\ \bkappa^\bn
 \ \ \textrm{and} \\ 
 g(\bK_\star+\bkappa)\ &=\  \frac12 g_2^0 |\bkappa|^2 + \sum_{|\bn|=3}g_\bn(\bkappa)\ \bkappa^\bn\ .
 \end{align*}
Here, 
\begin{align*} 
&|h_0|,\ |h_2^0|,\ |g_2^0|\ \lesssim\ e^{-c\lambda} \quad \text{and} \quad
\left| \partial^\beta_\bkappa g_\bn(\bkappa)\right|,  \left| \partial^\beta_\bkappa h_\bn(\bkappa)\right|\lesssim e^{-c\lambda} , \\ 
     & \textrm{for } \bkappa\in\R^2,\ |\bkappa|<\frac{1}{2}K_{max},\ |\beta|\le\beta_{max},\ |\bn|=3\ .\nn
 \end{align*}
 \end{lemma}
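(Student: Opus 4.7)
The plan is to start from Lemma \ref{global-mupm}, which provides the representation
\[
\mu_\pm(\bk) \;=\; h(\bk) \;\pm\; \sqrt{\gamma(\bk)\gamma(-\bk) + g(\bk)},
\]
with $h, g$ analytic on $\mathcal{D}_{K_{max}}$ and $|h(\bk)|, |g(\bk)| \lesssim e^{-c\lambda}$ throughout that domain. Since $\bkappa \mapsto h(\bK_\star + \bkappa)$ and $\bkappa \mapsto g(\bK_\star + \bkappa)$ are analytic on a polydisc of radius $\tfrac{1}{2}K_{max}$ in the real directions and of thickness $\sim\hat{c}\lambda^{-1}$ in the imaginary directions, I may Taylor expand each one about $\bkappa=0$. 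Bounds on the zeroth-, first-, second-order Taylor coefficients as well as on the cubic remainder functions $h_\bn, g_\bn$ and their derivatives (for $|\beta|\le\beta_{max}$) then follow from Cauchy integral estimates, where the polynomial cost $\lambda^{|\beta|+|\bn|}$ is absorbed into the global $e^{-c\lambda}$ bound by passing to a slightly smaller $c$.

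The substantive step is to invoke the rotational symmetry Lemma \ref{symmetry-lemma} to eliminate the unwanted terms in these expansions. I would first promote the invariance of the \emph{unordered} pair $\{\mu_+(\bK_\star+\bkappa),\mu_-(\bK_\star+\bkappa)\}$ under $\bkappa\mapsto R\bkappa$ to individual invariance of $h$ and $g$. This uses the elementary symmetric functions: $\mu_++\mu_-=2h(\bk)$ and $(\mu_+-\mu_-)^2/4=\gamma(\bk)\gamma(-\bk)+g(\bk)$; since Lemma \ref{gamma0}(ii) gives $\gamma(R\bk)\gamma(-R\bk)=\gamma(\bk)\gamma(-\bk)$ (using $R(-\bk)=-R\bk$), rotational invariance of the pair implies $h(\bK_\star+R\bkappa)=h(\bK_\star+\bkappa)$ and $g(\bK_\star+R\bkappa)=g(\bK_\star+\bkappa)$.

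Matching Taylor coefficients in the symmetry relations then forces $R^*\vec{g}_1=\vec{g}_1$ and $R^*\vec{h}_1=\vec{h}_1$, so these linear terms vanish since $R$ (the $120^\circ$ rotation) has no eigenvalue $1$; and $R^T g_2 R=g_2$, $R^T h_2 R=h_2$. In two dimensions the only real symmetric $2\times 2$ matrices that commute with a non-trivial rotation are scalar multiples of $I$, yielding $g_2=g_2^0 I$ and $h_2=h_2^0 I$. To conclude $g_0=0$, I use Corollary \ref{tb-dirac-pts}: at a Dirac point $\mu_+(\bK_\star)=\mu_-(\bK_\star)$, and since $\gamma(\bK_\star)\gamma(-\bK_\star)=0$ by Lemma \ref{gamma0}(i), the formula $\mu_\pm(\bK_\star)=h_0\pm\sqrt{g_0}$ forces $g_0=0$ (and identifies $h_0=h(\bK_\star)$ as the common value). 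The stated bounds $|h_0|,|h_2^0|,|g_2^0|\lesssim e^{-c\lambda}$ and $|\partial^\beta_\bkappa g_\bn|,|\partial^\beta_\bkappa h_\bn|\lesssim e^{-c\lambda}$ are precisely the Cauchy estimates restricted to the corresponding multi-indices.

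The main obstacle I anticipate is not any single step, but rather the separation argument in the second paragraph: one must check that the invariance guaranteed by Lemma \ref{symmetry-lemma} applies to the unordered set of eigenvalues and upgrade it to invariance of the individual analytic functions $h$ and $g$. This is clean once one passes to the elementary symmetric functions, but some care is needed near the Dirac point, where $(\mu_+-\mu_-)^2$ vanishes and the ambiguity in the square root branch could in principle interfere; this is ultimately harmless because both $h$ and $g$ are already known a priori to be single-valued analytic on $\mathcal{D}_{K_{max}}$ by Lemma \ref{global-mupm}.
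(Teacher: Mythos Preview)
Your proposal is correct and follows essentially the same route as the paper: Taylor expand $h$ and $g$ from Lemma~\ref{global-mupm}, obtain the coefficient bounds from Cauchy estimates (absorbing the $\lambda^{|\beta|}$ loss into $e^{-c\lambda}$), use the Dirac-point degeneracy to get $g_0=0$, and use the rotational symmetry of Lemma~\ref{symmetry-lemma} to kill the linear terms and force the Hessians to be scalar. Your passage from invariance of the unordered pair $\{\mu_+,\mu_-\}$ to invariance of $h$ and $g$ separately, via the elementary symmetric functions $\mu_++\mu_-=2h$ and $(\mu_+-\mu_-)^2/4=\gamma(\bk)\gamma(-\bk)+g$, is a clean articulation of a step the paper leaves implicit; the only point to add is that $\gamma(\bk)\gamma(-\bk)$ is itself invariant under $\bkappa\mapsto R\bkappa$ because it equals $|1+e^{i\bk\cdot\bv_1}+e^{i\bk\cdot\bv_2}|^2$ for real $\bk$, which is both $\Lambda_h^*$-periodic and $R$-invariant, and $R\bK_\star\in\bK_\star+\Lambda_h^*$.
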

 
 In the next sections we use Lemma \ref{US1} to write $\mu_\pm(\bk)=\pm\sqrt{\gamma(\bk)\gamma(-\bk)}$
  plus an error term, which we estimate explicitly. 
  
  \subsection{Bookkeeping}\label{bookkeeping}
  
  Suppose $F(\bkappa)$ and $G(\bkappa)$ are functions, defined on an open subset of $\R^2$,
   satisfying at some point $\bkappa_0$ the estimates:
   \begin{align*}
   \left|\partial_\bkappa^\beta F(\bkappa_0)\right|\ &\le\ A\cdot \delta^{-|\beta|}, \ \ |\beta|\le\beta_{max};\\
   \left|\partial_\bkappa^\beta G(\bkappa_0)\right|\ &\le\ B\cdot {\tilde\delta}^{-|\beta|}, \ \ |\beta|\le\beta_{max}.
   \end{align*}
   Because $\partial_\bkappa^\beta(FG)(\bkappa)$ is a sum of terms $\partial_\bkappa^{\beta_1} F(\bkappa_0)\partial_\bkappa^{\beta_2}G(\bkappa_0)$ with $|\beta_1|+|\beta_2|=|\beta|$, we have
   \[  \left|\partial_\bkappa^\beta(FG)(\bkappa_0)\right|\ \lesssim\ C\ A B\cdot \left[\min(\delta,\tilde\delta)\right]^{-|\beta|},\ \ 
   |\beta|\le\beta_{max}\ ,\]
   where $C$ is a constant depending only on $\beta_{max}$. 
   
   Next, suppose $H(\bkappa)$ is a function defined in a neighborhood of $\bkappa_0$ in $\R^2$, and suppose $H$ satisfies 
   $\left|\partial_\kappa^\beta H(\kappa_0)\right|\ \le\ A \delta^{-|\beta|}$ for $ |\beta|\le\beta_{max}$, 
   and $10^{-1} \le H(\bkappa_0) \le 10$. Because $\partial_\kappa^\beta\left(1/H\right)(\bkappa_0)$ is a sum of terms
    $\left(H(\bkappa_0)\right)^{-\nu-1} \partial_\kappa^{\beta_1}H(\bkappa_0)\cdots\partial_\bkappa^{\beta_\nu}H(\bkappa_0)$
    with $|\beta_1|+\cdots+|\beta_\nu|=|\beta|$, we find that 
    \[ \left|\ \partial_\bkappa^\beta\left(\frac{1}{H}\right)(\bkappa_0)\ \right|\ \le C\delta^{-|\beta|},\ \ |\beta|\le\beta_{max} , \]
    with $C$ determined by $A$ and $\beta_{max}$. 
    
    \begin{lemma}[Bookkeeping Lemma]\label{BL}
   Fix a positive integer, $\beta_{max}$.  Let $F(\bkappa)$ and $F_0(\bkappa)$ be complex-valued functions defined in an open subset of $\C^2$. Assume that there exist positive constants $A$ and $\eta$, such that  the following estimates hold:
    \begin{align}
    \left|F(\bkappa)\right|,\  \left|F_0(\bkappa)\right|\ &\le\ \frac{1}{10};\label{FF0}\\
     \left|\partial^\beta_\bkappa F(\bkappa)\right|,\  \left|\partial^\beta_\bkappa F_0(\bkappa)\right|\ 
     &\le\ A \ \delta^{-|\beta|},\ \ {\rm for}\ |\beta|\le\beta_{max};\label{DFDF0}\\
     \left|\partial^\beta_\bkappa \left( F(\bkappa)-F_0(\bkappa) \right)\right|\ &\le\ \eta\ \delta^{-|\beta|},\ \
     {\rm for}\ |\beta|\le\beta_{max}\ .
     \label{DFminusF0}
     \end{align} 
     Then,
         \begin{equation*}
     \left|\partial^\beta_\bkappa \left( \left(1+ F(\bkappa)\right)^{\frac{1}{2}}-\left(1+F_0(\bkappa)\right)^{\frac{1}{2}} \right)\right|\ \le C\ \eta\ \delta^{-|\beta|},\   {\rm for}\ |\beta|\le\beta_{max},\ 
     \end{equation*}
     where $C$ is determined by $A$ and $\beta_{max}$. 
          \end{lemma}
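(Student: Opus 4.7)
\begin{proofof}[Proof proposal for Lemma \ref{BL}]
The plan is to factor the difference of square roots algebraically, reducing the problem to the product rule and the reciprocal estimates already worked out in the bookkeeping preliminaries of Section \ref{bookkeeping}. Specifically, I would write
\[
(1+F)^{1/2} - (1+F_0)^{1/2} \;=\; \frac{F-F_0}{(1+F)^{1/2} + (1+F_0)^{1/2}} \;\equiv\; (F-F_0)\cdot H(\bkappa),
\]
where $H(\bkappa) = \bigl[(1+F(\bkappa))^{1/2} + (1+F_0(\bkappa))^{1/2}\bigr]^{-1}$. Since $|F|,|F_0|\le 1/10$, the denominator lies between $\sqrt{9/10}+\sqrt{9/10}$ and $\sqrt{11/10}+\sqrt{11/10}$, so in particular $10^{-1}\le H(\bkappa)\le 10$ pointwise.

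Next, I would establish the scale-$\delta$ derivative bounds $|\partial^\beta_\bkappa H(\bkappa)|\le C\,\delta^{-|\beta|}$ for $|\beta|\le\beta_{max}$, with $C$ depending only on $A$ and $\beta_{max}$. This proceeds in two substeps. First, Faà di Bruno (equivalently, expanding $\partial^\beta_\bkappa(1+F)^{1/2}$ as a sum of products of factors $(1+F)^{1/2-\nu}\,\partial^{\beta_1}_\bkappa F\cdots \partial^{\beta_\nu}_\bkappa F$ with $\sum_j|\beta_j|=|\beta|$) together with \eqref{FF0} and \eqref{DFDF0} yields $|\partial^\beta_\bkappa(1+F)^{1/2}|\lesssim \delta^{-|\beta|}$, and similarly for $F_0$. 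The product rule observation in Section \ref{bookkeeping} then gives the same bound for the sum $(1+F)^{1/2}+(1+F_0)^{1/2}$, and the reciprocal observation (which requires the pointwise lower bound just obtained) gives the desired bound on $H$.

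Finally, applying the Leibniz rule to $(F-F_0)\cdot H$, every term has the form $\partial^{\beta_1}_\bkappa(F-F_0)\cdot\partial^{\beta_2}_\bkappa H$ with $|\beta_1|+|\beta_2|=|\beta|\le\beta_{max}$. Hypothesis \eqref{DFminusF0} bounds the first factor by $\eta\,\delta^{-|\beta_1|}$ and the previous step bounds the second by $C\,\delta^{-|\beta_2|}$, so the combined bound is $C'\eta\,\delta^{-|\beta|}$ with $C'$ depending only on $A$ and $\beta_{max}$, as required.

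The only mildly technical point is the Faà di Bruno step for $(1+F)^{1/2}$; the apparent factor $\delta^{-\nu|\beta|}$ from naively applying the product-rule bookkeeping $\nu$ times is avoided because the multi-index constraint $\sum_j|\beta_j|=|\beta|$ forces the total derivative count to be exactly $|\beta|$, yielding the correct scale $\delta^{-|\beta|}$. Once this is noted, everything is routine bookkeeping, and no new analytic input beyond Section \ref{bookkeeping} is required. I do not anticipate any serious obstacle.
\end{proofof}
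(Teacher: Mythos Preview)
Your proposal is correct and follows essentially the same approach as the paper: both write $(1+F)^{1/2}-(1+F_0)^{1/2}=(F-F_0)\cdot \mathscr{A}(F,F_0)$ with $\mathscr{A}(X,Y)=\bigl[(1+X)^{1/2}+(1+Y)^{1/2}\bigr]^{-1}$, then apply Leibniz together with the derivative bookkeeping of Section~\ref{bookkeeping}. The only cosmetic difference is that the paper bounds $\partial^\beta_\bkappa\mathscr{A}(F(\bkappa),F_0(\bkappa))$ by treating $\mathscr{A}$ as an analytic function of $(X,Y)$ on $D\times D$ and applying the chain rule directly, whereas you decompose $\mathscr{A}$ as a reciprocal of a sum of square roots and bound each piece in turn; the resulting estimates are identical.
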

     
     \begin{proof}[Proof of Bookkeeping Lemma \ref{BL}] Set 
     \begin{equation*}
     \mathscr{A}(X,Y)\ \equiv\ \frac{(1+X)^{\frac{1}{2}}-(1+Y)^{\frac{1}{2}}}{X-Y} .
     \end{equation*}
     Then, $\mathscr{A}(X,Y)$ is analytic as a function of $(X, Y)$ in $D\times D$, where $D$ is the disc of radius $1/10$ about $0$ in $\C$. 
      Thus, $(1+F(\bkappa))^{\frac{1}{2}}-(1+F_0(\bkappa))^{\frac{1}{2}}=\mathscr{A}\left(F(\bkappa) , F_0(\bkappa)\right)\cdot \left[ F(\bkappa)-F_0(\bkappa) \right]$. 
      
      Now $\partial^\beta_\bkappa\left[ \mathscr{A}\left(F(\bkappa),F(\bkappa_0)\right) \right]$ is a sum of terms
      \begin{equation}
      \left[\partial_{X,Y}^\gamma \mathscr{A}(X,Y) \Big|_{\substack{X=F(\bkappa)\\ Y= F_0(\bkappa)}} \right]
      \cdot \prod_{\nu=1}^{\nu_{max}}\partial_\bkappa^{\beta_{\nu}} F(\bkappa)\cdot 
       \prod_{\tilde\nu=1}^{\tilde\nu_{max}}\partial_\bkappa^{\sigma_{\tilde\nu}} F_0(\bkappa)\ ,
       \label{DXYA}
      \end{equation}
      where $|\beta_1|+\cdots+|\beta_{\nu_{max}}|+|\sigma_1|+\dots+|\sigma_{\tilde\nu_{max}}|=|\beta|$. Using the hypothesized bounds \eqref{FF0} and \eqref{DFDF0}, we see that each term \eqref{DXYA} has absolute value at most $C\cdot\delta^{-|\beta|}$, where the constant $C$ is determined by $A$ and $\beta_{max}$. Thus,
      \begin{equation*}
      \left|\D^\beta_\bkappa \mathscr{A}(F(\bkappa),F_0(\bkappa))\right|\ \le\ C\delta^{-|\beta|},\ \ |\beta|\le\beta_{max} ,
      \end{equation*}
      where $C$ is determined by $A$ and $\beta_{max}$. 
      
      We next use the assumed bound \eqref{DFminusF0}. Because
      \[ \left(1+F(\bkappa)\right)^{\frac{1}{2}}\ -\ \left(1+F_0(\bkappa)\right)^{\frac{1}{2}}\ =\ \mathscr{A}\left(F(\bkappa),F_0(\bkappa)\right)\cdot \left[F(\bkappa)-F_0(\bkappa)\right]\]
      it follows that 
      \[\Big|\ \partial_\bkappa^\beta\left[ \left(1+F(\bkappa)\right)^{\frac{1}{2}}\ -\ \left(1+F_0(\bkappa)\right)^{\frac{1}{2}}\right]\ \Big|
      \le\ C\ \eta\ \delta^{-|\beta|},\ \ |\beta|\le\beta_{max} , \]
      with $C$  determined by $A$ and $\beta_{max}$. This completes the proof of the lemma. 
        \end{proof}
        
     Recall that Lemma \ref{US1} gives $\mu_\pm(\bk)=h(\bk)\pm\sqrt{\gamma(\bk)\gamma(-\bk)+g(\bk)}$, 
     and specifies the form of the Taylor expansions of $g(\bK_\star+\bkappa)$ and $h(\bK_\star+\bkappa)$ for  $|\bkappa|=|\bk-\bK_\star|$ small and $\bkappa$ real. We next compare the functions
     \begin{equation}
     \mathbb{F}(\bkappa)\ \equiv\ \sqrt{\gamma(\bk)\gamma(-\bk)+g(\bk)},\ \ \bk=\bK_\star+\bkappa
     \label{bbF}
     \end{equation}
     and 
     \begin{equation}
     \mathbb{F}_0(\bkappa)\ \equiv\ \sqrt{\gamma(\bk)\gamma(-\bk)},\ \ \bk=\bK_\star+\bkappa\ ,
     \label{bbF0}
     \end{equation}
     for $|\bkappa|<c_{\star\star}$, a small constant to be chosen below. 
     Note that since here $\bk$ is real, $\gamma(\bk)\gamma(-\bk)$ and $\gamma(\bk)\gamma(-\bk)+g(\bk)$ are non-negative; therefore we may use the non-negative square root. Consequently, $\mathbb{F}$ and $\mathbb{F}_0$ are well-defined.

     By Lemma \ref{gamma0} and Taylor expansion
      ($\bk=\bK_\star+\bkappa$ real), we have
     \begin{equation}
  \gamma(\bk)\gamma(-\bk)\ =\ a_{00}^2\left|\bkappa\right|^2
      \ +\ \sum_{|\bfm|=3} \bkappa^\bfm F_{0,\bfm}(\bkappa),\ \ {\rm for}\ \ |\bkappa|<c_{\star\star},
      \label{gplus-gminus-taylor}
      \end{equation}
      with $a_{00}=\sqrt{3/4}$, $\left|\ \partial^\beta_\bkappa F_{0,\bfm}(\bkappa)\ \right|\lesssim C$, for $|\bfm|=3$, 
      $|\beta|\le\beta_{max}$ and $|\bkappa|<c_{\star\star}$. 
      
      Therefore, Lemma \ref{US1} yields $(\bk=\bK_\star+\bkappa)$
      \begin{equation}
    \gamma(\bk)\gamma(-\bk) + g(\bk) = 
      \left( a_{00}^2 + \frac12 g_2^0 \right) |\bkappa|^2 + \sum_{|\bfm|=3}\bkappa^\bfm\left(F_{0,\bfm}(\bkappa)+g_\bfm(\bkappa)\right) .
      \label{gam2plusg}\end{equation}
      
      We rewrite \eqref{bbF} and \eqref{bbF0} using   \eqref{gplus-gminus-taylor} and \eqref{gam2plusg} in the form
      \begin{align*}
      \mathbb{F}(\bkappa)\ &=\  \left(\ a_{00}^2\ +\ \frac12 g_2^0\ \right)^{\frac{1}{2}} |\bkappa|
      \times\left[1+\sum_{|\bfm|=3} \frac{\bkappa^\bfm}{|\bkappa|^2}\cdot 
      \frac{F_{0,\bfm}(\bkappa)+g_\bfm(\bkappa)}{a_{00}^2\ +\ \frac12 g_2^0}\right]^{\frac{1}{2}}\ ,\nn\\
       \mathbb{F}_0(\bkappa)\ &=\  a_{00}\ |\bkappa|
      \times\left[1+\sum_{|\bfm|=3} \frac{\bkappa^\bfm}{|\bkappa|^2}\cdot 
      \frac{F_{0,\bfm}(\bkappa)}{a_{00}^2}\right]^{\frac{1}{2}}\ .\nn
      \end{align*}
      
      Therefore, 
      {\footnotesize{
      \begin{align}
    &  \mathbb{F}(\bkappa) -  \mathbb{F}_0(\bkappa)\nn\\
      &\quad = \left[ \left( a_{00}^2 + \frac12 g_2^0 \right)^{\frac{1}{2}}-a_{00} \right]\cdot |\bkappa|\cdot
      \left(1+\sum_{|\bfm|=3} \frac{\bkappa^\bfm}{|\bkappa|^2}\cdot 
      \frac{F_{0,\bfm}(\bkappa)+g_\bfm(\bkappa)}{a_{00}^2 + \frac12 g_2^0}\right)^{\frac{1}{2}} \nn \\
      &\quad  + a_{00} |\bkappa| 
      \left[ \left(1+\sum_{|\bfm|=3} \frac{\bkappa^\bfm}{|\bkappa|^2}\cdot 
      \frac{F_{0,\bfm}(\bkappa)+g_\bfm(\bkappa)}{a_{00}^2 + \frac12 g_2^0}\right)^{\frac{1}{2}} - 
      \left(1+\sum_{|\bfm|=3} \frac{\bkappa^\bfm}{|\bkappa|^2}\cdot 
      \frac{F_{0,\bfm}(\bkappa)}{a_{00}^2}\right)^{\frac{1}{2}} \right] \nn \\
      &\quad \equiv \textrm{Term $1$} +  \textrm{Term $2$} . \label{term12}
      \end{align}
      }}

\noindent{\it Estimation of Term $1$:} 
We apply $\partial_\bkappa^\beta$ to Term $1$ and estimate. Consider the first factor in Term $1$: $\left(\ a_{00}^2\ +\ \frac12 g_2^0\ \right)^{\frac{1}{2}}-a_{00}$, which is independent of $\bkappa$. 
From Lemma \ref{gamma0} we have  $a_{00}=\sqrt{3/4}$.  Also, $|g_2^0|\lesssim e^{-c\lambda}$. Therefore, the first factor is $\lesssim e^{-c\lambda}$. 
 Concerning the second factor, $|\bkappa|$, we have $\partial_\bkappa^\beta |\bkappa|\lesssim |\bkappa|^{1-|\beta|}$ for $|\bkappa|>0$, $|\beta|\le\beta_{max}$.

We turn to the third factor. Here we apply our Bookkeeping Lemma with the choices 
\[
F(\bkappa)= \sum_{|\bfm|=3}\ \frac{\bkappa^\bfm}{|\bkappa|^2}\cdot
  \left(\frac{F_{0,\bfm}(\bkappa)+g_\bfm(\bkappa)}{a_{00}^2\ +\ \frac12 g_2^0}\right)\ \ \textrm{and}\ \  F_0(\bkappa)=0.
\]
 Recall that $\left|\partial_\bkappa^\beta F_{0,\bfm}(\bkappa)\right|\le C$ (Lemma \ref{gamma0}) , 
 $\left|\partial_\bkappa^\beta g_{\bfm}(\bkappa)\right|\lesssim e^{-c\lambda}$ for $|\bkappa|<c_{\star\star}$,
  $|\beta|\le\beta_{max}$, $|\bfm|=3$ (by \eqref{bounds-DgDhn}), and that $a_{00}^2+ g_2^0/2\ge3/8$ (Lemma \ref{gamma0} and \eqref{bounds-gh012}). Therefore, for the same range
  of $\bkappa$, $\beta$ and $\bfm$,
  \begin{equation}\label{ratio-bound}
  \left|\ \partial_\bkappa^\beta
  \left[ \frac{F_{0,\bfm}(\bkappa)+g_\bfm(\bkappa)}{a_{00}^2\ +\ \frac12 g_2^0}\right]\ \right|
  \le C \ \left(\lesssim\ |\bkappa|^{-|\beta|}\right)\ .
  \end{equation}
  Also, for $|\bfm|=3$. $|\beta|\le\beta_{max}$ and any $\bkappa$, we have
  \begin{equation}
  \left|\ \partial_\bkappa^\beta \left[\frac{\bkappa^\bfm}{|\bkappa|^2}\right]\ \right|\ \lesssim  |\bkappa|^{1-|\beta|}
  \label{Dabsbkappa}
  \end{equation}
  because $\frac{\bkappa^\bfm}{|\bkappa|^2}$ is homogeneous of degree $1$ and smooth away from zero. Therefore, 
  \begin{equation}\label{pre-book-term1}
   \left|\ \partial_\bkappa^\beta F(\bkappa)\ \right|\ \equiv\ 
  \left|\ \partial_\bkappa^\beta
  \left[
  \sum_{|\bfm|=3}\ \frac{\bkappa^\bfm}{|\bkappa|^2}\cdot
  \left(\frac{F_{0,\bfm}(\bkappa)+g_\bfm(\bkappa)}{a_{00}^2\ +\ \frac12 g_2^0}\right)
  \right]\ \right| \lesssim |\bkappa|^{1-|\beta|} ,
        \end{equation}
        and of course $\left|\ \partial_\bkappa^\beta F_0(\bkappa)\ \right|\lesssim |\bkappa|^{1-|\beta|}$. 
         In particular, the case $\beta=0$ implies that $|F(\bkappa)|, |F_0(\bkappa)| \le1/10$ on the set where $|\bkappa|<c_{\star\star}$,
        provided we take $c_{\star\star}$ small enough.  
        
        Applying Bookkeeping Lemma \ref{bookkeeping} with $F$ given by the expression within square brackets in \eqref{pre-book-term1} and $F_0\equiv0$ we conclude that 
        \begin{equation}
        \left|\ \partial_\bkappa^\beta
  \left[\ \left(\ 1+\sum_{|\bfm|=3}\ \frac{\bkappa^\bfm}{|\bkappa|^2}\cdot
  \left(\frac{F_{0,\bfm}(\bkappa)+g_\bfm(\bkappa)}{a_{00}^2\ +\ \frac12 g_2^0}\right)\ \right)^{\frac12}-(1+0)^{\frac{1}{2}}
\   \right]\
   \right|\ \lesssim\ |\bkappa|^{1-|\beta|}\ ,
   \label{book-term1}     \end{equation}
        for all $|\bkappa|<c_{\star\star}$ and $|\beta|\le\beta_{max}$. 
Finally, using \eqref{Dabsbkappa} and \eqref{book-term1} we obtain 
\[|\partial^\beta_\bkappa({\rm Term}\ 1)|\lesssim e^{-c\lambda}\ |\bkappa|^{1-|\beta|}.\] 

\noindent{\it Estimation of Term $2$:} Our strategy is again to apply the Bookkeeping Lemma, this time with the choices 
\begin{equation*}
F_0(\bkappa)\equiv \sum_{|\bfm|=3} \frac{\bkappa^\bfm}{|\bkappa|^2}\cdot 
    \frac{F_{0,\bfm}(\bkappa)}{a_{00}^2},\qquad
F(\bkappa)\equiv \sum_{|\bfm|=3} \frac{\bkappa^\bfm}{|\bkappa|^2}\cdot 
     \frac{F_{0,\bfm}(\bkappa)+g_\bfm(\bkappa)}{a_{00}^2\ +\ \frac12 g_2^0} .
      \end{equation*}

 From \eqref{ratio-bound} we have, for $|\beta|\le\beta_{max}, |\bkappa|<c_{\star\star}$ and $|\bfm|=3$
\begin{equation}\label{ratio-bound1}
  \left|\ \partial_\bkappa^\beta
  \left[ \frac{F_{0,\bfm}(\bkappa)+g_\bfm(\bkappa)}{a_{00}^2\ +\ \frac12 g_2^0}\right]\ \right|
  \le C ,\ \ \textrm{ and similarly,}\ \  \left|\ \partial_\bkappa^\beta
  \left[ \frac{F_{0,\bfm}(\bkappa)}{a_{00}^2}\right]\ \right|\ 
  \le C \ .
  \end{equation}
  Moreover, 
  \begin{align*}
&   \frac{F_{0,\bfm}(\bkappa)+g_\bfm(\bkappa)}{a_{00}^2\ +\ \frac12 g_2^0}\ -\ 
   \frac{F_{0,\bfm}(\bkappa)}{a_{00}^2} 
   \nn\\
   &\quad 
   =\ 
   \left[ (a_{00}^2\ +\ \frac12 g_2^0)^{-1}-a_{00}^{-2} \right]\cdot \left(F_{0,\bfm}(\bkappa)+g_\bfm(\bkappa)\right)
   \ +\ a_{00}^{-2} g_\bfm(\bkappa)
   \end{align*}
   Recalling that $a_{00}=\sqrt{3}/2$, $|g_2^0|\lesssim e^{-c\lambda}$, we have 
\[ |(a_{00}^2\ +\ \frac12 g_2^0)^{-1}-a_{00}^{-2}|\lesssim e^{-c\lambda},\ \ {\rm and}\ \ 
 \left| \partial_\bkappa^\beta\left(\ F_{0,\bfm}(\bkappa)+g_\bfm(\bkappa)\ \right)\ \right|\le C
 \]
 for $|\beta|\le\beta_{max}$, $|\bfm|=3$ and $|\bkappa|<c_{\star\star}$, $\bkappa\in\R^2$. Also, 
 \[
 \left| \partial_\bkappa^\beta g_\bfm(\bkappa)\right|\lesssim e^{-c\lambda}
 \]
 for $|\beta|\le \frac12\beta_{\max}$, $|\bfm|=3$ and $|\bkappa|<c_{\star\star}$.
Consequently, 
\[
\left|\ \partial_\bkappa^\beta\left[\frac{F_{0,\bfm}(\bkappa)+g_\bfm(\bkappa)}{a_{00}^2\ +\ \frac12 g_2^0}\ -\ 
   \frac{F_{0,\bfm}(\bkappa)}{a_{00}^2}\right]\right|\ \lesssim\ e^{-c\lambda}\lesssim e^{-c\lambda}|\kappa|^{-|\beta|}
   \]
   for $|\beta|\le\beta_{max}$, $0<|\bkappa|<c_{\star\star}$, $|\bfm|=3$. Also, 
   \[ 
  \left| \partial_\bkappa \left( \frac{\bkappa^\bfm}{|\bkappa|^2} \right) \right| \lesssim |\bkappa|^{1-|\beta|}\ 
  \]
   for the same range of $\beta$, $\bkappa$ and $\bfm$. Therefore, 
  \begin{align}
&\left|\  \partial_\bkappa^\beta\left( F(\bkappa)\ -\ F_0(\bkappa)\right) \right|\nn\\
&=\ \left|\  \partial_\bkappa^\beta\left\{\ \left[\sum_{|\bfm|=3}\frac{\bkappa^\bfm}{|\bkappa|^2}\frac{F_{0,\bfm}(\bkappa)+g_\bfm(\bkappa)}{a_{00}^2\ +\ \frac12 g_2^0}\right] -\ 
   \left[\sum_{|\bfm|=3}\frac{\bkappa^\bfm}{|\bkappa|^2}\frac{F_{0,\bfm}(\bkappa)}{a_{00}^2}\right]\
   \right\}\right|
   \nn\\
   &  
   \lesssim\ e^{-c\lambda}\ |\bkappa|\ \cdot |\bkappa|^{-|\beta|}\nn
   \end{align}
   for $\bkappa\in\R^2$,\ $0<|\bkappa|<c_{\star\star}$ and $|\beta|\le\beta_{max}$. 
   Moreover, if $c_{\star\star}$ is sufficiently small, then for $0<|\bkappa|<c_{\star\star}$ we have $|F(\bkappa)|, |F_0(\bkappa)|\le 1/10$ because
    $\left|\frac{\bkappa^\bfm}{|\bkappa|^2}\right|\lesssim|\bkappa|$. From \eqref{ratio-bound1} we have
    \begin{align*}
  &  \left|\  \partial_\bkappa^\beta F(\bkappa)\ \right|\ =\ 
\left|\    \partial_\bkappa^\beta \left[\sum_{|\bfm|=3}\frac{\bkappa^\bfm}{|\bkappa|^2}\frac{F_{0,\bfm}(\bkappa)+g_\bfm(\bkappa)}{a_{00}^2\ +\ \frac12 g_2^0}\right] 
   \right|\lesssim |\bkappa|\cdot |\bkappa|^{-|\beta|}
   \end{align*}    
   and 
   \begin{align*}
  &  \left|\  \partial_\bkappa^\beta F_0(\bkappa)\ \right|\ =\ 
\left|\    \partial_\bkappa^\beta \left[\sum_{|\bfm|=3}\frac{\bkappa^\bfm}{|\bkappa|^2}\frac{F_{0,\bfm}(\bkappa)}{a_{00}^2}\right] 
   \right|\lesssim |\bkappa|\cdot |\bkappa|^{-|\beta|}\ .
   \end{align*} 
   Thus we have verified all hypothesis of the Bookkeeping Lemma. That lemma implies that the expression in square brackets in Term $2$  satisfies the bound: 
  \begin{align*}
 &\left|\ \partial_\bkappa^\beta\left[\ \left(1+ \sum_{|\bfm|=3}\frac{\bkappa^\bfm}{|\bkappa|^2}\frac{F_{0,\bfm}(\bkappa)+g_\bfm(\bkappa)}{a_{00}^2\ +\ \frac12 g_2^0}\right)^{\frac{1}{2}}-\left(1+\sum_{|\bfm|=3}\frac{\bkappa^\bfm}{|\bkappa|^2}\frac{F_{0,\bfm}(\bkappa)}{a_{00}^2}\right)^{\frac{1}{2}}\ \right]\ \right|\\
  &\lesssim e^{-c\lambda}|\bkappa|^{1-|\beta|},
  \end{align*}
  for $0<|\bkappa|<c_{\star\star}$ and $|\beta|\le\beta_{max}$. Because also $|\partial_\bkappa^\beta(a_{00}|\bkappa|)|\lesssim |\bkappa|^{1-|\beta|}$ for $0<|\bkappa|<c_{\star\star}$ and $|\beta|\le\beta_{max}$ it now follows that 
  {\footnotesize{
  \begin{align*}
  &\left| \partial_\bkappa^\beta\ \left[\ \textrm{Term\ $2$}\ \right]\ \right|\nn\\
  &\equiv\ \partial_\bkappa^\beta\left[\ a_{00}|\bkappa|\ \left\{\left(1+ \sum_{|\bfm|=3}\frac{\bkappa^\bfm}{|\bkappa|^2}\frac{F_{0,\bfm}(\bkappa)+g_\bfm(\bkappa)}{a_{00}^2\ +\ \frac12 g_2^0}\right)^{\frac{1}{2}}-\left(1+\sum_{|\bfm|=3}\frac{\bkappa^\bfm}{|\bkappa|^2}\frac{F_{0,\bfm}(\bkappa)}{a_{00}^2}\right)^{\frac{1}{2}}\ \right\}\ \right]\nn\\
  &\lesssim e^{-c\lambda}|\bkappa|^{2-|\beta|}\ . \nn 
  \end{align*}
  }}
  
    Recall that  $ \mathbb{F}(\bk)- \mathbb{F}_0(\bk)= \textrm{Term $1$}\ +\ \textrm{Term $2$}$, where
     $\mathbb{F}(\bk)$ and $\mathbb{F}_0(\bk)$ are given by expressions, which are displayed  in 
   \eqref{bbF} and \eqref{bbF0}. Combining our estimates for the derivatives of  Term $1$ and Term $2$, we find that 
   \begin{align}
\left| \partial_\bkappa^\beta\left(   \mathbb{F}(\bk)- \mathbb{F}_0(\bk) \right) \right|
&\equiv \left| \partial_\bkappa^\beta\left( \sqrt{\gamma(\bk)\gamma(-\bk)+g(\bk)} - 
\sqrt{\gamma(\bk)\gamma(-\bk)} \right) \right|
 \lesssim e^{-c\lambda} |\bkappa|^{1-|\beta|}
\nn   \end{align}
for $\bkappa\in\R^2$, $0<|\bkappa|<c_{\star\star}$ and $|\beta|\le\beta_{max}$.

Recall from Lemma \ref{US1} that the rescaled eigenvalues of $H^\lambda(\bk)$, with $\bk=\bK_\star+\bkappa$, in the interval $[-\frac12\widehat{C},\frac12\widehat{C}]$ are
\begin{align}
\mu_\pm(\bK_\star+\bkappa)\ &=\ h(\bkappa)\ \pm\ \mathbb{F}(\bkappa) \nn\\
&=\ h(\bkappa)\ \pm\ \mathbb{F}_0(\bkappa)\ +\ \left(\mathbb{F}(\bkappa)-\mathbb{F}_0(\bkappa)\right),
\nn\end{align}
where 
\begin{align}
&h(\bkappa)=h_0+\frac12 h_2^0|\bkappa|^2+\sum_{|\bfm|=3}\kappa^\bfm h_\bfm(\bkappa) , \nn
\end{align}
with $|h_0|,\ |h_2^0|\ \lesssim e^{-c\lambda}$ and $ |\partial_\bkappa^\beta h_\bfm(\bkappa)|\lesssim e^{-c\lambda}$
for $|\beta|\le\beta_{max}$ and $|\bkappa|<c_{\star\star}$.  

Combining the above with our estimate for the derivatives of $\mathbb{F}-\mathbb{F}_0$, we obtain
\begin{align}
&\left|\ \partial_\bkappa^\beta\left\{\ \mu_\pm(\bK_\star+\bkappa)
-\left[h_0\pm\sqrt{\gamma(\bK_\star+\bkappa)\gamma(-\bK_\star-\bkappa)}\right]\ \right\}\ \right|
\ \lesssim\ e^{-c\lambda} |\bkappa|^{1-|\beta|}
\end{align}
for $|\beta|\le\beta_{max}$ and $0<|\bkappa|<c_{\star\star}$. Here, $|h_0|\lesssim e^{-c\lambda}$. 
 
 This completes the proof Proposition \ref{Proposition-nearDPs}.

\section{Controlling the perturbation theory; completion of the proof of Theorem \ref{main-theorem}}\label{ME-bounds}

In this section we complete the proof of Theorem \ref{main-theorem}.
To do this, we must complete the proofs of Propositions \ref{ME-BA1}, \ref{ME-AA1} and \ref{ME-IJ-higher-order}
 by establishing the following estimates on corrections to the leading order behavior of the entries of $\mathcal{M}^{\lambda,\tK}(\bk,\rho_\lambda \mu)$ (see \eqref{MIJ-expanded}):
\begin{align*}
&\mathfrak{I}^{(1)}_{BA}(\lambda) +  \mathfrak{I}^{(2)}_{BA}(\lambda)\ \lesssim\ \rho_\lambda\times e^{-c\lambda}\ , \\
&\mathfrak{I}_{AA}^{(1)}(\lambda) + \mathfrak{I}_{AA}^{(2)}(\lambda) \ \lesssim\ \rho_\lambda\times e^{-c\lambda}\ ,
\end{align*}
and 
\begin{align*}
 &  \left|\  \left\langle \left[H^\lambda(\overline\bk)-\overline\Omega\right] p_{_{\overline\bk,J}}^\lambda, 
 {\rm Res}^{\lambda,\tK}(\bk,\Omega)\ \Pi_{AB}  \left[H^\lambda(\bk)-\Omega\right] p_{\bk,I}^\lambda
  \right\rangle\ \right|\ \lesssim\ \rho_\lambda\ e^{-c\lambda}\ . 
\end{align*}
for some $c>0$; see \eqref{ip-BA-bound}, \eqref{ip-AA-bound} and \eqref{higher-order-ME-est}. Here, 
 \begin{equation*}
  \rho_\lambda\ \equiv\ \lambda^2 \int |V_0(\by)|\ p_0^\lambda(\by-\be_{A,1})\ p_0^\lambda(\by)\ d\by\ , 
  \end{equation*}
  which, by Proposition \ref{prop:rho-lam-bounds}, satisfies the upper and lower bounds 
  \[ e^{-c_1\lambda}\ \lesssim\ \rho_\lambda\ \lesssim\ e^{-c_2\lambda}.\] 
 These bounds are proved Section \ref{proof-rho-lam-bounds}. In previous sections, we required that ${\rm supp}\ V_0\subset B({\bf 0},r_0)$, where $0<r_0<\frac{1}{2}|\be_{A,1}|$. To prove the above bounds, we impose a stricter constraint on $r_0$, 
  namely $r_0<r_{critical}$, where $r_{critical}$ arises in the following geometric lemma. 
  %
  %
We note that the assertions of this lemma are easily seen to hold for $r_0$ positive and
sufficiently small. A non-trivial lower bound for $r_{critical}$ is of interest in applications.
%

 \begin{lemma}[Geometric Lemma]\label{euclid1}
  Recall  the vectors $\be_{A,\nu},\ \be_{B,\nu}$, $\nu=1,2,3$; see \eqref{vA-nneighbors}, \eqref{vB-nneighbors} and  Figure \ref{fig:fundamental-cell}. 
Let $R_{_{60}}$ denote the matrix which rotates a vector in the plane by $60^\circ$ clockwise.
  There exists a positive universal constant, $r_{critical}$, satisfying
\begin{equation}
 0.33\ |\be_{A,1}|\ \le\ r_{critical}\ <\ 0.5\ |\be_{A,1}| \ ,
 \label{r-critical-bound}
 \end{equation}
 and small positive constants, $c^\prime$, $c^{\prime\prime}$, $c^{\prime\prime\prime}$, $c^{\prime\prime\prime\prime}$,
for which the following assertions hold for $I=A,B$ and $\nu=1,2,3$, and all $r_0<r_{critical}$ and all $\bz, \by \in B({\bf 0},r_0)$:
 \begin{enumerate}
\item There exists  $l_0\in\{0,1,\ldots,5\}$, such that:
 \begin{equation}
 |\bz+\be_{I,\nu}-\by|\ -\  |\bz-R^{l_0}_{_{60}}\by|\ \ge\  c^\prime\ |\be_{I,\nu}|\ .
 \label{euclid-est1}
 \end{equation}
 %
 \item For any $\bfm\in\Z^2$ 
there exists  $l_0\in\{0,1,\ldots,5\}$ such that:
 \begin{equation}
  |\bz-\bfm\vec\bv-\by|\ -\ |\bz-R^{l_0}_{_{60}}\by|\ \ge\ c^{\prime\prime}\ |\bfm|\ .
 \label{euclid-est2}
 \end{equation}
\item Let $N_{bad}(\be_{I,\nu})$ denote the set of all $\bfm\in\Z^2$ such that $|\be_{I,\nu}+\bfm\vec\bv|=|\be_{I,\mu}|$, for $\mu=1,2,3$. 
For any $\bfm\in\Z^2\setminus N_{bad}(\be_{I,\nu})$ there exists $l_0\in\{0,1,\ldots,5\}$ such that:  
 \begin{equation}
 |\bz+(\be_{I,\nu}+\bfm\vec\bv)-\by|\ -\  |\bz+\be_{I,\nu}-R^{l_0}_{_{60}}\by|\ \ge\ c^{\prime\prime\prime}\ |\bfm|\ .
 \label{euclid-est3}
 \end{equation}
 \item For any $\bn\in\Z^2\setminus\{(0,0)\}$, there exists $l_0\in\{0,1,\ldots,5\}$ such that:  
 \begin{equation}
 |\bz+\bn\vec\bv-\by|\ -\  |\bz+\be_{I,\nu}-R^{l_0}_{_{60}}\by|\ \ge\ c^{\prime\prime\prime\prime}\ |\bn|\ .
 \label{euclid-est4}
 \end{equation}
 \end{enumerate}
 \end{lemma}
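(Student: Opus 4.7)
The plan is to reduce all four assertions to a single angular observation: for any $\bz,\by\in B({\bf 0},r_0)$ one can select $l_0\in\{0,1,\dots,5\}$ so that the angle between $\bz$ and $R^{l_0}_{_{60}}\by$ is at most $\pi/6$ (the six rotates cover angular positions at $60^{\circ}$ increments). The law of cosines would then yield
\[
|\bz-R^{l_0}_{_{60}}\by|^2\;\le\;|\bz|^2+|\by|^2-\sqrt{3}\,|\bz|\,|\by|\;\le\;\max(|\bz|,|\by|)^2\;\le\;r_0^2,
\]
where the second inequality uses $s^2+t^2-\sqrt{3}\,st\le\max(s,t)^2$ for $s,t\ge 0$ (if $s\ge t$ then $s^2+t(t-\sqrt{3}s)\le s^2$ since $\sqrt{3}s\ge t$). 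This furnishes a representative in the $R_{_{60}}$-orbit of $\by$ whose distance to $\bz$ is bounded solely by $r_0$.

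With this tool, (1) follows immediately from the triangle inequality:
\[
|\bz+\be_{I,\nu}-\by|-|\bz-R^{l_0}_{_{60}}\by|\;\ge\;(|\be_{I,\nu}|-|\bz|-|\by|)-r_0\;\ge\;|\be_{A,1}|-3r_0,
\]
which exceeds $c'\,|\be_{A,1}|$ whenever $r_0<|\be_{A,1}|/3$ --- already consistent with the target $r_{critical}\ge 0.33\,|\be_{A,1}|$. For (2) and (4) I would use the same mechanism together with the quantitative bound $|\bfm\vec{\bv}|\ge|\bfm|/\sqrt{2}$ (coming from $|m_1\bv_1+m_2\bv_2|^2=m_1^2+m_1m_2+m_2^2\ge\tfrac12|\bfm|^2$), which dominates the $O(r_0)$ correction once $|\bfm|$ (respectively $|\bn|$) is sufficiently large; the finitely many small values are to be checked case by case by optimizing $l_0$ explicitly. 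For (3) I would expand the squared distance difference: setting $\bw=\bz+\be_{I,\nu}$ and using $|R^{l_0}_{_{60}}\by|=|\by|$,
\[
|\bw+\bfm\vec{\bv}-\by|^2-|\bw-R^{l_0}_{_{60}}\by|^2\;=\;|\bfm\vec{\bv}|^2+2\bfm\vec{\bv}\cdot(\bw-\by)+2\bw\cdot(R^{l_0}_{_{60}}\by-\by),
\]
whose leading term $|\bfm\vec{\bv}|^2\ge|\bfm|^2/2$ dominates the remaining $O(|\bfm|\,|\be_{A,1}|)+O(|\be_{A,1}|\,r_0)$ pieces for $|\bfm|$ large. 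The finitely many small $\bfm\notin N_{bad}(\be_{I,\nu})$ must be verified directly; the whole point of excluding $N_{bad}(\be_{I,\nu})$ is to remove the three exceptional translates for which $\be_{I,\nu}+\bfm\vec{\bv}$ coincides in length with some $\be_{I,\mu}$, which would otherwise force the left-hand side to vanish.

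The main obstacle will be securing the explicit numerical bound $r_{critical}\ge 0.33\,|\be_{A,1}|$ in \eqref{r-critical-bound}, rather than merely producing some positive constant. The four constants $c',c'',c''',c''''$ must be calibrated simultaneously, and the tightest pinch will arise in (3) from the handful of smallest $\bfm\notin N_{bad}(\be_{I,\nu})$; for each such $\bfm$ one must optimize $l_0\in\{0,\dots,5\}$ and verify the inequality uniformly over $B({\bf 0},r_0)\times B({\bf 0},r_0)$ by a direct but elementary calculation (e.g., analyzing critical points of the squared-distance difference as a function of $(\bz,\by)$). The upper bound $r_{critical}<0.5\,|\be_{A,1}|$ in \eqref{r-critical-bound} simply reflects the standing requirement from $(PW_2)$ that the support discs $B(\bv,r_0)$, $\bv\in\Honeycomb$, remain pairwise disjoint.
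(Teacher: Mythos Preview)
Your key angular observation---choosing $l_0$ so that $|\bz-R^{l_0}_{_{60}}\by|\le r_0$---is exactly the paper's main tool, and your treatment of (2) matches the paper's almost verbatim. For (1) your triangle-inequality argument is in fact \emph{simpler} than what the paper does (the paper verifies (1) by computer on a grid), and it already delivers $c'=0.01$ for any $r_0<|\be_{A,1}|/3$. For (3) your squared-distance expansion is an unnecessary detour: the paper observes directly that for $\bfm\notin N_{bad}(\be_{A,1})$ one has $|\be_{A,1}+\bfm\vec\bv|\ge 2|\be_{A,1}|$ (the only $B$-sites within $2|\be_{A,1}|$ of $\bv_A$ are the three nearest neighbors), after which the same triangle-inequality argument as in (1) and (2) gives $|\be_{A,1}+\bfm\vec\bv|-|\be_{A,1}|-3r_0\ge 0.01|\be_{A,1}|$.

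The genuine gap is in (4), and you have misidentified the bottleneck. The binding constraint on $r_{critical}$ comes from assertion (4), not (3). For the six shortest nonzero lattice vectors $\bn$ (those with $|\bn\vec\bv|=\sqrt{3}\,|\be_{A,1}|$), your triangle-inequality approach yields only
\[
|\bz+\bn\vec\bv-\by|-|\bz+\be_{A,1}-R^{l_0}_{_{60}}\by|\;\ge\;(\sqrt{3}-1)\,|\be_{A,1}|-3r_0,
\]
which is already \emph{negative} at $r_0=0.33\,|\be_{A,1}|$. So the method that handles (1)--(3) cleanly fails outright here, and something genuinely different is required for these six $\bn$. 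The paper does not find an analytic argument: it discretizes $B({\bf 0},r_0)\times B({\bf 0},r_0)$ on a grid of mesh $\tfrac12\times10^{-2}$, verifies the inequality at grid points by computer (with a margin $4\delta/\sqrt{2}+\varepsilon$ to absorb the discretization error), and explicitly states that this computation is what pins $r_{critical}$ at $0.33\,|\be_{A,1}|$. Your phrase ``direct but elementary calculation (e.g., analyzing critical points of the squared-distance difference)'' substantially underestimates this step; absent a new geometric idea, you should expect to need either a computer check or a tedious case analysis over the six values of $\bn$ and the six choices of $l_0$, and this is where the quantitative bound $0.33$ is actually earned.
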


\begin{proof}[Proof of Lemma \ref{euclid1}]
Fix $\delta=\frac{1}{2} \times 10^{-2}$ and let $\Gamma_\delta$ denote the grid of points with rational coordinates of the form $\bx^\star=(x^\star_1,x^\star_2)=(p_1,p_2)\delta$ with $(p_1,p_2)\in\Z^2$, satisfying 
\begin{equation}
 \label{set_lambda}
 |\bx^\star|=|(x^\star_1,x^\star_2)| = \sqrt{p_1^2+p_2^2}\ \delta < r_0+\delta/\sqrt{2}.
\end{equation}
Our strategy to prove assertions (1) and (4) of Lemma \ref{euclid1} is to reduce the continuum assertions \eqref{euclid-est1} and \eqref{euclid-est4} to a finite computation on the grid $\Gamma_\delta$. This finite computation is then verified on a computer which performs arithmetic with sufficiently high precision. We prove assertions (2) and (3) of the lemma without resorting to computer simulations.

Our reduction of the continuum assertions to finite computations relies on the following observation:

\begin{remark}\label{continuum_discrete}
Let $r_0>0$ and $\bx$ be any point in the disc $B({\bf 0},r_0)$: $|\bx|<r_0$. Then, there exists $\bx^\star\in\Gamma_\delta$ such that $|\bx-\bx^\star| \leq \delta/\sqrt{2}$.
\end{remark}

\begin{proof}[Proof of Remark \ref{continuum_discrete}]
 Every point of $\R^2$ lies within a distance $1/\sqrt{2}$ of some lattice point $(p_1,p_2)\in\Z^2$, the worst case being points of the form $(p_1+1/2,p_2+1/2)$. 
 Therefore, $\bx \in B({\bf 0},r_0)$ lies within a distance $\delta/\sqrt{2}$ of a point $\bx^\star =(p_1,p_2)\delta$, which belongs to $\Gamma_\delta$ because $|\bx^\star| \leq |\bx|+|\bx-\bx^\star| < r_0 + \delta/\sqrt{2}$; see \eqref{set_lambda}.
\end{proof}

Continuing the proof of Lemma \ref{euclid1}, note that by symmetry ($\be_{B,\nu}=-\be_{A,\nu}$ and $\be_{A,\nu}=R_{_{120}}^{\nu-1}\be_{A,1}$, $\nu=1,2,3$; see \eqref{eAB-def}), it suffices to prove the lemma for the case where $I=A$ and $\nu=1$: $\be_{I,\nu} = \be_{A,1}$. 

In what follows, we fix $r_0=0.33|\be_{A,1}|$, $\eps=10^{-8}$ and $\delta=\frac{1}{2} \times 10^{-2}$ as above. 

\nit {\it Assertion (1).} 
For each $\by^\star, \bz^\star\in\Gamma_\delta$, we confirm by computer that the following inequality is true: 
There exists $l_0\in\{0,1,\ldots,5\}$ such that, 
for some (tiny) positive constant $c^\prime$, we have
\begin{equation*}
 |\bz^\star+\be_{A,1}-\by^\star| - |\bz^\star-R^{l_0}_{_{60}} \by^\star| > \frac{4\delta}{\sqrt{2}} + c^\prime |\be_{A_,1}|.
\end{equation*}
Now let $|\bz|, |\by| < r_0 $. Let $\bz^\star, \by^\star \in \Gamma_\delta$ be as in Remark \ref{continuum_discrete} above, {\it i.e.}
$|\bz-\bz^\star|, |\by-\by^\star| \leq \delta/\sqrt{2}$. Then $|\bz^\star+\be_{A,1}-\by^\star|$ differs from $|\bz+\be_{A,1}-\by|$ by at most $2\delta/\sqrt{2}$, and 
$|\bz^\star-R^{l_0}_{_{60}}\by^\star|$ differs from $|\bz-R^{l_0}_{_{60}}\by|$ by at most $2\delta/\sqrt{2}$. 
Therefore, $|\bz^\star+\be_{A,1}-\by^\star| - |\bz^\star-R^{l_0}_{_{60}}\by^\star|$ differs from 
$|\bz+\be_{A,1}-\by| - |\bz-R^{l_0}_{_{60}}\by|$ by at most $4\delta/\sqrt{2}$.
Therefore, $|\bz+\be_{A,1}-\by| - |\bz-R^{l_0}_{_{60}}\by| > c^\prime |\be_{A,1}|$, proving inequality \eqref{euclid-est1} and assertion (1) of the lemma.

\nit {\it Assertion (2).} 
We prove assertion (2) without a computer. 
Firstly, observe that if $\bfm=(0,0)$, then we may satisfy inequality \eqref{euclid-est2} by choosing $l_0=0$.
Next, let $0 < |\bfm| \leq 10^6$. Recall that $\bz,\by \in B({\bf 0},r_0)$: $|\bz|, |\by| < r_0$.
We can pick $l_0$ so that $\bz$ and $R^{l_0}_{_{60}}\by$ lie in the same $60^{\circ}-$ sector in the disc of radius $r_0$ about the origin.
Therefore, $|\bz-R^{l_0}_{_{60}}\by| \leq r_0 $; so 
\begin{align*}
|\bz-\bfm\vec\bv-\by| - |\bz-R^{l_0}_{_{60}}\by| &\geq |\bfm\vec\bv| - |\bz| - |\by| - |\bz-R^{l_0}_{_{60}}\by| \\
&\geq |\bfm\vec\bv| - 3 r_0  \\
&\geq \sqrt{3}|\be_{A,1}| - 3 r_0  \qquad (\text{because } |\bfm\vec\bv| \geq 1 = \sqrt{3}|\be_{A,1}|)  \\
&> \frac{1}{2} |\be_{A,1}|  \qquad (\text{because } 3 r_0 < |\be_{A,1}| \text{ and } \sqrt{3} - 1 \geq \frac{1}{2})  \\
& \geq c^{\prime\prime} |\bfm| ,
\end{align*}
where we may take $c^{\prime\prime} \equiv \frac{1}{2} |\be_{A,1}| 10^{-6}$ because $|\bfm| \leq 10^6$.
Finally, if $|\bfm| > 10^6$, it is obvious that inequality \eqref{euclid-est2} holds. 
So assertion (2) of the lemma holds.

\nit {\it Assertion (3).} 
We prove assertion (3) without a computer.
We claim that
\begin{equation}
 \label{claim_assert_3}
\text{for } \bfm\in\Z^2, \text{ if } \bfm \notin N_{bad}(\be_{A,1}) \text{ then } |\be_{A,1} + \bfm\vec\bv| > |\be_{A,1}| + 3 r_0  + \eps. 
\end{equation}
We verify \eqref{claim_assert_3} below, but first show how we use it to prove assertion (3). By picking $l_0$ so that $|\bz-R^{l_0}_{_{60}}\by| \leq r_0$, we have
\begin{align*}
 &|\bz+(\be_{A,1}+\bfm\vec\bv)-\by| - |\bz + \be_{A,1} - R^{l_0}_{_{60}} \by | \\
 &\quad \geq
 \left( |\be_{A,1}+\bfm\bv| -|\bz| - |\by| \right) - \left( |\be_{A,1}| + |\bz - R^{l_0}_{_{60}}\by| \right) \\
 &\quad \geq |\be_{A,1}+\bfm\bv| - |\be_{A,1}| - 3 r_0  > \eps \geq c^{\prime\prime\prime} |\bfm|,
\end{align*}
provided $|\bfm|$ is bounded, by say $|\bfm| \leq 10^6$, 
in which case we may take $c^{\prime\prime\prime} \equiv \eps\ 10^{-6}$.
Finally, if $|\bfm| > 10^6$, then inequality \eqref{euclid-est3} clearly holds. 
Thus, assertion (3) holds provided \eqref{claim_assert_3} is true.

We proceed to verify \eqref{claim_assert_3}. Because $r_0=0.33|\be_{A,1}|$ and $\eps=10^{-8}$, the right hand side of \eqref{claim_assert_3} satisfies 
$|\be_{A,1}| + 3 r_0  + \eps = 1.99 |\be_{A,1}| + \eps < 2|\be_{A,1}|$.
Therefore, if we can show that the set
$\{ \bfm\in\Z^2 : \bfm \notin N_{bad}(\be_{A,1})$ and $|\be_{A,1} + \bfm\vec\bv| < 2 |\be_{A,1}| \}$ is empty, we will have verified \eqref{claim_assert_3}.
Refer now to Figure \ref{fig:fundamental-cell}. For simplicity, consider centering the coordinates in the figure about the blue lattice point $\bv_A$. It follows that, for any $\bfm\in\Z^2$, $\be_{A,1} + \bfm\vec\bv$ is a red lattice point. The only red lattice points that satisfy $|\be_{A,1} + \bfm\vec\bv| < 2 |\be_{A,1}|$ are the three red lattice points closest to the origin (at $\bv_A$), which satisfy $|\be_{A,1}+\bfm\vec\bv|=|\be_{A,\mu}|$, for $\mu=1,2,3$. But this is exactly the condition defining the set $N_{bad}(\be_{A,1})$, and therefore, because 
$\bfm \notin N_{bad}(\be_{A,1})$, these three red lattice points are excluded. This completes the proof of the claim \eqref{claim_assert_3} and assertion (3).

\nit {\it Assertion (4).} 
To prove assertion (4), we confirm by computer that the following is true: 
For each $\by^\star, \bz^\star\in\Gamma_\delta$ and each nonzero $\bn\in\Z^2$ such that $|\bn\vec\bv| \leq |\be_{A,1}|+3r_0+\eps$,
we check that there exists $l_0\in\{0,1,\ldots,5\}$ such that
\begin{equation}
\label{comp_sim_4}
 |\bz^\star+\bn\vec\bv-\by^\star| - |\bz^\star+\be_{A,1}-R^{l_0}_{_{60}} \by^\star| > \frac{4\delta}{\sqrt{2}} + \eps .
\end{equation}
It is the verification of \eqref{comp_sim_4} by computer that imposes the strongest constraint on $r_0$, and forces us to choose $r_0$ equal to or very slightly larger than $0.33|\be_{A,1}|$.

For each $\bz,\by\in\R^2$ with $|\bz|,|\by|<r_0$, and for each nonzero $\bn\in\Z^2$ satisfying $|\bn\vec\bv| \leq |\be_{A,1}|+3r_0+\eps$ (as in the computer run), we pick $\bz^\star,\by^\star\in\Gamma_\delta$ such that $|\bz^\star-\bz|, |\by^\star-\by| \leq \delta/\sqrt{2}$. 
Because $|\bz^\star+\bn\vec\bv-\by^\star| - |\bz^\star+\be_{A,1}-R^{l_0}_{_{60}} \by^\star|$ differs from
$|\bz+\bn\vec\bv-\by| - |\bz+\be_{A,1}-R^{l_0}_{_{60}} \by|$
by at most $4\delta/\sqrt{2}$, we have (for some $l_0$):
\begin{equation*}
 |\bz+\bn\vec\bv-\by| - |\bz+\be_{A,1}-R^{l_0}_{_{60}} \by| \geq \eps \geq c^{\prime\prime\prime\prime}|\bn|;
\end{equation*}
the last inequality holds because the family of $\bn\in\Z^2$ arising in the computer run is bounded.

On the other hand, let $|\bz|, |\by| < r_0 $, and suppose $|\bn\vec\bv| > |\be_{A,1}|+3r_0+\eps$ but $|\bn| \leq 10^6$.
Then we pick $l_0$ such that $|\bz-R^{l_0}_{_{60}}\by| \leq r_0 $, and we have 
\begin{align*}
 |\bz+\bn\vec\bv-\by| - |\bz+\be_{A,1}-R^{l_0}_{_{60}} \by| &\geq 
 \left( |\bn\vec\bv|-|\bz|-|\by| \right) - \left( |\be_{A,1}|+|\bz-R^{l_0}_{_{60}}\by| \right) \\
 &\geq |\bn\vec\bv| - |\be_{A,1}| -3 r_0  \geq \eps \geq c^{\prime\prime\prime\prime}|\bn| ,
\end{align*}
the last inequality holding because we assumed that $|\bn| \leq 10^{6}$. Here we may take $c^{\prime\prime\prime\prime}=\eps\ 10^{-6}$.

Finally, if $|\bn| > 10^6$, then obviously (for any $l_0\in\{0,1,\ldots,5\}$),
\begin{align*}
 |\bz+\bn\vec\bv-\by| - |\bz+\be_{A,1}-R^{l_0}_{_{60}} \by| &\geq 
 \left( |\bn\vec\bv|-|\bz|-|\by| \right) - \left( |\be_{A,1}|+|\bz| + |\by| \right) \\
 &\geq |\bn\vec\bv| - |\be_{A,1}| - 4r_0 \geq c^{\prime\prime\prime\prime} |\bn|;
\end{align*}
the last inequality holds because $|\be_{A,1}| + 4 r_0 \leq \frac{1}{2} |\bn\vec\bv|$ for $|\bn|>10^6$. This completes the proof of assertion (4) and therewith Lemma \ref{euclid1}.
\end{proof}

Consider the eigenvalue problem satisfied by the ground state eigenfunction, $p_0^\lambda(\bx)$, with corresponding simple eigenvalue, $E_0^\lambda$:
 \begin{equation*}
\left(\ -\Delta_\bx+\lambda^2V_0(\bx) - E_0^\lambda\ \right) p_0^\lambda(\bx)=0,\ \ p_0^\lambda\in L^2(\R^2).
\end{equation*}
This may be rewritten as
$
\left(\ -\Delta_\bx + |E_0^\lambda|\ \right) p_0^\lambda(\bx)\ =\ \lambda^2 |V_0(\bx)|\ p^\lambda_0(\bx),\ \ \bx\in\R^2,
$
and therefore, 
\begin{equation}
 p_0^\lambda(\bx)\ =\  \int\ \mathcal{K}_\lambda\left(\bx-\by\right)\ \lambda^2 |V_0(\by)|\ p^\lambda_0(\by)\ d\by,\ \  \bx\in\R^2,
\label{gs2}\end{equation}
where $\mathcal{K}_\lambda(\bx)=\ \mathcal{K}\left(\sqrt{|E_0^\lambda|}\bx\right)$.
Here,  $\mathcal{K}(\bx)$ is the fundamental solution for $-\Delta_\bx+1$ satisfying
$\left(\ -\Delta_\bx+1\ \right)\ \mathcal{K}(\bx)=\delta(\bx),\ \bx\in\R^2$,
$\delta(\bx)$ is the Dirac delta function, and 
 $\mathcal{K}=K_0(|\bx|)$ is the modified Bessel function of order zero, which decays to zero exponentially  as $|\bx|\to\infty$ \cites{WW:02}.

 An alternative representation to \eqref{gs2} for $p_0^\lambda(\bx)$, which we find useful, is obtained as follows. 
Note from \eqref{gs2} that $p_0^\lambda$ is a convolution with $\lambda^2|V_0|p_0^\lambda$, a function which  is non-negative, supported in a disc of radius $r_0$ about ${\bf 0}$, and is invariant under a $60^\circ$ rotation about ${\bf 0}$. (The latter is a consequence of the $120^\circ$ rotational symmetry and inversion symmetry of $V_0$.)
Thus, in addition to \eqref{gs2} we have
\begin{equation}
 p_0^\lambda(\bz)\ =\    \int\ \left[\frac16\ \sum_{l=0}^5\mathcal{K}_\lambda\left(\bz-R^l_{_{60}}\by\right)\right]\ \lambda^2 |V_0(\by)|\ p^\lambda_0(\by)\ d\by,\ \  \bz\in\R^2,
\label{gs2-symm}\end{equation}
where $R_{_{60}}$ is the rotation by $60^\circ$. 

\begin{lemma}[Properties of $\mathcal{K}(\bx)$]\label{K0-properties} For $\bx\in\R^2$, 
%
%
\begin{enumerate}
\item $\mathcal{K}(\bx)=\mathcal{K}(|\bx|)$ is positive and strictly decreasing for $|\bx|\ge0$.
\item There exist entire functions $f$ and $g$ and constants $C_1, C_2$,  such that
\[ \mathcal{K}(\bx)\ =\ f(|\bx|) \log |\bx|\ +\ g(|\bx|)\ ,\]
where $f(0)=-1/2\pi$ and $|f(s)|, |g(s)|\le C_1 e^{-C_2s}$, for all $s\in[0,\infty)$.
\item $\mathcal{K}(\bx)\ \lesssim |\bx|^{-\frac12} e^{-|\bx|}$ for $|\bx|$ large. 
\item For $\bx' , \bx'' \in\R^2 $ such that $|\bx'|>|\bx''|$,  we have
\begin{equation}
\mathcal{K}(\bx')\ \lesssim\ e^{-[\ |\bx'|-|\bx''|\ ]}\ \mathcal{K}(\bx''),\label{Kxpxpp}
\end{equation} 
  \end{enumerate}
  \end{lemma}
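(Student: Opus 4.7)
The plan is to exploit the identification $\mathcal{K}(\bx) = \frac{1}{2\pi} K_0(|\bx|)$, where $K_0$ is the modified Bessel function of order zero, and to invoke its classical integral representation, Frobenius series at the origin, and Laplace-type asymptotics at infinity.

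For assertion (1), I would use the representation $K_0(r) = \int_0^\infty e^{-r\cosh t}\,dt$ (valid for $r>0$); positivity of the integrand gives $K_0>0$, and differentiation under the integral yields $K_0'(r) = -\int_0^\infty \cosh t\,e^{-r\cosh t}\,dt < 0$, hence strict monotonicity.

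For assertion (2), I would start from the classical Frobenius expansion
\begin{equation*}
K_0(r) = -\bigl(\gamma + \log(r/2)\bigr) I_0(r) + \sum_{k=1}^\infty \frac{h_k}{(k!)^2}(r/2)^{2k},\qquad h_k = \sum_{j=1}^k 1/j,
\end{equation*}
in which $I_0$ is entire with $I_0(0)=1$. Dividing by $2\pi$ produces a natural decomposition with the prescribed value $-I_0(0)/(2\pi) = -1/(2\pi)$ at the origin. Because $I_0(r)\sim e^r/\sqrt{2\pi r}$ grows, the naive choice does not decay exponentially, so I would multiply by a smooth cutoff $\chi\in C_c^\infty([0,\infty))$ equal to $1$ near $0$ and supported in $[0,R]$, and set $f(r) = -\chi(r)I_0(r)/(2\pi)$ and $g(r) = \mathcal{K}(r) - f(r)\log r$. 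Then $f$ is compactly supported and satisfies $|f(s)|\le C_1 e^{-C_2 s}$ trivially; for $g$ I would note that on $\mathrm{supp}(\chi)$ the Frobenius series shows the logarithmic singularity cancels, while outside $\mathrm{supp}(\chi)$ one has $g=\mathcal{K}$, whose exponential decay follows from assertion (3). (I remark here that the word ``entire'' in the statement should be interpreted as ``globally defined and smooth on $[0,\infty)$'', since uniqueness of the genuinely entire decomposition of $K_0$ precludes simultaneous exponential decay; the cutoff construction is the natural way to reconcile the normalization with the decay requirement.)

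For assertion (3) I would invoke the Laplace-method asymptotic $K_0(r) = \sqrt{\pi/(2r)}\,e^{-r}(1+O(1/r))$ as $r\to\infty$, obtained from the integral representation by stationary phase at $t=0$. For assertion (4), writing $r'=|\bx'|>r''=|\bx''|$, I would argue by case analysis: if $r''\ge 1$, combine (3) with the two-sided estimate $K_0(r)\sqrt{r}\,e^r\asymp 1$ on $[1,\infty)$ to get $K_0(r')/K_0(r'')\lesssim \sqrt{r''/r'}\,e^{-(r'-r'')}\le e^{-(r'-r'')}$; if $r''<1$, monotonicity from (1) gives $K_0(r'')\ge K_0(1)>0$, while $K_0(r')\lesssim e^{-r'}\le e^{-(r'-r'')}$ follows from (3) combined with monotonicity on $[0,1]$.

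The main obstacle is the construction in (2): obtaining simultaneously the prescribed value $f(0)=-1/(2\pi)$ and exponential decay of both $f$ and $g$ cannot be achieved by the canonical entire decomposition and requires the cutoff argument above, with careful verification that $g$ remains smooth through the transition region where $\chi'\ne 0$. The remaining assertions are straightforward consequences of standard Bessel identities.
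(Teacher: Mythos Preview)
Your approach is correct but takes a different route from the paper. The paper derives a single integral representation by descending from the three--dimensional fundamental solution: writing $\bz=(\bx,t)$ and integrating $(4\pi|\bz|)^{-1}e^{-|\bz|}$ over $t\in\R$, then substituting $(|\bx|+\zeta)^2=|\bx|^2+t^2$, yields
\[
\mathcal{K}(\bx)=\frac{e^{-|\bx|}}{2\pi}\int_0^\infty \frac{e^{-\zeta}\,d\zeta}{\zeta^{1/2}(2|\bx|+\zeta)^{1/2}}\,.
\]
From this formula parts (1), (3) and (4) are immediate: the integral factor is positive and strictly decreasing in $|\bx|$; it is $\lesssim |\bx|^{-1/2}$ for large $|\bx|$; and for $|\bx'|>|\bx''|$ the ratio of the two integral factors is at most $1$, leaving exactly the factor $e^{-(|\bx'|-|\bx''|)}$. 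Your Bessel--function toolkit also works but is more fragmented. In particular, your own representation $K_0(r)=\int_0^\infty e^{-r\cosh t}\,dt$ already yields (4) in one line via $\cosh t\ge 1$, giving $K_0(r')\le e^{-(r'-r'')}K_0(r'')$ with implied constant $1$; this avoids your case analysis, whose subcase $0<r''<r'<1$ is not actually covered by the stated reasoning ``$K_0(r')\lesssim e^{-r'}$'' (false near the origin), though it is trivially fixed by monotonicity together with $e^{-(r'-r'')}\ge e^{-1}$. For (2) the paper simply cites Simon (1976); your observation that the entire decomposition is unique with $f=-I_0/(2\pi)$, hence exponentially growing, is valid --- the statement as literally written is too strong, and your cutoff construction recovers what the paper actually uses downstream, namely the logarithmic behavior at the origin and membership $\mathcal{K}\in L^2$.
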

  
  \begin{remark}[$\mathcal{K}$ and $\mathcal{K}_\lambda$]\label{KtoKlam}
 Since for large $\lambda$: $|E_0^\lambda|\approx\lambda^2$,
  by \eqref{Kxpxpp} we have:
 \begin{equation*}
\mathcal{K}_\lambda(\bx')\ \lesssim\ e^{-c\lambda[\ |\bx'|-|\bx''|\ ]}\ \mathcal{K}_\lambda(\bx''),\ \ {\rm for}\ \ 
|\bx'|>|\bx''|.
\end{equation*}
\end{remark} 

\begin{proof}[Proof of Lemma \ref{K0-properties}] Recall that $(4\pi |\bz|)^{-1}\exp(-|\bz|)$ is the fundamental solution for $-\Delta_\bz+1$ on $\R^3$, {\it i.e.} $(-\Delta_\bz+1)(4\pi |\bz|)^{-1}\exp(-|\bz|)=\delta(\bz)$.  Let $\bz=(\bx,t)=(x_1,x_2,t)$. Integrating against $dt$ over $\R$, we obtain that the fundamental solution of $-\Delta_\bx+1$ on $\R^2$ is given by:
 \[ \mathcal{K}(\bx)= \frac1{4\pi}\ \int_{-\infty}^\infty \frac{e^{-(|\bx|^2+t^2)^{\frac12}}}{(|\bx|^2+t^2)^{\frac12}}\ dt,\  \ \bx\in\R^2.  \]
Next, introduce the change of variables $(|\bx|+\zeta)^2=|\bx|^2+t^2$. Then,
 \begin{equation}
  \mathcal{K}(\bx)= \frac1{2\pi}\ \int_{0}^\infty \frac{e^{-\zeta}}{\zeta^{\frac{1}{2}}(2|\bx|+\zeta)^{\frac12}}\ d\zeta\times e^{-|\bx|}\ . 
 \label{Krep} \end{equation}
 Part (1) of the lemma follows since  the expression in \eqref{Krep} is clearly positive and decreasing as a function of $|\bx|$.
 Part (2) of the lemma was proved
 in \cites{Simon:76} (Lemma 3.1). Parts (3) and (4) are immediate consequences of \eqref{Krep}.
 \end{proof}
 
 A consequence of Lemma \ref{K0-properties} is the following exponential decay bound on $p_0^\lambda(\bx)$.
\begin{corollary}\label{cor:expo-decay}
Assume $\supp(V_0)\subset B(0,r_0)$ with $r_0>0$, and let $c_0>0$ denote any positive constant.
There exist positive constants $C_1, C_2$ and $c_1$, which depend on $V_0$, $r_0$ and $c_0$, such that 
$p_0^\lambda(\bx)$ satisfies the bound:
 \begin{align*}
 p_0^\lambda(\bx) &\le
  \begin{cases} 
C_1\ e^{-c_1\lambda|\bx|}, & |\bx|\ge r_0+c_0\\
C_2\  \lambda,& |\bx|< r_0+c_0 .
  \end{cases}
  \end{align*}
  \end{corollary}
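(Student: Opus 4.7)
The plan is to exploit the integral representation \eqref{gs2},
\[
p_0^\lambda(\bx) \;=\; \int \mathcal{K}_\lambda(\bx-\by)\,\lambda^2|V_0(\by)|\,p_0^\lambda(\by)\,d\by,
\]
together with the decay and integrability properties of $\mathcal{K}$ recorded in Lemma \ref{K0-properties}, the $L^2$-normalization $\|p_0^\lambda\|_{L^2}=1$, and the lower bound $|E_0^\lambda|\gtrsim \lambda^2$ supplied by hypothesis \textbf{(GS)}. The argument splits cleanly into a uniform pointwise bound (which handles the inner region) and an exponential decay bound outside the support of $V_0$ (which handles the outer region).

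First I would establish $p_0^\lambda(\bx) \lesssim \lambda$ uniformly in $\bx$. Applying Cauchy--Schwarz to \eqref{gs2} and using $|V_0|\le 1$ yields
\[
p_0^\lambda(\bx) \;\le\; \lambda^2\,\|\mathcal{K}_\lambda(\bx-\cdot)\|_{L^2(\R^2)}\,\|p_0^\lambda\|_{L^2(\R^2)} \;=\; \lambda^2\,\|\mathcal{K}_\lambda\|_{L^2(\R^2)}.
\]
Rescaling, $\|\mathcal{K}_\lambda\|_{L^2}^{2} = |E_0^\lambda|^{-1}\|\mathcal{K}\|_{L^2}^{2}$; by Lemma \ref{K0-properties}(2), $\mathcal{K}$ has only a logarithmic singularity at the origin and exponential decay at infinity, so $\|\mathcal{K}\|_{L^2(\R^2)}<\infty$. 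Combining with $|E_0^\lambda|\ge C\lambda^2$ from \textbf{(GS)} gives $\|\mathcal{K}_\lambda\|_{L^2}\lesssim \lambda^{-1}$ and hence $p_0^\lambda(\bx)\lesssim \lambda$, which is the desired bound for $|\bx|<r_0+c_0$.

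For $|\bx|\ge r_0+c_0$ and $\by\in\mathrm{supp}(V_0)\subset B(\mathbf{0},r_0)$ one has $|\bx-\by|\ge |\bx|-r_0$; in particular $|\bx-\by|\ge c_0$ and also $|\bx-\by|\ge |\bx|\cdot c_0/(r_0+c_0)$. Since $\sqrt{|E_0^\lambda|}\,|\bx-\by|\gtrsim \lambda c_0$ is large, Lemma \ref{K0-properties}(3) (together with Remark \ref{KtoKlam}) gives $\mathcal{K}_\lambda(\bx-\by)\lesssim e^{-c\lambda|\bx-\by|}\lesssim e^{-c'\lambda|\bx|}$ with $c'=c\,c_0/(r_0+c_0)$. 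Substituting into \eqref{gs2} and bounding $\int|V_0(\by)|\,p_0^\lambda(\by)\,d\by\lesssim 1$ by Cauchy--Schwarz (using the bounded support of $V_0$ and $\|p_0^\lambda\|_{L^2}=1$) produces
\[
p_0^\lambda(\bx) \;\lesssim\; \lambda^2\,e^{-c'\lambda|\bx|}.
\]
Since $|\bx|\ge r_0+c_0$ is bounded below, the prefactor $\lambda^{2}$ is absorbed into the exponential by decreasing $c'$ slightly: writing $c'\lambda|\bx|=\tfrac{c'}{2}\lambda|\bx|+\tfrac{c'}{2}\lambda|\bx|$ and using $\lambda^{2}\,e^{-(c'/2)\lambda(r_0+c_0)}\le C$ for $\lambda$ large, we obtain $p_0^\lambda(\bx)\lesssim e^{-c_1\lambda|\bx|}$ with $c_1=c'/2$.

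There is no serious conceptual obstacle; the argument is an elementary combination of the resolvent-kernel representation, Cauchy--Schwarz, and the $\lambda$-rescaling $\mathcal{K}_\lambda(\cdot)=\mathcal{K}(\sqrt{|E_0^\lambda|}\,\cdot)$. The only mild bookkeeping points are (i) converting decay in $|\bx-\by|$ to decay in $|\bx|$ via $|\bx-\by|\ge |\bx|\,c_0/(r_0+c_0)$, and (ii) the harmless absorption of the $\lambda^{2}$ prefactor, valid because $|\bx|$ is bounded away from $0$ in the regime where exponential decay is being claimed. The crucial input is \textbf{(GS)}, without which the exponent $\sqrt{|E_0^\lambda|}$ in $\mathcal{K}_\lambda$ would fail to grow like $\lambda$.
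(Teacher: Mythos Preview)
Your proof is correct and follows essentially the same route as the paper's: both arguments use the integral representation \eqref{gs2}, apply Cauchy--Schwarz together with $\|\mathcal{K}_\lambda\|_{L^2}\lesssim\lambda^{-1}$ for the inner bound, and for the outer region invoke the exponential decay of $\mathcal{K}$ combined with the same inequality $|\bx-\by|\ge |\bx|\,c_0/(r_0+c_0)$ to convert decay in $|\bx-\by|$ to decay in $|\bx|$. The only cosmetic difference is that the paper applies Cauchy--Schwarz to the exponential factor inside the integral rather than pulling the pointwise kernel bound out first, which leaves a harmless $\lambda^{-1/2}$ improvement in the prefactor; either way the polynomial prefactor is absorbed into the exponential exactly as you describe.
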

  \begin{proof}[Proof of Corollary \ref{cor:expo-decay}]  Take $\by\in\supp(V_0)$ and $|\bx|\ge r_0+c_0$. Then, 
   $|\bx-\by|\ge c_0$ and hence $\lambda|\bx-\by|\ge \lambda c_0$. By part (3) of Lemma \ref{K0-properties}, 
   there exists $\lambda_\star>0$, which depends on $c_0$, such that for all $\lambda\ge\lambda_\star$ we have
    $\mathcal{K}_\lambda(\bx-\by) = \mathcal{K}\left(|E_0^\lambda|(\bx-\by)\right)\le (c\lambda|\bx-\by|)^{-\frac12} e^{-c\lambda|\bx-\by|}\le
    ( c c_0\lambda)^{-\frac12} e^{-c\lambda|\bx-\by|}$. Estimating $|p_0^\lambda(\bx)|$ using the  integral equation \eqref{gs2}, the above bound on $\mathcal{K}_\lambda(\bx-\by)$,  the Cauchy-Schwarz inequality and 
    $\|p_0^\lambda\|_{_{L^2}}=1$,  we obtain:
    \[ |p_0^\lambda(\bx)|\ \le\ \lambda^2\ \|V_0\|_{_{L^\infty}}\ (cc_0\lambda )^{-\frac12}\  \left[\ \int_{|\by|\le r_0}  e^{-2c\lambda|\bx-\by|}\ d\by\ \right]^{\frac12}\ .\]
  Note that  
  \begin{align}
  |\bx-\by|\ge|\bx|-|\by|\ge |\bx|-r_0\ge |\bx|-r_0\ \frac{|\bx|}{r_0+c_0}= \left(\frac{c_0}{r_0+c_0}\right)|\bx|.
 \nn \end{align}
    Therefore, for all $\bx$ such that $|\bx|\ge r_0+c_0$:
     \[ |p_0^\lambda(\bx)|\ \le\ \lambda^2\ \|V_0\|_{_{L^\infty}}\ (cc_0\lambda )^{-\frac12}\  e^{-c\frac{c_0}{r_0+c_0}\lambda|\bx|}\left[\ \pi r_0^2 \right]^{\frac12}.
     \]
      For $0<|\bx|\le r_0+c_0$, we use \eqref{gs2} and the Cauchy-Schwarz inequality to get
     $p_0^\lambda(\bx)\le \|V_0\|_{_{L^\infty}}\ \lambda^2\ \|\mathcal{K}_\lambda\|_{_{L^2}}=\lambda\ \|V_0\|_{_{L^\infty}}\ \|\mathcal{K}_1\|_{_{L^2}}$.  This completes the proof of Corollary \ref{cor:expo-decay}.
    \end{proof}

 The following bounds on $p_0^\lambda$ are used in completing the proofs of Propositions \ref{ME-BA1}, \ref{ME-AA1}
  and \ref{ME-IJ-higher-order}.

  \begin{lemma}\label{lem:p0-lam-bounds}
  Recall that $N_{bad}(\be_{I,\nu})$ denotes the set of $\bn\in\Z^2$ such that $|\be_{I,\nu}+\bn\vec\bv|=|\be_{I,\mu}|$, for $\mu=1,2,3$.
 There exists a constant $c$ such that for $\by\in {\rm supp}(V_0) \subset B({\bf 0},r_0)$, {\it i.e.} $|\by|\le r_0$, we have
  for $I=A,B$ and $\nu=1,2,3$:
  \begin{align}
 p_0^\lambda(\by-\bfm\vec\bv)\  &\lesssim\ e^{-c|\bfm|\lambda}\ p_0^\lambda(\by)  ,   \label{p0-bound1} \\
p^\lambda_0\left(\by+\be_{I,\nu}+\bn\vec\bv \right)\ &\lesssim\ 
  e^{-c|\bn|\lambda}\ p^\lambda_0\left(\by+\be_{I,\nu}\right),\ \ \bn \notin N_{bad}(\be_{I,\nu})  ,
  \label{p0-bound2}\\
 p_0^\lambda(\by+\be_{I,\nu})\ &\lesssim\ e^{-c\lambda}\ p_0^\lambda(\by) ,  \quad {\rm and}
 \label{p0-bound3}\\
 p_0^\lambda(\by-\bn\vec\bv)\ &\lesssim\ e^{-c\lambda|\bn|}\ p_0^\lambda(\by+\be_{I,\nu}) ,\ \bn\ne(0,0).
 \label{p0-bound4}
 \end{align}
  \end{lemma}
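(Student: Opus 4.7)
\begin{proofof}[Proof proposal for Lemma \ref{lem:p0-lam-bounds}]
All four inequalities have the same structure: they compare $p_0^\lambda$ evaluated at a shifted point with $p_0^\lambda$ evaluated at a ``reference'' point (namely $\by$ or $\by+\be_{I,\nu}$), with an exponentially small factor whose exponent is dictated by the magnitude of the shift. The plan is to prove all four by a single unified argument that combines the two integral representations \eqref{gs2} and \eqref{gs2-symm} of $p_0^\lambda$ with the four geometric separation estimates of Lemma \ref{euclid1}.

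The key observation is the following: for any point $\bz^\sharp\in\R^2$ write
\[
p_0^\lambda(\bz^\sharp)\ =\ \int \mathcal{K}_\lambda(\bz^\sharp-\tilde\by)\ \lambda^2|V_0(\tilde\by)|\ p_0^\lambda(\tilde\by)\ d\tilde\by,
\]
while for the reference point $\bz^\flat$ use the $60^\circ$-symmetrized representation \eqref{gs2-symm},
\[
p_0^\lambda(\bz^\flat)\ =\ \int \frac{1}{6}\sum_{l=0}^{5}\mathcal{K}_\lambda(\bz^\flat-R_{60}^l\tilde\by)\ \lambda^2|V_0(\tilde\by)|\ p_0^\lambda(\tilde\by)\ d\tilde\by.
\]
Because $V_0$ is supported in $B({\bf 0},r_0)$, both $\by$ and the integration variable $\tilde\by$ lie in $B({\bf 0},r_0)$, which is precisely the regime to which Lemma \ref{euclid1} applies. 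For each $\tilde\by$ in the domain of integration, the appropriate assertion of Lemma \ref{euclid1} supplies an index $l_0=l_0(\by,\tilde\by)\in\{0,1,\dots,5\}$ such that
\[
|\bz^\sharp-\tilde\by|\ -\ |\bz^\flat-R_{60}^{l_0}\tilde\by|\ \ge\ c\cdot N,
\]
where $N$ is the magnitude of the relevant translation ($|\bfm|$, $|\bn|$, or $|\be_{I,\nu}|$). Combining this with the kernel inequality \eqref{Kxpxpp} (in the form given in Remark \ref{KtoKlam}) yields
\[
\mathcal{K}_\lambda(\bz^\sharp-\tilde\by)\ \lesssim\ e^{-c\lambda N}\ \mathcal{K}_\lambda(\bz^\flat-R_{60}^{l_0}\tilde\by)\ \le\ e^{-c\lambda N}\sum_{l=0}^{5}\mathcal{K}_\lambda(\bz^\flat-R_{60}^l\tilde\by),
\]
the last bound using positivity of each term. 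Substituting this pointwise-in-$\tilde\by$ inequality into the representation of $p_0^\lambda(\bz^\sharp)$ and recognizing the right-hand side as $6\,e^{-c\lambda N}p_0^\lambda(\bz^\flat)$ gives the desired estimate.

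The four cases are then distinguished only by which parts of Lemma \ref{euclid1} supply the separation estimate:
\begin{itemize}
\item For \eqref{p0-bound1}, take $\bz^\sharp=\by-\bfm\vec\bv$ and $\bz^\flat=\by$; use assertion (2).
\item For \eqref{p0-bound2}, take $\bz^\sharp=\by+\be_{I,\nu}+\bn\vec\bv$ and $\bz^\flat=\by+\be_{I,\nu}$; use assertion (3), whose exclusion of $N_{bad}(\be_{I,\nu})$ is exactly the hypothesis stated in the lemma.
\item For \eqref{p0-bound3}, take $\bz^\sharp=\by+\be_{I,\nu}$ and $\bz^\flat=\by$; use assertion (1), noting that $|\be_{I,\nu}|=1/\sqrt 3$ is a positive constant, so $e^{-c\lambda|\be_{I,\nu}|}$ contributes $e^{-c\lambda}$.
\item For \eqref{p0-bound4}, take $\bz^\sharp=\by-\bn\vec\bv$ and $\bz^\flat=\by+\be_{I,\nu}$; use assertion (4) applied with $-\bn$ in place of $\bn$, which is legitimate since $|-\bn|=|\bn|$ and $-\bn$ ranges over $\Z^2\setminus\{0\}$ as $\bn$ does.
\end{itemize}

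There is really no hard step: all the geometric work lives in Lemma \ref{euclid1}, the analytic work in Lemma \ref{K0-properties}, and integrating the pointwise kernel bound against $\lambda^2|V_0|p_0^\lambda$ is routine. The one place to be careful is in \eqref{p0-bound2}: one must verify that the set $N_{bad}$ excluded by Lemma \ref{euclid1}(3) is exactly the set excluded by the hypothesis of \eqref{p0-bound2}, which indeed it is by definition. No other delicate estimates or Lyapunov--Schmidt reductions are needed; the proof is a direct transcription of the geometric separation bounds into exponential kernel bounds via the integral equation for $p_0^\lambda$.
\end{proofof}
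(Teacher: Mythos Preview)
Your proposal is correct and follows essentially the same approach as the paper's proof: both combine the unsymmetrized integral representation \eqref{gs2} at the shifted point with the $60^\circ$-symmetrized representation \eqref{gs2-symm} at the reference point, then apply the corresponding part of Geometric Lemma \ref{euclid1} together with the kernel decay of Remark \ref{KtoKlam}. Your case-by-case assignment of the four assertions of Lemma \ref{euclid1} to the four bounds matches the paper exactly, and your explicit remark about replacing $\bn$ by $-\bn$ in case \eqref{p0-bound4} is a nice clarification that the paper leaves implicit.
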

  
  \begin{proof}[Proof of Lemma \ref{lem:p0-lam-bounds}]
   We first prove the bound \eqref{p0-bound1}. 
  Applying part (2) of geometric Lemma \ref{euclid1} we obtain, for all $|\bz|,\ |\by|\le r_0$ and all $\bfm\in\Z^2$,  an $l_0$ such that 
 \begin{equation*}
 |\bz+\bfm\vec\bv-\by|\ -\ |\bz-R^{l_0}_{_{60}}\by|\ \geq \ c^{\prime\prime}\ |\bfm|\ .
 \end{equation*}
 for some $l_0$.
 %
%
Therefore, by Remark \ref{KtoKlam}, just after Lemma \ref{K0-properties},
 we have for all $|\bz|,\ |\by|\le r_0$
 \begin{align*}
& \mathcal{K}_\lambda(|\bz-\bfm\vec\bv-\by|)\lesssim e^{-cc''|\bfm|\lambda} 
  \mathcal{K}_\lambda(|\bz-R^{l_0}_{_{60}}\by|)\ \lesssim e^{-cc''|\bfm|\lambda} \frac16\sum_{l=0}^5 \mathcal{K}_\lambda(|\bz-R^{l}_{_{60}}\by|) .
 \end{align*} 
 
 Therefore, by \eqref{gs2} and \eqref{gs2-symm},
 \begin{align*}
 p_0^\lambda(\bz-\bfm\vec\bv)\ &=\ \int\ \mathcal{K}_\lambda\left(\bz-\bfm\vec\bv-\by\right)\ \lambda^2 |V_0(\by)|\ p^\lambda_0(\by)\ d\by\nn\\
 &\lesssim\ 
e^{-c|\bfm|\lambda}\ \int\ \frac16\sum_{l=0}^5 \mathcal{K}_\lambda(|\bz-R^{l}_{_{60}}\by|)\ \lambda^2 |V_0(\by)|\ p^\lambda_0(\by)\ d\by\nn\\
&=\ e^{-c|\bfm|\lambda}\ p_0^\lambda(\bz),\ \ |\bz|\le r_0.
\end{align*}

Next we turn to the proof of the bound \eqref{p0-bound2}. In a manner similar to the proof of part \eqref{p0-bound1}, we have by part (3) of Geometric Lemma \ref{euclid1} that for all $|\bz|, |\by|\le r_0$ and all $\bn\in\Z^2\setminus N_{bad}(\be_{I,\nu})$, there exists $l_0$ such that 
\begin{equation}
\lambda |\bz +(\be_{I,\nu}+\bn\vec\bv)-\by|\ \ -\  \lambda|\bz+\be_{I,\nu}-R^{l_0}_{_{60}}\by|\ \geq \ \lambda c^{\prime\prime\prime}\ |\bn|\ .
 \label{euclid-est3a}
 \end{equation}
  By Remark \ref{KtoKlam} we have for all $|\bz|,\ |\by|\le r_0$,
 \[ \mathcal{K}_\lambda(\bz+(\be_{I,\nu}+\bn\vec\bv)-\by)\ \lesssim\ e^{-cc^{\prime\prime\prime} |\bn| \lambda }\
 \frac{1}{6}\sum_{l=0}^5 \mathcal{K}_\lambda(\bz+\be_{I,\nu}-R^{l}_{_{60}}\by)\ .
  \]
  By \eqref{gs2} and \eqref{gs2-symm},
 \begin{align*}
p_0^\lambda\left(\bz+(\be_{I,\nu}+\bn\vec\bv)\right)
& \lesssim\ 
e^{-c |\bn|\lambda}\ \int\ \frac16\sum_{l=0}^5 \mathcal{K}_\lambda(\bz+\be_{I,\nu}-R^{l}_{_{60}}\by)\ \lambda^2 |V_0(\by)|\ p^\lambda_0(\by)\ d\by\nn\\
& =\ e^{-c |\bn|\lambda}\ p_0^\lambda(\bz+\be_{I,\nu}),\ \ |\bz|\le r_0.
\end{align*}
Thus, bound \eqref{p0-bound2} holds. 

To prove bound \eqref{p0-bound3} we have by part (1) of Lemma \ref{euclid1} that, given $|\bz|, |\by|\le r_0$,
 there exists $l_0$ such that:
 \begin{equation}
\lambda |\bz+\be_{I,\nu}-\by|\ \ -\  \lambda|\bz-R^{l_0}_{_{60}}\by|\ \geq \ \lambda c^{\prime}\ |\be_{I,\nu}| \ .
 \label{euclid-est1a}
 \end{equation}
 By Remark \ref{KtoKlam}, 
 \[ \mathcal{K}_\lambda(\bz+\be_{I,\nu}-\by)\ \lesssim\ e^{-cc^{\prime} |\be_{I,\nu}| \lambda }\
 \frac{1}{6}\sum_{l=0}^5 \mathcal{K}_\lambda(\bz-R^{l}_{_{60}}\by),\ \ {\rm for}\ \ |\bz|,\ |\by|\le r_0.
  \]
  By  \eqref{gs2},
 \begin{align*}
p_0^\lambda(\bz+\be_{I,\nu})
&\lesssim\ 
e^{-c |\be_{I,\nu}|\lambda}\ \int\ \frac16\sum_{l=0}^5 \mathcal{K}_\lambda(\bz-R^{l}_{_{60}}\by)\ \lambda^2 |V_0(\by)|\ p^\lambda_0(\by)\ d\by\nn\\
& =\ e^{-c |\be_{I,\nu}|\lambda}\ p_0^\lambda(\bz),\ \ |\bz|\le r_0. 
\end{align*}
Thus, bound \eqref{p0-bound3} holds.

Finally, to prove \eqref{p0-bound4} we have by part (4) of Lemma \ref{euclid1} that, $|\bz|, |\by|\le r_0$,
 there exists $l_0$ such that:
 \begin{equation*}
\lambda |\bz-\bn\vec\bv-\by|\ -\ \lambda |\bz+\be_{I,\nu}-R^{l_0}_{_{60}}\by|\ \geq \  \lambda c^{\prime\prime\prime\prime}\ |\bn|\ ,\ \ \bn\ne(0,0).
 \end{equation*}
By Remark \ref{KtoKlam}, 
 \[ \mathcal{K}_\lambda(\bz-\bn\vec\bv-\by)\ \lesssim\ e^{-cc^{\prime\prime\prime\prime} |\bn| \lambda }\
 \frac{1}{6}\sum_{l=0}^5 \mathcal{K}_\lambda(\bz+\be_{I,\nu}-R^{l_0}_{_{60}}\by),\ \ {\rm for}\ \ |\bz|,\ |\by|\le r_0 ,  \]
and hence, by \eqref{gs2},
\begin{align*}
p_0^\lambda(\bz-\bn\vec\bv)\ & \lesssim\ 
e^{-c |\bn|\lambda}\ \int\ \frac16\sum_{l=0}^5 \mathcal{K}_\lambda(\bz+\be_{I,\nu}-R^{l}_{_{60}}\by)\ \lambda^2 |V_0(\by)|\ p^\lambda_0(\by)\ d\by\nn\\
&\qquad =\ e^{-c |\bn|\lambda}\ p_0^\lambda(\bz+\be_{I,\nu}),\ \ |\bz|\le r_0. \nn\\
\end{align*}
The proof of Lemma \ref{lem:p0-lam-bounds} is complete.
\end{proof}

 \subsection{ Completion of the proofs Proposition \ref{ME-BA1} and  Proposition \ref{ME-AA1} }
 %
 
 %
%
%
%
To prove estimate  \eqref{ip-BA-bound} of Proposition \ref{ME-BA1}  we bound the five sums in \eqref{1st-sum-bound}
 and \eqref{2nd-ds-breakdown-est}, and to prove estimate \eqref{ip-AA-bound}  of Proposition \ref{ME-AA1} we bound the two sums appearing in \eqref{IAA-1} and \eqref{IAA-2}. The full list of seven expressions to be bounded is as follows (recall $\bv_A={\bf 0}$, and note that all sums are over subsets of $\Z^2$):
\begin{align*}
 \mathscr{I}_{1} &= \sum_{\substack{\bfm\ne(0,0)\\ 1\le\nu\le3}}\ e^{C\lambda^{-1}|\bfm|}\ \int |V_0(\by)|\ p_0^\lambda(\by-\be_{A,\nu})\ p_0^\lambda(\by-\bfm\vec\bv)\ d\by, \\ 
  \mathscr{I}_2 &= \sum_{\substack{\bfm\ne (0,0) \\ \bn\ne(0,0), (0,-1), (-1,0)} }
\ e^{C\lambda^{-1}|\bfm-\bn|}\  \int |V_0(\by)|\ p_0^\lambda(\by-\be_{A,1}-\bn\vec\bv)\ p_0^\lambda(\by-\bfm\vec\bv)\ d\by , \\ 
 \mathscr{I}_3  &= \sum_{\bn\ne (0,0), (1,0), (0,1)}\ e^{C\lambda^{-1}|\bn|}\ \int |V_0(\by)|\ p_0^\lambda(\by)\ 
 p_0^\lambda(\by-\be_{B,1}-\bn\vec\bv)\ d\by ,  \\ 
  \mathscr{I}_4 &= \sum_{\substack{\bfm\ne(0,0)\\ 1\le\nu\le3}}\ e^{C\lambda^{-1}|\bfm|}\ \int |V_0(\by)|\ p_0^\lambda(\by-\bfm\vec\bv)\ p_0^\lambda(\by-\be_{B,\nu})\ d\by,  \\ 
  \mathscr{I}_5 &= \sum_{\substack{\bfm\ne(0,0) \\ \bn\ne (0,0), (1,0), (0,1) } }\ e^{C\lambda^{-1}|\bn-\bfm|}\ \int |V_0(\by)|\ p_0^\lambda(\by-\bfm\vec\bv)\ p_0^\lambda(\by-\be_{B,1}-\bn\vec\bv)\ d\by , \\ 
  \mathscr{I}_6 &= \sum_{\substack{\bn\ne(0,0)} } \sum_{\substack{\bfm } }\ e^{C\lambda^{-1}|\bn-\bfm|}\ \int |V_0(\by)|\ p_0^\lambda(\by-\bfm\vec\bv)\ p_0^\lambda(\by-\bn\vec\bv)\ d\by , \\ 
  \mathscr{I}_7 &= \sum_{\substack{\bn,\bfm } }\ e^{C\lambda^{-1}|\bn-\bfm|}\ \int |V_0(\by)|\ 
  p_0^\lambda(\by-\be_{B,1}-\bfm\vec\bv)\ p_0^\lambda(\by-\be_{B,1}-\bn\vec\bv)\ d\by . \\ 
 \end{align*}
\noindent  In particular, we prove
  \begin{proposition} \label{BA-AA-control} There exists a positive constants $\tilde{c}$ and $\lambda_\star$,  such that for all $\lambda>\lambda_\star$,
\begin{equation}
 | \mathscr{I}_j |\ \lesssim\ \rho_\lambda\times e^{-\tilde{c} \lambda}, \ \ 1\le j\le7. 
 \label{I123-bound}
 \end{equation}
 (In $ \mathscr{I}_j,\ 1\le j\le7$, we have dropped the factor of $\lambda^2$ multiplying $|V_0|$, since this 
 can be absorbed by adjusting the constant $\tilde{c}$ in the exponential factor in \eqref{I123-bound}.)
 \end{proposition}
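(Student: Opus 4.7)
The plan is to reduce each of the seven sums $\mathscr{I}_j$ to an exponentially small multiple of $\rho_\lambda/\lambda^2$ by applying the pointwise bounds of Lemma \ref{lem:p0-lam-bounds} inside each integrand, then invoking Remark \ref{rho-indep} (which identifies the canonical integral $\int |V_0|\,p_0^\lambda(\by)\,p_0^\lambda(\by+\be_{I,\nu})\,d\by$ as $\rho_\lambda/\lambda^2$, independent of $I\in\{A,B\}$ and $\nu\in\{1,2,3\}$). Throughout, I would use the inversion symmetry of $p_0^\lambda$ (inherited from hypothesis $(PW_4)$ on $V_0$) to apply \eqref{p0-bound1}--\eqref{p0-bound4} with either sign of $\be_{I,\nu}$ or $\bfm\vec\bv$. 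The prefactor $e^{C\lambda^{-1}|\cdot|}$ coming from the strip $|\Im\bk|\lesssim\lambda^{-1}$ is absorbed easily, because for $\lambda>\lambda_\star$ one has $c\lambda-C\lambda^{-1}\ge c\lambda/2$.

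I would organize the seven sums into four groups. For $\mathscr{I}_1$ and $\mathscr{I}_4$, the lone displaced factor $p_0^\lambda(\by-\bfm\vec\bv)$ with $\bfm\ne0$ is reduced to $e^{-c|\bfm|\lambda}p_0^\lambda(\by)$ via \eqref{p0-bound1}, after which the remaining $\by$-integral is exactly the canonical $\rho_\lambda/\lambda^2$. For $\mathscr{I}_2$, $\mathscr{I}_3$, $\mathscr{I}_5$, the exclusions on $\bn$ match precisely the sets $N_{bad}(\be_{A,1})$ or $N_{bad}(\be_{B,1})$, so \eqref{p0-bound2} applies, converting $p_0^\lambda(\by-\be_{I,1}-\bn\vec\bv)$ into $e^{-c|\bn|\lambda}p_0^\lambda(\by-\be_{I,1})$; the remaining $\bfm$-factor is handled as in the first group. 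For $\mathscr{I}_6$, when $\bfm=0$ I would apply \eqref{p0-bound4} to $p_0^\lambda(\by-\bn\vec\bv)$ (for $\bn\ne0$) to produce the $p_0^\lambda(\by+\be_{A,1})$ factor needed to hit $\rho_\lambda/\lambda^2$; when $\bfm\ne0$, I combine \eqref{p0-bound1} on one factor with \eqref{p0-bound4} on the other. For $\mathscr{I}_7$, the terms with at least one of $\bfm,\bn$ outside the finite set $N_{bad}(\be_{B,1})$ are controlled by \eqref{p0-bound2}, producing $p_0^\lambda(\by-\be_{B,1})^2$ times an exponentially small factor in $|\bfm|+|\bn|$; for each of the finitely many "bad" pairs $(\bfm,\bn)$, the corresponding integrand is of the form $|V_0|\,p_0^\lambda(\by-\be_{B,\mu})\,p_0^\lambda(\by-\be_{B,\nu})$ and is treated by the next step. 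In every case one is ultimately left with $|V_0(\by)|\,p_0^\lambda(\by-\be_{I,1})^2$ or $|V_0(\by)|\,p_0^\lambda(\by)\,p_0^\lambda(\by\pm\be_{I,\nu})$, and \eqref{p0-bound3} applied to one factor supplies the final $e^{-c\lambda}$ needed to pair with $\rho_\lambda/\lambda^2$. Summing the geometric-type series over $\bfm,\bn\in\Z^2$ produces the stated bound $\mathscr{I}_j\lesssim\rho_\lambda e^{-\tilde c\lambda}$, with the $\lambda^{-2}$ absorbed into the exponential.

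The main obstacle is the "diagonal'' $\bfm=\bn=0$ contribution to $\mathscr{I}_7$, where the integrand $|V_0(\by)|\,p_0^\lambda(\by-\be_{B,1})^2$ already has both factors at nearest-neighbor displacement, so the geometric lemma offers no further improvement. The required extra $e^{-c\lambda}$ must be extracted by converting one copy of $p_0^\lambda(\by-\be_{B,1})$ into $e^{-c\lambda}p_0^\lambda(\by)$ using \eqref{p0-bound3}, after which the canonical pairing emerges and Remark \ref{rho-indep} applies. This step is where the one-step nearest-neighbor decay estimate coming from part (1) of Geometric Lemma \ref{euclid1} is essential; without the strict geometric inequality \eqref{euclid-est1} (and the quantitative lower bound $r_{critical}\ge 0.33|\be_{A,1}|$ that makes it true for the chosen $r_0$), one could not separate the two coincident nearest-neighbor factors and the whole scheme would break down.
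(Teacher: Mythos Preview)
Your proposal is correct and follows essentially the same approach as the paper: both apply the four pointwise bounds \eqref{p0-bound1}--\eqref{p0-bound4} of Lemma \ref{lem:p0-lam-bounds} term-by-term to reduce each $\mathscr{I}_j$ to the canonical integral $\int|V_0|\,p_0^\lambda(\by)\,p_0^\lambda(\by\pm\be_{I,\nu})\,d\by=\lambda^{-2}\rho_\lambda$, then sum the resulting geometric series in $\bfm,\bn$. Your four-part decomposition of $\mathscr{I}_7$ (both good / mixed / both bad) and your identification of \eqref{p0-bound3} as the device that supplies the extra $e^{-c\lambda}$ in the diagonal case exactly mirror the paper's split $\mathscr{I}_7=\mathscr{I}_{7,A}+\mathscr{I}_{7,B}+\mathscr{I}_{7,C}+\mathscr{I}_{7,D}$.
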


 \begin{proof}[Proof of Proposition \ref{BA-AA-control}] We proceed to estimate $ \mathscr{I}_j,\ 1\le j\le7$, using 
  the bounds of Lemma \ref{lem:p0-lam-bounds}, and that $V_0(\by)$ and $p_0^\lambda(\by)$ are even functions.
 
 \noindent{\it Estimation of  $| \mathscr{I}_1 |$}:  
By estimate \eqref{p0-bound1} of Lemma \ref{lem:p0-lam-bounds}, we have that $p_0^\lambda(\by-\bfm\vec\bv) \lesssim \ e^{-c|\bfm|\lambda}\ p_0^\lambda(\by)$ for $\by\in{\rm supp}\ V_0$. Therefore, for $\nu=1,2,3$:
\begin{align*}
&\int |V_0(\by)|\ p_0^\lambda(\by-\be_{A,\nu})\ p_0^\lambda(\by-\bfm\vec\bv)\ d\by\nn\\
&\quad \lesssim\ e^{-c|\bfm|\lambda}\ \int |V_0(\by)|\ p_0^\lambda(\by-\be_{A,\nu})\ p_0^\lambda(\by)\ d\by\ \le\ 
e^{-c|\bfm|\lambda}\times\rho_\lambda .
\end{align*}
Next, multiplying by $e^{C\lambda^{-1}|\bfm|}$ and summing over $\bfm\in\Z^2\setminus\{(0,0)\}$ gives the bound $\left|\mathscr{I}_1\right|\ \lesssim\ \rho_\lambda\times e^{-\tilde{c} \lambda}$.

\noindent{\it Estimation of  $| \mathscr{I}_2 |$}:  
 The strategy is similar to that used to bound $| \mathscr{I}_1 |$. 
 By \eqref{p0-bound1}  and \eqref{p0-bound2} we have
  \begin{align*}
 &  \int |V_0(\by)|\ p_0^\lambda(\by-\be_{A,1}-\bn\vec\bv)\ p_0^\lambda(\by-\bfm\vec\bv)\ d\by\nn\\
 &\quad\lesssim\ e^{-c|\bn|\lambda}\times\ e^{-c|\bfm|\lambda}\times 
  \int |V_0(\by)|\ p_0^\lambda(\by-\be_{A,1})\ p_0^\lambda(\by)\ d\by\nn\\
 &\quad \lesssim\ e^{-c|\bn|\lambda}\times\ e^{-c|\bfm|\lambda}\times\ \rho_\lambda ,
 \nn 
\end{align*}
 for $ \bfm\ne(0,0)$ and $\bn\notin N_{bad}(\be_{A,1})=\{(0,0),(-1,0),(0,-1)\}$.
 Multiplying by $e^{C\lambda^{-1}|\bfm-\bn|}$ and summing over 
 $ \bfm\ne(0,0)$ and $\bn\notin N_{bad}(\be_{A,1})$, we obtain the bound $|\mathscr{I}_2|\lesssim  e^{-\tilde{c}\lambda}\ \rho_\lambda$. 
 
 \noindent{\it Estimation of  $| \mathscr{I}_3 |$}: 
 To bound $|\mathscr{I}_3|$, we apply the bound \eqref{p0-bound2}
 to obtain, for $\bn\ne (0,0),(1,0),(0,1)$:
 \begin{align*}
& \int |V_0(\by)|\ p_0^\lambda(\by)\ p_0^\lambda(\by-\be_{B,1}-\bn\vec\bv)\ d\by\nn\\
 &\ \lesssim\ e^{-c|\bn|\lambda}\times\ \int  |V_0(\by)|\ p_0^\lambda(\by)\ p_0^\lambda(\by-\be_{B,1})\ d\by
\ \lesssim\  e^{-c|\bn|\lambda}\times\ \rho_\lambda.
\end{align*}
Multiplying by $e^{C\lambda^{-1}|\bn|}$ and summing over $\bn\notin\{(0,0),(1,0),(0,1)\}$ we obtain that 
 $| \mathscr{I}_3 |\lesssim e^{-\tilde{c}\lambda}\ \rho_\lambda$. 

 \noindent{\it Estimation of  $| \mathscr{I}_4 |$}: 
Because $\be_{B,\nu}=-\be_{A,\nu}$, $\nu=1,2,3$, and $p_0^\lambda(-\by)=p_0^\lambda(\by)$, we have that $\mathscr{I}_4= c\ \mathscr{I}_1$ for some constant $c$. Therefore the bound for 
$|\mathscr{I}_4|$ follows from $| \mathscr{I}_1|$: $|\mathscr{I}_4|\lesssim  e^{-\tilde{c}\lambda}\ \rho_\lambda$.

 \noindent{\it Estimation of  $| \mathscr{I}_5 |$}: 
To bound $| \mathscr{I}_5 |$ we
 apply  \eqref{p0-bound1} and \eqref{p0-bound2} to obtain, for $\bfm\ne(0,0)$ and $\bn\ne (0,0), (1,0), (0,1) $:
 \begin{align*}
 &\int |V_0(\by)|\ p_0^\lambda(\by-\bfm\vec\bv)\ p_0^\lambda(\by-\be_{B,1}-\bn\vec\bv)\ d\by\nn\\ 
&\ \lesssim \ e^{-c\lambda|\bfm|}\times e^{-c\lambda|\bn|}\times \int |V_0(\by)|\ p_0^\lambda(\by-\be_{B,1})\ 
p_0^\lambda(\by)\ d\by
\lesssim \ e^{-c\lambda|\bfm|}\times e^{-c\lambda|\bn|}\times\ \rho_\lambda .
 \end{align*}
 Multiplying by $e^{C\lambda^{-1}|\bn-\bfm|}$ and summing over all $\bfm\ne(0,0)$ and $\bn\ne (0,0), (1,0), (0,1) $, we obtain that
 $| \mathscr{I}_5 | \lesssim e^{-\tilde{c}\lambda}\times \rho_\lambda$.  

  \noindent{\it Estimation of  $| \mathscr{I}_6 |$}:  To bound $| \mathscr{I}_6 |$ we apply  \eqref{p0-bound1} and \eqref{p0-bound4} to obtain, for $\bn\ne(0,0)$ and $\bfm\in\Z^2$:
  \begin{align*}
 & \int |V_0(\by)|\ p_0^\lambda(\by-\bfm\vec\bv)\ p_0^\lambda(\by-\bn\vec\bv)\ d\by \\
  &\quad \lesssim \ e^{-c\lambda|\bfm|}\times e^{-c\lambda|\bn|} \int |V_0(\by)|\ p_0^\lambda(\by)\ p_0^\lambda(\by-\be_{B,1})\ d\by\nn  .
 \end{align*}
 Multiplying by $e^{C\lambda^{-1}|\bn-\bfm|}$ and summing over all $\bn\ne(0,0)$ and $\bfm\in\Z^2$, we obtain that
 $| \mathscr{I}_6 | \lesssim e^{-\tilde{c}\lambda}\times \rho_\lambda$.  

  \noindent{\it Estimation of  $| \mathscr{I}_7 |$}: 
 Because \eqref{p0-bound2} holds only for $\bfm\notin N_{bad}(\be_{I,\nu})$, we need to expand $\mathscr{I}_7$ out and bound terms separately.
 Let 
 \[I^{\bfm,\bn}_7 \equiv e ^{C \lambda^{-1} |\bfm-\bn|} \int |V_0(\by)| p_0^\lambda(\by-\be_{B,1}-\bfm\vec\bv)\ p_0^\lambda(\by-\be_{B,1}-\bn\vec\bv)\ d\by.\]
 Then
 \begin{align*}
  \mathscr{I}_7 = \sum_{\bfm,\bn\in\Z^2} I^{\bfm,\bn}_7 &=
   \left(\sum_{\substack{\bfm,\bn \notin N_{bad}(\be_{B,1})}} + \sum_{\substack{\bfm \notin N_{bad}(\be_{B,1}) \\ \bn\in N_{bad}(\be_{B,1})}} + 
   \sum_{\substack{\bn \notin N_{bad}(\be_{B,1}) \\ \bfm \in N_{bad}(\be_{B,1})}}    + \sum_{\substack{\bfm,\bn \in N_{bad}(\be_{B,1})}} \right) I^{\bfm,\bn}_7 \\
   & \equiv \mathscr{I}_{7,A} + \mathscr{I}_{7,B} + \mathscr{I}_{7,C} + \mathscr{I}_{7,D} .
 \end{align*}
 We estimate $\mathscr{I}_{7,A}, \mathscr{I}_{7,B}, \mathscr{I}_{7,C}, \mathscr{I}_{7,D}$ separately. 
 
 \noindent{\it Estimation of  $| \mathscr{I}_{7,A} |$}: 
 For $\bfm,\bn \notin N_{bad}(\be_{B,1})$, we apply \eqref{p0-bound2} and \eqref{p0-bound3} to obtain
  \begin{align*}
 & \int |V_0(\by)| p_0^\lambda(\by-\be_{B,1}-\bfm\vec\bv)\ p_0^\lambda(\by-\be_{B,1}-\bn\vec\bv)\ d\by \nn\\ 
 &\ \lesssim \ e^{-c\lambda|\bfm|}\times e^{-c\lambda|\bn|}\times \int |V_0(\by)|\ p_0^\lambda(\by)\ p_0^\lambda(\by-\be_{B,1}) d\by
 \lesssim \ e^{-c\lambda|\bfm|}\times e^{-c\lambda|\bn|}\times\ \rho_\lambda .
 \end{align*}
 Multiplying by $e^{C\lambda^{-1}|\bn-\bfm|}$ and summing over all $\bfm,\bn \notin N_{bad}(\be_{B,1})$, we find that
 $| \mathscr{I}_{7,A} | \lesssim e^{-\tilde{c}\lambda}\times \rho_\lambda$.  
 
 \noindent{\it Estimation of  $| \mathscr{I}_{7,B} |$}: 
 Suppose $\bfm \notin N_{bad}(\be_{B,1})$, $\bn \in N_{bad}(\be_{B,1})$, and $|\by| \leq r_0$.
 Then, by the definition of $N_{bad}(\be_{B,1})$, $\be_{B,1} + \bn\vec\bv = \be_{B,\nu}$ for some $\nu\in\{1,2,3\}$.
 By \eqref{p0-bound2} and \eqref{p0-bound3}, we have
   \begin{align*}
 & \int |V_0(\by)| p_0^\lambda(\by-\be_{B,1}-\bfm\vec\bv)\ p_0^\lambda(\by-\be_{B,1}-\bn\vec\bv)\ d\by \nn\\ 
 &\ \lesssim \ e^{-c\lambda|\bfm|} \times \int |V_0(\by)|\ p_0^\lambda(\by) \ p_0^\lambda(\by-\be_{B,\nu}) d\by \\
 &\ = \ e^{-c\lambda|\bfm|} \times \lambda^{-2} \times\ \rho_\lambda 
 \lesssim \ e^{-c'\lambda|\bfm|} \times \rho_\lambda ,
 \end{align*}
 where in the final line we have used that $\rho_\lambda$ is independent of $\nu$; see Remark \ref{rho-indep}.
 Multiplying by $e^{C\lambda^{-1}|\bn-\bfm|}$ and summing over all $\bfm \notin N_{bad}(\be_{B,1})$ and $\bn \in N_{bad}(\be_{B,1})$, we find that
 $| \mathscr{I}_{7,B} | \lesssim e^{-\tilde{c}\lambda}\times \rho_\lambda$.  
 
 \noindent{\it Estimation of  $| \mathscr{I}_{7,C} |$}: 
 Note that $\mathscr{I}_{7,C}$ and $\mathscr{I}_{7,B}$ are equal (just interchange dummy indices $\bfm$ and $\bn$). Therefore, 
 $| \mathscr{I}_{7,C} | \lesssim e^{-\tilde{c}\lambda}\times \rho_\lambda$.  
 
 \noindent{\it Estimation of  $| \mathscr{I}_{7,D} |$}: 
 Suppose $\bfm, \bn \in N_{bad}(\be_{B,1})$, and $|\by| \leq r_0$.
 Then, by the definition of $N_{bad}(\be_{B,1})$, $\be_{B,1} + \bn\vec\bv = \be_{B,\nu}$ and $\be_{B,1} + \bfm\vec\bv = \be_{B,\nu'}$ for some $\nu,\nu'\in\{1,2,3\}$.
 By \eqref{p0-bound3}, we have
   \begin{align*}
 & \int |V_0(\by)| p_0^\lambda(\by-\be_{B,1}-\bfm\vec\bv)\ p_0^\lambda(\by-\be_{B,1}-\bn\vec\bv)\ d\by \nn\\ 
 & = \int |V_0(\by)| p_0^\lambda(\by-\be_{B,\nu'})\ p_0^\lambda(\by-\be_{B,\nu})\ d\by \nn\\ 
 &\ \lesssim \ e^{-c\lambda} \times \int |V_0(\by)|\ p_0^\lambda(\by) \ p_0^\lambda(\by-\be_{B,\nu}) d\by
 \ = \ e^{-c\lambda} \times \lambda^{-2} \times\ \rho_\lambda \leq \ e^{-c'\lambda} \ \rho_\lambda .
 \end{align*}
 Multiplying by $e^{C\lambda^{-1}|\bn-\bfm|}$ and summing over all $\bfm, \bn \in N_{bad}(\be_{B,1})$, we find that
 $| \mathscr{I}_{7,D} | \lesssim e^{-\tilde{c}\lambda}\times \rho_\lambda$.  
 
 Combining our estimates for $\mathscr{I}_{7,A}, \mathscr{I}_{7,B}, \mathscr{I}_{7,C}, \mathscr{I}_{7,D}$, we see that 
 $| \mathscr{I}_{7} | \lesssim e^{-\tilde{c}\lambda}\times \rho_\lambda$, completing the estimation of $\mathscr{I}_{7}$. 
 Together, the above bounds on $\mathscr{I}_1,\dots,\mathscr{I}_7$ imply Proposition \ref{BA-AA-control}, from which  Propositions \ref{ME-BA1} and \ref{ME-AA1} follow.
 \end{proof}

\subsection{Completion of the proof of Proposition \ref{ME-IJ-higher-order}}
{\ }
%
%
By the Cauchy-Schwarz inequality and the resolvent bound of Lemma \ref{lem1-resolvent}, the bound 
 \eqref{higher-order-ME-est} will follow if we can prove
\[ \|\left(H^\lambda(\overline\bk)-\overline\Omega\right)p_{_{\overline\bk,J}}\| \times 
\|\left(H^\lambda(\bk)-\Omega\right)p_{_{\bk,I}}\|\ \lesssim\ e^{-c\lambda}\ \rho_\lambda.
\]
for $I, J = A, B$. We prove that each factor on the left hand side is bounded by a $C\times e^{-c^\prime\lambda}\ \sqrt{\rho_\lambda}$. Both factors are bounded in the same manner; we focus on the second factor and prove
\begin{equation*}
 \|\left(H^\lambda(\bk)-\Omega\right)p_{_{\bk,I}}\|^2\ \lesssim\ e^{-c\lambda}\ \rho_\lambda\ . 
 \end{equation*}
By hypothesis we have $|\Omega|\le \widehat{C}\rho_\lambda$. Also,  by Proposition \ref{prop:rho-lam-bounds} we have
 $\rho_\lambda\lesssim e^{-c\lambda}$. Therefore, 
\begin{align}
\|\left(H^\lambda(\bk)-\Omega\right)p_{_{\bk,I}}\|^2\ &\lesssim\  \| H^\lambda(\bk)p_{_{\bk,I}}\|^2\ 
+\ |\Omega|^2\ \|p_{_{\bk,I}}\|^2\nn\\
 &\lesssim\ \| H^\lambda(\bk)p_{_{\bk,I}}\|^2\ +\ \rho_\lambda^2
 \lesssim 
 \| H^\lambda(\bk)p_{_{\bk,I}}\|^2\ +\ e^{-c\lambda}\ \rho_\lambda\ .\nn
\end{align}
Hence, it suffices to prove
\begin{equation}
 \| H^\lambda(\bk)p_{_{\bk,I}}\|\ \lesssim\ e^{-c\lambda}\ \sqrt{\rho_\lambda}\ ,\ I=A, B.
 \label{sqrt-bound}\end{equation}
 We consider the case $I=A$; the case $I=B$ is treated similarly. 

By \eqref{Hpkhatv3} we have
\begin{align*}
\| H^\lambda(\bk)p_{_{\bk,A}}\|\ &\le\ \mathscr{J}_1(\lambda) \ +\ \mathscr{J}_2(\lambda) , 
\end{align*}
where
\begin{align*}
\mathscr{J}_1(\lambda)\ &\equiv\ \sum_{\bw\in\Lambda_A\setminus\{\bv_A\}}\ \lambda^2\ \|V_0(\bx-\bv_A)\ p_\bk^\lambda(\bx-\bw)\|_{_{L^2(|\bx-\bv_A|<r_0)}} , \\
\mathscr{J}_2(\lambda)\ &\equiv\ \sum_{\bw\in\Lambda_A}\ \lambda^2\ \|V_0(\bx-\bv_B)\ p_\bk^\lambda(\bx-\bw)\|_{_{L^2(|\bx-\bv_B|<r_0)}} .
\end{align*}

We claim that $\mathscr{J}_1(\lambda)\lesssim e^{-c\lambda} \sqrt{\rho_\lambda}$ and $ \mathscr{J}_2(\lambda) \lesssim e^{-c\lambda} \sqrt{\rho_\lambda}$ .
We present the details of the bound on $\mathscr{J}_2(\lambda)$ and then remark on the bound for $\mathscr{J}_1(\lambda)$.

Partition $\Lambda_A$ into those points in $\Lambda_A$,   which are the nearest neighbors to $\bv_B$:
 $\bv_B+\be_{B,\nu}$, $\nu=1,2,3$ and those points in $\Lambda_A$ which are not nearest neighbors of $\bv_B$:
 $\bw=\bv_B+\be_{B,1}+\bn\vec\bv$, where $\bn\ne (0,0), (1,0), (0,1)$. 
 Therefore,
\begin{align*}
\mathscr{J}_2(\lambda)\ &=\ \sum_{\nu=1,2,3}\ 
\lambda^2\ \|V_0(\bx-\bv_B)\ p_\bk^\lambda(\bx-\bv_B-\be_{B,\nu})\|_{_{L^2(|\bx-\bv_B|<r_0)}}
\nn\\
&+\ \sum_{\bn\ne (0,0), (1,0), (0,1)}\ \lambda^2\ \|V_0(\bx-\bv_B)\ p_\bk^\lambda(\bx-\bv_B-\be_{B,1}-\bn\vec\bv)\|_{_{L^2(|\bx-\bv_B|<r_0)}}\nn\\
&\equiv\ \mathscr{J}_{2a}(\lambda)\ +\ \mathscr{J}_{2b}(\lambda).
\end{align*}

\nit {\it Bound on $\mathscr{J}_{2a}(\lambda)$:} By symmetry, all three terms in the sum are equal.
Changing variables and using the bound \eqref{p0-bound3}  we find
 \begin{align*}
\mathscr{J}_{2a}(\lambda) &=  3\lambda^2 \left( \int_{|\by|<r_0} |V_0(\by)|^2  \left( p_\bk^\lambda(\by+\be_{A,1}) \right)^2 d\by \right)^{\frac{1}{2}}\nn\\
&\lesssim \|V_0\|_{_\infty}^{\frac{1}{2}}   \lambda^2 \left( \int_{|\by|<r_0} |V_0(\by)|  \left( p_0^\lambda(\by+\be_{A,1}) \right)^2 d\by \right)^{\frac{1}{2}}\nn\\
&\lesssim \|V_0\|_{_\infty}^{\frac{1}{2}}  e^{-c\lambda} \lambda^2 \left( \int_{|\by|<r_0} |V_0(\by)|  p_0^\lambda(\by+\be_{A,1})  p_0^\lambda(\by) d\by \right)^{\frac{1}{2}}
\lesssim e^{-c\lambda} \sqrt{\rho_\lambda} .
 \end{align*}
 
 \nit {\it Bound on $\mathscr{J}_{2b}(\lambda)$:} Consider the general term in this infinite sum over points in $\Z^2$
 except $(0,0), (1,0)$ and  $(0,1)$. Changing variables and estimating, using Lemma \ref{lem:p0-lam-bounds},
 we obtain:
 \begin{align*}
&\lambda^2 \left( \int_{|\by|<r_0} |V_0(\by)|^2 |p_\bk^\lambda(\by-\be_{B,1}-\bn\vec\bv)|^2 d\by
 \right)^{\frac{1}{2}}\nn\\
 &\lesssim  e^{c|\bn|\lambda^{-1}}  \|V_0\|_{_\infty}^{\frac{1}{2}} \lambda^2
 \left( \int_{|\by|<r_0} |V_0(\by)| \left( p_0^\lambda(\by-\be_{B,1}-\bn\vec\bv) \right)^2 d\by
 \right)^{\frac{1}{2}}\nn\\
&  \lesssim   e^{-c^\prime\lambda |\bn|} e^{c|\bn|\lambda^{-1}} \|V_0\|_{_\infty}^{\frac{1}{2}} \lambda^2
 \left( \int_{|\by|<r_0} |V_0(\by)| p_0^\lambda(\by-\be_{B,1}) p_0^\lambda(\by)  d\by
\right)^{\frac{1}{2}}
 \lesssim e^{-c |\bn| \lambda} \sqrt{\rho_\lambda} .
 \end{align*} 
 Summing over admissible $\bn$ yields the bound $\mathscr{J}_{2b}(\lambda)\lesssim e^{-c \lambda} \sqrt{\rho_\lambda}$.  The bound $|\mathscr{J}_1(\lambda)|\lesssim e^{-c \lambda}\ \sqrt{\rho_\lambda}$ is proved in a manner similar to the bound on $\mathscr{J}_{2b}(\lambda)$,  making use of 
 \eqref{p0-bound1} and \eqref{p0-bound4}.  This completes the proof of Proposition \ref{ME-IJ-higher-order}.

\subsection{Proof of Proposition \ref{prop:rho-lam-bounds}}\label{proof-rho-lam-bounds} 

We first prove the upper bound in \eqref{rho-lambda-bounds}.
 From \eqref{p0-bound3} of Lemma \ref{lem:p0-lam-bounds} we have 
 $p_0^\lambda(\by+\be_{A,1})\lesssim e^{-c\lambda} p_0^\lambda(\by)$ for $\by\in {\rm supp}\ V_0$. Thus, 
  \[ \rho_\lambda \lesssim \|V_0\|_{_{\infty}}\ \lambda^2\ e^{-c\lambda}\ \int \left(  p_0^\lambda(\by) \right)^2 d\by\ 
   = \|V_0\|_{_{\infty}}\ \lambda^2\ e^{-c\lambda}\ \lesssim\ C_2\ e^{-c_2\lambda} .\]

To prove the lower bound in \eqref{rho-lambda-bounds}, we first use \eqref{gs2} to obtain 
\begin{equation*}
p_0^\lambda(\by+\be_{A,1})\ =\ 
\int\ \mathcal{K}_\lambda\left(\by+\be_{A,1}-\bz\right)\ \lambda^2 |V_0(\bz)|\ p^\lambda_0(\bz)\ d\bz\ .
\end{equation*}
Substitution into \eqref{rho-lam-def} yields
\begin{equation}
\rho_\lambda \ \equiv\ \int\ d\by\ \int\ d\bz\  \lambda^4\ |V_0(\by)|\ |V_0(\bz)|\ p_0^\lambda(\by)\ p_0^\lambda(\bz)\ \mathcal{K}_\lambda(\by+\be_{A,1}-\bz)\ .
\label{rho-lam-double}
\end{equation} 

Recall that the support of $V_0$ is contained in the $B({\bf 0},r_0)$, the disc of radius $r_0$ about the origin. 
Note that $|\by+\be_{A,1}-\bz|\le C_1$ for all 
 $|\bz|\le r_0$ and $|\by|\le r_0$, and therefore from \eqref{Krep} we have $ \mathcal{K}_\lambda(\by+\be_{A,1}-\bz)\ge e^{-C_1\lambda}$
 %
%
 and therefore, by \eqref{rho-lam-double}, we have the lower bound
 \begin{align}
 \rho_\lambda &\ge\ 
C\ e^{-c\lambda}\  \int_{|\by|<r_0}\ d\by\ \int_{|\bz|<r_0}\ d\bz\ \ \lambda^4\ |V_0(\by)|\ |V_0(\bz)|\ p_0^\lambda(\by)\ p_0^\lambda(\bz)\nn\\
&\ge\   C \lambda^2 \ e^{-c\lambda}\ \left(\ \lambda \int_{|\by| < r_0}\ |V_0(\by)|\ p_0^\lambda(\by)\ d\by\ \right)^2\ .
\label{lower1}\end{align}

Note that  by \eqref{gs2}
\begin{equation*}
p_0^\lambda(\by)\ =\  \lambda^2\ \left(\mathcal{K}_\lambda\star |V_0|\ p_0^\lambda \right)(\by)\ .
\end{equation*}
By Proposition \ref{K0-properties}, $\mathcal{K}\in L^2$ and therefore 
$\|\mathcal{K}_\lambda\|_{_{L^2}}= \|\mathcal{K}\|_{_{L^2}} \times |E_0^\lambda|^{-1/2}$.
Taking the $L^2(\R^2)$ norm and estimating using Young{'}s inequality gives  
\begin{align*}
1 =\ \|p_0^\lambda\|_{_{L^2}}\ &\le\ \lambda^2\ \|\mathcal{K}_\lambda\|_{_{L^2}}\ \int |V_0(\by)|\ p_0^\lambda(\by)\ d\by \\
\ &=\ \left(\ \lambda^2\ /\ |E_0^\lambda|^{1/2}\ \right)\ \|\mathcal{K}\|_{_{L^2}} \ \int |V_0(\by)|\ p_0^\lambda(\by)\ d\by\ .
\end{align*}
Recalling the lower bound on the ground state \eqref{GS},  $E_0^\lambda\le -c_0\lambda^2$,
 for some positive constant $c_0$, we have 
\begin{equation}
 \frac{c_0^{1/2} }{\|\mathcal{K}\|_{_{L^2}}}\ \lesssim \lambda \int |V_0(\by)|\ p_0^\lambda(\by)\ d\by. \label{lower2}
 \end{equation}
Substituting the lower bound \eqref{lower2} into \eqref{lower1}, we find 
\[
\rho_\lambda\ge \frac{C\ c_0}{ \|\mathcal{K}\|_{_{L^2}}^2} \lambda^{2} e^{-c\lambda}\ge C^\prime e^{-c^\prime\lambda}.\]

  This completes the proof of  Proposition \ref{prop:rho-lam-bounds}
   and therewith the last details of 
the proof of the main theorem, Theorem \ref{main-theorem}.

\section{Scaled convergence of the resolvent}\label{resolvent}

In this section we indicate how our analysis of the scaled convergence of dispersion surfaces of $H^\lambda=-\Delta+\lambda^2 V(\bx)-E_D^\lambda$ can be used  to obtain results on the scaled convergence of the resolvent. Introduce the scaled operator:
\begin{equation}
\widetilde{H}^\lambda\ \equiv (\rho^\lambda)^{-1}\ H^\lambda\ ,
\label{Hscaled}
\end{equation}
whose two lowest energy dispersion surfaces are uniformly close those of the tight-binding model. 
Introduce the restriction of $\widetilde{H}_\bk^\lambda$ to $\bk-$ pseudo-periodic functions. Since $H^\lambda$ commutes with lattice ($\Lambda_h$) translations,  $\widetilde{H}_\bk^\lambda\ =\  \widetilde{H}^\lambda\Big|_{_{L^2_\bk}}$ maps $H^2_\bk(\R^2/\Lambda_h)$ into $L^2_\bk(\R^2/\Lambda_h)$. 


For each $\bk\in\brill$,  let
 $\mathscr{Q}^\lambda_{AB,\bk}:L^2_\bk\to L^2_\bk$ denote the orthogonal projection onto
   the span of the two states: $P_{\bk,A}^\lambda(\bx)$ and $ P_{\bk,B}^\lambda(\bx)$, defined in \eqref{P-bk-I}, and 
   let $\mathscr{P}^\lambda_{AB,\bk}\equiv I-\mathscr{Q}^\lambda_{AB,\bk}:L^2_\bk\to L^2_\bk$ denote the projection onto its orthogonal complement.
 
 \nit By Lemma \ref{lem1-resolvent}, for fixed $z\in\C\setminus\R$ and any $\bk\in\brill_h$: 
  \begin{equation}
 \|\ \mathscr{P}^\lambda_{AB,\bk}\ \left(H^\lambda_\bk\ -\ \rho^\lambda zI\right)^{-1}\ \mathscr{P}^\lambda_{AB,\bk}\ \|_{_{L^2_\bk\to L^2_\bk}}\ 
 \lesssim\ 1\ ,
 \label{orthog-bound1}
 \end{equation}
where  $e^{-c_1\lambda}\ \lesssim\ \rho_\lambda\ \lesssim\ \ e^{-c_2\lambda}$; see \eqref{rho-lambda-bounds}. 

Represent $f\in L^2_\bk$ by $f(\bx)=\alpha_A\ P_{\bk,A}^\lambda\ +\ \alpha_B\ P_{\bk,B}^\lambda\ +\ f_\perp$,
 where $\alpha_A, \alpha_B\in\C$ and $f_\perp\in {\rm Range}(\mathscr{P}_{AB,\bk}^\lambda)$.
 Define the map $J_\bk: L^2_\bk\to \C\oplus\C\oplus {\rm Range}(\mathscr{P}_{AB,\bk}^\lambda)$ by:
 \begin{equation}
  J_\bk:\ f\mapsto \begin{pmatrix} \alpha_A[f] \\ \alpha_B[f] \\ f_\perp \end{pmatrix}\ =\ 
   \begin{pmatrix} \left\langle P_{\bk,A}^\lambda , f\right\rangle +\mathcal{O}(e^{-c\lambda}\|f\|)\\ \left\langle P_{\bk,B}^\lambda , f\right\rangle+\mathcal{O}(e^{-c\lambda}\|f\|) \\ f_\perp \end{pmatrix} .
 \label{Jk-def}
 \end{equation}
 The equality in \eqref{Jk-def} holds  since $\left\langle P_{\bk,I}^\lambda, P_{\bk,J}^\lambda\right\rangle = \delta_{_{IJ}}+\mathcal{O}(e^{-c\lambda})$ for $I,J=\{A,B\}$. 

  We use the notation  $\mathcal{O}_{_{X\to Y}}(a)$ to denote an operator
 from $X\to Y$ with norm $\lesssim a$, 
  $\mathcal{O}_{_{X}}(a)$
to  denote a function whose $X-$ norm is $\lesssim\ a$.

\begin{proposition}\label{prop:res-expand}
 For fixed $z\in\C\setminus\R$, $\bk\in\brill_h$ and $\lambda>\lambda_\star$ sufficiently large,   
 \begin{align}
& J^\lambda_\bk \left(\widetilde{H}^\lambda_\bk\ -\ zI\right)^{-1}(J_\bk^\lambda)^*\ \begin{bmatrix} \alpha_A\\ \alpha_B \\ f_\perp \end{bmatrix}\nn\\
 &= 
 \left[
    \begin{array}{c;{2pt/2pt}r}
    \mbox{$(H_{_{\rm TB}}(\bk)-zI_{_{2\times2}})^{-1}$} & \begin{matrix} 0 \\ 0 \end{matrix} \\ \hdashline[2pt/2pt]
    \begin{matrix} 0 & 0 \end{matrix} & 0
    \end{array}
    \right]\ \begin{bmatrix} \alpha_A\\ \alpha_B \\ f_\perp \end{bmatrix}
+\ 
 \mathcal{O}_{_{\C\oplus \C\oplus L^2_\bk}}(e^{-c\lambda})\ \begin{bmatrix} \alpha_A\\ \alpha_B \\ f_\perp \end{bmatrix}, 
\label{res-expand}
 \end{align}
where, $H_{_{\rm TB}}(\bk)$ is displayed in \eqref{HTBa}-\eqref{HTBb}. The error term in \eqref{res-expand} is uniform in $\bk\in\brill_h$. 

Consequently,  for $\lambda>\lambda_\star$
\begin{align}
& \left(\widetilde{H}^\lambda_\bk - zI\right)^{-1}
-
(J^\lambda_\bk)^*  
\left[
    \begin{array}{c;{2pt/2pt}r}
    \mbox{$(H_{_{\rm TB}}(\bk)-zI_{_{2\times2}})^{-1}$} & \begin{matrix} 0 \\ 0 \end{matrix} \\ \hdashline[2pt/2pt]
    \begin{matrix} 0 & 0 \end{matrix} & 0
    \end{array}
    \right]
  J_\bk^\lambda = 
 \mathcal{O}_{_{L^2_\bk\to L^2_\bk}}(e^{-c\lambda}) ,
\label{res-expand1}
 \end{align}
 uniformly in $\bk\in\brill_h$.
 \end{proposition}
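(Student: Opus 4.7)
\textbf{Proof plan for Theorem \ref{res-conv}.}
My plan is to carry out a Lyapunov--Schmidt reduction on the resolvent equation for $\widetilde{H}_\bk^\lambda - zI$, specialized to $\tK = \bk$, and then identify the resulting $2\times 2$ Schur complement with the tight-binding symbol $H_{_{\rm TB}}(\bk)$. This yields Proposition \ref{prop:res-expand}; Theorem \ref{res-conv} then follows from a short bookkeeping argument comparing $\mathscr{J}_\bk^\lambda$ with the decomposition map $J_\bk^\lambda$ of \eqref{Jk-def}.

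Concretely, fix $z \in \C\setminus\R$, $\bk \in \brill_h$, and take $\lambda$ large enough that $\rho_\lambda|z| < \hat c$, so Proposition \ref{prop2-resolvent} and Proposition \ref{M-expanded} apply at $\tK = \bk$, $\Omega = \rho_\lambda z$. I would decompose $L^2_\bk = \mathrm{span}\{P_{\bk,A}^\lambda, P_{\bk,B}^\lambda\} \oplus \mathrm{range}(\mathscr{P}_{AB,\bk}^\lambda)$ (nearly orthogonal by \eqref{pkI-normalize1}), write $\psi = \alpha_A P_{\bk,A}^\lambda + \alpha_B P_{\bk,B}^\lambda + \psi_\perp$, and split $(H_\bk^\lambda - \rho_\lambda z)\psi = g$ into its projections onto the two summands. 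The orthogonal equation is solved by $\psi_\perp = \mathrm{Res}^{\lambda,\bk}(\bk,\rho_\lambda z)\{\,\cdots\,\}$, whose operator norm is $O(1)$; the coupling terms carry an extra factor $\|(H^\lambda(\bk)-\rho_\lambda z)P_{\bk,I}^\lambda\| = O(e^{-c\lambda})$ from \eqref{Hlambda-pk}. Substituting back gives a $2\times 2$ system whose coefficient matrix is precisely $\mathcal{M}^{\lambda,\bk}(\bk,\rho_\lambda z)$ of \eqref{MJI-def}. By Proposition \ref{M-expanded},
\[
\mathcal{M}^{\lambda,\bk}(\bk,\rho_\lambda z)\ =\ \rho_\lambda\bigl[\,H_{_{\rm TB}}(\bk) - zI\ +\ O(e^{-c\lambda})\,\bigr]
\]
uniformly in $\bk$. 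Since $H_{_{\rm TB}}(\bk)$ is self-adjoint with real spectrum $\pm|\gamma(\bk)|$, we have $\|(H_{_{\rm TB}}(\bk)-zI)^{-1}\| \le |\mathrm{Im}\,z|^{-1}$ uniformly in $\bk$, so $\mathcal{M}^{\lambda,\bk}(\bk,\rho_\lambda z)^{-1} = \rho_\lambda^{-1}(H_{_{\rm TB}}(\bk)-zI)^{-1} + O(\rho_\lambda^{-1}e^{-c\lambda})$. Solving for $(\alpha_A,\alpha_B)^T$ in terms of $g$, reassembling $\psi$, and tracking the exponentially small error terms produces the block expansion \eqref{res-expand1} of Proposition \ref{prop:res-expand}. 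To pass from \eqref{res-expand1} to \eqref{res-diff}, I would observe that $(\mathscr{J}_\bk^\lambda)^* f = (\langle P_{\bk,A}^\lambda,f\rangle,\langle P_{\bk,B}^\lambda,f\rangle)^T$ differs from the first two components of $J_\bk^\lambda f$ by a matrix of norm $O(e^{-c\lambda})$ (by the same near-orthonormality \eqref{pkI-normalize1} used to define $J_\bk^\lambda$); hence $(J_\bk^\lambda)^*[(H_{_{\rm TB}}-zI)^{-1}\oplus 0]J_\bk^\lambda$ and $\mathscr{J}_\bk^\lambda(H_{_{\rm TB}}(\bk)-zI)^{-1}(\mathscr{J}_\bk^\lambda)^*$ agree to the same exponentially small tolerance in $L^2_\bk \to L^2_\bk$ norm.

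The main obstacle is uniformity in $\bk \in \brill_h$. Because $z$ lies off the real axis, the tight-binding resolvent is uniformly bounded by $|\mathrm{Im}\,z|^{-1}$, so — unlike in the eigenvalue analysis — no difficulty arises at the Dirac points $\bK_\star$ where $|\gamma(\bk)|$ vanishes. The real work is to confirm that the constants in the Lyapunov--Schmidt machinery (Proposition \ref{prop2-resolvent} and Proposition \ref{M-expanded}, applied at $\tK = \bk$) and in the bound \eqref{Hlambda-pk} can be taken $\bk$-independent over the compact set $\brill_h$ (which is immediate once one restricts to $|\Im\bk|$ bounded), and that the $O(e^{-c\lambda})$ errors accumulated in each step of the reduction combine to an $O(e^{-c\lambda})$ error in operator norm rather than only pointwise in $\bk$. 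Because $\brill_h$ is compact and every constant above is uniform, this last point reduces to carefully ordering the estimates, which is routine given the groundwork of Sections \ref{en-estimates}--\ref{ME-bounds}.
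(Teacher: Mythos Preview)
Your approach is correct and is essentially the same Lyapunov--Schmidt reduction that the paper carries out; the only cosmetic difference is that you package the $2\times 2$ block as the matrix $\mathcal{M}^{\lambda,\bk}(\bk,\rho_\lambda z)$ of \eqref{MJI-def} and invoke Proposition \ref{M-expanded} directly, whereas the paper re-derives the block entries inline via Proposition \ref{ME} and the sharper bound \eqref{sqrt-bound}. Your packaging is arguably cleaner, since Proposition \ref{M-expanded} already absorbs the $\sqrt{\rho_\lambda}$ refinement needed to make the $(\rho_\lambda)^{-1}$-scaled cross terms small.
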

The assertions of Theorem \ref{res-conv} follow from this proposition.
 
 To complete the proof of Theorem \ref{res-conv} we now prove Proposition \ref{prop:res-expand}. 
 
For $f=\alpha_A P_{\bk,A}^\lambda+\alpha_B P_{\bk,B}^\lambda + f_\perp\in L^2_\bk$, let 
\begin{align}
\begin{bmatrix} \xi_A \\  \xi_B \\ \xi_\perp \end{bmatrix}\ =\ J_\bk^\lambda\ \left(\widetilde{H}^\lambda_\bk\ -\ zI\right)^{-1}\
(J_\bk^\lambda)^*\ \begin{bmatrix} \alpha_A \\  \alpha_B \\ f_\perp \end{bmatrix} .
\end{align}
We now calculate $\xi_A$, $\xi_B$ and  $\xi_\perp$ in terms of $\alpha_A$, $\alpha_B$ and $f_\perp$.  By definition 
\[ \left(\widetilde{H}^\lambda_\bk\ -\ zI\right)^{-1}\left(\ \sum_{J=A,B}\alpha_J\ P_{\bk,J}^\lambda+f_\perp\right),
= \sum_{J=A,B} \xi_J\ P_{\bk,J}^\lambda\ +\ \xi_\perp\]
or equivalently
\begin{equation}
 \sum_{J=A,B}\ \alpha_J\ P_{\bk,J}^\lambda\ +\ f_\perp
= \sum_{J=A,B}\ \xi_J\ \left(\widetilde{H}^\lambda_\bk\ -\ zI\right)P_{\bk,J}^\lambda\ +\ \left(\widetilde{H}^\lambda_\bk\ -\ zI\right)\xi_\perp.
\label{hh}
\end{equation}
Next, apply the orthogonal projection $\rho^\lambda\times \mathscr{P}_{_{AB}}^\lambda$ to obtain
\[ \rho^\lambda f_\perp
=\ \sum_{J=A,B}\ \xi_J\ \mathscr{P}_{_{AB}}^\lambda\ H^\lambda_\bk\ P_{\bk,J}^\lambda\ +\ \mathscr{P}_{_{AB}}^\lambda\left(H^\lambda_\bk\ -\ \rho^\lambda zI\right)\mathscr{P}_{_{AB}}^\lambda\ \xi_\perp,
\]
where we have used $\rho^\lambda \widetilde{H}^\lambda_\bk= H^\lambda_\bk$ and $\mathscr{P}_{_{AB}}^\lambda 
P_{\bk,A}^\lambda=\mathscr{P}_{_{AB}}^\lambda 
P_{\bk,B}^\lambda=0$. 

Applying $\mathscr{P}_{_{AB}}^\lambda\left(H^\lambda_\bk\ -\ zI\right)^{-1}\mathscr{P}_{_{AB}}^\lambda$ and rearranging
\begin{align}
\xi_\perp & =  - \sum_{J=A,B} \xi_J \mathscr{P}_{_{AB}}^\lambda\left(H^\lambda_\bk - \rho^\lambda zI\right)^{-1}\mathscr{P}_{_{AB}}^\lambda H^\lambda_\bk P_{\bk,J}^\lambda  
 + \rho^\lambda \mathscr{P}_{_{AB}}^\lambda\left(H^\lambda_\bk - \rho^\lambda zI\right)^{-1}\mathscr{P}_{_{AB}}^\lambda f_\perp .
\nn  \end{align}

By the bound \eqref{orthog-bound1}, we obtain 
\begin{align}
\xi_\perp\ =\  \sum_{J=A,B}\ \xi_J\times  \mathcal{O}_{_{L^2_\bk\to L^2_\bk}}(1) H^\lambda_\bk\ P_{\bk,J}^\lambda\ +\
  \rho^\lambda\times  \mathcal{O}_{_{L^2_\bk\to L^2_\bk}}(1) f_\perp.
\label{xiperp}\end{align}
Furthermore, by the bound \eqref{sqrt-bound} $\|H^\lambda_\bk\ P_{\bk,B}^\lambda\|\lesssim \sqrt{\rho^{\lambda}}\ e^{-c\lambda}$.
 Therefore,
\begin{equation}
\xi_\perp\ =\  \sum_{J=A,B}\ \xi_J\times  \mathcal{O}_{_{L^2_\bk}}(\sqrt{\rho^{\lambda}}\ e^{-c\lambda})\ +\
  \rho^\lambda\times  \mathcal{O}_{_{L^2_\bk\to L^2_\bk}}(1) f_\perp.
\label{xi-perp-bound}
\end{equation}

Next, take the inner product of \eqref{hh} with $P_{\bk,M}^\lambda$, $M=A, B$ and obtain, using that 
$\left\langle P_{\bk,M}^\lambda, f_\perp\right\rangle=0$, $\widetilde{H}^\lambda_\bk=H^\lambda_\bk/\rho^\lambda$ and self-adjointness of $H^\lambda_\bk$:
\begin{align}
& \sum_{J=A,B}\ \alpha_J \left\langle P_{\bk,M}^\lambda, P_{\bk,J}^\lambda\right\rangle \nn \\
&\qquad =\ \sum_{J=A,B}\ \xi_J\ 
\left\langle P_{\bk,M}^\lambda, \left(\widetilde{H}^\lambda_\bk\ -\ zI\right)P_{\bk,J}^\lambda\right\rangle\ 
 +\ (\rho^\lambda)^{-1}\ \left\langle H^\lambda_\bk\ P_{\bk,M}^\lambda, \xi_\perp\right\rangle\ .
\label{hh2}\end{align}

Bounding the latter term in \eqref{hh2}, we find:
\begin{align*}
& \left| (\rho^\lambda)^{-1}\ \left\langle H^\lambda_\bk\ P_{\bk,M}^\lambda, \xi_\perp\right\rangle\right|
\le\ (\rho^\lambda)^{-1}\ \|H^\lambda_\bk P_{\bk,M}^\lambda\|\ \|\xi_\perp\|\nn\\
&\lesssim  (\rho^\lambda)^{-1}\times \sqrt{\rho^{\lambda}}\ e^{-c\lambda}\ \left(\sqrt{\rho^{\lambda}}\ e^{-c\lambda}\ (\ |\xi_A| + |\xi_B| \ )\ +\ \rho^\lambda\ \|f_\perp\|\right)\nn\\
&\lesssim\ e^{-2c\lambda}\ (\ |\xi_A| + |\xi_B| \ )\ +\ e^{-c\lambda}\ \sqrt{\rho^\lambda}\ \|f_\perp\|\ .
\end{align*}

Thus, using the expansion of matrix elements in Proposition \ref{ME} we have
\begin{align}
&\left[ I_{_{2\times2}} + \mathcal{O}(e^{-c\lambda}) \right] \begin{pmatrix}\alpha_A\ \alpha_B\end{pmatrix}
\nn\\
&\qquad  =  \left[ \left\langle P_{\bk,M}^\lambda, \left(\widetilde{H}^\lambda_\bk - zI\right)P_{\bk,J}^\lambda\right\rangle + \mathcal{O}(e^{-c\lambda})\
\right]_{_{M,J=A,B}} \begin{pmatrix}\xi_A\ \xi_B\end{pmatrix} + \mathcal{O}_{_{L^2_\bk\to\C^2}}\left( e^{-c\lambda} \sqrt{\rho^\lambda}\right) f_\perp\nn\\
&\qquad = \left[  H_{_{\rm TB}}(\bk)-zI_{_{2\times2}} + \mathcal{O}(e^{-c\lambda})\
\right]_{_{M,J=A,B}} \begin{pmatrix}\xi_A\ \xi_B\end{pmatrix} + \mathcal{O}_{_{L^2_\bk\to\C^2}}\left( e^{-c\lambda} \sqrt{\rho^\lambda}\right) f_\perp ,\nn\\
&\label{hh3}
\end{align}
where $H_{_{\rm TB}}(\bk)$ is displayed in \eqref{HTBa}-\eqref{HTBb}.
Proposition \ref{res-expand} now follows from \eqref{xi-perp-bound} and \eqref{hh3}.

\section{Remarks on dependencies of constants}\label{sec:constants}

By hypothesis {\bf (GS)}, \eqref{GS},   the one-atom ground state energy, $E_0^\lambda$, satisfies the bounds
\begin{equation} 
-\|V_0\|_{_{L^\infty}}\ \lambda^2\ \le\ E_0^\lambda \ \le\ -C \lambda^2, 
\label{GS1}\end{equation}
for some constant $C=C(V_0)>0$.

By {\bf (EG)},  the assumed {\bf Energy Gap Property}, \eqref{EG}, we have that for $\psi\in H^2(\R^2)$ orthogonal to the ground state of $-\Delta+\lambda^2V_0$ in $L^2(\R^2)$,
\begin{equation}
\left\langle\ \left(-\Delta + \lambda^2 V_0\right)\psi,\psi\ \right\rangle_{_{L^2(\R^2)}}\ \ge\ (E_0^\lambda+c_{gap})\ \|\psi\|_{_{L^2(\R^2)}}^2,\label{EG1}
\end{equation}
for a positive constant $c_{gap}$. 

Moreover, we assumed that our atomic potential, $V_0(\bx)$, is supported in a disc of radius, $r_0$, where $0<r_0<r_{critical}$, where $r_{critical}$ is a universal constant satisfying the bounds \eqref{r-critical-bound} of 
 Geometric Lemma \ref{euclid1}. 

Our main result, Theorem \ref{main-theorem}, and the proof of Corollaries \ref{spectral-gaps} and  \ref{rational-edge-states},   concern the behavior of the rescaled dispersion functions 
$\rho_\lambda^{-1}(E_\pm^\lambda(\bk)-E_D^\lambda)$, or equivalently, the behavior of $\mu_\pm^\lambda(\bk)$,  and their derivatives up to order $\beta_{max}$, with $\beta_{max}$ as large as we please. 
 
By going carefully through the arguments in this paper, one can check that all constants that appear, including those in Theorem \ref{main-theorem}, {\it e.g.} $\lambda_\star$ and $c_{\star\star}$, depend only on $\beta_{max}$, on $C$ in \eqref{GS1}, on $c_{gap}$ in \eqref{EG1}, and on a lower bound for $r_{critical}-r_0$. The sole exception is in Theorem \ref{res-conv}, where the constants depend on $z\in\C\setminus\R$ as well. 
This allows us to treat atomic potentials $V_1(\bx)$ not explicitly given in the form $\lambda^2V_0(\bx)$; 
we simply define $\lambda=\|V_1\|^{\frac{1}{2}}_{_{\infty}}$ and set $V_0=\lambda^{-2}V_1$.

\bibliographystyle{amsplain}
\bibliography{strong-binding}

\end{document}